\global\let\tikz@ensure@dollar@catcode=\relax
\renewcommand\thmt@listnumwidth{4.3em}
\numberwithin{equation}{section}
\declaretheoremstyle[bodyfont=\it,qed=$\qedsymbol$]{noproofstyle} 
\theoremstyle{plain}
\declaretheorem[numberlike=equation]{theorem}
\declaretheorem[unnumbered,name=Theorem]{theorem*}
\declaretheorem[unnumbered,name=Theorem,style=noproofstyle]{theoremwp*}
\declaretheorem[numberlike=equation]{lemma}
\declaretheorem[unnumbered,name=Lemma]{lemma*}
\declaretheorem[numberlike=equation,name=Lemma,style=noproofstyle]{lemmawp}
\declaretheorem[unnumbered,name=Lemma,style=noproofstyle]{lemmawp*}
\declaretheorem[numberlike=equation]{corollary}
\declaretheorem[unnumbered,name=Corollary]{corollary*}
\declaretheorem[numberlike=equation,name=Corollary,style=noproofstyle]{corollarywp}
\declaretheorem[unnumbered,name=Corollary,style=noproofstyle]{corollarywp*}
\declaretheorem[numberlike=equation]{proposition}
\declaretheorem[unnumbered,name=Proposition]{proposition*}
\declaretheorem[numberlike=equation,name=Proposition,style=noproofstyle]{propositionwp}
\declaretheorem[unnumbered,name=Proposition,style=noproofstyle]{propositionwp*}
\declaretheorem[unnumbered,name=Claim]{claim*}
\declaretheorem[unnumbered,name=Conjecture]{conjecture*}
\declaretheorem[numberlike=equation,style=noproofstyle]{construction}
\declaretheorem[unnumbered,name=Construction=noproofstyle]{construction*}
\declaretheorem[numberlike=equation,name=Theorem,style=noproofstyle]{theorem-cited}
\declaretheoremstyle[qed=$\lozenge$,bodyfont=\it]{defstyle} 
\declaretheorem[numberlike=equation,style=defstyle]{definition}
\declaretheorem[unnumbered,name=Definition,style=defstyle]{definition*}
\declaretheorem[unnumbered,name=Example,style=defstyle]{example*}
\declaretheorem[unnumbered,name=Notation,style=defstyle]{notation*}
\declaretheorem[unnumbered,name=Question,style=defstyle]{question*}
\declaretheoremstyle[qed=$\lozenge$]{rmkstyle} 
\declaretheorem[numberlike=equation,style=rmkstyle]{remark}
\declaretheorem[unnumbered,name=Remark,style=rmkstyle]{remark*}
\newcommand{\demph}[1]{\textbf{#1}}
\newcommand{\ocint}[2]{\ensuremath{(#1,#2]}}
\newcommand{\coint}[2]{\ensuremath{[#1,#2)}}
\newcommand{\zr}[1]{{\llbracket {#1}\rrbracket}}
\newcommand{\bits}{\ensuremath{\{0,1\}}}
\newcommand{\la}{\langle}
\newcommand{\ra}{\rangle}
\DeclareMathOperator{\rank}{rank}
\DeclareMathOperator{\spn}{span}
\DeclareMathOperator{\cspn}{col-span}
\DeclareMathOperator{\rspn}{row-span}
\DeclareMathOperator{\tr}{tr}
\DeclareMathOperator{\GL}{GL}
\let\SL\undefined
\DeclareMathOperator{\SL}{SL}
\newcommand{\st}{\ensuremath{\ |\ }}
\newcommand{\floor}[1]{{\lfloor{#1}\rfloor}}
\newcommand{\ceil}[1]{{\lceil{#1}\rceil}}	
\newcommand{\ceilp}[1]{{\left\lceil{#1}\right\rceil}}
\newcommand{\F}{\mathbb{F}}
\renewcommand{\K}{\mathbb{K}}
\renewcommand{\R}{\mathbb{R}}
\newcommand{\N}{\mathbb{N}}
\newcommand{\Z}{\mathbb{Z}}
\newcommand{\e}{\mathrm{e}}
\newcommand{\T}{\mathrm{tr}}
\newcommand{\eps}{\epsilon}
\newcommand{\eqdef}{:=}
\renewcommand{\O}{O}
\newcommand{\ignore}[1]{}
\newcommand{\Id}{\mathrm{I}}
\newcommand{\rv}{\mathsf}
\newcommand{\cA}{\mathcal{A}}
\newcommand{\cC}{\mathcal{C}}
\newcommand{\cE}{\mathcal{E}}
\newcommand{\cH}{\mathcal{H}}
\newcommand{\cR}{\mathcal{R}}
\renewcommand{\vec}[1]{\overline{#1}}
\newcommand{\rA}{\rv{A}}
\newcommand{\rE}{\rv{E}}
\newcommand{\rM}{\rv{M}}
\newcommand{\rN}{\rv{N}}
\newcommand{\va}{{\vec{a}}\@ifnextchar{^}{\!\:}{}}
\newcommand{\vb}{{\vec{b}}\@ifnextchar{^}{\!\:}{}}
\newcommand{\vc}{{\vec{c}}\@ifnextchar{^}{\!\:}{}}
\newcommand{\vd}{{\vec{d}}\@ifnextchar{^}{\!\:}{}}
\newcommand{\ve}{{\vec{e}}\@ifnextchar{^}{\!\:}{}}
\newcommand{\vg}{{\vec{g}}\@ifnextchar{^}{\!\:}{}}
\newcommand{\vh}{{\vec{h}}\@ifnextchar{^}{\!\:}{}}
\newcommand{\vi}{{\vec{i}}\@ifnextchar{^}{\!\:}{}}
\newcommand{\vj}{{\vec{j}}\@ifnextchar{^}{\!\:}{}}
\newcommand{\vk}{{\vec{k}}\@ifnextchar{^}{\!\:}{}}
\newcommand{\vl}{{\vec{\ell}}\@ifnextchar{^}{\!\:}{}}
\newcommand{\vm}{{\vec{m}}\@ifnextchar{^}{\!\:}{}}
\newcommand{\vn}{{\vec{n}}\@ifnextchar{^}{\!\:}{}}
\newcommand{\vo}{{\vec{o}}\@ifnextchar{^}{\!\:}{}}
\newcommand{\vp}{{\vec{p}}\@ifnextchar{^}{\!\:}{}}
\newcommand{\vq}{{\vec{q}}\@ifnextchar{^}{\!\:}{}}
\newcommand{\vr}{{\vec{r}}\@ifnextchar{^}{\!\:}{}}
\newcommand{\vs}{{\vec{s}}\@ifnextchar{^}{\!\:}{}}
\newcommand{\vt}{{\vec{t}}\@ifnextchar{^}{\!\:}{}}
\newcommand{\vu}{{\vec{u}}\@ifnextchar{^}{\!\:}{}}
\newcommand{\vv}{{\vec{v}}\@ifnextchar{^}{\!\:}{}}
\newcommand{\vw}{{\vec{w}}\@ifnextchar{^}{\!\:}{}}
\newcommand{\vy}{{\vec{y}}\@ifnextchar{^}{\!\:}{}}
\newcommand{\vx}{{\vec{x}}\@ifnextchar{^}{}{}}		
\newcommand{\vz}{{\vec{z}}\@ifnextchar{^}{\!\:}{}}
\newcommand{\vA}{{\vec{A}}\@ifnextchar{^}{\!\:}{}}
\newcommand{\vB}{{\vec{B}}\@ifnextchar{^}{\!\:}{}}
\newcommand{\vC}{{\vec{C}}\@ifnextchar{^}{\!\:}{}}
\newcommand{\vD}{{\vec{D}}\@ifnextchar{^}{\!\:}{}}
\newcommand{\vE}{{\vec{E}}\@ifnextchar{^}{\!\:}{}}
\newcommand{\vF}{{\vec{F}}\@ifnextchar{^}{\!\:}{}}
\newcommand{\vG}{{\vec{G}}\@ifnextchar{^}{\!\:}{}}
\newcommand{\vH}{{\vec{H}}\@ifnextchar{^}{\!\:}{}}
\newcommand{\vI}{{\vec{I}}\@ifnextchar{^}{\!\:}{}}
\newcommand{\vJ}{{\vec{J}}\@ifnextchar{^}{\!\:}{}}
\newcommand{\vK}{{\vec{K}}\@ifnextchar{^}{\!\:}{}}
\newcommand{\vL}{{\vec{L}}\@ifnextchar{^}{\!\:}{}}
\newcommand{\vM}{{\vec{M}}\@ifnextchar{^}{\!\:}{}}
\newcommand{\vN}{{\vec{N}}\@ifnextchar{^}{\!\:}{}}
\newcommand{\vO}{{\vec{O}}\@ifnextchar{^}{\!\:}{}}
\newcommand{\vP}{{\vec{P}}\@ifnextchar{^}{\!\:}{}}
\newcommand{\vQ}{{\vec{Q}}\@ifnextchar{^}{\!\:}{}}
\newcommand{\vR}{{\vec{R}}\@ifnextchar{^}{\!\:}{}}
\newcommand{\vS}{{\vec{S}}\@ifnextchar{^}{\!\:}{}}
\newcommand{\vT}{{\vec{T}}\@ifnextchar{^}{\!\:}{}}
\newcommand{\vU}{{\vec{U}}\@ifnextchar{^}{\!\:}{}}
\newcommand{\vV}{{\vec{V}}\@ifnextchar{^}{\!\:}{}}
\newcommand{\vW}{{\vec{W}}\@ifnextchar{^}{\!\:}{}}
\newcommand{\vY}{{\vec{Y}}\@ifnextchar{^}{\!\:}{}}
\newcommand{\vX}{{\vec{X}}\@ifnextchar{^}{}{}}		
\newcommand{\vZ}{{\vec{Z}}\@ifnextchar{^}{\!\:}{}}
\newcommand{\vaa}{{\vec{\alpha}}}
	\DeclareMathAlphabet{\mathcal}{OMS}{cmsy}{m}{n}
	\renewcommand{\le}{\leqslant}
	\renewcommand{\ge}{\geqslant}
	\DeclareMathOperator{\Wr}{W}
	\title{Dimension Expanders via Rank Condensers}
	\author{%
		Michael A.\ Forbes
		\thanks{
			Email: \texttt{miforbes@csail.mit.edu}.
			Simons Institute for the Theory of Computing,
			Calvin Lab, UC Berkeley
			Berkeley, CA 94720-2190.
			This work was performed when the author was a graduate student at MIT CSAIL (which was supported by the Center for Science of Information (CSoI), a NSF Science and Technology Center, under grant agreement CCF-0939370) and when the author was a Google Research Fellow at the Simons Institute for the Theory of Computing.
		}
		\and
		Venkatesan Guruswami
		\thanks{
			Email: \texttt{guruswami@cmu.edu}.
			Computer Science Department,
			Carnegie Mellon University,
			Pittsburgh, PA.
			Some of this work was done when the author was a visiting researcher at Microsoft Research New England, Cambridge, MA. Research supported in part by NSF grant CCF-0963975.
		}
	}
\begin{document}
\maketitle
\thispagestyle{empty}

\begin{abstract}
	An emerging theory of ``linear-algebraic pseudorandomness'' aims to understand the linear-algebraic analogs of fundamental Boolean pseudorandom objects where the rank of subspaces plays the role of the size of subsets.  In this work, we study and highlight the interrelationships between several such algebraic objects such as subspace designs, dimension expanders, \emph{seeded rank condensers}, \emph{two-source rank condensers}, and rank-metric codes.  In particular, with the recent construction of near-optimal subspace designs by Guruswami and Kopparty~\cite{GuruswamiKopparty13} as a starting point, we construct good (seeded) rank condensers (both \emph{lossless} and \emph{lossy} versions), which are a small collection of linear maps $\F^n \to \F^t$ for $t \ll n$ such that for every subset of $\F^n$ of small rank, its rank is preserved (up to a constant factor in the lossy case) by at least one of the maps. 

	We then compose a tensoring operation with our lossy rank condenser to construct constant-degree dimension expanders over polynomially large fields. That is, we give $O(1)$ explicit linear maps $A_i : \F^n \to \F^n$ such that for any subspace $V \subseteq \F^n$ of dimension at most $n/2$, $\dim\bigl( \sum_i A_i(V)\bigr) \ge (1+\Omega(1)) \dim(V)$. Previous constructions of such constant-degree dimension expanders were based on Kazhdan's property $T$ (for the case when $\F$ has characteristic zero) or monotone expanders (for every field $\F$); in either case the construction was {\em harder} than that of usual vertex expanders. Our construction, on the other hand, is {\em simpler}.

	For two-source rank condensers, we observe that the lossless variant (where the output rank is the product of the ranks of the two sources) is equivalent to the notion of a linear rank-metric code. For the lossy case, using our seeded rank condensers, we give a reduction of the general problem to the case when the sources have high ($n^{\Omega(1)}$) rank. When the sources have $O(1)$ rank, combining this with an ``inner condenser'' found by brute-force leads to a two-source rank condenser with output length nearly matching the probabilistic constructions.
\end{abstract}

\newpage
\enlargethispage{0.5cm}
\parskip=0.1ex
\tableofcontents
\thispagestyle{empty}
\newpage
\parskip=0.5ex

\section{Introduction}

The broad area of pseudorandomness deals with efficiently generating objects that exhibit the desirable properties of ``random-like'' objects despite being constructed either explicitly or with limited randomness. Pseudorandomness is a central and influential theme in many areas such as complexity theory, derandomization, coding theory, cryptography, high-dimensional geometry, graph theory, and additive combinatorics. The topic has witnessed much progress over the years and continues to be intensively studied. We now have non-trivial constructions of various pseudorandom objects such as expander graphs, randomness extractors and condensers, Ramsey graphs, list-decodable codes, compressed sensing matrices, Euclidean sections, and pseudorandom generators for various concrete models.  Despite the seemingly different definitions and contexts of these objects, insights in pseudorandomness have uncovered intimate connections between them, and this has led to a rich theory of ``Boolean pseudorandomness'' drawing a common pool of broadly useful techniques (see for instance the recent survey by Vadhan~\cite{Vadhan12}.)

Recently, there is an emerging theory of ``algebraic pseudorandomness'' aimed at understanding the linear-algebraic analogs of fundamental Boolean pseudorandom objects where the dimension of subspaces plays the role analogous to min-entropy. Examples of such algebraic objects include dimension expanders, subspace-evasive sets, subspace designs, rank-preserving condensers, etc. In addition to their intrinsic interest, these notions also have surprising applications; for instance, subspace-evasive sets to the construction of Ramsey graphs~\cite{PudlakRodl04} and list-decodable codes~\cite{GuruswamiWang13,GuruswamiXing12}, subspace designs to list decoding both in the Hamming metric and the rank metric~\cite{GuruswamiXing13,GuruswamiWang14}, and rank-preserving condensers to affine extractors~\cite{GabizonRaz08} and polynomial identity testing~\cite{KarninShpilka11,ForbesShpilka12}. 

In this work, we study several interesting pseudorandom objects in the linear-algebraic world, such as subspace evasive sets, subspace designs, dimension expanders, \emph{seeded rank condensers}, and \emph{two-source rank condensers}. The last two notions are also introduced in this work, though closely related concepts were studied earlier in the literature.  We briefly and informally define these notions now, with more precise statements appearing in later sections.  A subspace evasive set is a (large) subset of $\F^n$ that has small intersection with every low-dimensional subspace of $\F^n$. Subspace designs are a (large) collection of subspaces such that every low-dimensional subspace intersects few of them. Dimension expanders are a (small) collection of linear maps $A_i: \F^n \to \F^n$ such that for every subspace $V \subseteq \F^n$ of bounded dimension, the dimension of $\sum_i A_i(V)$ is at least $\alpha\cdot \dim(V)$ for a constant $\alpha > 1$.  Rank condensers are a (small) collection of linear maps $\F^n \to \F^t$ (for $t \ll n$) such that for every subspace of dimension $r$, its image under at least one of the maps has large dimension (equal to $r$ in the \emph{lossless} case, and $\Omega(r)$ in the \emph{lossy} case).  A two-source rank condenser is a map $E: \F^n \times \F^n \to \F^t$ such that for every pair $A,B \subseteq \F^n$ with rank $r$ each, $f(A \times B)$ has rank $\Omega(r^2)$ (or even $r^2$ in the lossless case) --- the tensor product construction is lossless but requires $t = n^2$, so the challenge here is to ``derandomize'' the tensor product and achieve $t \ll n^2$ (and even $t \ll n$ for the lossy case for $r\ll\sqrt{n}$).

Conceptually, our work highlights close interconnections between these notions. In particular, we show that subspace designs (which were introduced in the context of list decoding variants of algebraic-geometric codes in \cite{GuruswamiXing13}) are the {\em same} concept as lossless rank condensers but that they emphasize a different regime of parameters. This connection also highlights that a strong variant of subspace designs yields lossy rank condensers. The near-optimal explicit construction of (strong) subspace designs in \cite{GuruswamiKopparty13} then yields lossless and lossy rank condensers with parameters close to the existential constructions.  Our main technical application is an explicit construction of constant-degree dimension expanders over polynomially large fields, that expands all subspaces of $\F^n$ of dimension $n/2$ (say) by a factor $\alpha > 1$. We achieve this construction by first increasing the rank in a trivial way by increasing the dimension of the ambient space, and then using a lossy rank condenser to reduce the ambient space back to $\F^n$ while preserving the rank up to a constant factor. While previous constructions of dimension expanders were at least as complicated as constructions of standard expander graphs (or more so), our construction and analysis is rather elementary.  Unfortunately, unlike previous work, our techniques are currently best suited to large fields due to connections with Reed-Solomon codes. However, we do obtain dimension expanders over small fields by paying various logarithmic penalties.

Turning to two-source rank condensers, our original motivation to propose them was a possible route to iteratively construct subspace-evasive sets that might offer some way around the exponential dependence on intersection size that seems inherent to constructions based on algebraic varieties. While there appears to be serious obstacles to such an approach, the notion seems a fundamental one to study regardless.  In this work, we focus on two-source rank condensers $f : \F^n \times \F^n \to \F^t$ where the map $f$ is bilinear as this seems like a natural class of constructions to study. We observe that the lossless variant is {\em equivalent} to the notion of a linear rank-metric code. Known optimal constructions of rank-metric codes such as the Gabidulin codes thereby yield lossless two-source condensers with optimal output length (equal to $\Theta(nr)$ for rank-$r$ subsets of $\F^n$). For lossy two-source rank condensers, we can enumerate over the seeds of our seeded lossy condenser, applying it to both sources separately and condensing the sources to $r^{\Theta(1)}$ dimensions (from the original $n$).  For small $r$ (e.g., constant), we can ``concatenate'' this construction with a near-optimal lossy two-source condenser found by brute-force to obtain output length $\Theta(n/r)$, matching the non-constructive bound.  In general, our method reduces the problem to the case of relatively high ``rate'' (when $r \approx n^{1/3}$), which is typically easier to tackle.

\paragraph{Organization:} In the next three sections, we state (informal versions of) our results, all of ideas behind them, and brief discussions of prior work for seeded rank condensers (\autoref{sec:lossless-seeded}), dimension expanders (\autoref{sec:dim-exp}), and two-source rank condensers (\autoref{sec:two-src}). We then give an expanded treatment with formal statements and proofs, as well as detailed comparison with related work in Sections \ref{sec:lossless-seeded_constr}, \ref{sec:dim-exp_constr}, \ref{sec:small-fields} and \ref{sec:two-src_constr}. The connection between rank-metric codes and two-source lossless rank condensers is described in \autoref{sec:rank-metric}. We also include detailed results on the parameters achieved by random constructions in \autoref{sec:prob-method}.

\section{Subspace Designs and Rank Condensers}\label{sec:lossless-seeded}

We begin by discussing the notion of a \emph{subspace design}, as recently defined by Guruswami and Xing~\cite{GuruswamiXing13}, and contrast this with the notion of a \emph{seeded (single source) rank condenser} to which we add the qualifier of \emph{lossless}, as defined by Forbes, Saptharishi and Shpilka~\cite{ForbesSS14}.  We will describe how these objects are essentially the same notion, where the rank condenser can be considered the ``primal'' object and the subspace design the ``dual'' object.  We then introduce \emph{lossy rank condensers}, a new notion that is key to our construction of dimension expanders (see \autoref{sec:dim-exp}) and describe how the construction of subspace designs of Guruswami and Kopparty~\cite{GuruswamiKopparty13} implies nearly optimal lossy rank condensers.

\subsection{Subspace Designs}

We begin with the definition of a subspace design.

\begin{definition}[Guruswami-Xing~\cite{GuruswamiXing13} and Guruswami-Kopparty~\cite{GuruswamiKopparty13}]
	Let $\F$ be a field. A collection $\cH=\{H_i\}_i$ of subspaces $H_i\subseteq\F^n$ is a \demph{weak $(r,L)$-subspace design} if for every subspace $V\subseteq\F^n$ with $\dim V=r$, 
	\[
		|\{i\st \dim (H_i\cap V)>0\}|\le L
		\;.
	\]
	The collection $\cH$ is a \demph{strong $(r,L)$-subspace design} if for every subspace $V\subseteq\F^n$ with $\dim V=r$,
	\[
		\sum_i \dim (H_i\cap V)\le L
		\;.
	\]
	The collection $\cH$ is \demph{explicit} if given an index $i\in[|\cH|]$ a basis for the $i$-th subspace in $\cH$ can be constructed in $\poly(n,\log|\cH|)$ operations in $\F$.
\end{definition}

We note here that the above subspaces $H_i$ are not constrained to be of equal dimension.  Allowing the dimension of the $H_i$ to vary could conceivably allow for improved constructions, but no construction so far uses this freedom.  As such, we will primarily concern ourselves with the case when the dimensions are equal.

Guruswami-Xing~\cite{GuruswamiXing13} defined subspace designs as a way to prune list-decodable codes to ensure a small list-size while maintaining high rate.  As such, one wishes for the size $|\cH|$ of the design to be large while maintaining $L$ of moderate size.  In particular, they showed that large designs exist non-constructively.

\begin{propositionwp*}[Guruswami-Xing~\cite{GuruswamiXing13}]
	Let $\F_q$ be a finite field. Let $\eps>0$, $n\ge \nicefrac{8}{\eps}$ and $s\le\nicefrac{\eps n}{2}$. Then there is a strong $(s,\nicefrac{8s}{\eps})$-subspace design $\cH$ of $(1-\eps)n$-dimensional subspaces in $\F_q^n$ with $|\cH|=q^{\nicefrac{\eps n}{8}}$.
\end{propositionwp*}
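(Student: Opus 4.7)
The plan is a standard probabilistic construction: sample $N := q^{\eps n/8}$ subspaces $H_1, \ldots, H_N$ independently and uniformly from the Grassmannian of $(1-\eps)n$-dimensional subspaces of $\F_q^n$, and show that with positive probability the collection $\cH = \{H_i\}$ is a strong $(s, 8s/\eps)$-subspace design. Fix a candidate $V \le \F_q^n$ of dimension $s$ and set $X_i := \dim(H_i \cap V)$. For each $k \ge 1$, the event $X_i \ge k$ forces some $k$-dimensional subspace of $V$ to lie in $H_i$, so a union bound over the $\binom{s}{k}_q \le O(q^{k(s-k)})$ such subspaces, combined with the Gaussian-binomial identity giving $\Pr[W \subseteq H_i] \le q^{-\eps n k}$ for any fixed $k$-subspace $W$, yields
\[
    \Pr[X_i \ge k] \;\le\; O(1)\cdot q^{k(s-k-\eps n)} \;\le\; O(1)\cdot q^{-k\eps n/2},
\]
where the last inequality uses the hypothesis $s\le \eps n/2$.

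Next I would convert this tail bound into concentration for $\sum_i X_i$ via a moment-generating-function argument with the large parameter $\alpha := \eps n/4$. By Abel summation,
\[
    \E[q^{\alpha X_i}] \;=\; 1 + (1-q^{-\alpha})\sum_{k \ge 1} q^{\alpha k}\Pr[X_i \ge k] \;\le\; 1 + O(q^{-\eps n/4}),
\]
so by independence of the $H_i$,
\[
    \E\bigl[q^{\alpha \sum_i X_i}\bigr] \;=\; \prod_i \E[q^{\alpha X_i}] \;\le\; \exp\bigl(O(N q^{-\eps n/4})\bigr) \;=\; \exp\bigl(O(q^{-\eps n/8})\bigr) \;=\; O(1).
\]
Markov's inequality then gives, for $L := 8s/\eps$,
\[
    \Pr\!\left[\textstyle\sum_i X_i \ge L\right] \;\le\; O(1)\cdot q^{-\alpha L} \;=\; O(q^{-2sn}).
\]

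Finally, I would take a union bound over the $\binom{n}{s}_q \le q^{s(n-s)} \le q^{sn}$ subspaces $V \le \F_q^n$ of dimension $s$; the total failure probability is at most $q^{sn}\cdot O(q^{-2sn}) = O(q^{-sn}) < 1$, establishing the existence of a good $\cH$. The main technical obstacle is the concentration step: a first-moment or Markov bound on $\sum_i X_i$ alone yields per-$V$ failure probability only $q^{-\Theta(\eps n)}$, far too weak to beat the $q^{sn}$ cost of union-bounding over the $s$-dimensional $V$'s. Applying the MGF with the large parameter $\alpha \approx \eps n/4$---the full range permitted by the exponentially decaying tail $q^{-k\eps n /2}$ on $X_i$---is what brings the two sides into balance, and the hypothesis $s \le \eps n/2$ is used precisely to ensure that this tail has decay rate well below $1$.
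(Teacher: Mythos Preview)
The paper does not prove this statement: it is stated in the \texttt{propositionwp*} environment (the ``without proof'' style) and attributed to Guruswami--Xing~\cite{GuruswamiXing13}, so there is no in-paper proof to compare against.

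Your probabilistic argument is correct and is the natural approach for this kind of existence result. A couple of minor remarks. First, the bound $\Pr[W\subseteq H_i]\le q^{-\eps n k}$ in fact holds exactly (no $O(\cdot)$ needed): writing $m=(1-\eps)n$, one has
\[
\Pr[W\subseteq H_i]=\frac{\binom{n-k}{m-k}_q}{\binom{n}{m}_q}=\prod_{i=0}^{\eps n-1}\frac{q^{n-k-i}-1}{q^{n-i}-1}\le q^{-k\eps n},
\]
since each factor is at most $q^{-k}$. Second, tracking the constants hidden in your $O(\cdot)$'s (the Gaussian-binomial constant $\prod_{j\ge1}(1-q^{-j})^{-1}\le 4$ and the bound $\exp(O(q^{-\eps n/8}))\le e^{8/3}$ coming from $\eps n\ge 8$) shows the final failure probability is at most $60\,q^{-s(n+s)}$, which is indeed $<1$ once $n\ge 8$ and $s\ge 1$. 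So the hypothesis $n\ge 8/\eps$ is exactly what makes both the MGF step and the final constant work out.
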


Note that the \emph{co}-dimension of the subspaces in $\cH$ is $\eps n$, which is twice that of the maximum dimension $s\approx \nicefrac{\eps n}{2}$.  We now further remark on the variations of this definition. The following relation between the weak and strong versions is immediate.

\begin{lemmawp}[Guruswami-Kopparty~\cite{GuruswamiKopparty13}]\label{res:strong-v-weak}
	Let $\F$ be a field, and let $\cH$ be a collection of subspaces in $\F^n$.  Then
		if $\cH$ is a strong $(r,L)$-subspace design, then $\cH$ is a weak $(r,L)$-subspace design.
		If $\cH$ is a weak $(r,L)$-subspace design, then $\cH$ is a strong $(r,rL)$-subspace design.
\end{lemmawp}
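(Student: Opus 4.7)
The lemma follows from two elementary observations about the quantity $\dim(H_i \cap V)$, so the plan is essentially to make these explicit; there is no real obstacle to overcome.

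For the first implication, assume $\cH$ is a strong $(r,L)$-subspace design, and let $V \subseteq \F^n$ have $\dim V = r$. Since each $\dim(H_i \cap V)$ is a non-negative integer, any index $i$ for which $\dim(H_i \cap V) > 0$ contributes at least $1$ to the sum $\sum_i \dim(H_i \cap V)$. Thus
\[
	|\{i \st \dim(H_i \cap V) > 0\}| \le \sum_i \dim(H_i \cap V) \le L\;,
\]
which is exactly the weak $(r,L)$ condition.

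For the second implication, assume $\cH$ is a weak $(r,L)$-subspace design, and again let $\dim V = r$. For every $i$ we have the trivial bound $\dim(H_i \cap V) \le \dim V = r$, and by the weak condition at most $L$ indices contribute a nonzero term to $\sum_i \dim(H_i \cap V)$. Combining these,
\[
	\sum_i \dim(H_i \cap V) \le r \cdot |\{i \st \dim(H_i \cap V) > 0\}| \le rL\;,
\]
establishing the strong $(r,rL)$-subspace design property.

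Both steps are one-line counting arguments, so the only thing to be careful about is invoking the correct bound ($\dim(H_i \cap V) \ge 1$ in the first direction, $\dim(H_i \cap V) \le r$ in the second). There is no subtle case to worry about and the bounds are tight in general, so no additional structure of $\cH$ is needed.
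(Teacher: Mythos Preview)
Your proof is correct. The paper itself states this lemma without proof (it is declared as a \texttt{lemmawp}, the ``without proof'' style), so there is no argument in the paper to compare against; your two one-line counting observations are exactly the intended justification.
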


We also observe that as every dimension $\le r$ subspace can be padded to a dimension $r$ subspace, we immediately can see that subspace designs apply to smaller subspaces as well.

\begin{lemmawp}\label{res:subspace design_les-vs-eqs}
	Let $\F$ be a field, and let $\cH$ be a weak/strong $(r,L)$-subspace design in $\F^n$.  Then $\cH$ is a $(s,L)$-subspace design over $\F^n$ for every $1\le s\le r$.
\end{lemmawp}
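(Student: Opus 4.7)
The plan is to reduce the case of an $s$-dimensional subspace to the $r$-dimensional case by the simple trick of padding. Given any subspace $V\subseteq\F^n$ with $\dim V = s$, since $s \le r \le n$ (we may assume $r\le n$; otherwise the claim is vacuous up to taking $W=\F^n$), I can choose any subspace $W\subseteq\F^n$ with $V\subseteq W$ and $\dim W = r$ (for instance, extend a basis of $V$ arbitrarily to a set of $r$ linearly independent vectors in $\F^n$, and let $W$ be their span).

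With $W$ fixed, the key observation is that for every $i$, intersection with the smaller subspace can only shrink: $H_i\cap V \subseteq H_i\cap W$, and hence $\dim(H_i\cap V)\le \dim(H_i\cap W)$. From here the two parts of the lemma are immediate. For the strong version, I would sum the inequality over $i$ to obtain
\[
	\sum_i \dim(H_i\cap V)\;\le\;\sum_i \dim(H_i\cap W)\;\le\;L
	\;,
\]
where the last inequality uses that $\cH$ is a strong $(r,L)$-subspace design applied to $W$. For the weak version, I would observe that $\dim(H_i\cap V)>0$ forces $\dim(H_i\cap W)>0$, so
\[
	|\{i\st \dim(H_i\cap V)>0\}|\;\le\;|\{i\st \dim(H_i\cap W)>0\}|\;\le\;L
	\;,
\]
again by applying the weak $(r,L)$ property to $W$.

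There is no real obstacle here: the entire argument is monotonicity of $\dim(H_i\cap \cdot)$ under subspace inclusion, combined with the fact that every subspace of dimension $\le r$ sits inside some subspace of dimension exactly $r$ in $\F^n$. The only minor subtlety worth flagging in the write-up is handling the degenerate range $r>n$, which can be dispensed with in one line by noting that any ambient-space subspace has dimension at most $n$, so one may replace $r$ by $\min(r,n)$ throughout.
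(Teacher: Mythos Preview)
Your proposal is correct and follows exactly the approach the paper indicates: the paper states this lemma without proof, remarking only that ``every dimension $\le r$ subspace can be padded to a dimension $r$ subspace,'' which is precisely your argument via extending $V$ to some $W$ of dimension $r$ and using monotonicity of $\dim(H_i\cap\cdot)$ under inclusion.
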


While the above seems to allow one to focus on dimension $r$ as opposed to dimension $\le r$, this is not strictly true as one can achieve a better list size $L$ for dimension $s\ll r$.  Similarly, the above lemma relating strong and weak designs seems to suggest that qualitatively (up to polynomial factors) these notions are the same.  However, as we will later (\autoref{sec:dim-exp_constr}), obtaining the appropriate (strong) list size simultaneously for all $s\le r$ will be crucial for our application to constant-degree dimension expanders.

\subsection{Seeded Lossless Rank Condensers}

Subspace designs ask that for any small subspace $V$ there is some $H_i\in\cH$ so that $H_i\cap V$ is \emph{small}.  Equivalently, the amount of dimension in $V$ that is outside $H_i$ is \emph{large} so that in some sense the dimension of $V$ is preserved.  This perspective is more naturally phrased in the language of \emph{(seeded) rank condensers}, as defined by Forbes, Saptharishi and Shpilka~\cite{ForbesSS14}.  The definition we use here is tuned to the equivalence with subspace designs, and we recover their definition as the lossless version of what we term here a \emph{lossy seeded rank condenser} (see \autoref{defn:lossy-seeded}).  We will discuss prior work and motivation for rank condensers that is less immediately relevant in \autoref{sec:lossless-seeded_constr}. We begin with the definition.

\begin{definition}\label{defn:lossless-seeded}
	Let $\F$ be a field and $n\ge r\ge 1$.  A collection of matrices $\cE\subseteq\F^{t\times n}$ is a \demph{weak (seeded) $(r,L)$-lossless rank condenser} if for all matrices $M\in\F^{n\times r}$ with $\rank M=r$,
	\[
		|\{E\st E\in\cE, \rank EM<\rank M\}|\le L
		\;.
	\]
	The collection $\cE$ is a \demph{strong (seeded) $(r,L)$-lossless rank condenser} if for all matrices $M\in\F^{n\times r}$ with $\rank M=r$,
	\[
		\sum_{E\in\cE} (\rank M-\rank EM)\le L
		\;.
	\]
	The collection $\cE$ is \demph{explicit} if given an index $i\in[|\cE|]$ the $i$-th matrix of $\cE$ can be constructed in $\poly(t,n,\log|\cE|)$ operations in $\F$.
\end{definition}

As we have many types of condensers in this paper (weak, strong, lossless, lossy, two-source, etc.) we will often just refer to them as ``condensers'' (perhaps with some relevant parameters such as ``$(r,\eps)$'') when the relevant adjectives are clear from context.

As it can only increase the quality of the condenser, one naturally considers the case when $\rank E=t$ for all $E\in\cE$.  However, we do not impose this restriction just as we do not impose the condition that subspaces in subspace designs all have the same dimension.  In fact, by the equivalence of subspace designs and lossless rank condensers (\autoref{res:subspace-design-equal-lossless}) one can see that these two restrictions are equivalent.

We briefly remark that as all of the pseudorandom objects we consider in this work are linear (or in the case of two-source condensers, bilinear) we will often freely pass between subspaces $V\subseteq\F^n$ of dimension $r$ and matrices $M\in\F^{n\times r}$ of rank $r$, using that we can choose a basis for $V$ so that $\cspn M=V$.  As such, we will often treat a matrix $M\in\F^{n\times r}$ as a list of $r$ vectors in $\F^n$. 

We now note that subspace designs are equivalent to lossless rank condensers.

\begin{propositionwp*}[\autoref{res:subspace-design-equal-lossless}]
	Let $\F$ be a field and $n\ge r\ge 1$.  Let $\cH=\{H_i\}_{i\in[M]}$ be a collection of subspaces $H_i\subseteq\F^n$ and let $\cE=\{E_i\}_{i\in[M]}\subseteq\F^{t\times n}$ be a collection of matrices, where we have that $\rspn E_i=(H_i)^\perp$ for $i\in[M]$. Then $\cH$ is a weak/strong $(r,L)$-subspace design iff $\cE$ is a weak/strong $(r,L)$-lossless rank condenser.
\end{propositionwp*}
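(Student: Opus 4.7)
The plan is to translate both defining conditions into a single quantity, namely $\dim(V \cap H_i)$ where $V = \cspn M$, and then read off both the strong and weak equivalences from that translation.

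First I would set up the bijection between objects. Rank-$r$ matrices $M\in\F^{n\times r}$ correspond (via choice of basis) with $r$-dimensional subspaces $V\subseteq\F^n$ through $V = \cspn M$, and every $r$-dimensional subspace of $\F^n$ arises this way. So the universal quantifiers over $M$ (in the rank condenser definition) and over $V$ (in the subspace design definition) range over the same family, and the proposition reduces to a statement about each individual pair $(M,V)$.

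The central computation is to identify, for each fixed $i$, the ``rank drop'' $\rank M - \rank E_iM$. Viewing $E_i$ as a linear map $\F^n\to\F^t$, its kernel with respect to the standard bilinear form equals $(\rspn E_i)^\perp$, and by the hypothesis $\rspn E_i = H_i^\perp$ we get $\ker E_i = (H_i^\perp)^\perp = H_i$. Since $\rank E_i M = \dim E_i(V)$, rank-nullity applied to the restriction $E_i|_V \colon V \to \F^t$ yields
\[
\rank M - \rank E_i M \;=\; \dim V - \dim E_i(V) \;=\; \dim \ker(E_i|_V) \;=\; \dim(V\cap H_i).
\]

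With this identity in hand both equivalences are immediate. Summing over $i$ gives the strong version: $\sum_i (\rank M - \rank E_iM) = \sum_i \dim(H_i\cap V)$, so the bound $\leq L$ holds on one side iff it holds on the other. For the weak version, $\rank E_iM < \rank M$ holds iff $\dim(V\cap H_i) > 0$, so the two counting conditions agree exactly. The only point that deserves a sentence of care is the double-perp identity $(H_i^\perp)^\perp = H_i$ over an arbitrary field $\F$, which is valid because the standard bilinear form on $\F^n$ is nondegenerate and $H_i$ is finite-dimensional. I do not foresee a genuine obstacle; the whole argument is really just unpacking the two definitions through the identity above.
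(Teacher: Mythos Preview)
Your proposal is correct and follows essentially the same approach as the paper: both identify $\ker E_i = (\rspn E_i)^\perp = H_i$, apply rank-nullity to obtain $\rank M - \rank E_iM = \dim(H_i\cap V)$, and then read off the strong equivalence by summing and the weak equivalence by noting positivity. Your extra remark justifying $(H_i^\perp)^\perp = H_i$ via nondegeneracy is a welcome bit of care that the paper leaves implicit.
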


While the above proposition is quite simple, it offers a unifying perspective of these different objects which was key to obtaining further results.

\subsection{Seeded Lossy Rank Condensers}

While the above seeded lossless rank condensers already have applications to list-decodable codes, rank condensers were defined in Forbes, Saptharishi and Shpilka~\cite{ForbesSS14} for quite different reasons.  We now give a definition closer to their motivation.

\begin{definition}\label{defn:lossy-seeded}
	Let $\F$ be a field and $n\ge r\ge 1$ and $\eps\ge 0$.  A collection of matrices $\cE\subseteq\F^{t\times n}$ is a \demph{(seeded) $(r,\eps)$-lossy rank condenser} if for all matrices $M\in\F^{n\times r}$ with $\rank M=r$, 
	\[
		\rank EM\ge (1-\eps)\rank M
		\;,
	\]
	for some $E\in\cE$. The collection $\cE$ is a \demph{(seeded) $(\le r,\eps)$-lossy rank condenser} if it a $(s,\eps)$-lossy condenser for all $1\le s\le r$.

	The collection $\cE$ is \demph{explicit} if given an index $i\in[|\cE|]$ the $i$-th matrix of $\cE$ can be constructed in $\poly(t,n,\log|\cE|)$ operations in $\F$.
\end{definition}

This notion is a natural linear-algebraic analogue of condensers for \emph{min-entropy} from the realm of Boolean pseudorandomness.  One contrast is that we do not require that \emph{most} $E\in\cE$ have the desired condensing property as this does not seem important for our applications, although we note that one can also obtain this stronger requirement with our methods.

In is worthwhile to contrast this object with subspace designs or lossless rank condensers. The goal of subspace designs was (due to connections with list-decodable codes) to construct a \emph{large} design while less focus was on the exact list-size bound.  Here, we have the somewhat different goal of obtaining a \emph{small} collection of matrices, which is akin to obtaining a very small list size in a subspace design. The focus on the collection being small is from the use of such condensers in derandomization, as we will need to enumerate over each matrix in the collection.

In particular, the notion of a $(r,0)$-lossy rank condenser is of interest because it is \emph{lossless}, which is important for many applications.  In particular, this notion was previously defined as a ``\emph{rank condenser (hitting set)}'' in the work of Forbes, Saptharishi and Shpilka~\cite{ForbesSS14}, but the construction and usage of these objects predates them\footnote{We note that the works we highlight are not necessarily the first or last in their respective lines of research, and rather we only highlight those that (to the best of our knowledge) had results concerning lossless rank condensers.}.  In particular, Gabizon and Raz~\cite{GabizonRaz08} constructed a $(r,0)$-condenser with size $nr^2$, and they used this to construct affine extractors over large fields. Karnin and Shpilka~\cite{KarninShpilka11} named the construction of Gabizon and Raz~\cite{GabizonRaz08} to be ``rank preserving subspaces'' and used this construction to make a \emph{polynomial identity testing}\footnote{The \emph{polynomial identity testing problem} is when given a algebraic circuit $C$ (perhaps from a restricted class of circuits) to \emph{deterministically} decide whether the circuit $C$ computes the identically zero polynomial. The \emph{black box} version is where we only allow access to $C$ by evaluating the polynomial it computes. See Shpilka and Yehudayoff~\cite{ShpilkaYehudayoff10} for more on this problem.} algorithm of Dvir and Shpilka~\cite{DvirShpilka07} work in the \emph{black box} model.   Forbes and Shpilka~\cite{ForbesShpilka12} later gave an improved construction of a rank condenser with only $nr$ size, and showed how they can be used to make another polynomial identity testing algorithm of Raz and Shpilka~\cite{RazShpilka05} work in the black-box model. Forbes, Saptharishi and Shpilka~\cite{ForbesSS14}, building on the work of Agrawal, Saha, and Saxena~\cite{AgrawalSS13}, analyzed ``multivariate'' lossless rank condensers as they arose naturally in a polynomial identity testing algorithm.

Beyond applications to polynomial identity testing, Lokshtanov, Misra, Panolan and Saurabh~\cite{LokshtanovMPS14} used these condensers to derandomize a fixed-parameter-tractable algorithm of Marx~\cite{Marx09} for $\ell$-matroid intersection.  Cheung, Kwok and Lau~\cite{CheungKL13} rediscovered the rank condenser of Gabizon and Raz~\cite{GabizonRaz08} and (among other things) used this to give faster randomized algorithms for exact linear algebra. Forbes, Saptharishi and Shpilka~\cite{ForbesSS14} showed a generic recipe to construct such rank condensers from \emph{any} error-correcting code (over large fields). Given these applications and connections present in $(r,\eps)$-lossy rank condensers for $\eps=0$, we expect the $\eps>0$ version will similarly have many applications.

We now quote the parameters given by the probabilistic method.

\begin{propositionwp*}[Informal version of \autoref{res:lossy-seeded_prob-method=} and \autoref{res:lossy-seeded_prob-method}]
	Let $\F_q$ be a finite field. Let $n\ge r\ge 1$, $\eps\ge 0$ and $t>(1-\eps)r$.   Then there is a collection $\cE$ of $k$ matrices $\cE\subseteq\F_q^{t\times n}$ that is a $(r,\eps)$-lossy rank condenser whenever
	\begin{equation}\label{eq:temp-lossy-cond}
		k\ge\frac{rn+o_q(1)}{(t-(1-\eps)r)(\floor{\eps r}+1)-o_q(1)}
		\;.
	\end{equation}
	For $\eps>0$, there is a collection $\cE$ of size $k$ that is a $(\le r,\eps)$-lossy rank condenser whenever
	\[
		k\ge\frac{n+o_q(1)}{\eps (t-(1-\eps)r)-o_q(1)}
		\;.
		\qedhere
	\]
\end{propositionwp*}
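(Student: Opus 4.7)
The natural approach is the probabilistic method: sample $E_1,\dots,E_k\in\F_q^{t\times n}$ independently and uniformly, and union-bound the failure probability over subspaces of $\F_q^n$, relying on the rank distribution of a uniformly random $t\times s$ matrix over $\F_q$.

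\textbf{Per-subspace calculation.} For a fixed $s$-dimensional $V\subseteq\F_q^n$ with basis matrix $M\in\F_q^{n\times s}$, the matrix $EM$ is uniformly distributed in $\F_q^{t\times s}$: since $M$ has full column rank, $E\mapsto EM$ is a surjection with uniform fibers. Setting $P:=\prod_{\ell\ge 1}(1-q^{-\ell})$, the standard count $\#\{A\in\F_q^{t\times s}:\rank A=j\}=\binom{t}{j}_q\prod_{i=0}^{j-1}(q^s-q^i)\le q^{j(t+s-j)}/P$ together with a short geometric sum in $j'\le j$ gives
\[
    \Pr_E[\rank(EM)\le j]\le \frac{q^{-(t-j)(s-j)}}{P^2}
\]
for $j<s$. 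Since ranks are integers, the failure event $\rank(EM)<(1-\eps)s$ is exactly $\rank(EM)\le s-\floor{\eps s}-1$, so taking $j:=s-\floor{\eps s}-1$ yields $s-j=\floor{\eps s}+1$ and $t-j=t-(1-\eps)s+\Theta(1)$.

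\textbf{Union bounds.} For the $(r,\eps)$-bound, raise the above to the $k$-th power and union-bound over the $\binom{n}{r}_q\le q^{r(n-r)}/P$ rank-$r$ subspaces; requiring the expected number of bad subspaces to be strictly less than $1$ rearranges to
\[
    k\ge \frac{rn+o_q(1)}{(t-(1-\eps)r)(\floor{\eps r}+1)-o_q(1)}\;,
\]
the $o_q(1)$'s absorbing the factors of $P$, the $\Theta(1)$ shift in $t-j$, and the slack $r(n-r)\le rn$. For the $(\le r,\eps)$-bound, I would additionally union-bound over $s=1,\dots,r$; the per-$s$ requirement $k(t-(1-\eps)s)(\floor{\eps s}+1)\ge s(n-s)+o_q(1)$ is, via the trivial bound $\floor{\eps s}+1\ge\eps s$, implied by $k\ge(n-s)/(\eps(t-(1-\eps)s))+o_q(1)$ whenever $\eps s\ge 1$, with the $\eps s<1$ range giving a strictly weaker constraint. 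A direct calculus check shows $(n-s)/(t-(1-\eps)s)$ is increasing in $s$ precisely when $(1-\eps)n>t$, which holds in the natural regime $t\ll n$, so the binding constraint occurs at $s=r$ and recovers the stated bound after using $n-r\le n$.

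\textbf{Main obstacle.} There is no conceptual obstacle: the argument uses only rank enumeration over $\F_q$ and the union bound. The only real work is bookkeeping---collapsing all $q$-Pochhammer factors and $\floor{\cdot}$-versus-real gaps into a single $o_q(1)$ symbol, and performing the monotonicity check that singles out $s=r$ as the binding case in the $(\le r,\eps)$-variant.
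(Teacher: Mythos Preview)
Your approach is essentially the paper's: independent uniform $E_i$, use that $EM$ is uniform in $\F_q^{t\times s}$, bound $\Pr[\rank\le j]$ by $q^{-(t-j)(s-j)}$ up to a $q$-Pochhammer constant, raise to the $k$-th power, and union-bound over subspaces via $\binom{n}{s}_q\lesssim q^{s(n-s)}$. The first bound goes through exactly as you wrote.

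For the $(\le r,\eps)$ bound your monotonicity step is more fragile than necessary. You need $(1-\eps)n>t$ for $(n-s)/(t-(1-\eps)s)$ to be increasing in $s$, and you wave this off as ``the natural regime $t\ll n$''; but the statement carries no such hypothesis. The paper avoids this entirely: from the per-$s$ requirement it uses $s\le r$ to replace $t-(1-\eps)s$ by the smaller $t-(1-\eps)r$ in the denominator, together with $\floor{\eps s}+1\ge \eps s$, obtaining
\[
k\ \ge\ \frac{sn+o_q(1)}{(t-(1-\eps)r)\,\eps s - o_q(1)}\ =\ \frac{n+o_q(1)}{\eps(t-(1-\eps)r)-o_q(1)}\,,
\]
which is manifestly independent of $s$ and needs no regime assumption or calculus check. (Also, your remark that the bound $\floor{\eps s}+1\ge\eps s$ only applies ``whenever $\eps s\ge 1$'' is unnecessary: that inequality holds for all real $\eps s\ge 0$, so no separate treatment of the $\eps s<1$ range is needed.)
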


Thus we can make the output size $t$ of the condenser to be almost equal to the guaranteed dimension bound of $(1-\eps)r$.  Further, we see that there is essentially no penalty in (existentially) insisting for a $(\le r,\eps)$-condenser over a $(r,\eps)$-condenser.  However, we show in \autoref{res:dim-le-r_dim-eq-r} that the notion of $(\le r,\eps)$-condenser is provably stronger.

\subsection{Our Results}

We now turn to our constructions of condensers.  We begin with the following construction, which is the rank condenser of Forbes and Shpilka~\cite{ForbesShpilka12} and was named the \emph{folded Wronskian} by Guruswami-Kopparty~\cite{GuruswamiKopparty13}.

\begin{construction}[Folded Wronskian]\label{constr:folded-wronskian}
	Let $\F$ be a field.  Let $\omega\in\F$ be an element of multiplicative order $\ge n$. Define $\Wr_{t,\omega}(x)\in\F[x]^{\zr{t}\times \zr{n}}$ by $(\Wr_{t,\omega}(x))_{i,j}\eqdef (\omega^ix)^j$.

	That is, identifying $\F^{\zr{n}}$ with the degree $<n$ polynomials $\F[x]^{<n}$, we see that $\Wr_{t,\omega}(x):\F[x]^{<n}\to\F[x]^t$ defined by
	\[
		f(x)\mapsto (f(x),f(\omega x),\cdots, f(\omega^{t-1} x))
		\;.
		\qedhere
	\]
\end{construction}

That is, we define $\zr{n}\eqdef\{0,\ldots,n-1\}$ so that in the above $i$ and $j$ are indexed from zero. When the value of $\omega$ is clear from context we will just write ``$\Wr_t$''.  Note that the fact that $\omega$ has large multiplicative order means that we require a large field, in particular that $|\F|>n$.

The key result that forms the starting point for our constructions is the following analysis of the folded Wronskian by Guruswami and Kopparty~\cite{GuruswamiKopparty13}.  While their analysis was originally in the context of subspace designs, we state their result here in the language of lossless rank condensers as it is more natural in our context.

\begin{restatable}[Guruswami-Kopparty~\cite{GuruswamiKopparty13}]{theoremwp}{foldedwronskian}\label{res:folded-wronskian-GK13}
	Assume the setup of \autoref{constr:folded-wronskian} where we take $t\ge r\ge 1$. Let $S\subseteq \{(\omega^\ell)^j\st j\ge 0\}$ where $\ell\ge t-r+1$.  Then $\{\Wr_t(\alpha)\st \alpha\in S\}\subseteq\F^{t\times n}$ is a strong $(r,\frac{r(n-r)}{t-r+1})$-lossless rank condenser.
\end{restatable}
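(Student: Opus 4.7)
The plan is to analyze the zero structure of a single folded Wronskian determinant. Translating to polynomial language, identify $\F^n$ with $\F[x]^{<n}$, so that for $M \in \F^{n \times r}$ of rank $r$, the subspace $V := \cspn(M)$ is an $r$-dimensional space of polynomials, and $\Wr_t(\alpha)M$ has rank $r - d_\alpha$ where
\[
d_\alpha := \dim\bigl\{f \in V : f(\omega^i\alpha) = 0 \text{ for all } i \in \zr{t}\bigr\}.
\]
The target inequality becomes $\sum_{\alpha \in S} d_\alpha \le r(n-r)/(t-r+1)$.

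Fix any basis $g_1, \ldots, g_r$ of $V$ with $g_s(x) = \sum_{j \in \zr{n}} c_{s,j} x^j$ and define the folded Wronskian
\[
W(x) := \det\bigl(g_s(\omega^{i}x)\bigr)_{s \in [r],\, i \in \zr{r}}.
\]
A direct expansion using multilinearity of the determinant (and recognizing the classical Vandermonde determinant in the $\omega$-exponents) yields
\[
W(x) = \sum_{0 \le j_1 < \cdots < j_r \le n-1} \det(c_{s,j_{s'}})_{s,s' \in [r]} \cdot \prod_{1 \le a < b \le r}(\omega^{j_b} - \omega^{j_a}) \cdot x^{j_1 + \cdots + j_r}.
\]
Since $\omega$ has multiplicative order at least $n$, each Vandermonde factor is nonzero; since $\dim V = r$, some $r\times r$ minor $\det(c_{s,j_{s'}})$ is nonzero; hence $W(x) \not\equiv 0$. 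The exponent $j_1 + \cdots + j_r$ ranges from $\binom{r}{2}$ to $r(n-1) - \binom{r}{2}$, so $x^{\binom{r}{2}}$ divides $W(x)$ and the quotient $\tilde{W}(x) := W(x)/x^{\binom{r}{2}}$ has degree at most $r(n-r)$.

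The heart of the argument is the claim that $(x - \omega^j\alpha)^{d_\alpha}$ divides $W(x)$ for each $\alpha \in S$ and each $j \in \{0, 1, \ldots, t-r\}$. Since changing the basis of $V$ alters $W$ only by a nonzero multiplicative scalar (the determinant of the change-of-basis matrix), I may reselect the basis so that $g_1, \ldots, g_{d_\alpha}$ span $V_\alpha := V \cap \ker\Wr_t(\alpha)$. With this choice, for any $s \le d_\alpha$ and $i \in \zr{r}$, the value $g_s(\omega^i \cdot \omega^j\alpha) = g_s(\omega^{i+j}\alpha)$ equals zero: since $i+j \in \zr{t}$, this is one of the $t$ points at which every element of $V_\alpha$ vanishes. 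Hence $(x - \omega^j\alpha)$ divides each entry in the first $d_\alpha$ rows of the Wronskian matrix, and multilinearity of the determinant then yields $(x - \omega^j\alpha)^{d_\alpha} \mid W(x)$.

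Finally, I collect zeros. The hypothesis $\ell \ge t-r+1$ together with $\omega$ having sufficiently large multiplicative order guarantees that the points $\{\omega^j\alpha : \alpha \in S,\, 0 \le j \le t-r\}$ are pairwise distinct and all nonzero: a collision $\omega^{j_1}\alpha_1 = \omega^{j_2}\alpha_2$ with $\alpha_i = \omega^{k_i\ell}$ and $k_1 \ne k_2$ would force $|j_1-j_2| = |k_1-k_2|\ell \ge \ell \ge t-r+1$, contradicting $|j_1-j_2| \le t-r$. Each such point is a zero of $\tilde{W}$ of multiplicity at least $d_\alpha$, so summing the multiplicity contributions yields
\[
(t-r+1)\sum_{\alpha \in S} d_\alpha \le \deg \tilde{W} \le r(n-r),
\]
as required. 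I expect the main technical subtlety to be the multiplicity claim---justifying that the determinant vanishes to the full order $d_\alpha$ at each $\omega^j\alpha$, rather than only to first order---after which the degree accounting is routine bookkeeping.
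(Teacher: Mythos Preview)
Your proposal is correct and follows precisely the approach that the paper attributes to Guruswami--Kopparty (the paper itself does not give a full proof, only the sketch preceding the theorem): interpret the rank deficiency $d_\alpha$ as a lower bound on the multiplicity of vanishing of the $r\times r$ folded Wronskian determinant at each of the $t-r+1$ shifted points $\omega^j\alpha$, and combine this with the improved degree bound $r(n-r)$ from \autoref{res:folded-wronskian-nonzero}. Your basis-change trick to extract the factor $(x-\omega^j\alpha)^{d_\alpha}$ is exactly the right way to make the multiplicity claim precise.

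One small caveat on the distinctness step: your argument that a collision $\omega^{j_1}\alpha_1=\omega^{j_2}\alpha_2$ forces $|j_1-j_2|=|k_1-k_2|\ell$ treats the exponents as integers rather than as residues modulo $\text{ord}(\omega)$. With only the hypothesis $\text{ord}(\omega)\ge n$ from \autoref{constr:folded-wronskian}, this equality need not hold (for instance if $\gcd(\ell,\text{ord}(\omega))$ is small). You yourself flag ``sufficiently large multiplicative order'' here, and indeed every application in the paper (e.g.\ \autoref{res:lossless-seeded_large-field} and \autoref{res:gk13-lossy-condenser_explicit}) takes $\omega$ of order at least $t\cdot|S|$, which makes the exponent windows $\{k\ell,\ldots,k\ell+t-r\}$ genuinely disjoint as integer intervals inside $\{0,\ldots,\text{ord}(\omega)-1\}$. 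So this is a statement-level technicality rather than a flaw in your argument.
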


We note here that the above parameters are slightly stronger than what Guruswami and Kopparty~\cite{GuruswamiKopparty13} obtain, as they only obtain a list bound of $\frac{r(n-1)}{t-r+1}$. This improved bound follows by using some of the analysis from Forbes, Saptharishi and Shpilka~\cite{ForbesSS14} as explained in \autoref{sec:lossless-seeded_constr}. Note that this construction essentially matches the non-constructive bound \eqref{eq:temp-lossy-cond} when $\eps = 0$.

The above analysis indicates that for a matrix $M\in\F^{n\times r}$ of rank $r$ that the total rank loss over all maps in $\cE$ is at most $\frac{r(n-r)}{t-r+1}$.  Thus, by an averaging argument, at most $\nicefrac{1}{k}\cdot \frac{r(n-r)}{t-r+1}$ such maps can have a rank loss of $\ge k$.  This observation thus shows that the above construction is not just a \emph{lossless} rank condenser but also a \emph{lossy} condenser (with different parameters).

\begin{corollarywp*}[\autoref{res:gk13-lossy-condenser_explicit}]
	Let $\F$ be a field. Let $n,t\ge r\ge 1$ and $\eps>0$, where $\omega\in\F$ is an element of multiplicative order $\ge\poly(n)$. Define $\cE\eqdef\{\Wr_{t,\omega}((\omega^t)^j)\st 0\le j< \frac{n}{\eps(t-r+1)}\}$, that is, the folded Wronskian evaluated at $\frac{n}{\eps(t-r+1)}$ distinct powers of $\omega^t$. Then $\cE$ is an explicit $(\le r,\eps)$-lossy rank condenser.
\end{corollarywp*}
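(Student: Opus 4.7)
The plan is to derive the lossy condenser from the strong lossless property of the folded Wronskian (\autoref{res:folded-wronskian-GK13}) by a pigeonhole/averaging argument on the rank loss, following the sketch given in the paragraph preceding the corollary.

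First I would fix any $s \in \{1, \ldots, r\}$ and any rank-$s$ matrix $M \in \F^{n \times s}$, and apply \autoref{res:folded-wronskian-GK13} with $r$ replaced by $s$ and with $\ell = t$. Since $t \ge t - s + 1$ whenever $s \ge 1$, and since the hypothesis that $\omega$ has sufficiently large multiplicative order guarantees that the $|\cE|$ powers $(\omega^t)^j$ are distinct, \autoref{res:folded-wronskian-GK13} yields that $\cE$ is a strong $\bigl(s, \tfrac{s(n-s)}{t-s+1}\bigr)$-lossless rank condenser, so
\[
\sum_{E \in \cE} (s - \rank EM) \;\le\; \frac{s(n-s)}{t-s+1}.
\]

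Next comes a simple averaging step. Suppose for contradiction that every $E \in \cE$ has rank loss $s - \rank EM \ge \floor{\eps s}+1$; then the sum above would be at least $|\cE|(\floor{\eps s}+1) \ge \tfrac{n(\floor{\eps s}+1)}{\eps(t-r+1)}$, and a contradiction with the lossless bound reduces to verifying the strict inequality $\tfrac{n(\floor{\eps s}+1)}{\eps(t-r+1)} > \tfrac{s(n-s)}{t-s+1}$, which follows immediately from the three facts $\floor{\eps s}+1 > \eps s$, $n > n-s$, and $t-s+1 \ge t-r+1$ (using $1 \le s \le r$). Hence some $E \in \cE$ achieves $s - \rank EM \le \floor{\eps s} \le \eps s$, i.e., $\rank EM \ge (1-\eps)\rank M$. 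Since this holds for every $s \in \{1, \ldots, r\}$, $\cE$ is a $(\le r, \eps)$-lossy rank condenser, and explicitness is immediate from \autoref{constr:folded-wronskian}.

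The one point requiring care is that a \emph{single} collection $\cE$ must serve as a lossless condenser simultaneously for every $s \le r$. The choice $\ell = t$ built into $\cE$ is just barely large enough to satisfy the hypothesis $\ell \ge t-s+1$ of \autoref{res:folded-wronskian-GK13} for every $s \ge 1$; the averaging then goes through because the slack $\floor{\eps s}+1 > \eps s$, combined with the monotonicities in $s$ noted above, is just enough to absorb the nominal collection size $\tfrac{n}{\eps(t-r+1)}$ uniformly across $s$, with no additional logarithmic or integer-rounding overhead.
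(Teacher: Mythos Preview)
Your proof is correct and follows essentially the same approach as the paper. The paper factors the averaging step into a standalone lemma (\autoref{res:strong-lossless-to-lossy}: a strong $(r,L)$-lossless condenser yields an $(r,\eps)$-lossy condenser on any sub-collection of size $>L/(\floor{\eps r}+1)$), and then applies it in \autoref{res:gk13-lossy-condenser} with the same chain of inequalities $\floor{\eps s}+1\ge\eps s$ and $t-s+1\ge t-r+1$ to get the uniform bound $|S|\ge n/(\eps(t-r+1))$ over all $s\le r$; you simply inline that averaging argument directly.
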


To motivate our below application to dimension expanders, suppose that $r=\nicefrac{n}{3}$ and $t=\nicefrac{n}{2}$ and some $\eps>0$. This says then that we construct a rank condenser that maps $\F^n$ to $\F^{\nicefrac{n}{2}}$ that maps rank $\nicefrac{n}{3}$ subspaces to rank $(1-\eps)\nicefrac{n}{3}$ subspaces.  Further, this condenser is a collection of at most
\[
	\frac{n}{\eps(\nicefrac{n}{2}-\nicefrac{n}{3})}=\nicefrac{6}{\eps}
\]
maps such that one map from the collection always preserves the desired rank.  To obtain these parameters, it is key to the analysis that we have a \emph{strong} lossless condenser and that it obtains the (near-optimal) bound given by Guruswami and Kopparty~\cite{GuruswamiKopparty13}. Note that these condensing parameters are very much similar to the min-entropy condenser of Raz~\cite{Raz05}, who uses a constant number of random bits to condense a source with constant-rate min-entropy.  

\section{Dimension Expanders}\label{sec:dim-exp}

We now turn to our main object of interest, \emph{dimension expanders}. Dimension expanders were defined by Barak, Impagliazzo, Shpilka and Wigderson~\cite{BISW04} in an attempt to translate challenges in the explicit construction of objects in Boolean pseudorandomness into the regime of linear algebra.  Indeed, in combinatorics there is a well-established analogy between subsets of $[n]$ and subspaces of vector spaces over finite fields.  In the context of pseudorandomness, we can then translate questions that manipulate the \emph{size} of subsets $S\subseteq\bits^n$ (or more generally, the min-entropy of distributions over $\bits^n$) into questions about manipulating the \emph{dimension} of subspaces $V\subseteq\F^n$. While these regimes seem different, it is conceivable that such linear algebraic constructions could yield new constructions in Boolean pseudorandomness (such as how the inner-product function is a two-source extractor).  Indeed, as in the work of Guruswami and Wang~\cite{GuruswamiWang13}, this idea has borne fruit (if in a perhaps unexpected way) by showing how linear-algebraic pseudorandom objects can improve list-decodable codes. We now define dimension expanders.

\begin{definition}\label{defn:dim-exp}
	Let $\F$ be a field, $n\ge 1$, $\eps>0$ and $\alpha\in \R$ with $\alpha\ge 1$. A collection of matrices $\cA=\{A_1,\ldots,A_d\}\subseteq\F^{n\times n}$ is a \demph{$(\eps,\alpha)$-dimension expander of degree $d$} if for all subspaces $V\subseteq\F^n$ of dimension $\le \eps n$ that
	\[
		\dim \sum_{i=1}^d A_i(V)
		=\dim\spn\{A_i(V)\}_{i=1}^d
		\ge \alpha \dim V
		\;.
	\]
	The collection $\cA$ is \demph{explicit} if given an index $i\in[|\cA|]$ the $i$-th matrix in $\cA$ can be constructed in $\poly(n,\log|\cA|)$ operations in $\F$.
\end{definition}

We remark that in the above definition one can generally assume that all of the maps $A_i$ are of full-rank, as that can only increase $\dim \sum_{i=1}^d A_i(V)$.  Similarly, one can assume that $A_1$ equals the identity matrix $\Id_n$ as we can use the transform $A_i\mapsto A_1^{-1}A_i$ as again this does not affect the size of the outputted dimension.  While these assumptions are thus without loss of generality, we will not impose them.

In general we will be most interested in $(\Omega(1),1+\Omega(1))$-dimension expanders of constant degree, which we shall thus call ``dimension expanders'' without any quantification.  This parameter regime is of interest because it matches that of the probabilistic method, which we quote the results of below.

\begin{propositionwp*}[Informal version of \autoref{res:dim-exp_prob-method}]
	Let $\F_q$ be a finite field, $n\ge 1$, $\eps>0$ and $\alpha\in \R$ with $\alpha\ge 1$. Then there exist a collection matrices $\cA=\{A_1,\ldots,A_d\}\subseteq\F^{n\times n}$ which is a $(\eps,\alpha)$-dimension expander of degree $d$ whenever
	\[
		d\ge \alpha+\frac{1}{1-\alpha\eps}+o_q(1)
		\;.
		\qedhere
	\]
\end{propositionwp*}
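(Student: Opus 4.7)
The plan is a direct application of the probabilistic method: sample $A_1,\ldots,A_d\in\F_q^{n\times n}$ independently and uniformly, and show that for $d$ as in the statement the resulting collection is an $(\eps,\alpha)$-dimension expander with positive probability. The failure event is that there exist $V\subseteq\F_q^n$ of some dimension $1\le r\le\floor{\eps n}$ and $W\subseteq\F_q^n$ of dimension $s\eqdef\ceil{\alpha r}-1$ with $\sum_i A_i(V)\subseteq W$; I would union bound over $r$, $V$, and $W$.

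The key distributional fact is that for any fixed $V$ of dimension $r$ with basis $v_1,\dots,v_r$, a uniform $A\in\F_q^{n\times n}$ induces a uniform map $A|_V\in\mathrm{Hom}(V,\F_q^n)$. Consequently $(Av_1,\ldots,Av_r)$ is uniform in $(\F_q^n)^r$, and, for a fixed $W$ of dimension $s$, extending $v_1,\ldots,v_r$ to a basis of $\F_q^n$ shows $\Pr[AV\subseteq W]=q^{-(n-s)r}$, giving $\Pr[A_iV\subseteq W\ \forall i]=q^{-d(n-s)r}$. Combining this with the Gaussian binomial bound $\binom{n}{r}_q\le q^{r(n-r)}(1+o_q(1))$ (and likewise for $W$), the union bound over $V$ and $W$ yields
\[
	\Pr[\text{failure}]\le\sum_{r=1}^{\floor{\eps n}}q^{r(n-r)+s(n-s)-d(n-s)r}\cdot(1+o_q(1)).
\]

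Forcing each summand to be less than $1/\floor{\eps n}$ and rearranging gives the sufficient condition
\[
	d>\frac{s}{r}+\frac{n-r}{n-s}+o_q(1)=\alpha-\tfrac{1}{r}+\frac{n-r}{n-\alpha r+1}+o_q(1)
\]
for every $1\le r\le\floor{\eps n}$. The right-hand side is non-decreasing in $r$ (provided $\alpha r<n$, which is needed for the statement to be meaningful), so the binding constraint is at $r=\floor{\eps n}$, where it evaluates to
\[
	\alpha+\frac{1-\eps}{1-\alpha\eps}+o_q(1)\ \le\ \alpha+\frac{1}{1-\alpha\eps}+o_q(1),
\]
matching the stated bound.

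The whole argument is routine, and the main obstacles are bookkeeping: tracking the ceiling in $s=\ceil{\alpha r}-1$, absorbing the $\floor{\eps n}$ outer summation and the Gaussian-binomial slack into a single $o_q(1)$ correction term, and verifying monotonicity of the right-hand side in $r$ so that it suffices to take $r=\floor{\eps n}$. No delicate structural idea is needed beyond the observation that the restriction of a uniform matrix to a fixed subspace is itself a uniform linear map.
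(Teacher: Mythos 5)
Your argument is correct and matches the paper's proof in all essentials: the paper fixes a dimension $r$, views $[A_1M\,|\cdots|\,A_dM]$ as a uniform $n\times rd$ matrix, bounds the probability it has rank $\le\lceil\alpha r\rceil-1$ via a low-rank-probability lemma (itself proved by unioning over target subspaces $W$, exactly as you do explicitly), unions over $V$ and then over $r$, and simplifies the sufficient condition on $d$ to $\alpha+\frac{1}{1-\alpha\eps}+\frac{2q}{(q-1)^2\ln q}$. The only cosmetic differences are that the paper uses a per-$r$ failure budget of $q^{-r}$ (summing geometrically) rather than $1/\floor{\eps n}$, and it avoids the monotonicity-in-$r$ claim by directly bounding $\frac{s}{r}\le\alpha$ and $\frac{n-r+1}{n-s}\le\frac{1}{1-\alpha\eps}$; your monotonicity claim is not needed and is slightly delicate near $\alpha=1$, so the direct bounds are the safer route.
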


Put into more concrete terms, we see that one can existentially obtain $(\nicefrac{1}{2d},d-\O(1))$-dimension expansion with degree $d$.  That we have an expansion of $(1-\eps)d$ in a degree $d$ expander is akin to \emph{lossless (vertex) expanders} which have a similar degree/expansion relation, and these expanders have applications beyond those of normal expanders (see Capalbo, Reingold, Vadhan and Wigderson~\cite{CapalboRVW02} and references therein).  While previous work focused on obtaining constant-degree dimension expanders, our work raises the questions of obtaining \emph{lossless} dimension expanders so that we match the above bound.  Our work, as discussed below, lends itself to being particularly quantitative with regards to the size and parameters of the construction.  However, we do not obtain lossless dimension expanders, and to the best of our knowledge, neither do the other previous constructions of dimension expanders discussed below.

While we discuss prior work in depth in \autoref{sec:dim-exp_constr}, we briefly summarize the state of art in dimension expanders in the following theorems.  

\begin{theoremwp*}[Lubotzky and Zelmanov~\cite{LubotzkyZelmanov08} and Harrow~\cite{Harrow08}]
	Let $\F$ be a field of characteristic zero and $n\ge 1$. There exists an explicit $\O(1)$-sized collection $\cA\subseteq\F^{n\times n}$ such that $\cA$ is a $(\nicefrac{1}{2},1+\Omega(1))$-dimension expander over $\F^n$.
\end{theoremwp*}

This construction requires characteristic zero as it uses a notion of distance that lacks a good definition in finite characteristic.

\begin{theoremwp*}[Bourgain and Yehudayoff~\cite{BourgainYehudayoff13}]
	Let $n\ge 1$. 
	There exists an explicit $\O(1)$-sized collection $\cA\subseteq\bits^{n\times n}$ such that $\cA$ is a $(\nicefrac{1}{2},1+\Omega(1))$-dimension expander over $\F^n$, over every field $\F$.
\end{theoremwp*}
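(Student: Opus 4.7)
The plan is to reduce to constructing constant-degree \emph{monotone expanders}. A monotone expander of degree $d$ on $[n]$ is a bipartite graph whose $d$ edges at each left vertex are given by $d$ (partial) monotone maps $f_1,\ldots,f_d\colon [n]\to[n]$ --- that is, each $f_i$ is strictly increasing on its domain --- and such that the bipartite graph is a vertex expander on sets of size at most $n/2$: for every $S\subseteq[n]$ with $|S|\le n/2$, $|\bigcup_i f_i(S)|\ge (1+\alpha)|S|$ for a constant $\alpha>0$. From such an object, form the matrices $A_i\in\bits^{n\times n}$ with $(A_i)_{f_i(j),j}=1$ whenever $f_i(j)$ is defined and $0$ elsewhere. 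These matrices lie in every field simultaneously, which is the source of the universality.

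First I would establish the reduction: monotone expanders yield dimension expanders over every field with essentially the same parameters. Let $V\subseteq\F^n$ have dimension $k\le n/2$. Put $V$ in reduced row-echelon form to obtain a basis $v_1,\ldots,v_k$ whose leading (leftmost nonzero) coordinates are distinct values $s_1<\cdots<s_k$, each with leading coefficient $1$; write $S=\{s_1,\ldots,s_k\}$. Strict monotonicity of $f_i$ guarantees that every nonzero coordinate of $v_j$ strictly to the right of $s_j$ gets sent under $A_i$ to a position strictly greater than $f_i(s_j)$, so $A_iv_j$ has leading coordinate exactly $f_i(s_j)$ (whenever $s_j$ lies in the domain of $f_i$). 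The set of leading coordinates of $\{A_iv_j\}_{i,j}$ therefore equals $\bigcup_i f_i(S)$. Since a family of vectors with pairwise distinct leading coordinates and leading coefficient $1$ is linearly independent, one can extract $|\bigcup_i f_i(S)|$ linearly independent vectors from among the $A_iv_j$, yielding $\dim\sum_i A_i(V)\ge |\bigcup_i f_i(S)|\ge (1+\alpha)k$, as required.

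The main obstacle --- and the genuinely hard content of the theorem --- is constructing constant-degree \emph{explicit} monotone expanders. Random monotone maps do not expand, so one must use algebraic structure. The construction proceeds by realizing each $f_i$ as the monotone portion of the fractional-linear action of a carefully chosen element of $\SL_2(\Z/N\Z)$ on a discretization of the projective line identified with $[n]$; spectral expansion of the resulting Cayley-type graph is derived from a sum-product estimate in $\Z/N\Z$ in the spirit of Bourgain--Glibichuk--Konyagin, and spectral expansion then implies vertex expansion. The delicate step is that a generic $\SL_2$ action is \emph{not} monotone on $[n]$: one must partition the domain into maximal monotone intervals, retain only those pieces, and show that this combinatorial surgery degrades both the degree and the expansion by only constant factors. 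Reconciling this non-monotone algebraic input with the monotonicity requirement demanded by the reduction above is the heart of the Bourgain--Yehudayoff argument, and the principal difficulty I would expect to encounter.
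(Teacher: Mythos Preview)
The paper does not give its own proof of this theorem; it is stated as a cited result (in the \texttt{theoremwp*} style) and only discussed at a high level in the ``Previous Constructions'' subsection. That discussion matches the two-step structure you outline: (i) monotone expanders yield dimension expanders over every field (the Dvir--Shpilka observation, made explicit by Dvir--Wigderson), and (ii) Bourgain--Yehudayoff construct explicit constant-degree monotone expanders. Your reduction argument in step (i), via row-echelon pivots and the fact that monotone maps preserve the ordering of leading coordinates, is the standard one and is correct.

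Your sketch of step (ii), however, misidentifies the underlying group and the analytic input. The paper (and the Bourgain--Yehudayoff construction itself) uses expansion in $\SL_2(\R)$, with the monotone maps arising from M\"obius transformations acting on a real interval; the required growth/expansion comes from Bourgain's work on the discretized ring problem over $\R$, not from the Bourgain--Glibichuk--Konyagin sum-product theorem in $\Z/N\Z$. This distinction is not cosmetic: the monotonicity of the maps on $[n]$ is inherited directly from the monotonicity of real fractional-linear maps on subintervals of $\R$, which is precisely why working over $\R$ (and then discretizing) is the natural setting. A construction over $\SL_2(\Z/N\Z)$ would not furnish monotone maps in any obvious way.
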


Note that the above construction is only a function of $n$, and not of the field, so that this \emph{single} construction is a dimension expander over \emph{all} fields.

As explained in \autoref{sec:dim-exp_constr}, both of the above constructions in some way attempt to extend existing ideas about expander graphs into the world of dimension expanders.  The first replicates the representation theory approach to constructing expanding Cayley graphs, and the second shows how bipartite expanders (with the strong requirement of \emph{monotonicity}) extend to also be dimension expanders.  

\smallskip \noindent {\bf Our Work.} In our work we take a different approach to constructing dimension expanders that treats such expanders as part of an emerging theme of \emph{linear-algebraic} pseudorandomness as seen by recent linear-algebraic approaches to list-decoding~\cite{Guruswami11,GuruswamiXing12,GuruswamiWang13,GuruswamiXing13,GuruswamiWang14} and linear-algebraic derandomization of subclasses of polynomial identity testing~\cite{KarninShpilka11,ForbesShpilka12}.  The first consequence of this perspective is that we work in fields that are at least polynomially large as this is the setting of Reed-Solomon codes.  To obtain dimension expanders over smaller fields, a natural solution within this theory is to use ``code concatenation'' ideas from coding theory.  Unfortunately the idea of code concatenation is somewhat subtle in our setting and so only supplies a concatenation (based on converting Reed-Solomon codes to BCH codes) that incurs a logarithmic loss in the parameters. The second consequence is that we build our dimension expanders out of the existing linear-algebraic pseudorandom objects that have emerged from prior work.  That is, just how in Boolean pseudorandomness the notions of expanders, extractors and list-decodable codes are all related (see for example Vadhan~\cite{Vadhan12}), we leverage such connections to construct our expanders from the above mentioned rank condensers.

We now explain our construction, which while ultimately was motivated by the connections between two-source rank condensers and dimension expanders (\autoref{res:two-src_dim-exp}), can be explained in a self-contained manner.  The first observation is that one can easily obtain ``$(1,d)$-expanders'' of degree $d\in\N$ if one is willing to allow the ambient space to grow.  That is, consider the tensor product $\F^n\otimes\F^d=\F^{nd}$.  By properties of the tensor product, for $V\subseteq\F^n$ of rank $r\le n$ we know that $V\otimes \F^d$ is of rank $rd$ in $\F^{nd}$.  Further, $V\otimes\F^d$ can be seen as the image of $d$ maps $T_i:\F^n\to\F^{nd}$ where the $i$-th map places the space $\F^n$ into the ``$i$-th block'' of $(\F^n)^d=\F^{nd}$.  In analogy to bipartite expander graphs, this is akin to giving each left vertex its own ``private neighborhood'' of right vertices into which it expands. 

While trivial, the above step now allows us to convert a question of \emph{expansion} to a question of \emph{condensing}.  That is, tensoring achieves expansion only because the output of the maps are larger than the input, while the non-trivial aspect of dimension expanders is to expand while keeping the output size the \emph{same}.  However, tensoring \emph{has} expanded dimension and thus we can now focus on reducing the output size. Specifically, suppose that we consider $V\subseteq\F^n$ of rank $r=\nicefrac{n}{2d}$.  Then its image under the above tensoring is $W\eqdef\sum_i T_i(V)$ of dimension $\nicefrac{n}{2}$.  This subspace $W$ lies in an $nd$-dimensional space and we wish return it to an $n$-dimensional space while not loosing too much in the dimension. However, this last problem is exactly the question of \emph{lossy rank condensing}.  As shown above after the informal version of \autoref{res:gk13-lossy-condenser_explicit}, we can condense such constant-rate dimension in a lossy way using a \emph{constant} number of maps.  In this example, we can condense $W$ to $\F^n$ using $\frac{dn}{\eps(n-\nicefrac{n}{2})}=\frac{2d}{\eps}$ maps, at least one of which produces a $(1-\eps)\nicefrac{n}{2}$ dimensional space.  Thus, this expands $V\subseteq\F^n$ of rank $\frac{n}{2d}$ to be of dimension $(1-\eps)\nicefrac{n}{2}$ within $\F^n$, all while using $d\cdot \frac{2d}{\eps}=\frac{2d^2}{\eps}$ maps (we multiply the number of maps due to the composition).  We summarize this composition in \autoref{fig:tensor-then-condense}.

\begin{figure}
	\centering
	\begin{tikzpicture}[minimum size=.3in,auto]
		\node [minimum width=.5in] (a) at (-2.0in,0) {$\F^n$};
		\node [minimum width=.5in] (b) at (0,0) {$\F^{nd}$};
		\node [minimum width=.5in] (c) at (2.0in,0) {$\F^{n}$};
		\draw [->] (a) to node[above] {tensoring} (b);
		\draw [->] (b) to node[above,minimum size=.4in] {\parbox{1.5in}{\centering $(\le \eps d n,\delta)$-lossy\\condenser}} (c);
		\draw [->] (a) to node[below] {degree $d$} (b);
		\draw [->] (b) to node[below] {degree $\frac{dn}{\delta(n-\eps dn)}$} (c);
		\node [minimum width=.8in] (aa) at (-2.0in,-.7in) {dim $\eps n$};
		\node [minimum width=.8in] (bb) at (0,-.7in) {dim $\eps dn$};
		\node [minimum width=1.2in] (cc) at (2.0in,-.7in) {dim $(1-\delta)\eps d n$};
		\draw [->] (aa) to (bb);
		\draw [->] (bb) to (cc);
	\end{tikzpicture}
	\caption{Constructing dimension expanders from tensoring and lossy rank condensers.}
	\label{fig:tensor-then-condense}
\end{figure}
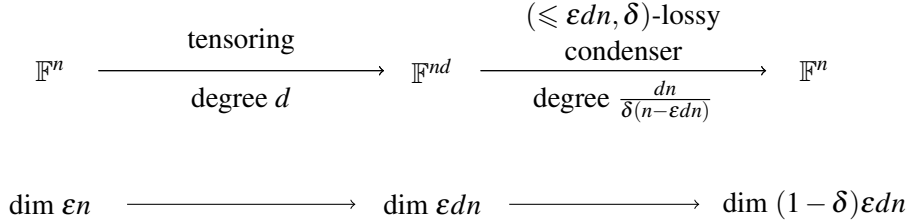

We note that the above discussion has only discussed constant-rate rank, that is, subspaces of $\F^n$ with rank $\Omega(n)$.  Dimension expanders however are required to expand \emph{all} small subspaces.  Our construction also handles this case as the lossy rank condensers we use will preserve a $(1-\delta)$ fraction of the input rank, as long as that rank is small enough. In the above sketch there is also the technicality that we must tensor with $\F^d$ with $d$ being \emph{integral}, which restricts $d\ge 2$ as $d=1$ does not yield expansion.  With this construction alone one would only obtain expansion in $\F^n$ for rank $<\nicefrac{n}{d}\le\nicefrac{n}{2}$, but we manage to sidestep this restriction by a simple truncation argument.  Putting the above pieces together we obtain the following theorem.

\begin{theoremwp*}[Main Theorem, Informal version of \autoref{res:tensor-then-condense_instantiate_gamma0} and \autoref{res:tensor-then-condense_instantiate_gamma-pos}]
	Let $n,d\ge 1$ and let $0<\eps\le\eta<1$ be constants. Let $\F$ be a field with $|\F|\ge\poly(n)$. 
	There is an explicit $(\eps,\nicefrac{\eta}{\eps})$-dimension expander in $\F^n$ of degree $\Theta\left(\frac{1}{\eps^2(1-\eta)^2}\right)$.
	If $\eps<\nicefrac{1}{d}$ then there is an explicit $(\eps,(1-\delta)d)$-dimension expander in $\F^n$ with degree $\frac{d^2}{\delta(1-\eps d)}$. 
\end{theoremwp*}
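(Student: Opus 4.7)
The plan is to instantiate the tensor-then-condense composition sketched just before the theorem statement. Fix an integer $d\ge 2$ and define block-embedding maps $T_1,\ldots,T_d\colon \F^n \to \F^{nd}$, where $T_i$ places $\F^n$ into the $i$-th coordinate block of $(\F^n)^d = \F^{nd}$. For any $V\subseteq\F^n$ of dimension $r$, the images $T_i(V)$ sit in pairwise disjoint coordinate blocks, so $W \eqdef \sum_{i=1}^d T_i(V)$ has dimension exactly $rd$. I then apply the corollary following \autoref{res:folded-wronskian-GK13} to the ambient space $\F^{nd}$, with output dimension $n$, target rank $\eps dn$, and loss $\delta$, which is legal precisely when $\eps d < 1$; this produces an explicit $(\le \eps dn, \delta)$-lossy rank condenser $\cE\subseteq\F^{n\times nd}$ of size at most $\lceil nd/(\delta(n-\eps dn+1))\rceil$, and it requires only that $\F$ contain an element of multiplicative order $\ge nd$, i.e.\ $|\F|\ge\poly(n)$. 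The final family of expanding maps is
\[
    \cA \eqdef \{\, E\,T_i \,:\, E\in\cE,\ i\in[d]\,\}\subseteq\F^{n\times n}\,,
\]
of degree $d\cdot|\cE|$.

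The expansion analysis is immediate from linearity. Given $V\subseteq\F^n$ with $\dim V = r \le \eps n$, we have $\dim W = rd \le \eps dn$, so the $(\le\eps dn, \delta)$-condensing property of $\cE$ yields some $E^\star\in\cE$ with $\dim E^\star(W)\ge (1-\delta)rd$. But $E^\star(W) = \sum_i E^\star T_i(V) \subseteq \sum_{A\in\cA} A(V)$, so $\dim \sum_{A\in\cA} A(V) \ge (1-\delta)d\cdot\dim V$. Thus $\cA$ is a $(\eps, (1-\delta)d)$-dimension expander of degree $\Theta(d^2/(\delta(1-\eps d)))$, which is precisely the second assertion.

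For the first assertion, I optimize the free parameter $d$. The constraints are $(1-\delta)d\ge \eta/\eps$ (to hit expansion factor $\eta/\eps$) and $\eps d<1$ (for the corollary to apply). Balancing them by choosing $d\approx (1+\eta)/(2\eps)$ (rounded to the nearest integer, which lies in the feasible range $[\eta/\eps,\ 1/\eps]$ precisely because $\eta<1$) gives $\eps d\approx(1+\eta)/2$ and hence $1-\eps d\approx(1-\eta)/2$; setting $\delta\eqdef 1 - \eta/(\eps d)\approx(1-\eta)/(1+\eta)$ makes $(1-\delta)d\eps = \eta$. Substituting into the degree bound yields $\Theta((1+\eta)^3/(\eps^2(1-\eta)^2)) = \Theta(1/(\eps^2(1-\eta)^2))$, as claimed; no truncation argument is needed since we stay within the $\eps < 1/d$ regime throughout. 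The main point worth highlighting is that the whole argument rests on the lossy condenser having \emph{constant} size in the regime of interest, and this in turn relies on the strong (not just weak) list-size bound of \autoref{res:folded-wronskian-GK13}: a weak bound would cost an extra factor of $r$ and destroy the constant-degree expansion.
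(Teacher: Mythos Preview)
Your proof of the second assertion (the $\eps<\nicefrac{1}{d}$ case) is correct and is exactly the paper's argument: tensor with $\F^d$ via the block embeddings $T_i$, then apply the $(\le \eps dn,\delta)$-lossy condenser from \autoref{res:gk13-lossy-condenser_explicit}, and read off degree $d\cdot\lceil d/(\delta(1-\eps d))\rceil$.

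Your argument for the first assertion, however, has a genuine gap. You need an integer $d$ in the open interval $(\eta/\eps,\,1/\eps)$: the lower bound so that $(1-\delta)d\ge\eta/\eps$ with some $\delta>0$, and the upper bound so that the condenser from \autoref{res:gk13-lossy-condenser_explicit} applies (it requires output dimension $n\ge$ rank bound, i.e.\ $\eps d<1$). But this interval has length $(1-\eta)/\eps$, which can be $<1$, and then it may contain no integer at all. For instance with $\eps=0.35$, $\eta=0.8$ the interval is $(2.29,2.86)$. Your claim that rounding $(1+\eta)/(2\eps)$ lands in $[\eta/\eps,1/\eps]$ ``precisely because $\eta<1$'' is simply false in this regime, and so is the conclusion that ``no truncation argument is needed.''

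This is exactly the obstacle the paper's truncation parameter $\gamma$ is designed to remove. The paper takes $d=\lceil(1+\eta)/(2\eps)\rceil$ (which may well give $\eps d\ge 1$), but then in the analysis only retains a $\lceil(1-\gamma)rd\rceil$-dimensional subspace of $W=\sum_i T_i(V)$ before condensing, with $\gamma$ chosen so that $(1-\gamma)\eps d=(1+\eta)/2<1$. The condenser now needs only handle rank $\le(1+\eta)n/2<n$, so it applies, and setting $(1-\delta)=2\eta/(1+\eta)$ gives expansion $(1-\gamma)(1-\delta)d=\eta/\eps$ with the stated degree. Your parameter choices for $d$ and $\delta$ are the right ones; what is missing is precisely this ``forget some of the tensored dimension'' step that decouples the integrality of $d$ from the constraint $\eps d<1$.
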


These expanders yield an expansion of $\alpha$ with degree $\approx \alpha^2$, and thus are not lossless.  In particular, the existential bound of \autoref{res:dim-exp_prob-method} shows that there are $(\eps,\nicefrac{\eta}{\eps})$-dimension expanders with degree $\approx \nicefrac{1}{\eps}+\frac{1}{1-\eta}$. In remains an interesting challenge to obtain such lossless dimension expanders.  In particular, we note that we get ``all of the dimension'' from the tensoring step using only \emph{one} map from the condenser. This occurs despite the fact that \emph{most} maps in the condenser preserve all of the dimension (assuming we double the seed length). It seems natural to hope that an integrated analysis of the tensoring and condensing stages would show that the construction has a better expansion than what we obtain.

Over small fields our results are comparatively weaker as we simulate a larger field within the small field (as how one transforms Reed-Solomon codes to BCH codes), so that we pay various logarithmic penalties.

\begin{corollarywp*}[Informal version of \autoref{res:tensor-then-condense_instantiate_small-field}]
	Let $\F_q$ be a finite field. Let $n,d\ge 1$. Then there are explicit $\left(\Theta\left(\frac{1}{d\log_q dn}\right),\Theta(d)\right)$-dimension expanders in $\F_q^n$ of degree $\Theta(d^2\log_q dn)$.
\end{corollarywp*}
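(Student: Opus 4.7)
The plan is to adapt the large-field tensor-then-condense construction (\autoref{res:tensor-then-condense_instantiate_gamma-pos}) to the small field $\F_q$ by means of a concatenation trick analogous to how Reed-Solomon codes become BCH codes upon restriction to a subfield: view $\F_q^n$ as a vector space over an extension field $\F_{q^k}$ of polynomial size, run the large-field construction over $\F_{q^k}$, and reinterpret the resulting $\F_{q^k}$-linear maps as $\F_q$-linear maps on $\F_q^n$. The penalty comes from a dimension mismatch: an $\F_q$-subspace $V \subseteq \F_q^n$ of $\F_q$-dim $r$ only generates an $\F_{q^k}$-subspace of $\F_{q^k}$-dim in the range $[r/k, r]$, and in the worst case this loses a factor $k$ between the two dimension notions.

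First I would choose $k = \Theta(\log_q(nd))$ so that $|\F_{q^k}|$ is polynomially large, sufficient for the folded Wronskian condenser (\autoref{res:folded-wronskian-GK13}) to apply over $\F_{q^k}$. Assume WLOG that $k \mid n$ (pad at constant-factor cost) and fix $\F_q$-linear identifications $\F_q^n \cong \F_{q^k}^{n/k}$ as well as $\F_q^{nD} \cong \F_{q^k}^{nD/k}$ for the tensored ambient space, where $D$ is the tensoring factor to be chosen. The construction consists of $D$ block-tensoring maps $T_i \colon \F_q^n \to \F_q^{nD}$ embedding $V$ into the $i$-th block of $(\F_q^n)^D$, composed with the explicit $\F_{q^k}$-lossy rank condenser $\{E_j\}$ from the corollary to \autoref{res:folded-wronskian-GK13}, reinterpreted under the identifications as $\F_q$-linear maps $\F_q^{nD} \to \F_q^n$.

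For the expansion analysis, set $V' := \F_{q^k} V \subseteq \F_{q^k}^{n/k}$ with $\dim_{\F_{q^k}} V' = r' \in [r/k, r]$, and $W := \sum_i T_i(V)$ with $\dim_{\F_q} W = Dr$. Since the identifications are chosen compatibly, each $T_i$ is $\F_{q^k}$-linear, and disjointness of blocks gives $\dim_{\F_{q^k}}(\F_{q^k} W) = Dr' \geq Dr/k$. The condenser then guarantees some $j$ with $\dim_{\F_{q^k}}(E_j \F_{q^k} W) \geq (1-\delta) Dr'$; combining with the $\F_{q^k}$-linearity of $E_j$ (so $E_j \F_{q^k} W = \F_{q^k} E_j W$) and the general inequality $\dim_{\F_q}(U) \geq \dim_{\F_{q^k}}(\F_{q^k} U)$ yields $\dim_{\F_q}(E_j W) \geq (1-\delta) Dr/k$. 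Choosing $D = \Theta(dk)$ then yields net $\F_q$-expansion $\Theta(d)$, and feeding the resulting parameters into the explicit condenser degree formula $\frac{nD/k}{\delta(n/k - Dr + 1)}$ produces the claimed dimension threshold $\eps = \Theta(1/(d\log_q(dn)))$ and degree $\Theta(d^2 \log_q(dn))$ (for constant $\delta$).

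The main obstacle is precisely the $\F_q$-vs-$\F_{q^k}$ dimension mismatch: the worst-case bound $\dim_{\F_{q^k}}(\F_{q^k} U) \geq \dim_{\F_q}(U)/k$ is tight, so a direct pullback from $\F_{q^k}$-expansion to $\F_q$-expansion would lose a factor $k$ in the expansion factor. Inflating the tensoring factor $D$ by a factor $k$ absorbs this loss at the cost of a larger tensoring degree, and hence a correspondingly larger input rank bound $r^* = Dr$ inside the condenser degree formula; the remaining work is careful bookkeeping of these logarithmic penalties against the explicit parameters of the folded Wronskian corollary to ensure that all the inequalities (in particular $Dr \ll n/k$, needed for the condenser degree to remain $\O(D/\delta)$) translate into the stated range for $\eps$.
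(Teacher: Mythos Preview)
Your approach is sound in outline but does not achieve the stated parameters: it loses an extra factor of $k=\Theta(\log_q(dn))$ in both the threshold and the degree. With $D=\Theta(dk)$, the $\F_{q^k}$-condenser must handle $\F_{q^k}$-rank up to $Dr'\le Dr$ inside $\F_{q^k}^{n/k}$; your own constraint ``$Dr\ll n/k$'' then forces $r\ll n/(Dk)=n/(dk^2)$, i.e.\ $\eps=\Theta(1/(dk^2))$, not $\Theta(1/(dk))$. Correspondingly the seed length is $\Theta(D/\delta)=\Theta(dk/\delta)$ and the total degree is $D\cdot\Theta(dk/\delta)=\Theta(d^2k^2)$. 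So you end up with $(\Theta(1/(d\log_q^2 dn)),\Theta(d))$-expanders of degree $\Theta(d^2\log_q^2 dn)$, a logarithmic factor off in two places.

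The paper avoids this second factor of $k$ by \emph{not} identifying $\F_q^n$ with $\F_{q^k}^{n/k}$ and never passing to $\F_{q^k}$-spans of the source. Instead it keeps the source over $\F_q$ throughout and only transforms the condenser matrices: given $E\in\F_{q^k}^{t\times nd}$, apply a fixed $\F_q$-linear isomorphism $\varphi:\F_{q^k}\cong\F_q^k$ entrywise to get $\varphi^t(E)\in\F_q^{kt\times nd}$. The key point (\autoref{res:BCH-compose-rank}) is that for $M\in\F_q^{nd\times s}$ one has $\rank_{\F_q}\varphi^t(E)M\ge\rank_{\F_{q^k}}EM$, and since $\rank_{\F_q}M=\rank_{\F_{q^k}}M$ for $\F_q$-matrices, the lossy condensing guarantee transfers losslessly. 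The only cost is that the condenser's output grows from $t$ to $kt$ rows; fitting $kt\le n$ means $t\le n/k$, hence the rank bound $rd\le n/k$ gives $\eps\le 1/(dk)$---a single factor of $k$. With tensoring degree $d$ (not $dk$) and seed length $\Theta(dk/\delta)$, the total degree is $\Theta(d^2k)$. In short: the paper pays the field-change penalty once (in the output size of the condenser), whereas your route pays it twice (once via the inflated tensoring $D=dk$, and again because that inflation tightens the condenser's rank constraint).
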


\section{Two-Source Rank Condensers}\label{sec:two-src}

In the context of Boolean pseudorandomness, it is well known (see for example Vadhan~\cite{Vadhan12}) that strong min-entropy seeded extractors (extractors that output the entropy of the source \emph{plus} the entropy of the seed) are equivalent to a form of vertex expansion.  Such extractors are a special case of (seedless) two-source min-entropy extractors where one of the sources is very small and of full entropy. Thus, as a generalization of the dimension expanders we have already defined, we can thus define the notion of a \emph{(seedless) two-source rank condenser}.  While it is often most natural to consider the two sources to be of equal dimension, to highlight the connection to dimension expanders (\autoref{res:two-src_dim-exp}) we consider sources with unbalanced dimension. 

\begin{definition}\label{defn:two-source}
	Let $\F$ be a field and $n\ge r\ge 1$ and $m\ge s\ge 1$. A function $f:\F^n\times\F^m\to\F^t$ is a \demph{(seedless) $(r,s,\eps)$-two-source rank condenser} if for all sets $A\subseteq\F^n$ and $B\subseteq\F^m$ with $\rank A=r$ and $\rank B=s$,
	\[
		\rank f(A\times B)=\rank \{ f(\vv,\vw)\}_{\vv\in A,\vw\in B} \ge (1-\eps)\rank A\cdot \rank B
		\;.
	\]
	The function $f$ is a $(\le r,s,\eps)$-condenser if it is a $(r',s,\eps)$-condenser for all $1\le r'\le r$, and $(\le r,\le s,\eps)$-condensers are defined similarly. If $\eps=0$ we say the rank condenser is \demph{lossless} and it is otherwise \demph{lossy}. The function $f$ is \demph{bilinear} if $f(\vv,\vw)=(\vv^\T E_i\vw)_{i=1}^t$ for $E_i\in\F^{n\times m}$. The function $f$ is \demph{explicit} if it can be evaluated in $\poly(n,m,t)$ steps.
\end{definition}

While this definition is naturally motivated as a generalization of dimension expanders, we originally were motivated to study these objects due to potential applications for constructing \emph{subspace evasive sets}, as we describe in \autoref{sec:subspace-evasive}.

Note that in general we allow the function $f$ to be arbitrary, but in this work we will restrict ourselves to bilinear functions $f$ as they are the most natural.  In this case, as discussed after \autoref{defn:lossless-seeded}, we see that the function $f$ acts on \emph{subspaces} so that we ask that for subspaces $V\subseteq\F^n$ and $W\subseteq\F^m$ that $\dim f(V,W)\ge (1-\eps)\dim V\cdot \dim W$.  In this way, $f$ can be thought of as a \emph{derandomized tensor product}.

We now quote the parameters as given by the probabilistic method.

\begin{propositionwp*}[Informal version of \autoref{res:two-src_prob-method} and \autoref{res:two-src_prob-method<}]
	Let $\F_q$ be a finite field. Let $n\ge r\ge 1$ and $m\ge s\ge 1$ and $\eps\ge0$. Then there exists a function $f:\F^n\times\F^m\to\F^t$ which is a bilinear $(r,s,\eps)$-two-source rank condenser, assuming that
	\[
			t
			\ge
			\frac{n}{\eps s}+\frac{m}{\eps r}+(1-\eps) rs+o_q(1)
			\;.
	\]
	for $\eps>0$. Further, there exists a $f$ which is a  $(\le r,s,\eps)$-condenser assuming that
	\[
		t
		\ge
		\frac{n}{\eps s}+\frac{m}{\eps }+(1-\eps) rs+o_q(1)
		\;.
	\]
	
	If $\eps=0$, then there exists an $f$ which is a $(r,s,0)$-condenser assuming that
	\[
		t\ge rn+sm+rs+o_q(1)
		\;.
		\qedhere
	\]
\end{propositionwp*}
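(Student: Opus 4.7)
The plan is the standard probabilistic method over random bilinear maps. I would draw $E_1, \ldots, E_t \in \F_q^{n \times m}$ independently and uniformly, defining $f(\vv, \vw) \eqdef (\vv^{\T} E_k \vw)_{k=1}^t$. The crucial observation is that for any \emph{fixed} subspaces $V \subseteq \F_q^n$ and $W \subseteq \F_q^m$ of dimensions $r$ and $s$, the induced linear map $\phi_{V,W} : V \otimes W \to \F_q^t$ is a uniformly random linear map from $\F_q^{rs}$ to $\F_q^t$. This holds because the restriction map from $\F_q^{n \times m}$ onto the space of bilinear forms on $V \times W$ (which is isomorphic to $\F_q^{r \times s}$) is surjective, so the uniform distribution pushes forward to the uniform distribution, and the $E_k$ are independent. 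The problem therefore reduces to tail bounds on the rank of a single uniformly random $t \times rs$ matrix, followed by a union bound over pairs $(V, W)$.

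For a fixed $(V, W)$ and $\eps > 0$, $(1-\eps)$-condensing requires $\dim \ker \phi_{V,W} \le \floor{\eps rs}$, so I would bound the failure probability by enumerating $j$-dimensional kernel subspaces with $j \eqdef \floor{\eps rs} + 1$. The Gaussian binomial coefficient satisfies $\binom{rs}{j}_q \le q^{j(rs-j) + o_q(1)}$, and any fixed $j$-dimensional subspace lies in $\ker \phi_{V,W}$ with probability exactly $q^{-tj}$ (a basis maps to independent uniform elements of $\F_q^t$), yielding
\[
\Pr[\phi_{V,W} \text{ fails to } (1-\eps)\text{-preserve rank}] \le q^{-j(t - rs + j) + o_q(1)}.
\]
For the lossless case $\eps = 0$ I would instead directly union bound over the $q^{rs}-1$ nonzero elements of $V \otimes W$ (each killed with probability $q^{-t}$), giving at most $q^{rs - t}$.

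Finally I would union bound over $(V, W)$ pairs via $\binom{n}{r}_q \binom{m}{s}_q \le q^{r(n-r) + s(m-s) + o_q(1)}$. The lossless case yields the condition $t > r(n-r) + s(m-s) + rs + o_q(1)$, which is implied by the claimed $t \ge rn + sm + rs + o_q(1)$. For $\eps > 0$, substituting $j \ge \eps rs$ and $rs - j \le (1-\eps) rs$ gives the requirement $t > (1-\eps) rs + \frac{n-r}{\eps s} + \frac{m-s}{\eps r} + o_q(1)$, implied by the first bound. For the $(\le r, s, \eps)$-condenser, I would further union bound over $r' \in \{1, \ldots, r\}$ (costing only an additive $\log_q r = o_q(1)$ in $t$) and take the pointwise maximum of $(1-\eps) r' s + \frac{n-r'}{\eps s} + \frac{m-s}{\eps r'}$ over $r'$: the $\frac{m-s}{\eps r'}$ piece peaks at $r' = 1$ contributing $\frac{m}{\eps}$, while $(1-\eps) r' s$ peaks at $r' = r$ contributing $(1-\eps) rs$, yielding the second claimed bound. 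The argument is largely routine; the one thing to watch is that the various $o_q(1)$ contributions (Gaussian binomial corrections and the $\log_q r$ from the union over $r'$) genuinely vanish as $q \to \infty$ with the other parameters fixed.
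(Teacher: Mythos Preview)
Your proposal is correct and follows essentially the same approach as the paper: a random bilinear map $E\in\F_q^{t\times nm}$, the observation that $E\cdot(A\otimes B)$ is uniform over $\F_q^{t\times rs}$ for fixed full-rank $A,B$, a tail bound on the rank of a random matrix via counting candidate kernel subspaces, and a union bound over pairs of source subspaces. The only cosmetic difference is that for the $(\le r,s,\eps)$ case the paper arranges the per-$r'$ failure probability to be $<q^{-r'}$ so the geometric sum over $r'$ is $<1$ with no extra cost in $t$, whereas you absorb a $\log_q r$ into $o_q(1)$; both are fine for the informal statement.
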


In particular, in the balanced case of $n=m$ and $r=s$ this shows that any $t\ge \frac{2n}{\eps r}+(1-\eps) r^2+o_q(1)$ suffices. Note that unlike the single-source setting, there is a large penalty for condensing all small enough sources.  Thus, the above gives $(r,r,\nicefrac{1}{2})$-condensers with output $\approx \frac{n}{r}+r^2$ but to obtain a $(\le r,r,\nicefrac{1}{2})$-condenser the resulting output size is $\approx n+r^2$ (and \autoref{res:two-source_output-lb} shows that a linear dependence on $n$ is needed in this case).

Note that in our definitions of seeded rank condensers there was no analogue of \emph{strong} min-entropy extractors, which are extractors that also recover the entropy of the seed in addition to the entropy of the source.  That is, in our setting, there is no ``rank of the seed'' to recover as the seed is simply an index into the collection $\cE$.  The notion of a two-source rank condenser in some sense allows the second source to be a ``seed'' in that we can associate elements of $\cE$ with elements in a basis for $\F^m$.  However, we do not pursue this analogy further as two-source rank condensers meeting the probabilistic method (\autoref{res:two-src_prob-method<}) do not seem to yield good lossy rank condensers in all regimes as two-source condensers can require an output size which is linear in the input size (\autoref{res:two-source_output-lb}).

However, the connection between two-source extractors and expanders does hold tightly for the notion of rank, as we show. Note that for this connection it suffices to have condensers that work when one of the two sources has full rank.

\begin{propositionwp*}[Informal version of \autoref{res:two-src_dim-exp}]
	Let $\F_q$ be a finite field. For large $n$ and all other parameters constant, constructions of bilinear $(\le \delta n,m,\eps)$-two-source rank condensers $f:\F^n\times\F^m\to\F^t$ that meet the probabilistic method bound of \autoref{res:two-src_prob-method<} yield constructions of $(\delta,\alpha)$-dimension expanders in $\F^n$ meeting the probabilistic method bound of \autoref{res:dim-exp_prob-method}.
\end{propositionwp*}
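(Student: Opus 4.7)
The plan is to hard-wire a basis of the second source of a bilinear two-source rank condenser to obtain a collection of linear maps $\F^n\to\F^n$. Let $f:\F^n\times\F^m\to\F^n$ be a bilinear $(\le \delta n, m, \eps)$-two-source rank condenser, written as $f(\vv,\vw) = (\vv^\T E_i \vw)_{i=1}^n$, where we set the output length $t = n$. Let $\ve_1,\ldots,\ve_m$ be the standard basis of $\F^m$, and for each $j\in[m]$ define $B_j:\F^n\to\F^n$ by $B_j(\vv) \eqdef f(\vv,\ve_j)$, which is linear in $\vv$. By bilinearity, $f(\vv,\vw) = \sum_{j=1}^m w_j B_j(\vv)$, so $f(V,\F^m) = \sum_{j=1}^m B_j(V)$ for every subspace $V\subseteq\F^n$; explicitness of $f$ transfers to each $B_j$ immediately.

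Applying the $(\le \delta n, m, \eps)$-condensing property to an arbitrary $V\subseteq\F^n$ with $\dim V = r \le \delta n$, paired with $W = \F^m$ (which has rank $m$), yields
\[
    \dim \sum_{j=1}^m B_j(V) = \dim f(V, \F^m) \ge (1-\eps)\, m\, r.
\]
Setting $\alpha \eqdef (1-\eps)m$, this is exactly the $(\delta, \alpha)$-dimension-expansion guarantee, so $\cA = \{B_1,\ldots,B_m\}$ is a $(\delta,\alpha)$-dimension expander of degree $d = m$.

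It remains to match parameters. The probabilistic bound for $(\le \delta n, m, \eps)$-two-source rank condensers with $t = n$ requires
\[
    n \ge \frac{n}{\eps m} + \frac{m}{\eps} + (1-\eps)\,\delta\, n\, m + o_q(1).
\]
With $\delta, m, \eps$ held fixed as $n\to\infty$, the $\tfrac{m}{\eps}$ term is absorbed into $o(n)$ and $(1-\eps)\delta m = \alpha\delta$, so the inequality rearranges to $\eps m \ge \tfrac{1}{1-\alpha\delta} + o_q(1)$. Using $m - \alpha = \eps m$, this becomes $d = m \ge \alpha + \tfrac{1}{1-\alpha\delta} + o_q(1)$, which is exactly the probabilistic degree bound for $(\delta,\alpha)$-dimension expanders quoted after \autoref{defn:dim-exp}.

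The reduction itself is almost syntactic: a bilinear map is literally a ``batch'' of $m$ linear maps indexed by any basis of its second input. The one nontrivial step is the parameter matching in the last paragraph; no substantive obstacle is expected beyond careful bookkeeping of which terms get absorbed into the $o_q(1)$ in the regime ``large $n$, other parameters constant.''
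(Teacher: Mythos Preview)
Your proposal is correct and follows essentially the same approach as the paper: hard-wire the standard basis of the second source to get $m$ linear maps, use bilinearity to identify $\sum_j B_j(V)$ with the span of $f(V,\F^m)$, apply the condenser guarantee, and then match parameters by setting $t=n$, dividing through by $n$, and using $m-\alpha=\eps m$. The only cosmetic difference is that the paper keeps $t$ general and pads the $A_i\in\F^{t\times n}$ to $n\times n$ when $t\le n$, whereas you set $t=n$ from the outset; the resulting parameter calculation is identical.
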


We also give constructions of two-source condensers using seeded rank condensers.  That is, for two sources we use a seeded rank condenser to condense each source and use a union bound to show that the seed-length only doubles. We then enumerate over seeds and for each seed we then tensor the two condensed sources together.  While this approach seems wasteful, we show that it yields \emph{optimal} lossless two-source rank condensers by appropriate pruning. In particular, we observe that this is the same construction as given by Forbes and Shpilka~\cite{ForbesShpilka12} for an object known as a \emph{rank-metric code}.  We push this observation further to see (in \autoref{sec:rank-metric}) that bilinear lossless two-source rank condensers are \emph{equivalent} to rank-metric codes.  Using this connection, we obtain optimal such condensers over \emph{any} field using known constructions of rank-metric codes.

\begin{theoremwp*}[Informal version of \autoref{res:two-src_constr} and \autoref{res:two-src_rs-r}]
	Let $\F$ be a field and $n\ge r\ge 1$ and $m\ge s\ge 1$. Then there is an explicit $f:\F^n\times\F^m\to\F^t$ which is a $(r,s,0)$ with $t\le\O(\max\{r,s\}(n+m))$.
\end{theoremwp*}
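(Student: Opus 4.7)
The plan is to construct $f$ in two steps: first build a bilinear candidate by applying a seeded single-source rank condenser (the folded Wronskian) to each source with a shared seed and tensoring the outputs, and then prune the coordinate forms to the optimal dimension using the equivalence (from \autoref{sec:rank-metric}) between bilinear lossless two-source rank condensers and rank-metric codes.

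First I would pick a set $S\subseteq\F$ of size $|S| > r(n-r) + s(m-s)$ consisting of powers of a high-order element $\omega\in\F$, passing to a modest extension if $\F$ is too small. For each $\gamma\in S$, applying the folded Wronskians $\Wr_{r,\omega}(\gamma)\in\F^{r\times n}$ and $\Wr_{s,\omega}(\gamma)\in\F^{s\times m}$ losslessly condenses $V$ to $\F^r$ and $W$ to $\F^s$ respectively, except on small bad sets of seeds. By \autoref{res:folded-wronskian-GK13}, the bad set has size at most $r(n-r)$ for $V$ and at most $s(m-s)$ for $W$, so by a union bound some $\gamma^*\in S$ simultaneously preserves both ranks. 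I would then define the bilinear map
\[
    f(v,w) \eqdef \bigl(\,(\Wr_{r,\omega}(\gamma)\,v) \otimes (\Wr_{s,\omega}(\gamma)\,w)\,\bigr)_{\gamma\in S}
    \;.
\]
Since $(\Wr_{r,\omega}(\gamma^*)V)\otimes(\Wr_{s,\omega}(\gamma^*)W)$ already has dimension $rs$, the map $f$ is a bilinear $(r,s,0)$-two-source rank condenser. Its naive output dimension is $|S|\cdot rs = \Theta((rn+sm)\cdot rs)$, larger than the target by a factor of roughly $rs$.

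To reach the claimed bound, I would invoke the equivalence in \autoref{sec:rank-metric}: every bilinear $(r,s,0)$-two-source rank condenser corresponds, via its coefficient matrix, to a rank-metric code $\cC\subseteq\F^{n\times m}$ of minimum rank-distance at least $\min\{r,s\}+1$, and the number of linearly independent coordinate forms of the condenser equals $nm-\dim\cC$. By the Singleton bound for rank-metric codes, $\dim\cC\leq\max\{n,m\}(\min\{n,m\}-\min\{r,s\})$, so at most $\max\{n,m\}\cdot\min\{r,s\}\leq(n+m)\max\{r,s\}$ of the $|S|\cdot rs$ coordinate forms of $f$ are linearly independent. Pruning the rest yields an explicit bilinear $f$ with $t\leq O(\max\{r,s\}(n+m))$, as claimed.

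The main obstacle is making this pruning explicit, since the Singleton bound only gives an upper bound on $\dim\cC$ and one must effectively identify the independent coordinate forms. A cleaner alternative, likely what \autoref{res:two-src_rs-r} does, is to bypass the seeded+tensor construction altogether and directly plug an explicit MRD code---Gabidulin codes over finite fields, with a skew-polynomial variant for arbitrary fields---into the correspondence, producing an explicit $f$ that meets the Singleton bound by construction.
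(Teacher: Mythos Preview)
Your first stage---applying the folded Wronskian to each source with a shared seed and then tensoring---matches the paper's \autoref{res:two-src_condense-tensor} exactly, and your fallback of plugging an explicit MRD (Gabidulin/Roth) code into the condenser/rank-metric equivalence is also one of the paper's routes (\autoref{res:two-src_constr}). The gap is in your pruning step: the Singleton bound points the wrong way. If $E\in\F^{t\times nm}$ is the coefficient matrix of the bilinear condenser, then $\cC=\ker E$ and the number of linearly independent coordinate forms is $\rank E = nm-\dim\cC$. The Singleton bound gives an \emph{upper} bound on $\dim\cC$, hence a \emph{lower} bound $\rank E\ge \max\{n,m\}\cdot\min\{r,s\}$ on the number of independent forms---it does not cap how many you must keep. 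So Singleton cannot justify discarding coordinates; you would need a \emph{lower} bound on $\dim\cC$ (equivalently, that $\ker E$ is close to MRD), which is not given.

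The paper's actual pruning (\autoref{res:two-src_condense-tensor_pruned}) is explicit and avoids this entirely. The coordinate $h(\omega^i\alpha,\omega^j\alpha)$ is the evaluation of the \emph{univariate} polynomial $h(x,\omega^{j-i}x)$, which has degree $\le(n-1)+(m-1)$, at the point $\omega^i\alpha$. There are only $r+s-1$ values of $k=j-i$, and for each $k$ any $n+m-1$ evaluation points interpolate the rest. Hence every row of $E$ lies in the row span of the matrix $E'$ that records $h(\beta,\omega^k\beta)$ for $-r<k<s$ and $\beta$ ranging over any fixed set $T$ of size $n+m-1$; by \autoref{res:two-src_prune} this $E'$ is still a $(r,s,0)$-condenser, with output size $(r+s-1)(n+m-1)\le O(\max\{r,s\}(n+m))$. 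This is the missing idea: prune via polynomial interpolation of the diagonal restrictions, not via the Singleton bound.
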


We then turn to constructions of \emph{lossy} two-source condensers, where our results are considerably weaker.  However, we are able to give near-optimal results for \emph{constant} $r$ by using a brute force ``inner condenser'' and using our condense-then-tensor results as an ``outer condenser''.

\begin{propositionwp*}[Informal version of \autoref{res:two-src_lossy-outer}]
	Let $\F$ be a field and $n\ge r\ge 1$, where $\F$ is polynomially large and $r\le\O(1)$.  Then there is an explicit bilinear $(r,r,1-(1-\eps)^3)$-two source rank condenser $f:\F^n\times\F^n\to\F^t$ with $t\le\O(\nicefrac{n}{\eps^2 r})$.
\end{propositionwp*}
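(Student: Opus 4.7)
The plan is to compose an explicit seeded outer rank condenser (from the corollary to Theorem~GK13) with a brute-force bilinear inner two-source condenser, applying the \emph{same} outer seed to both sources so that the final output length grows only linearly in the number of seeds. This ``condense-then-tensor with shared seed'' route saves an $n$-factor over using independent seeds on each source, but requires a strong (rather than bare existence) outer guarantee, which is exactly what the folded Wronskian delivers.

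Concretely, for the outer step invoke the folded Wronskian lossy seeded rank condenser with intermediate output dimension $t' \eqdef 2r$, yielding an explicit collection $\cE = \{E_i\}_{i=1}^k \subseteq \F^{t' \times n}$ of size $k = O(n/(\eps r))$. The underlying Theorem~GK13 actually gives a \emph{strong} lossless condenser, so for any rank-$r$ subspace $V \subseteq \F^n$ the sum $\sum_{i=1}^k (r - \rank E_i V) \le r(n-r)/(t' - r + 1) = O(n)$. Choosing the leading constant in $k$ appropriately, Markov's inequality then shows that at least a $2/3$-fraction of the seeds are ``good for $V$'', meaning $\rank E_i V \ge (1-\eps) r$.

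For the inner step, since $t' = 2r = O(1)$, the probabilistic-method proposition on bilinear two-source condensers guarantees a bilinear $(\le r, \le r, \eps)$-two-source rank condenser $g : \F^{t'} \times \F^{t'} \to \F^{t''}$ with $t'' = O(t'/\eps + r^2) = O(1/\eps)$. Since $t'$ and $t''$ are constant and $|\F|$ is polynomially bounded in $n$, brute-force enumeration over the $|\F|^{O((t')^2 t'')} = \mathrm{poly}(n)$ bilinear maps, checking the condensing condition against the $\mathrm{poly}(n)$ many rank-$\le r$ pairs of subspaces of $\F^{t'}$, finds such a $g$ in $\mathrm{poly}(n)$ time.

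Finally define $f(\vv, \vw) \eqdef (g(E_i \vv, E_i \vw))_{i \in [k]}$, so $t = k \cdot t'' = O(n/(\eps r)) \cdot O(1/\eps) = O(n/(\eps^2 r))$ as desired. For any rank-$r$ subspaces $V, W \subseteq \F^n$, at least $2k/3$ seeds are good for $V$ and at least $2k/3$ for $W$, so at least $k/3 \ge 1$ seeds $i^*$ are good for both; then $E_{i^*} V$ and $E_{i^*} W$ each have rank $\ge (1-\eps) r$, and the inner condenser gives $\rank g(E_{i^*} V, E_{i^*} W) \ge (1-\eps) \cdot (1-\eps) r \cdot (1-\eps) r = (1-\eps)^3 r^2$, matching the required bound. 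The main obstacle is this parameter balance: shared seeds save an $n$-factor only because the strong outer condenser ensures many seeds are simultaneously good for both sources, while polynomial-time explicitness of $g$ relies essentially on $r = O(1)$ together with $|\F| = \mathrm{poly}(n)$.
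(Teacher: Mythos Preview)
Your proposal is correct and follows essentially the same approach as the paper: apply the folded Wronskian with a shared seed to both sources, use the strong lossless guarantee to bound the bad seeds and find a seed good for both, then apply a brute-force inner two-source condenser and concatenate over seeds. The only difference is the intermediate dimension---you take $t'=2r$ and argue via a $2/3$-fraction of good seeds, whereas the paper takes $t=r^3$ and directly counts bad seeds; both yield output $O(n/(\eps^2 r))$ when $r=O(1)$, though the paper's choice of $r^3$ also gives a clean reduction for general $r$.
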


\section{Notation}

We briefly summarize some notation we use. We will apply functions not just to inputs, but to sets of inputs.  Thus, if $f:S\to T$ and $U\subseteq S$, then $f(U)\eqdef \{f(x)\}_{x\in U}$.  We denote $[n]\eqdef\{1,\ldots,n\}$ and $\zr{n}=\{0,\ldots,n-1\}$.  We sometimes use $\zr{n}$ to index matrices from $0$, so that $M\in\F^{\zr{n}\times m}$ has its rows indexed from zero but its columns indexed from 1.  The $i$-th row of a matrix $M$ will be denoted by $M_{i,\bullet}$ and the $j$-th column denoted $M_{\bullet,j}$. The transpose of a matrix $M$ will be denoted by $M^\T$.

\parskip=1ex
\section{Constructions of Subspace Designs and Rank Condensers}\label{sec:lossless-seeded_constr}

In this section we relate subspace designs to lossless rank condensers, and use this relation to construct lossless and lossy seeded rank condensers.  We begin with this relation, as given by taking dual spaces.  We then construct lossless rank condensers using the folded Wronskian from \autoref{constr:folded-wronskian} and the analysis of Guruswami-Kopparty~\cite{GuruswamiKopparty13} quoted in \autoref{res:folded-wronskian-GK13} (which we slightly improve using the analysis of Forbes, Saptharishi and Shpilka~\cite{ForbesSS14}).  We then turn to lossy condensers, first noting that condensing ``dimension $\le r$'' and ``dimension $r$'' is more subtle in this regime.  We then show how to convert any strong lossless condenser to a lossy condenser and in particular do so for our explicit lossless condenser. As a result we obtain a lossy condenser nearly matching the probabilistic method.

\subsection{Lossless Rank Condensers}

We first show that subspace designs and lossless seeded rank condensers are the \emph{same} object.

\begin{proposition}\label{res:subspace-design-equal-lossless}
	Let $\F$ be a field and $n\ge r\ge 1$.  Let $\cH=\{H_i\}_{i\in[M]}$ be a collection of subspaces $H_i\subseteq\F^n$ and let $\cE=\{E_i\}_{i\in[M]}\subseteq\F^{t\times n}$ be a collection of matrices, where we have that $\rspn E_i=(H_i)^\perp$ for $i\in[M]$. Then $\cH$ is a weak/strong $(r,L)$-subspace design iff $\cE$ is a weak/strong $(r,L)$-lossless rank condenser.
\end{proposition}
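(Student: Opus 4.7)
The plan is to view the duality between subspaces $H_i$ and matrices $E_i$ through the rank-nullity theorem, showing that the quantities $\dim(H_i \cap V)$ and $\rank M - \rank E_i M$ are literally the same number (for $V = \cspn M$). Once this identity is in hand, both the weak and strong versions of the equivalence are immediate.

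First I would fix a subspace $V \subseteq \F^n$ with $\dim V = r$ and choose any $M \in \F^{n \times r}$ with $\cspn M = V$ and $\rank M = r$. The map $\vx \mapsto E_i \vx$ from $\F^n$ to $\F^t$ has kernel $\ker E_i = (\rspn E_i)^\perp$, and by the hypothesis $\rspn E_i = H_i^\perp$, double-duality (valid in finite dimensions) gives $\ker E_i = H_i$. Restricting this map to $V$ and applying rank-nullity yields
\[
\rank E_i M \;=\; \dim E_i(V) \;=\; \dim V - \dim(V \cap \ker E_i) \;=\; r - \dim(V \cap H_i),
\]
so the key identity is
\[
\rank M - \rank E_i M \;=\; \dim(V \cap H_i).
\]

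Next I would read off both equivalences from this identity. For the weak case, $\rank E_i M < \rank M$ iff $\dim(H_i \cap V) > 0$, so the counts $|\{i : \rank E_i M < \rank M\}|$ and $|\{i : \dim(H_i \cap V) > 0\}|$ agree; thus the weak subspace design condition at threshold $L$ holds iff the weak lossless rank condenser condition at threshold $L$ holds. For the strong case, summing the identity over $i \in [M]$ gives
\[
\sum_{i \in [M]} (\rank M - \rank E_i M) \;=\; \sum_{i \in [M]} \dim(V \cap H_i),
\]
so again the two conditions coincide at the same bound $L$. Since the correspondence $V \leftrightarrow M$ between rank-$r$ subspaces and rank-$r$ matrices is surjective on both sides, quantifying over one is the same as quantifying over the other, giving the ``iff'' in both regimes.

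There is essentially no technical obstacle: the only thing to be careful about is that $\rspn E_i$ may have dimension strictly less than $t$ (since the $E_i$ are not assumed full rank), but this never enters the argument because we only use $\ker E_i = (\rspn E_i)^\perp$, which holds regardless. The proof is symmetric in the two directions, because given $\cE$ one recovers $\cH$ by $H_i = (\rspn E_i)^\perp = \ker E_i$, and given $\cH$ one may take $E_i$ to be any matrix whose rows form a basis of $H_i^\perp$ (padded by zero rows if one wishes to fix a common $t$); both constructions preserve the quantity $\dim(V \cap H_i) = r - \rank E_i M$ on which the entire equivalence rests.
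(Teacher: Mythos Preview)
Your proof is correct and follows essentially the same approach as the paper: both establish the key identity $\rank M-\rank E_iM=\dim(H_i\cap V)$ via rank--nullity and the observation $\ker E_i=(\rspn E_i)^\perp=H_i$, then read off the weak and strong equivalences directly. Your write-up is slightly more explicit about double-duality and the irrelevance of $E_i$ being full rank, but the argument is the same.
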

\begin{proof}
	There is a natural surjection from matrices $M\in\F^{n\times r}$ with rank $r$ to $r$-dimensional subspaces $V$ given by $M\mapsto \cspn M$. We now show that rank condensers act on matrices in the same way that subspace designs act on subspaces. Each matrix $E_i$ naturally defines a map $\varphi_i:\F^n\to\F^t$ where by definition $\ker \varphi_i=(\rspn E_i)^\perp=H_i$.  Basic linear algebra shows that $\rank M=\dim V=\dim \varphi_i(V) + \dim(\ker \varphi_i\cap V)$. Using that $\varphi_i(V)=\rank E_iM$, we have that $\rank M-\rank E_i M=\dim(H_i\cap V)$.  
	
	Thus, summing over $i$ we see that for $M$ with $V=\cspn M$,
	\[
		\sum_i (\rank M-\rank E_i M)= \sum_i \dim(H_i\cap V)
		\;.
	\]
	Thus, the left summation is bounded by $L$ for all $M$ iff the right summation is bounded by $L$ for all $V$, so that $\cE$ is $(r,L)$-strong iff $\cH$ is $(r,L)$-strong.

	Observing that $\rank E_iM<\rank M$ iff $\dim(H_i\cap V)>0$, the same conclusion holds for $(r,L)$-weak designs and condensers.
\end{proof}

Given the above equivalence, we now phrase the construction of Guruswami and Kopparty~\cite{GuruswamiKopparty13} of subspace designs in the language of lossless condensers.  To do so, we first need the following result of Forbes, Saptharishi and Shpilka~\cite{ForbesSS14} which gives an analysis of the folded Wronskian.

\begin{proposition}[{Implicit in Forbes, Saptharishi and Shpilka~\cite{ForbesSS14}, given explicitly by Forbes~\cite[Theorem 5.4.3]{Forbes14}}]\label{res:folded-wronskian-nonzero}
	Assume the setup of \autoref{constr:folded-wronskian}.  Let $M\in\F^{n\times r}$ be of rank $r$.  Then the $r\times r$ $\Wr_{r,\omega}(x)M\in\F[x]^{r\times r}$ has determinant $\det \Wr_{r,\omega}(x)M\in\F[x]$ which is a non-zero polynomial such that after dividing out powers of $x$ has degree $\le r(n-r)$.
\end{proposition}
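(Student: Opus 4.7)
The plan is to expand $\det(\Wr_{r,\omega}(x)M)$ by the Cauchy--Binet formula and exploit the resulting polynomial structure. For every $r$-subset $S=\{s_1<\cdots<s_r\}\subseteq\{0,\ldots,n-1\}$, the $r\times r$ minor of $\Wr_{r,\omega}(x)$ on columns $S$ factors as $x^{\Sigma(S)}V_S$, where $\Sigma(S):=\sum_k s_k$ and $V_S$ is the Vandermonde determinant on the points $\omega^{s_1},\ldots,\omega^{s_r}$; because $\omega$ has order at least $n$ and the $s_k\in\{0,\ldots,n-1\}$ are distinct, each $V_S$ is nonzero. Hence
\[
\det(\Wr_{r,\omega}(x)M)=\sum_{S}V_S\det(M_S)\,x^{\Sigma(S)}.
\]
The degree bound is then immediate from $\Sigma(S)\in[\binom{r}{2},\,r(n-1)-\binom{r}{2}]$: the nonzero monomials in this sum occupy $x$-degrees inside an interval of length at most $r(n-r)$, so dividing out the largest power of $x$ leaves a polynomial of degree at most $r(n-r)$.

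For non-vanishing, I will column-reduce $M$ to a canonical echelon form in order to isolate a single $S$ contributing to a specific coefficient. Standard Gaussian elimination on the coefficient vectors (followed by monic normalization) yields an invertible $C\in\F^{r\times r}$ such that the columns $q_1,\ldots,q_r$ of $Q:=MC$ are monic polynomials with strictly increasing degrees $l_1<l_2<\cdots<l_r$ and satisfy $Q_{l_i,k}=\delta_{i,k}$ (by the elimination combined with $\deg q_k=l_k$), and more generally $Q_{j,k}=0$ whenever $j>l_k$. In particular $\det(Q_{S_0})=1$ for $S_0:=\{l_1,\ldots,l_r\}$, and $\det(\Wr_{r,\omega}(x)M)$ differs from $\det(\Wr_{r,\omega}(x)Q)$ only by the nonzero scalar factor $\det(C)^{-1}$, so it suffices to exhibit a nonzero coefficient in the Cauchy--Binet expansion of the latter.

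The crux of the argument is to show that $S_0$ is the unique $r$-subset $S'$ with $\Sigma(S')=\Sigma(S_0)$ and $\det(Q_{S'})\neq 0$, which forces the coefficient of $x^{\Sigma(S_0)}$ to equal the nonzero quantity $V_{S_0}$. Indeed, if $\det(Q_{S'})\neq 0$ for $S'=\{s'_1<\cdots<s'_r\}$ then some permutation $\tau$ gives $Q_{s'_i,\tau(i)}\neq 0$ for every $i$, and the triangularity $Q_{j,k}=0$ for $j>l_k$ forces $s'_i\le l_{\tau(i)}$; summing these inequalities yields $\Sigma(S')\le\Sigma(S_0)$, with equality only when $s'_i=l_{\tau(i)}$ for every $i$, i.e.\ $S'=S_0$. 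The main obstacle the argument has to overcome is precisely the possibility that distinct subsets $S$ sharing the same $\Sigma(S)$ might conspire to cancel a given coefficient; the echelon reduction together with the above summation argument rules this out at $\Sigma=\Sigma(S_0)$, which is sufficient for non-vanishing.
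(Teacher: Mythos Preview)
Your proof is correct. The paper itself does not supply a proof of this proposition (it is stated without proof, with attribution to prior work), but in the surrounding discussion it explicitly lists the Cauchy--Binet approach as one of the known routes to non-zeroness and notes that the sharpened $r(n-r)$ degree bound after clearing powers of $x$ is due to Forbes--Saptharishi--Shpilka; your argument is a clean self-contained execution of precisely this combination, using the Vandermonde factorization of the $r\times r$ minors of $\Wr_{r,\omega}(x)$ for the degree window and an echelon reduction of $M$ to isolate the top-degree coefficient.
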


There are roughly three components to the above result: the non-zeroness of the polynomial $\det \Wr_{r,\omega}(x)M$, the degree bound on $\det \Wr_{r,\omega}(x)M$, and the multiplicative order lower bound of the distinguished element $\omega$  needed for these facts.

The non-zeroness of $\det \Wr_{r,\omega}(x)M$ has been explicitly established by various works. Forbes and Shpilka~\cite{ForbesShpilka12} gave a proof based on the Cauchy-Binet formula, Guruswami and Wang~\cite{GuruswamiWang13} via analyzing a certain linear system, Guruswami and Kopparty~\cite{GuruswamiKopparty13} via an analysis akin to that of the folded Reed-Solomon codes of Guruswami and Rudra~\cite{GuruswamiRudra08}, and Lokshtanov, Misra, Panolan and Saurabh~\cite{LokshtanovMPS14} via a reduction to the $n=2$ case.  We remark that these proofs have some similarity to analyses of the (classical) Wronskian, as discussed in Guruswami-Kopparty~\cite{GuruswamiKopparty13} and Forbes-Saptharishi-Shpilka~\cite{ForbesSS14}, so that similar proofs can be extracted from that literature.

The above proofs of non-zeroness yield different degree upper bounds on $\det \Wr_{r,\omega}(x)M$ and the multiplicative order of $\omega$.  The obvious degree bound for $\det \Wr_{r,\omega}(x)M$ is $r(n-1)$ as this is an $r\times r$ determinant with entries who are polynomials of degree $<n$. Forbes and Shpilka~\cite{ForbesShpilka12} refined this bound to $nr-\binom{r+1}{2}$, and this is tight.  However, Forbes, Saptharishi and Shpilka~\cite{ForbesSS14} observed
\footnote{Forbes, Saptharishi and Shpilka~\cite{ForbesSS14} noted more generally that a similar construction follows from any error-correcting code, and that the code in the folded Wronskian is the dual Reed-Solomon code. However, the dual Reed-Solomon code seems special in that it gives rise to ``folding'' of the resulting construction, which is crucial for \autoref{res:folded-wronskian-GK13}.}
(and this is explicitly stated in Forbes~\cite{Forbes14}) that one can clear powers of $x$ to derive the degree bound of $r(n-r)$, and again this is tight. Regarding the multiplicative order of $\omega$, most of the above proofs show that order $\ge n$ is sufficient, and this is also tight.  

Given the above analysis, we now derive the weak lossless condenser of Forbes and Shpilka~\cite{ForbesShpilka12} (with an updated degree bound).

\begin{corollary}[Forbes and Shpilka~\cite{ForbesShpilka12}]\label{res:lossless-extractor}
	Assume the setup of \autoref{constr:folded-wronskian} where we take $t=r$. Let $S\subseteq \F\setminus\{0\}$.  Then $\{\Wr_r(\alpha)\st \alpha\in S\}\subseteq\F^{r\times n}$ is a weak $(r,r(n-r))$-lossless rank condenser.
\end{corollary}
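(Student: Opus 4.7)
The plan is to reduce the condenser property to a polynomial non-vanishing statement and invoke \autoref{res:folded-wronskian-nonzero} directly. Fix any $M \in \F^{n \times r}$ of rank $r$. Since $t = r$, the matrix $\Wr_r(\alpha) M$ is square, so $\rank \Wr_r(\alpha) M < r$ holds if and only if $\det \Wr_r(\alpha) M = 0$. Thus the set of ``bad'' seeds $\alpha$ is exactly the zero set of the univariate polynomial $p(x) \eqdef \det \Wr_{r,\omega}(x) M \in \F[x]$, intersected with $S$.

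By \autoref{res:folded-wronskian-nonzero}, $p(x)$ is a nonzero polynomial, and after factoring out the maximum power of $x$ dividing it we obtain a nonzero polynomial $\tilde p(x) \in \F[x]$ of degree at most $r(n-r)$ with $\tilde p(0) \neq 0$ and $p(x) = x^e \tilde p(x)$ for some $e \ge 0$. The roots of $p$ in $\F \setminus \{0\}$ are exactly the roots of $\tilde p$ in $\F \setminus \{0\}$, and any nonzero polynomial of degree $d$ over a field has at most $d$ roots. Hence $p$ has at most $r(n-r)$ roots in $\F \setminus \{0\}$.

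Since $S \subseteq \F \setminus \{0\}$, the number of $\alpha \in S$ with $\rank \Wr_r(\alpha) M < \rank M$ is at most $r(n-r)$. As $M$ was an arbitrary rank-$r$ matrix in $\F^{n \times r}$, this verifies the definition of a weak $(r, r(n-r))$-lossless rank condenser for $\{\Wr_r(\alpha) \st \alpha \in S\}$.

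There is no real obstacle here: the entire content is packaged inside \autoref{res:folded-wronskian-nonzero}, whose nonzeroness and $r(n-r)$ degree bound (after clearing powers of $x$) are precisely what convert the rank-drop condition into a bounded root-count. The only subtle point worth flagging is the restriction $S \subseteq \F \setminus \{0\}$, which is what allows us to pass from the possibly larger degree of $p(x)$ to the sharper degree bound $r(n-r)$ on $\tilde p(x)$; without excluding $\alpha = 0$ one would only obtain the weaker bound coming from the naive degree of $p$.
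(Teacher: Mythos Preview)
Your proof is correct and follows essentially the same approach as the paper's: reduce the rank-drop condition to vanishing of $\det \Wr_r(\alpha)M$, then invoke \autoref{res:folded-wronskian-nonzero} to bound the number of nonzero roots by $r(n-r)$. The paper's version is simply a terser one-line rendition of exactly this argument.
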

\begin{proof}
	As $\rank \Wr_r(\alpha) M=\rank M=r$ iff $\det \Wr_r(\alpha)M\ne 0$, we see that by \autoref{res:folded-wronskian-nonzero} there are at most $r(n-r)$ such $\alpha\in\F\setminus\{0\}$ where $\rank \Wr_r(\alpha)M <\rank M$, giving the desired bound.
\end{proof}

Note that in the above condenser the output size $t$ is equal to the rank bound $r$ so that this might appropriately be called a rank \emph{extractor}\footnote{The paper of Dvir, Gabizon and Wigderson~\cite{DvirGW09} also defined a notion of a rank extractor, but that was for \emph{algebraic} rank (or transcendence degree) of polynomials which generalizes linear-algebraic rank to polynomials of higher degree.}.  In some applications this equality is important,
and indeed for our construction of dimension expanders we would ideally have lossy rank extractors (which we do not achieve, as discussed after \autoref{res:gk13-lossy-condenser_explicit}).

While the above degree and order analysis is tight, Guruswami-Kopparty~\cite{GuruswamiKopparty13} extended it in two ways.  The first is to study the \emph{multiplicities} and the second is to study the case when $t>r$.  They observed that the multiplicity of vanishing of $\det \Wr_r(\alpha)M$ upper bounds the rank deficiency $(\rank M-\rank \Wr_r(\alpha)M)$ so that instead of a weak condenser we can obtain a \emph{strong} condenser with the same list bound.  For $t>r$, they observed that taking the values of $\alpha$ as powers of $\omega$ introduces a certain \emph{redundancy} between different $\alpha$-values which is akin to the folding of folded Reed-Solomon codes of Guruswami-Rudra~\cite{GuruswamiRudra08} and this allows one to obtain a smaller list bound by pruning this redundancy. Updating their argument with the above degree bound (as powers of $\omega$ are non-zero, so we can safely ignore powers of $x$ in the determinant with respect to the determinant being non-zero) we get the following analysis of the folded Wronskian as a lossless rank condenser.

\foldedwronskian*

We now conclude with an explicit instantiation of the above when taking all points in a finite field.

\begin{corollary}\label{res:lossless-seeded_large-field}
	Let $\F_q$ be a finite field and let $n,t\ge r\ge 1$ such that $q>n$. Given a generator $\omega$ of $\F_q$, there is an explicit $\cE\subseteq\F_q^{t\times n}$ with $|\cE|=\floor{\frac{q-1}{t}}$, such that $\cE$ is a strong $\left(r,\frac{r(n-r)}{t-r+1}\right)$-lossless rank condenser. 
\end{corollary}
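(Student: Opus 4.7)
The statement is essentially a direct instantiation of \autoref{res:folded-wronskian-GK13}, and my plan is to specialize the parameters there so as to (i) maximize the number of evaluation points while keeping them distinct, and (ii) ensure that the element $\omega$ has the needed multiplicative order.

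First I would verify the hypotheses of \autoref{constr:folded-wronskian} and \autoref{res:folded-wronskian-GK13}. Since $\omega$ is a generator of $\F_q^*$ (which is what ``generator of $\F_q$'' means for a finite field), it has multiplicative order $q-1$, and the assumption $q>n$ gives $q-1 \ge n$, so $\omega$ satisfies the order hypothesis required by the folded Wronskian construction. I would then set $\ell \eqdef t$ in \autoref{res:folded-wronskian-GK13}; this is valid since $t \ge t - r + 1$ whenever $r \ge 1$.

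Next I would choose the evaluation set
\[
    S \eqdef \left\{ (\omega^t)^j \st 0 \le j < \floorp{\tfrac{q-1}{t}} \right\}
    \subseteq \left\{(\omega^t)^j \st j \ge 0\right\},
\]
and define $\cE \eqdef \{\Wr_{t,\omega}(\alpha) \st \alpha \in S\}$. The key point is that these $\floor{(q-1)/t}$ powers of $\omega^t$ are pairwise distinct: the multiplicative order of $\omega^t$ inside $\F_q^*$ is $(q-1)/\gcd(q-1,t) \ge (q-1)/t \ge \floor{(q-1)/t}$, so the exponents $j = 0, 1, \ldots, \floor{(q-1)/t}-1$ index distinct group elements. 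Hence $|\cE| = \floor{(q-1)/t}$, as desired. \autoref{res:folded-wronskian-GK13} then immediately guarantees that $\cE$ is a strong $(r, \frac{r(n-r)}{t-r+1})$-lossless rank condenser.

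For explicitness, given an index $j \in \zr{|\cE|}$ one computes $\alpha = (\omega^t)^j \in \F_q$ by repeated squaring in $\O(\log q)$ field operations, and then writes down the $t \times n$ matrix $\Wr_{t,\omega}(\alpha)$ whose $(i,j)$-entry is $(\omega^i \alpha)^j$ in $\poly(t,n,\log q)$ field operations; this clearly meets the explicitness bound in \autoref{defn:lossless-seeded}. I do not anticipate any serious obstacle here, since all the nontrivial content (the non-vanishing of the folded Wronskian determinant, the degree bound $r(n-r)$, and the folding-based pruning of the list size from $r(n-r)$ to $\tfrac{r(n-r)}{t-r+1}$) has already been encapsulated in \autoref{res:folded-wronskian-GK13}; the only thing to watch is the counting of distinct powers of $\omega^t$ in $\F_q^*$, which is what dictates the exact quantity $\floor{(q-1)/t}$.
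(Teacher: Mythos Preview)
Your proposal is correct and follows essentially the same approach as the paper: instantiate \autoref{res:folded-wronskian-GK13} with $\ell=t$, take evaluation points that are powers of $\omega^t$, and verify the order hypothesis and explicitness. Your distinctness argument for the powers of $\omega^t$ is in fact more carefully spelled out than the paper's, which simply asserts the size via ``the multiplicative order of $\omega$''.
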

\begin{proof}
	As $\omega$ generates $\F_q$, we can apply \autoref{res:folded-wronskian-GK13} as $\omega$ has the desired multiplicative order of $q-1\ge n$.  Now take $S=\left\{1,\omega^t,(\omega^t)^2,\ldots,(\omega^t)^{\floor{\frac{q-1}{t}}}\right\}$ and $\cE\eqdef\{\Wr_t(\alpha)\st \alpha\in S\}$.  For any $1\le r\le t$, we have that $t\ge t-r+1$ so that $\cE$ has the desired condensing properties by \autoref{res:folded-wronskian-GK13}.  That $|\cE|=\floor{\frac{q-1}{t}}$ follows from construction by the multiplicative order of $\omega$. That $\cE$ is explicit is also clear as we can index $\cE$ by $\left[\floor{\frac{q-1}{t}}\right]$ and then given an $i\in\left[\floor{\frac{q-1}{t}}\right]$ produce $\Wr_t(\omega^i)$ in $\poly(n,t,\log q)$ operations in $\F_q$ via repeated squaring.
\end{proof}

\subsection{Lossy Rank Condensers}

We now turn from lossless condensers to \emph{lossy} condensers, as defined in \autoref{sec:lossless-seeded}. The motivation for studying objects that can lose a small amount of rank is to obtain a comparatively smaller seed length. Before turning to constructions, we study the issue of condensing ``rank $r$'' as compared to the stronger notion of condensing ``rank $\le r$'', where the latter notion is provably stronger.  We then relate lossless condensers to lossy condensers, observing that an averaging argument converts strong lossless condensers to lossy condensers with a \emph{smaller} list bound.  This leads us to constructing lossy condensers from our lossless constructions, and this will achieve condensing of ``rank $\le r$'' as needed for our application to dimension expanders (\autoref{sec:dim-exp_constr}) as they must expand \emph{all} small subspaces.

We begin by recalling that a $(r,\eps)$-lossy condenser should condenser rank $r$ to rank $(1-\eps)r$, and that a $(\le r,\eps)$-condenser should condense rank $s$ to rank $(1-\eps)s$ for \emph{all} $s\le r$.  Insisting on the latter notion is somewhat out of line with the usual definition of min-entropy condensers, that ask for min-entropy $\ge k$ being condensed to some $\ge k'$ min-entropy, without any (stated) guarantee on sources with input min-entropy $\ll k$. As such, $(\le r,\eps)$-condenser notion is more akin to the \emph{conductors} of Capalbo, Reingold, Vadhan and Wigderson~\cite{CapalboRVW02} (which are maps that have min-entropy output guarantees for any input min-entropy) than condensers. However, when $\eps=0$, we see that condensing for rank $r$ implies condensing $\le r$ with the same list bound, similar to \autoref{res:subspace design_les-vs-eqs}.

\begin{lemmawp}\label{res:lossless_le-v-eq}
	Let $\F$ be a field and $n\ge r\ge 1$. Let $\cE\subseteq\F^{t\times n}$ be a weak/strong $(r,L)$-lossless rank condenser.  Then $\cE$ is a $(\le r,L)$-lossless rank condenser.
\end{lemmawp}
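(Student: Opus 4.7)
The plan is to reduce the case of a rank-$s$ matrix with $s \le r$ to the case of a rank-$r$ matrix via column-padding, and then apply the hypothesis that $\cE$ is an $(r,L)$-lossless rank condenser.

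Concretely, suppose $M \in \F^{n\times s}$ has $\rank M = s$, where $1 \le s \le r \le n$. Since the columns of $M$ are $s$ linearly independent vectors in $\F^n$ and $s \le r \le n$, I can extend them to an independent set of size $r$; that is, choose $N \in \F^{n \times (r-s)}$ so that $M' \eqdef [M \mid N] \in \F^{n\times r}$ has $\rank M' = r$. First I would then observe the key inequality: for any $E \in \cE$,
\[
	\rank EM' = \rank[EM \mid EN] \le \rank EM + \rank EN \le \rank EM + (r-s),
\]
which rearranges to
\[
	\rank M' - \rank EM' = r - \rank EM' \ge s - \rank EM = \rank M - \rank EM.
\]
In other words, the per-seed rank deficiency of $M$ is dominated by that of the padded matrix $M'$.

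For the strong version, summing the above inequality over $E \in \cE$ gives
\[
	\sum_{E \in \cE} (\rank M - \rank EM) \le \sum_{E \in \cE} (\rank M' - \rank EM') \le L,
\]
using that $\cE$ is a strong $(r,L)$-lossless rank condenser applied to $M'$. For the weak version, the same inequality shows that whenever $\rank EM < \rank M$ we also have $\rank EM' < \rank M'$, so the set of ``bad'' $E$'s for $M$ is contained in the set of bad $E$'s for $M'$, which has size $\le L$ by the weak $(r,L)$-condenser hypothesis applied to $M'$.

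There is no real obstacle here: the only thing to be careful about is that padding cannot decrease rank loss, which is exactly the content of the subadditivity of rank on concatenated column blocks used above. This mirrors the proof of \autoref{res:subspace design_les-vs-eqs} on the subspace-design side, as it should by the equivalence in \autoref{res:subspace-design-equal-lossless}.
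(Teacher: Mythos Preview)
Your proof is correct and matches the paper's intended argument. The paper actually states this lemma without proof (it is declared as a \texttt{lemmawp}), merely noting beforehand that it follows ``similar to \autoref{res:subspace design_les-vs-eqs}''; your column-padding argument is exactly that analogue on the rank-condenser side, and (as you point out) is the dual of the subspace-design padding via \autoref{res:subspace-design-equal-lossless}.
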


However, we now note that this connection breaks for $\eps>0$.

\begin{lemma}\label{res:dim-le-r_dim-eq-r}
	Let $\F$ be a field, let $n\ge 1$. Let $\cE\subseteq\F^{t\times 3n}$ be a $(2n,\nicefrac{1}{2})$-lossy rank condenser.  Let $\pi:\F^{4n}\to\F^{3n}$ be the projection map onto the first $3n$ coordinates and let $P\in\F^{3n\times 4n}$ be the associated projection matrix. Then $\cE'\eqdef\{E P\st  E\in\cE\}\subseteq\F^{t\times 4n}$ is a $(3n,\nicefrac{2}{3})$-lossy condenser but not a $(s,1-\delta)$-condenser for any $1\le s\le n$ and $\delta<1$.
\end{lemma}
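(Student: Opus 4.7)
The lemma has two parts and the kernel of $P$ is the key object on both sides. I would handle them separately.

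For the positive direction, I would fix a rank-$3n$ matrix $M\in\F^{4n\times 3n}$ with columns spanning a subspace $V\subseteq\F^{4n}$ and track its image through $P$. Since $\dim\ker P=n$, rank-nullity applied to the restricted map $P|_V$ gives $\rank(PM)=\dim(PV)\ge 3n-n=2n$. I would then select a $3n\times 2n$ submatrix $N$ of $PM$ consisting of $2n$ linearly independent columns, so $\rank N=2n$ and $\cspn N\subseteq\cspn(PM)$. By the hypothesis that $\cE$ is a $(2n,\nicefrac12)$-lossy rank condenser, there exists $E\in\cE$ with $\rank(EN)\ge (1-\nicefrac12)\cdot 2n=n$. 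Finally, since $\cspn(EN)\subseteq E(\cspn PM)=\cspn(EPM)$, we conclude $\rank(EPM)\ge n=(1-\nicefrac23)\cdot 3n$, so $EP\in\cE'$ witnesses that $\cE'$ is a $(3n,\nicefrac23)$-lossy rank condenser.

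For the negative direction, I would directly exhibit a bad input by exploiting $\ker P$. Since $\dim\ker P=n\ge s$, one can choose a rank-$s$ matrix $M\in\F^{4n\times s}$ all of whose columns lie in $\ker P$; equivalently, the columns of $M$ have zeros in the first $3n$ coordinates. Then $PM=0$, so $E'M=(EP)M=0$ and $\rank(E'M)=0$ for \emph{every} $E'\in\cE'$. For any $\delta>0$ and $s\ge 1$ we have $0<\delta s$, so no $E'\in\cE'$ achieves $\rank(E'M)\ge\delta s$. This exactly says $\cE'$ fails to be a $(s,1-\delta)$-lossy rank condenser.

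There isn't really a hard step: the positive direction is just rank-nullity composed with the hypothesized condensing property, and the negative direction is a one-line observation that $\cE'$ inherits the entire kernel of $P$ as a built-in failure mode. If anything, the only subtlety worth flagging in writing is that the passage from $N$ to $PM$ in the positive direction uses $\rank(EN)\le\rank(E\cdot(PM))$ via column-span containment rather than any submatrix identity, so I would be explicit about that step to avoid any appearance of hand-waving.
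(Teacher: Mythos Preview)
Your proposal is correct and follows essentially the same approach as the paper's proof: rank-nullity on $\ker P$ to get $\rank(PM)\ge 2n$, then invoke the $(2n,\nicefrac12)$-condensing hypothesis for the positive direction, and place the input inside $\ker P$ for the negative direction. If anything, your positive direction is slightly more careful than the paper's, since you explicitly extract a rank-$2n$ submatrix $N$ before applying the condenser hypothesis, whereas the paper simply asserts that $\dim\pi(V)\ge 2n$ implies some $E\in\cE$ gives $\dim E(\pi(V))\ge n$ without spelling out that intermediate step.
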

\begin{proof}
	For a vector space $V\subseteq\F^{4n}$, the dimension of the projection $\pi(V)$ has $\dim \pi(V)=\dim V- \dim (V\cap\ker\pi)\ge \dim V-n$.  Thus, if $\dim V\ge 3n$ then $\dim \pi(V)\ge 2n$, so that there is some $E\in\cE$ so that $E(\pi(V))$ has dimension $\ge n$.  Thus $\cE'$, the composition of $\cE$ and $\pi$, will map spaces of dimension $3n$ to spaces of dimension $\ge n$ so that $\cE'$ is a $(3n,\nicefrac{2}{3})$-condenser.

	Now consider $V\subseteq\F^{4n}$ of dimension $s\le n$ which is a subspace of the kernel of the projection map $\pi$ (which has dimension $n$).  Then clearly $\pi(V)=0$, so that the rank of the condenser $\cE'$ on $\pi(V)$ is always zero, so $\cE'$ preserves none of the rank of $V$, so that $\cE'$ is not a $(s,1-\delta)$-condenser for any $1\le s\le n$ and $\delta<1$.
\end{proof}

This example embeds itself into many examples of ``manipulating rank $r$'' versus ``manipulating rank $\le r$'' by pseudorandom objects.  Thus, it shows that to obtain the latter guarantee one needs to explicitly consider dimension $\le r$, and indeed our techniques will work in this regime.  

We now give constructions of good lossy rank condensers, using \emph{strong} lossless rank condensers.  As strong lossless condensers bound the sum of \emph{all} rank losses $(\rank M-\rank EM)$ it follows from an averaging argument that an at most $\nicefrac{1}{k}$ fraction of the maps can have a rank deficiency $\ge k$. Thus we can take a seed length of the lossy condenser that is $\nicefrac{1}{k}$ of the list bound of the lossless condenser. For this reduction, a weak lossless condensers would not suffice as the resulting seed length would not be smaller than the original list bound.

\begin{proposition}\label{res:strong-lossless-to-lossy}
	Let $\F$ be a field and let $n\ge r\ge 1$ and $\eps\ge 0$. Let $\cE\subseteq\F^{t\times n}$ be a strong $(r,L)$-lossless rank condenser.  Then for any $\cE'\subseteq \cE$ with $|\cE'|>\frac{L}{\floor{\eps r}+1}$, $\cE'$ is a $(r,\eps)$-lossy rank condenser.
\end{proposition}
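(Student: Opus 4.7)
The plan is to proceed by a direct averaging (or pigeonhole) argument applied to the sum that defines a strong lossless condenser, using the non-negativity of rank deficiencies to restrict that sum from $\cE$ down to $\cE'$.

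First, I would fix an arbitrary matrix $M \in \F^{n\times r}$ of rank $r$ and apply the strong lossless hypothesis to obtain
\[
\sum_{E \in \cE} (\rank M - \rank EM) \le L.
\]
Since each summand $\rank M - \rank EM$ is a non-negative integer, dropping the terms for $E \in \cE \setminus \cE'$ gives
\[
\sum_{E \in \cE'} (\rank M - \rank EM) \le L.
\]

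Next, I would argue by contradiction: suppose no $E \in \cE'$ satisfies $\rank EM \ge (1-\eps) r$. Because the rank deficiency is integer-valued, this means every $E \in \cE'$ has $\rank M - \rank EM \ge \floor{\eps r} + 1$. Summing over $\cE'$,
\[
\sum_{E \in \cE'} (\rank M - \rank EM) \ge |\cE'|\,(\floor{\eps r}+1) > \frac{L}{\floor{\eps r}+1}\cdot(\floor{\eps r}+1) = L,
\]
contradicting the previous display. Hence some $E \in \cE'$ achieves $\rank EM \ge r - \floor{\eps r} \ge (1-\eps)r$, which is exactly the $(r,\eps)$-lossy condenser condition for the chosen $M$; since $M$ was arbitrary of rank $r$, we are done.

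There is essentially no obstacle here: the key structural input is already packaged into the definition of a \emph{strong} lossless condenser (a bound on the total rank deficiency rather than just the number of deficient maps), and the only subtlety worth flagging is the integrality step that lets us replace $\eps r$ by $\floor{\eps r}+1$ in the pigeonhole, which is what makes the stated threshold $|\cE'| > L/(\floor{\eps r}+1)$ tight with a strict inequality.
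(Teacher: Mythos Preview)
Your proof is correct and is essentially identical to the paper's own argument: both restrict the strong lossless bound to $\cE'$ (using non-negativity of rank deficiencies, equivalently that subsets of strong condensers are strong condensers), assume for contradiction that every $E\in\cE'$ fails the lossy condition, convert $\rank M-\rank EM>\eps r$ to $\ge\floor{\eps r}+1$ by integrality, and sum to contradict $|\cE'|>L/(\floor{\eps r}+1)$.
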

\begin{proof}
	We argue the contrapositive. As $\cE'$ is not a lossy condenser there is a matrix $M\in\F^{n\times r}$ with rank $r$ so that for all $E\in\cE'$,
	\[
		\rank EM< (1-\eps)\rank M
		\;,
	\]
	and thus
	\[
		\rank M -\rank EM>\eps\rank M=\eps r
		\;,
	\]
	so that
	\[
		\rank M -\rank EM\ge\floor{\eps r}+1
		\;.
	\]
	Thus, using that $\cE'$ is a strong $(r,L)$-design (as it is a subset of $\cE$ is such a design, and this is preserved under taking subsets),
	\begin{align*}
		L
		&\ge \sum_{E\in \cE'}(\rank M -\rank EM)\\
		&\ge \sum_{E\in \cE'}(\floor{\eps r}+1)\\
		&=|\cE'|\cdot(\floor{\eps r}+1)
		\;,
	\end{align*}
	and thus $|\cE'|\le \frac{L}{\floor{\eps r}+1}$ as desired.
\end{proof}

We now use this lemma, along with the results on lossless condensers, to obtain our desired lossy condenser. 

\begin{proposition}\label{res:gk13-lossy-condenser}
	Assume the setup of \autoref{constr:folded-wronskian} where we take $n,t\ge r\ge 1$ and $\eps>0$. Let $S\subseteq \{(\omega^\ell)^j\st j\ge 0\}$ where $\ell\ge t$ and $|S|\ge \min\left\{\frac{n}{\eps(t-r+1)},n^2 \right\}$.  Then $\cE\eqdef\{\Wr_t(\alpha)\st \alpha\in S\}\subseteq\F^{t\times n}$ is a $(\le r,\eps)$-lossy rank condenser.
\end{proposition}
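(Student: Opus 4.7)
The plan is to reduce the $(\le r,\eps)$-lossy claim rank-by-rank to a $(s,\eps)$-lossy claim for each $1 \le s \le r$, and then combine two ingredients already established in this section: the folded Wronskian is a strong lossless condenser (\autoref{res:folded-wronskian-GK13}), and a strong lossless condenser becomes a lossy condenser by simple averaging (\autoref{res:strong-lossless-to-lossy}).

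First I would fix $1 \le s \le r$. Since $\ell \ge t \ge t - s + 1$, \autoref{res:folded-wronskian-GK13} applies at rank $s$ and yields that $\cE$ is a strong $(s, L_s)$-lossless rank condenser with
\[
L_s \eqdef \frac{s(n-s)}{t-s+1}.
\]
Feeding this into \autoref{res:strong-lossless-to-lossy} reduces the goal to the counting inequality
\[
|S| > \frac{L_s}{\floor{\eps s}+1}
\]
for every $s \le r$. Since the hypothesis is that $|S|$ is at least the minimum of $n/(\eps(t-r+1))$ and $n^2$, it suffices to show that the right-hand side of the displayed inequality is strictly smaller than \emph{both} of these quantities.

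The bound $n^2 > L_s/(\floor{\eps s}+1)$ is essentially free: $L_s \le s(n-s) < n^2$ (using $1 \le s \le n-1$) and $\floor{\eps s}+1 \ge 1$. For the bound $n/(\eps(t-r+1)) > L_s/(\floor{\eps s}+1)$ I would split on whether $\eps s \ge 1$. When $\eps s \ge 1$, the strict inequality $\floor{\eps s}+1 > \eps s$ cancels the $s$ in the numerator of $L_s$ and gives $L_s/(\floor{\eps s}+1) < (n-s)/(\eps(t-s+1)) \le n/(\eps(t-r+1))$, the last step using $s \le r$. When $\eps s < 1$, the floor term contributes nothing, but now the hypothesis $\eps s < 1$ absorbs the missing factor directly: $L_s = s(n-s)/(t-s+1) < (n-s)/(\eps(t-s+1)) \le n/(\eps(t-r+1))$.

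I expect the only (mild) obstacle to be the small-$s$ regime with $\eps s < 1$, where the averaging step of \autoref{res:strong-lossless-to-lossy} offers no savings over the raw list bound $L_s$; one has to notice that the inequality $\eps s < 1$ itself serves as the missing $\eps s$ factor so that the bound on $L_s$ is already good enough. Beyond that single case split, the argument is pure bookkeeping.
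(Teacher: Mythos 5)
Your proof is correct and takes the same route as the paper's: apply \autoref{res:folded-wronskian-GK13} at each $s \le r$ to obtain a strong $(s,L_s)$-lossless condenser, convert it to a $(s,\eps)$-lossy condenser via \autoref{res:strong-lossless-to-lossy}, and then check that the resulting counting threshold is strictly below $|S|$. The case split on $\eps s \ge 1$ versus $\eps s < 1$ is superfluous, since $\floor{\eps s}+1 > \eps s$ holds for \emph{every} $s,\eps \ge 0$ and your first case's chain therefore already handles all $s$ verbatim; the paper's own proof instead uses the (non-strict) bound $\floor{\eps s}+1 \ge \eps s$ together with the slack $n-s<n$ in a single uniform estimate and dispatches the $|S|\ge n^2$ branch directly via \autoref{res:lossless-extractor}, but these are cosmetic bookkeeping differences, not a different argument.
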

\begin{proof}
	If $|S|\ge n^2\ge r(n-r)+1$ then we have a \emph{lossless} $(\le r,0)$-condenser by \autoref{res:lossless-extractor} (and \autoref{res:lossless_le-v-eq}).

	Thus consider $|S|\ge \frac{n}{\eps(t-r+1)}$. By \autoref{res:folded-wronskian-GK13} and \autoref{res:strong-lossless-to-lossy} it follows that $\cE$ is a $(s,\eps)$-lossy condenser as long as $|S|>\frac{s(n-s)}{(\floor{\eps s}+1)(t-s+1)}$ (as $\ell\ge t\ge t-s+1$ as $s \ge 1$).  In particular, it suffices if 
	\begin{align*}
		|S|&\ge \frac{sn}{(\floor{\eps s}+1)(t-s+1)}
		\intertext{and thus as $\floor{\eps s}+1\ge \eps s$ and $s\le r$, it suffices if}
		|S|&\ge \frac{sn}{\eps s\cdot (t-r+1)}=\frac{n}{\eps \cdot (t-r+1)}
		\;,
	\end{align*}
	where this last bound is independent of $s$, so that we get the desired result.
\end{proof}

Note that in the above it is crucial that not only does the condenser of \autoref{res:folded-wronskian-GK13} condense all small ranks $s\le r$, but also that the list bound is smaller as $s$ decreases. Now we take this construction with explicit values for the $\alpha$'s.

\begin{corollary}\label{res:gk13-lossy-condenser_explicit}
	Let $\F$ be a field. Let $n,t\ge r\ge 1$ and $\eps>0$. Define $N\eqdef \ceilp{\frac{n}{\eps(t-r+1)}}$ and let $M\ge N$. Suppose $|\F|$ is of size $>tn^2$. Then there is an explicit $(\le r,\eps)$-lossy rank condenser $\cE\subseteq\F^{t\times n}$ of size $|\cE|=M$.
\end{corollary}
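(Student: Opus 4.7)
The plan is to instantiate Proposition \ref{res:gk13-lossy-condenser} with a concrete $\omega$ and evaluation set $S$, then pad the resulting collection to size exactly $M$ and check explicitness.

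First I would fix $\omega \in \F$ of multiplicative order at least $tn^2$, which exists since $|\F|>tn^2$ (in the finite-field case, take a generator of $\F_q^*$). Then $\omega$ has order at least $n$ as required by Construction \ref{constr:folded-wronskian}, and $\omega^t$ generates a cyclic subgroup of size at least $n^2$.

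Next, let $M' \eqdef \min(M, n^2)$ and set $S \eqdef \{(\omega^t)^j \st 0 \le j < M'\}$; these $M'$ elements are distinct by the order bound on $\omega^t$. Since $M \ge N = \ceilp{n/(\eps(t-r+1))}$, one checks $M' \ge \min(N, n^2)$ in both subcases: if $N \le n^2$ then $M' \ge \min(M, n^2) \ge \min(N, n^2) = N$; if $N > n^2$ then $M' = n^2 = \min(N, n^2)$. Proposition \ref{res:gk13-lossy-condenser}, applied with $\ell = t$, therefore gives that $\cE_0 \eqdef \{\Wr_{t,\omega}(\alpha) \st \alpha \in S\}$ is a $(\le r, \eps)$-lossy rank condenser. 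If $M > n^2$, set $\cE$ to be $\cE_0$ together with $M - n^2$ arbitrary additional matrices (say, duplicates of $\Wr_{t,\omega}(1)$); this cannot destroy the lossy condensing property since the property only asserts the \emph{existence} of some $E \in \cE$ witnessing the rank bound. Otherwise take $\cE \eqdef \cE_0$. In either case $|\cE| = M$.

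Finally, for explicitness: given an index $i \in [M]$, compute $\alpha_i = (\omega^t)^{\min(i,n^2)-1}$ by repeated squaring, and then fill in the $t \times n$ matrix $\Wr_{t,\omega}(\alpha_i)$ whose entries $(\omega^a \alpha_i)^b$ are further computable by repeated squaring. This uses $\poly(t, n, \log M)$ operations in $\F$, matching the requirement in Definition \ref{defn:lossy-seeded}. No step is a real obstacle here: all of the rank-condensing content is packed into Proposition \ref{res:gk13-lossy-condenser}, and the only mild wrinkle is that $M$ can exceed the $n^2$-element cap on useful distinct powers of $\omega^t$, which is addressed by harmless padding.
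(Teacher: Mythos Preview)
Your proposal is correct and follows essentially the same approach as the paper: instantiate Proposition~\ref{res:gk13-lossy-condenser} at distinct powers of $\omega^t$, then pad to size $M$ (the paper uses $|S|=\min\{N,n^2\}$ and pads with zero matrices, you use $|S|=\min\{M,n^2\}$ and pad with duplicates, which is cosmetic). One small point on explicitness: your parenthetical ``take a generator of $\F_q^*$'' is not known to be achievable in deterministic polynomial time; the paper instead cites a lemma guaranteeing that an element of multiplicative order $\ge tn^2$ (not a full generator) can be \emph{found} in $\poly(n,t)$ operations, which is all that is needed.
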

\begin{proof}
	As $|\F|>tn^2$ it follows that we can find an element $\omega\in\F$ with multiplicative order $\ge tn^2$ in $\poly(n,t)$ steps (see for example Forbes~\cite[Lemma A.0.5]{Forbes14}).  Thus, the set $S\eqdef \{(\omega^t)^j|0\le j<\min\{N,n^2\}\}$ has size $\min\{N,n^2\}$ and these explicit elements are all distinct as $\omega$ has order $\ge t\min\{N,n^2\}$.

	We now appeal to \autoref{res:gk13-lossy-condenser}, seeing that our set $S$ is sufficiently large and our setup matches that of \autoref{constr:folded-wronskian} as $\omega$ has order $\ge n$.  Further, we see that the resulting matrices $\cE$ are explicit as \autoref{constr:folded-wronskian} is explicit. Padding the result with zero-matrices yields an explicit set $\cE$ with size $M$.
\end{proof}

Thus, we see that the above essentially almost matches the list bound of $\frac{n}{\eps(t-(1-\eps)r)}$ of the probabilistic method for lossy condensers condensing dimension $\le r$, as given by \autoref{res:lossy-seeded_prob-method}.  However, there are two gaps.  The first is that this construction requires the use of polynomially large fields, while the existential bound also holds for constant-sized fields.  Second is that this construction requires that the output $t$ have ``$t\ge r$'' even though we only require the rank to be $\ge(1-\eps)r$.  The probabilistic method of \autoref{res:lossy-seeded_prob-method} allows us to take $t\approx (1-\eps)r$.  While this does not seem dramatic, it will cause a slight complication in our construction of dimension expanders (as discussed after \autoref{res:tensor-then-condense_instantiate_gamma0}).

\section{Constructions of Dimension Expanders}\label{sec:dim-exp_constr}

In this section we construct constant-degree dimension expanders by composing a tensoring operation with our construction of a lossy rank condenser, as constructed in in \autoref{sec:lossless-seeded_constr}.  We first discuss previous constructions of dimension expanders, then turn to our own construction.

\subsection{Previous Constructions} 

There have been two main types of constructions.  The first is to use Cayley graphs from groups with Kazhdan's \emph{property T}, and the second is to use \emph{monotone expanders}.

\smallskip \noindent {\bf Property $T$:} This approach to constructing dimension expanders is rooted in their similarities to expanding Cayley graphs, which we now discuss.  Specifically, dimension expanders can be seen as a certain type of (vertex) expander in the usual sense, as observed by Dvir and Shpilka~\cite{DvirShpilka11}.  That is, given a $(\Omega(1),1+\Omega(1))$-dimension expander $\cA=\{A_1,\ldots,A_d\}\subseteq\F_q^n$ (where we assume without loss of generality that all $A_i$ are invertible, as discussed after \autoref{defn:dim-exp}), consider the graph $G$ with vertex set $\F_q^n$ such that $\vv\in\F_q^n$ is connected to $\{A_i \vv\}_i$.  Note that this is a Schreier graph, as we have a subgroup of $\GL_n(\F_q)$ (the subgroup generated by the $A_i$) acting on the set $\F_q^n$. While in general this graph $G$ is directed, one could assume (as is common in Cayley and Schreier graphs) that $\cA$ is symmetric so that $A\in \cA$ iff $A^{-1}\in \cA$, in which case $G$ is undirected.  That the graph $G$ expands (as a vertex expander) would mean that whenever $S\subseteq\F_q^n$ has $|S|\le (1-\Omega(1))|\F_q^n|$, that the neighborhood of $S$ has size at least $1+\Omega(1)$ times the size of $S$.  That is, that $|\cup_i A_i(S)|\ge (1+\Omega(1))|S|$.  That $\cA$ is a \emph{dimension} expander asks for an expansion property of $G$ that is both weaker and stronger than that of vertex expansion in some respects.  It is weaker in that we only care about when the set $S$ is a \emph{subspace} of $\F_q^n$.  However, it is also stronger as vertex expansion only yields that $|\cup_i A_i(S)|\ge (1+\Omega(1))|S|$ which only implies $\dim \spn \cup_i A_i(S)\ge \dim S+\Omega(1)$, while dimension expansion yields that $\dim \spn \cup_i A_i(S)\ge (1+\Omega(1))\dim S$.

As seen by the above connection, dimension expanders can be seen as a type of Schreier graph.  Thus, to understand the construction of dimension expanders it is first helpful to recall the construction of expanding Cayley graphs in particular (as these are Schreier graphs). In particular, if you have a group $G$ with generators $\Gamma$, then the corresponding Cayley graph is an expander if the \emph{Kazhdan constant} is strictly bounded away from zero (in which case it is said that $G$ has \emph{property $T$} with respect to $\Gamma$).  The Kazhdan constant being bounded away from zero roughly means that each irreducible unitary representation of $G$ must ``move'' each non-zero vector a non-trivial amount via some generator in $\Gamma$.

Given that the above notions are inherently linear algebraic, Wigderson~\cite{Wigderson04} made a conjecture (see Dvir and Wigderson~\cite[Conjecture 7.1]{DvirWigderson11}) that any expanding Cayley graph would yield a dimension expander.  Specifically, that any irreducible representation $\rho:G\to\F^{n\times n}$ of that group $G$ with generators $\Gamma\subseteq G$ would have that $\rho(\Gamma)$ is a dimension-expander.  This collection will be of constant-size if the original expander was constant degree, and would intuitively expand dimension by analogy to how a positive Kazhdan constant ``moves'' any vector so these matrices must ``move'' a subspace to obtain a non-trivial amount of additional dimension.

Lubotzky and Zelmanov~\cite{LubotzkyZelmanov08} proved Wigderson's conjecture in characteristic zero by exploiting the connection of expansion in Cayley graphs to the underlying group having property $T$. Thus, they established explicit constant-degree $(\Omega(1),1+\Omega(1))$-dimension expanders in any field of characteristic zero. Unfortunately, as property $T$ relies on the representations being over the characteristic zero (so that distance is a well-defined notion), the conjecture remains open in finite characteristic. Harrow~\cite{Harrow08} independently obtained this result in the context of \emph{quantum expanders}, which imply dimension expanders in characteristic zero.  We summarize this in the following theorem.

\begin{theoremwp*}[Lubotzky and Zelmanov~\cite{LubotzkyZelmanov08} and Harrow~\cite{Harrow08}]
	Let $\F$ be a field of characteristic zero and $n\ge 1$. There exists an explicit $\O(1)$-sized collection $\cA\subseteq\F^{n\times n}$ such that $\cA$ is a $(\nicefrac{1}{2},1+\Omega(1))$-dimension expander over $\F^n$.
\end{theoremwp*}

\smallskip \noindent {\bf Monotone Expanders:} This second approach to constructing dimension expanders exploits another similarity to expander graphs, but now to bipartite (vertex) expanders.  Specifically, as observed in Bourgain and Yehudayoff~\cite{BourgainYehudayoff13}, bipartite vertex expanders can be seen as special cases of dimension expanders, where bipartite vertex expanders only expand subspaces spanned by basis vectors and dimension expanders expand all subspaces.  Indeed, suppose the graph $G$ is on the vertex set with bipartition $[n]\sqcup[n]$ and we partition the edges $E$ of $G$ into partial matchings $E=E_1\sqcup\cdots\sqcup E_d$ so that $d$ is an upper bound on the degree of $G$.  Then we can view the sets $E_i$ as defining (partial) maps $E_i:[n]\to[n]$ which we can then view as matrices $A_i\in\bits^{n\times n}$ by using these maps to act on the standard basis vectors and then extending linearly.  That $G$ is a good vertex expander means that for any $S\subseteq[n]$ with $|S|\le (1-\Omega(1))n$ that the neighborhood of $S$ in $G$ is slightly larger, that is, $|\cup_i E_i(S)|\ge (1+\Omega(1))|S|$.  However, from this we see that the vector space $V\eqdef\spn\{\ve_i\}_{i\in S}$ (where $\ve_i$ is the $i$-th standard basis vector) thus expands under the collection $\{A_i\}$, as $\dim \sum_i A_i(V)=|\cup_i E_i(S)|\ge (1+\Omega(1))|S|=(1+\Omega(1))\dim V$.

Given the above connection, one can then ask in which contexts do the above matrices $A_i$ also expand \emph{any} subspace, as opposed to just those spanned by basis vectors.  Dvir and Shpilka~\cite{DvirShpilka11} implicitly observed that dimension expansion occurs when these (partial) maps $E_i:[n]\to[n]$ are \emph{monotone}. That is, if each edge set $E_i$ defines a partial map $E_i:[n]\to[n]$ so that if $E_i(j)$ and $E_i(k)$ are defined for $j,k\in[n]$, then $j<k\implies E_i(j)<E_i(k)$.  When this monotonicity occurs, the resulting matrices $\{A_i\}_i$ are a dimension expander, and this statement was made explicit by Dvir and Wigderson~\cite{DvirWigderson11}.

Thus, to construct explicit constant-degree dimension expanders it then suffices to construct explicit constant-degree monotone expanders (where the partition of the edges into monotone maps must also be explicit).  Unfortunately, monotone expanders seem a more delicate object than unrestricted expanders.  Indeed, the standard probabilistic method arguments cannot demonstrate even the \emph{existence} of constant-degree monotone expanders (see \cite{DvirWigderson11,BourgainYehudayoff13}). However, using this connection along with Cayley expanders over $\Z_n$, Dvir and Shpilka~\cite{DvirShpilka11} were able to construct monotone expanders (and thus dimension expanders) with logarithmic degree, as well as constant-degree expanders with inverse-logarithmic expansion.  We formally state their result here to contrast with our results over $\F_2$ (\autoref{res:tensor-then-condense_instantiate_small-field}), as our results are much weaker (only achieving $(\Omega(\nicefrac{1}{\lg n}),1+\Omega(1))$-expanders of logarithmic degree).

\begin{theoremwp*}[Dvir and Shpilka~\cite{DvirShpilka11}]
	Let $n\ge 1$. There exists an explicit $\O(\lg n)$-sized collection $\cA\subseteq\bits^{n\times n}$ such that $\cA$ is a $(\Omega(1),1+\Omega(1))$-dimension expander over $\F^n$, for every field $\F$.

	Let $n\ge 1$. There exists an explicit $\O(1)$-sized collection $\cA\subseteq\bits^{n\times n}$ such that $\cA$ is a $(\Omega(1),1+\Omega(\nicefrac{1}{\lg n}))$-dimension expander over $\F^n$, for every field $\F$.
\end{theoremwp*}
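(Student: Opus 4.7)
My plan follows the two-step framework described in this section: first I would reduce the construction of dimension expanders to that of monotone bipartite expanders, and then construct such expanders from Cayley graphs on $\Z_n$.

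For the reduction, I would consider partial monotone maps $E_1,\ldots,E_d:[n]\to[n]$ and the corresponding matrices $A_i\in\bits^{n\times n}$ whose $j$-th column is $e_{E_i(j)}$ when $E_i(j)$ is defined and zero otherwise. The claim to prove is that if the bipartite graph defined by $\{E_i\}$ has vertex expansion $1+\delta$ on sets of size $\le\eps n$, then $\{A_i\}$ is an $(\eps,1+\delta)$-dimension expander over every field. Given $V\subseteq\F^n$ of dimension $k\le\eps n$, I would choose a basis $v_1,\ldots,v_k$ in reduced row echelon form with pivot positions $S=\{i_1<\cdots<i_k\}$. Monotonicity together with the injectivity of partial monotone maps guarantees that $A_iv_j$ has leading coordinate exactly $E_i(i_j)$ when defined, and these positions are distinct across $j$ for each fixed $i$. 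Hence for each $p\in\bigcup_i E_i(S)$ one can pick some $w_p\in\sum_i A_iV$ with leading coordinate $p$, and distinct leading coordinates force linear independence, giving $\dim\sum_i A_iV\ge\bigl|\bigcup_i E_i(S)\bigr|\ge(1+\delta)k$.

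For the $O(\log n)$-degree part of the theorem, I would use explicit Cayley graphs on the cyclic group $\Z_n$ with $d=O(\log n)$ generators $t_1,\ldots,t_d$ that yield constant vertex expansion. Each shift $x\mapsto x+t_i\pmod{n}$ decomposes into two monotone pieces by splitting the domain at the wrap-around point, which at most doubles the degree while preserving expansion. Feeding this into the reduction yields the first claim. Note that by Alon's theorem on Cayley graphs of abelian groups, $\Omega(\log n)$ generators are necessary here, so this approach cannot directly reach constant degree.

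The hard part will be the $O(1)$-degree, $1+\Omega(1/\log n)$-expansion claim, because standard probabilistic arguments fail to even establish the existence of constant-degree monotone expanders. My plan here is to take a constant-sized subfamily of the above $O(\log n)$ generators in a structured way, arguing via an averaging argument that at least one such subfamily retains a $1/\log n$ fraction of the original expansion. Since any single shift acts as a permutation with no expansion of its own, the argument must track how a constant number of monotone shifts \emph{interact} to produce new pivot positions at rate $\Omega(1/\log n)$; carrying this out for a single explicit subfamily requires additive-combinatorial input on $\Z_n$ beyond generic sampling arguments, and is the most delicate part of the construction.
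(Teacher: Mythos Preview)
The paper does not prove this theorem: it is stated as a cited result of Dvir and Shpilka in the ``Previous Constructions'' subsection, using the \texttt{theoremwp*} environment (a theorem stated without proof). So there is no proof in the paper to compare against; the paper only sketches the Dvir--Shpilka approach at a high level (monotone bipartite expanders yield dimension expanders, and Cayley graphs on $\Z_n$ supply the monotone expanders).

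Your reduction from monotone expanders to dimension expanders is correct and matches what the paper attributes to Dvir--Shpilka. The row-echelon/leading-coordinate argument is exactly the right mechanism, and your $O(\log n)$-degree construction (Cayley shifts on $\Z_n$ split into two monotone pieces each) is also in line with the paper's description.

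The genuine gap is in your plan for the constant-degree, $1+\Omega(1/\log n)$-expansion claim. The averaging argument you sketch does not work: for a \emph{fixed} set $S$ some constant-size subfamily of the $O(\log n)$ shifts expands $S$ by $1+\Omega(1/\log n)$, but different sets $S$ will select different subfamilies, and there is no reason a single constant subfamily works uniformly for all $S$. You acknowledge this yourself by invoking unspecified ``additive-combinatorial input,'' but that is precisely the content of the claim --- at this point the proposal is a restatement of the problem rather than a plan. To actually carry this out you would need to exhibit a specific constant set of shifts and prove directly that every set $S$ of size $\le \eps n$ has $|\bigcup_i (S+t_i)| \ge (1+\Omega(1/\log n))|S|$; the paper gives no hint as to how Dvir--Shpilka do this, so you would need to consult their paper.
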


Dvir and Wigderson~\cite{DvirWigderson11} gave an iterative construction of monotone expanders in spirit of the zig-zag product of Reingold, Vadhan and Wigderson~\cite{ReingoldVadhanWigderson02}. Using this approach they were able to give monotone expanders (and thus dimension expanders) of degree $\lg^{(c)}(n)$ (the $c$-th iterated logarithm) for any constant $c$.  Assuming a base construction of a constant-degree monotone expander (which is \emph{not} known to exist via the probabilistic method), they were able to produce constant-degree dimension expanders.

In a more sophisticated work, Bourgain and Yehudayoff~\cite{BourgainYehudayoff13} used expansion in the group $\SL_2(\R)$ to obtain explicit constant degree monotone expanders, and thus dimension expanders.

\begin{theoremwp*}[Bourgain and Yehudayoff~\cite{BourgainYehudayoff13}]
	Let $n\ge 1$. There exists an explicit $\O(1)$-sized collection $\cA\subseteq\bits^{n\times n}$ such that $\cA$ is a $(\nicefrac{1}{2},1+\Omega(1))$-dimension expander over $\F^n$, for every field $\F$.
\end{theoremwp*}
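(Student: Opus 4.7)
The plan is to factor the statement through monotone expanders, in line with the discussion preceding the theorem. Concretely, I would first invoke the reduction (implicit in Dvir--Shpilka~\cite{DvirShpilka11} and explicit in Dvir--Wigderson~\cite{DvirWigderson11}) that turns any explicit constant-degree monotone expander on $[n]$ into an explicit constant-degree $(\nicefrac{1}{2},1+\Omega(1))$-dimension expander in $\bits^{n\times n}$ over every field. Thus the entire task reduces to: construct an explicit constant-sized family of partial monotone maps $\pi_1,\dots,\pi_d:[n]\rightharpoonup[n]$ whose union, viewed as a bipartite graph, is an $(\nicefrac{n}{2},1+\Omega(1))$-vertex-expander.

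To realize this, I would take a fixed constant-size symmetric set $\Gamma\subseteq\SL_2(\Z)$ and use the action of $\Gamma$ on $\bbP^1(\R)$ by M\"obius transformations $x\mapsto \tfrac{ax+b}{cx+d}$. The crucial geometric fact is that each such M\"obius map, restricted to any interval on which it is defined (i.e.\ avoiding the pole $-d/c$), is strictly monotone in $x$, with sign determined by $\det=ad-bc=1$. Partitioning $\bbP^1(\R)$ into the two intervals where a given generator is increasing vs.\ decreasing (or splitting each generator into two partial maps according to whether we are to the left or right of its pole), and then discretizing $\bbP^1(\R)$ into $n$ buckets, turns each generator into a constant number of partial \emph{monotone} maps $[n]\rightharpoonup[n]$. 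This gives a candidate for $\pi_1,\dots,\pi_d$ with $d=O(1)$.

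To argue that the resulting bipartite graph expands, I would choose $\Gamma$ so that the random walk driven by $\Gamma$ on $\bbP^1(\R)$ has a spectral gap, and then transfer this gap to the discretized picture. The natural choice is a Zariski-dense free subset of $\SL_2(\Z)$ (e.g.\ coming from Lubotzky's degree-3 generators for $\SL_2(\Z)$ modulo congruence subgroups, or a generating set known to satisfy Bourgain--Gamburd-type spectral gap theorems). Combining the spectral gap on $L^2(\bbP^1(\R))$ with a scale-by-scale discretization argument (of Bourgain--Gamburd--Sarnak flavor, using a sum-product / flattening lemma to preclude concentration of the walk on short intervals) should yield the desired vertex expansion for every $S\subseteq[n]$ with $|S|\le n/2$.

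The main obstacle, by a wide margin, is the spectral-gap / expansion estimate for the M\"obius action: monotonicity, discretization and the reduction to dimension expanders are essentially formal, but establishing that a constant-size subset of $\SL_2(\Z)$ yields vertex expansion at \emph{every} scale when acting on intervals of $\bbP^1(\R)$ requires the full strength of the non-commutative sum-product machinery over $\R$ developed by Bourgain and coauthors. Any weakening of that step (for instance, only controlling a single scale, or only the uniform measure) would give monotone expanders whose expansion collapses on adversarially chosen $S\subseteq[n]$ and thus would fail to produce dimension expanders with absolute constants in $(\eps,\alpha)$.
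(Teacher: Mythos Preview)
The paper does not prove this theorem; it is quoted as a result of Bourgain and Yehudayoff~\cite{BourgainYehudayoff13} (note the \texttt{theoremwp*} environment, which is the paper's ``no proof'' style). The only description the paper gives of the argument is one sentence in \autoref{sec:dim-exp_constr}: Bourgain and Yehudayoff ``used expansion in the group $\SL_2(\R)$ to obtain explicit constant degree monotone expanders, and thus dimension expanders.'' Your proposal is a faithful expansion of exactly that sentence --- the reduction to monotone expanders via Dvir--Shpilka/Dvir--Wigderson, followed by M\"obius maps on $\bbP^1(\R)$ discretized to $[n]$, with expansion coming from additive-combinatorics-style spectral gap estimates. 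So there is nothing to compare: your sketch matches the approach the paper attributes to~\cite{BourgainYehudayoff13}, and the paper itself supplies no proof.

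One minor correction to your write-up: the relevant group in~\cite{BourgainYehudayoff13} is $\SL_2(\R)$ rather than $\SL_2(\Z)$; the construction uses real M\"obius transformations (not just integer ones), and the analysis is not via reduction mod $p$ in the Bourgain--Gamburd sense but rather via growth estimates in $\SL_2(\R)$ itself. This does not change the shape of your outline, but the ``congruence subgroup'' remark is off-target.
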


A remarkable feature of the above works is that they achieve (for each $n$) a \emph{single} collection of matrices that is a dimension expander for \emph{every} field, in particular by only using $\bits$-values. In contrast, our constructions will very much depend on the underlying field by using elements of large multiplicative order.

\subsection{Our Construction}

We now proceed to give the details of our construction, following the outline given in \autoref{sec:dim-exp}.  That is, we apply a tensoring operation to yield expansion but increasing the ambient dimension, and then use a lossy rank condenser to preserve this expansion while reducing the ambient dimension to its original size.  As mentioned in \autoref{sec:dim-exp}, this approach is somewhat limited to expanding rank $\le\nicefrac{n}{2}$ subspaces as we can only tensor with integral-dimensional spaces.  To circumvent this, we observe that we can simply ``forget'' some of the rank we obtained by tensoring and this allows the construction to expand any rank.  We start with a generic analysis, where we parameterize the ``forgetfullness'' by $\gamma$.

\begin{proposition}\label{res:tensor-then-condense}
	Let $\F$ be a field. Let $n\ge r\ge 1$, $d\ge 1$, $\delta\in\ocint{0}{1}$, and $\gamma\in[0,1]$. Let $\cE\subseteq\F^{n\times nd}$ be a $(\le \ceil{(1-\gamma)rd},\delta)$-lossy rank condenser.  For $i\in[d]$, define $T_i\in \F^{nd\times n}$ to be the matrix of the map $\vv\mapsto \vv\otimes \ve_i$, where $\ve_i\in\F^{d}$ is the $i$-th standard basis vector.  Define $\cA\eqdef\{ E T_i\st E\in\cE, i\in[d]\}$.  Then $\cA\subseteq\F^{n\times n}$ is a $(\nicefrac{r}{n},(1-\gamma)(1-\delta)d)$-dimension expander of degree $d\cdot |\cE|$.
\end{proposition}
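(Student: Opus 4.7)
Fix a subspace $V\subseteq\F^n$ with $s\eqdef\dim V\le r$ (i.e., $s\le(\nicefrac{r}{n})\cdot n$, the range required by the definition of an $(\nicefrac{r}{n},\cdot)$-expander). I need to show that $\dim\sum_{E\in\cE,i\in[d]} ET_i(V)\ge (1-\gamma)(1-\delta)d\cdot s$. The plan is to first trivially ``expand'' $V$ by tensoring into $\F^{nd}$, then invoke the lossy rank condenser $\cE$ to pull the expanded space back into $\F^n$ while losing only a $(1-\gamma)(1-\delta)$ factor of the dimension.

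First I would set $W\eqdef\sum_{i\in[d]}T_i(V)$, observing that $W=V\otimes\F^d$ inside $\F^n\otimes\F^d=\F^{nd}$, so that $\dim W=ds$. This step is the ``free'' expansion from the tensoring: each $T_i$ places $V$ into the $i$-th block of $(\F^n)^d$ and those $d$ copies are linearly independent.

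Next, I would apply the hypothesis on $\cE$ to a carefully chosen subspace of $W$. Since $\cE$ is a $(\le \ceil{(1-\gamma)rd},\delta)$-lossy rank condenser, it only condenses subspaces of dimension at most $\ceil{(1-\gamma)rd}$; when $ds$ exceeds this bound we must deliberately ``forget'' some rank. Concretely, pick any subspace $W'\subseteq W$ of dimension
\[
\dim W'\ \eqdef\ \min\bigl(ds,\ \ceil{(1-\gamma)rd}\bigr).
\]
Then $\dim W'\le \ceil{(1-\gamma)rd}$, so by \autoref{defn:lossy-seeded} applied with $W'$ there exists some $E\in\cE$ with $\rank(EW')\ge(1-\delta)\dim W'$, where I freely pass between subspaces and rank-$\dim W'$ matrices representing them.

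Finally, I would verify the desired bound. By the case split: if $ds\le\ceil{(1-\gamma)rd}$ then $\dim W'=ds\ge(1-\gamma)ds$; otherwise $\dim W'=\ceil{(1-\gamma)rd}\ge(1-\gamma)rd\ge(1-\gamma)ds$ since $s\le r$. Either way $\dim W'\ge(1-\gamma)ds$, so
\[
\dim\sum_{E'\in\cE,i\in[d]}E'T_i(V)\ \ge\ \dim E(W')\ \ge\ (1-\delta)\dim W'\ \ge\ (1-\gamma)(1-\delta)d\cdot\dim V,
\]
where the first inequality uses $E(W')\subseteq E(W)=\sum_i ET_i(V)\subseteq\sum_{E',i}E'T_i(V)$. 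The degree count $|\cA|\le d\cdot|\cE|$ is immediate from the definition of $\cA$. The main conceptual point—and the only nontrivial step—is the choice of $W'$ when $\gamma>0$, which exploits the ``$\le$'' quantifier in the condenser hypothesis to handle the regime where the tensored dimension $ds$ overshoots the condenser's input capacity; when $\gamma=0$ one just takes $W'=W$ and the argument collapses to the sketch in \autoref{sec:dim-exp}.
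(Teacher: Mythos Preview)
Your proof is correct and follows essentially the same approach as the paper: tensor to get $\dim W=ds$, pass to a subspace of dimension at most $\ceil{(1-\gamma)rd}$, then apply the condenser. The only cosmetic difference is that the paper picks $W'$ of dimension exactly $\ceil{(1-\gamma)sd}$ (which automatically satisfies both $\le ds$ and $\le\ceil{(1-\gamma)rd}$), thereby avoiding your case split, but the argument is otherwise identical.
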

\begin{proof}
	That $\cA\subseteq\F^{n\times n}$ is clear from construction, as $E\in\F^{n\times nd}$ and $T_i\in\F^{nd\times n}$.  That $|\cA|=d\cdot |\cE|$ is also clear.  We now argue the expansion property.

	Consider some $V\subseteq\F^n$ with $\dim V=s\le r$. Then we have that $V\otimes \F^d\subseteq\F^{nd}$ has rank $sd$ by the properties of the tensor product, and using that $V\otimes \F^d=\sum_i T_i(V)$, it follows that $\dim\sum_i T_i(V)=sd$.  In particular, we have that there is some subspace $W\subseteq \sum_i T_i(V)$ with $\dim W=\ceil{(1-\gamma)sd}\le \ceil{(1-\gamma)rd}$.  By the hypothesis on $\cE$, it follows that for some $E\in\cE$ that $\dim E(W)\ge (1-\delta)\ceil{(1-\gamma)sd}\ge (1-\delta)(1-\gamma)sd=(1-\delta)(1-\gamma)d\cdot \dim V$.  Thus, as $\dim \sum_{A\in\cA} A(V)\ge \dim E(W)\ge (1-\delta)(1-\gamma)d\cdot \dim V$ we see that there is the desired expansion.
\end{proof}

We now instantiate this recipe with our explicit construction of a lossy rank condenser (\autoref{res:gk13-lossy-condenser_explicit}) to deduce the following.

\begin{theorem}\label{res:tensor-then-condense_instantiate}
	Let $\F$ be a field. Let $n,d\ge 1$. Let $\eps\in(0,1)$, $\delta\in\ocint{0}{1}$ and $\gamma\in[0,1]$, subject to
	\[
		(1-\gamma)\eps d < 1
		\;.
	\]
	Then there is an explicit $(\eps,(1-\gamma)(1-\delta)d)$-dimension expander of degree 
	\[
		d\cdot \ceilp{\frac{d}{\delta (1-(1-\gamma)\eps d)}}
		\;,
	\]
	whenever $|\F|>d^2n^3$.
\end{theorem}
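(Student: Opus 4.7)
The proof is a direct composition of the tensor-then-condense recipe in \autoref{res:tensor-then-condense} with the explicit lossy rank condenser of \autoref{res:gk13-lossy-condenser_explicit}. The plan is to invoke the proposition with expansion target $r = \lfloor \eps n \rfloor$ (which subsumes all integer-dimensional subspaces of $\F^n$ with $\dim \le \eps n$) and then construct the required $(\le \ceil{(1-\gamma)\eps n d}, \delta)$-lossy rank condenser $\cE \subseteq \F^{n \times nd}$ from the corollary, verifying that the hypotheses line up and that the size of $\cE$ fits inside the advertised degree bound.

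For the instantiation, I would apply \autoref{res:gk13-lossy-condenser_explicit} with the substitution $(n_{\text{cor}}, t_{\text{cor}}, r_{\text{cor}}, \eps_{\text{cor}}) = (nd,\; n,\; \ceil{(1-\gamma)\eps n d},\; \delta)$; here $n_{\text{cor}}$ plays the role of the input dimension of the condenser (the ambient tensored space) and $t_{\text{cor}}$ is the output dimension (back down to $\F^n$). Three preconditions must be checked. First, $t_{\text{cor}} \ge r_{\text{cor}}$, i.e., $\ceil{(1-\gamma)\eps n d} \le n$; this is exactly what the standing hypothesis $(1-\gamma)\eps d < 1$ buys, since it gives $(1-\gamma)\eps n d < n$. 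Second, the field-size condition $|\F| > t_{\text{cor}} \cdot n_{\text{cor}}^2 = n \cdot (nd)^2 = d^2 n^3$ is precisely the assumption of the theorem. Third, $r_{\text{cor}} \ge 1$ in the nontrivial regime.

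It remains to bound the size of $\cE$. The corollary permits any $|\cE| \ge N \eqdef \ceilp{\frac{nd}{\delta(n - \ceil{(1-\gamma)\eps n d} + 1)}}$. Using $\ceil{x} \le x + 1$ in the denominator gives
\[
    n - \ceil{(1-\gamma)\eps n d} + 1 \;\ge\; n - (1-\gamma)\eps n d \;=\; n\bigl(1 - (1-\gamma)\eps d\bigr),
\]
so $N \le \ceilp{\frac{d}{\delta(1-(1-\gamma)\eps d)}}$. Taking $|\cE|$ equal to this latter ceiling and plugging into \autoref{res:tensor-then-condense} produces an $(\eps, (1-\gamma)(1-\delta)d)$-dimension expander of degree $d \cdot |\cE| = d \cdot \ceilp{\frac{d}{\delta(1-(1-\gamma)\eps d)}}$, exactly as claimed.

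The main obstacle is purely bookkeeping: one must track which parameter of the condenser plays the role of input versus output dimension under the tensoring, and absorb the rounding cost of $\ceil{(1-\gamma)\eps n d}$ into the $1-(1-\gamma)\eps d$ factor in the final denominator. The hypothesis $(1-\gamma)\eps d < 1$ is precisely the threshold that keeps $r_{\text{cor}} \le t_{\text{cor}}$ and keeps this denominator positive; beyond it the chaining breaks down. No further ideas are needed past the clean composition of the two prior results.
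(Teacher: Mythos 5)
Your proof is correct and follows essentially the same route as the paper's: instantiate \autoref{res:gk13-lossy-condenser_explicit} to build the lossy condenser on $\F^{nd}\to\F^n$, check the size bound and the field-size constraint, and plug into \autoref{res:tensor-then-condense}. The only cosmetic deviation is that you set the condenser's rank bound to $\ceil{(1-\gamma)\eps n d}$ rather than the paper's $\ceil{(1-\gamma)\lfloor\eps n\rfloor d}$; the former is at least as large, so the condenser you build still covers what \autoref{res:tensor-then-condense} (invoked with $r=\lfloor\eps n\rfloor$) actually requires, and your size estimate lands on the identical final degree bound.
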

\begin{proof}
	For $r=\floor{\eps n}$, \autoref{res:gk13-lossy-condenser_explicit} yields an explicit $(\le \ceil{(1-\gamma)rd},\delta)$-lossy rank condenser of size $M\eqdef \ceilp{\frac{d}{\delta (1-(1-\gamma)\eps d)}}$ as
	\[
		\ceilp{\frac{nd}{\delta (n-\ceil{(1-\gamma)rd}+1)}}
		\le \ceilp{\frac{nd}{\delta (n-(1-\gamma)rd)}}
		\le \ceilp{\frac{d}{\delta (1-(1-\gamma)\eps d)}}
		\;.
	\]
	Note that \autoref{res:gk13-lossy-condenser_explicit} requires that $n\ge \ceil{(1-\gamma)rd}$, which is true iff $(1-\gamma)rd\le n$, which follows from $(1-\gamma) \eps d\le 1$, and we have by hypothesis that $(1-\gamma)\eps d < 1$.  We now use this condenser in \autoref{res:tensor-then-condense}, noting that this construction is also explicit and multiplies the degree by $d$. 
\end{proof}

We now note one particularly natural set of parameters for the above construction, namely when $\gamma=0$.

\begin{corollarywp}\label{res:tensor-then-condense_instantiate_gamma0}
	Let $\F$ be a field. Let $n,d\ge 1$. Let $\eps\in(0,1)$, $\delta\in\ocint{0}{1}$, subject to $\eps <\nicefrac{1}{d}$. Then there is an explicit $(\eps,(1-\delta)d)$-dimension expander of degree $d\cdot \ceilp{\frac{d}{\delta (1-\eps d)}}$ whenever $|\F|>d^2n^3$.
\end{corollarywp}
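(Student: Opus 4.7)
The plan is simply to invoke \autoref{res:tensor-then-condense_instantiate} with the parameter $\gamma$ set to $0$. The corollary is precisely the $\gamma=0$ specialization of that theorem, so no new work is required beyond checking that the hypotheses match.

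First, I would observe that with $\gamma=0$ the constraint $(1-\gamma)\eps d < 1$ of \autoref{res:tensor-then-condense_instantiate} becomes $\eps d < 1$, which is equivalent to $\eps < \nicefrac{1}{d}$, matching the hypothesis of the corollary verbatim. Next, the resulting expansion factor $(1-\gamma)(1-\delta)d$ collapses to $(1-\delta)d$, and the degree bound $d\cdot \ceilp{\nicefrac{d}{\delta(1-(1-\gamma)\eps d)}}$ collapses to $d\cdot \ceilp{\nicefrac{d}{\delta(1-\eps d)}}$, exactly as claimed. The field size requirement $|\F| > d^2 n^3$ carries over unchanged.

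There is essentially no technical obstacle here; the corollary is recorded separately only to highlight the natural ``no forgetting'' instantiation of the more general construction (in which one would otherwise discard a $\gamma$-fraction of the rank gained by tensoring in order to handle the case $\eps \ge \nicefrac{1}{d}$). The actual content of the argument lies in the preceding \autoref{res:tensor-then-condense} (tensor-then-condense recipe) and in the explicit lossy rank condenser from \autoref{res:gk13-lossy-condenser_explicit}, both of which are already established.
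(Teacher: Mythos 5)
Your proposal is correct and is exactly the (implicit) route the paper takes: the corollary is stated without proof precisely because it is the $\gamma=0$ specialization of \autoref{res:tensor-then-condense_instantiate}, and your parameter substitutions match the theorem's statement term for term.
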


Now consider this construction when $\eps\le\nicefrac{1}{2d}$ and $\delta=\nicefrac{1}{2}$, so that we have that the degree of the expander is $\approx 4d^2$, when the expansion $\alpha$ has $\alpha=\nicefrac{d}{2}$. However, the existential construction of \autoref{res:dim-exp_prob-method} yields that the degree could be $\approx \alpha+2$ here.  Thus, our construction suffers from a quadratic loss as compared to the existential bounds.  In particular, we do not achieve lossless dimension expanders. Intuitively, this quadratic loss is because we compose a tensor step with a condensing step, so that the degree of each step multiplies. 

While the above corollary with $\gamma=0$ is sufficient for expanding small rank subspaces, it intrinsically cannot yield, for example, $(\nicefrac{2}{3},1+\Omega(1))$-expanders, because the parameter $d$ must be an \emph{integer}. Thus, after our tensoring step, from rank $r$ we get rank $rd$ and we must take $d>1$ to achieve \emph{any} expansion.  But as there is no way to tensor with $\F^d$ for $d$ fractional, we must have that $d\ge 2$. However, our construction of a lossy rank condenser (\autoref{res:gk13-lossy-condenser}) \emph{requires} that the output size be at least the rank bound of $rd\ge 2r$.  Thus, since the output size must be $n$ so that we achieve $n\times n$ matrices, it follows that this method does not work for $2r>n$.  Note that if we had lossy condensers meeting the existential bound (\autoref{res:lossy-seeded_prob-method}) then we would only need that $n\ge (1-\delta)rd$ and thus taking $\delta$ sufficiently close to 1 would remedy the $r\le \nicefrac{n}{2}$ limitation seen here.

However, as we currently cannot match the above existential bound, the above construction introduces the $\gamma$ parameter.  When $\gamma>0$, we simply ``forget'' that the tensoring yields dimension $rd$ and simply work with the smaller rank bound of $(1-\gamma)rd$, thus allowing us to take any $r$ where $(1-\gamma)rd\le n$.  In particular, by letting $\gamma$ close to $1$ we can take \emph{any} rank $r$.  We now implement the above by choosing $\gamma$ carefully so that we can obtain constant degree expanders in $\F^n$ that expand rank $\eps n$ to rank $\eta n$ for any constants $\eps\le \eta<1$. 

\begin{corollary}\label{res:tensor-then-condense_instantiate_gamma-pos}
	Let $\F$ be a field. Let $n\ge 1$. Let $0<\eps\le \eta<1$. Then there is an explicit $(\eps,\nicefrac{\eta}{\eps})$-dimension expander of degree 
	\[
		\ceilp{\frac{1+\eta}{2\eps}}\cdot \ceilp{\frac{2(1+\eta)\ceilp{\frac{1+\eta}{2\eps}}}{(1-\eta)^2}}
		\le
		\ceilp{\nicefrac{1}{\eps}}\cdot \ceilp{\frac{4\ceilp{\nicefrac{1}{\eps}}}{(1-\eta)^2}}
		\;,
	\]
	whenever $|\F|>d^2n^3$.
\end{corollary}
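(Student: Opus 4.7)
The plan is to invoke \autoref{res:tensor-then-condense_instantiate} with carefully chosen values of $d$, $\gamma$, and $\delta$ so that the expansion factor becomes exactly $\nicefrac{\eta}{\eps}$ while the constraint $(1-\gamma)\eps d<1$ is saturated in a way that keeps the denominator in the degree bound healthy (of order $1-\eta$).

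First, I would set
\[
    d\eqdef\ceilp{\tfrac{1+\eta}{2\eps}},\qquad
    (1-\gamma)\eqdef\tfrac{1+\eta}{2\eps d},\qquad
    \delta\eqdef\tfrac{1-\eta}{1+\eta}.
\]
One checks immediately that $\gamma\in[0,1]$ (since $d\ge(1+\eta)/(2\eps)$ forces $(1-\gamma)\le 1$, while $1-\gamma>0$) and that $\delta\in(0,1)$ (since $0<\eps\le\eta<1$ implies $\eta>0$). The defining equation $(1-\gamma)\eps d=\nicefrac{(1+\eta)}{2}$ gives $(1-\gamma)\eps d<1$, so the hypothesis of \autoref{res:tensor-then-condense_instantiate} is satisfied, and it also yields the clean identity $1-(1-\gamma)\eps d=\nicefrac{(1-\eta)}{2}$.

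Next I would verify the two output quantities. For the expansion factor, substituting the choices above,
\[
    (1-\gamma)(1-\delta)d
    =\tfrac{1+\eta}{2\eps}\cdot\tfrac{2\eta}{1+\eta}
    =\tfrac{\eta}{\eps},
\]
as required. For the degree, \autoref{res:tensor-then-condense_instantiate} produces
\[
    d\cdot\ceilp{\tfrac{d}{\delta(1-(1-\gamma)\eps d)}}
    =d\cdot\ceilp{\tfrac{2d(1+\eta)}{(1-\eta)^2}}
    =\ceilp{\tfrac{1+\eta}{2\eps}}\cdot\ceilp{\tfrac{2(1+\eta)\ceilp{\nicefrac{(1+\eta)}{2\eps}}}{(1-\eta)^2}},
\]
which is exactly the first expression in the claimed degree bound. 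The weaker upper bound stated in the corollary then follows by the elementary estimates $\ceilp{\nicefrac{(1+\eta)}{2\eps}}\le\ceilp{\nicefrac{1}{\eps}}$ (from $\eta<1$) and $2(1+\eta)\le 4$.

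No step here is really a hard obstacle, since all the technical work is hidden in \autoref{res:tensor-then-condense_instantiate}; the only tricky part is the parameter choice itself, namely recognizing that one should saturate the rank-budget constraint at $(1-\gamma)\eps d=\nicefrac{(1+\eta)}{2}$ (the midpoint of $[\eta,1]$) so that both $1-(1-\gamma)\eps d$ and $1-\delta$ are proportional to $1-\eta$, making the two factors of $1-\eta$ in the denominator appear naturally. The $\gamma>0$ ``forgetting'' trick is what allows $r/n=\eps$ to exceed $\nicefrac{1}{d}$, removing the $r\le\nicefrac{n}{2}$ barrier present in \autoref{res:tensor-then-condense_instantiate_gamma0}.
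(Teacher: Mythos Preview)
Your proposal is correct and essentially identical to the paper's proof: both set $d=\ceilp{\frac{1+\eta}{2\eps}}$, choose $\gamma$ so that $(1-\gamma)\eps d=\frac{1+\eta}{2}$, and take $\delta=\frac{1-\eta}{1+\eta}$, then plug into \autoref{res:tensor-then-condense_instantiate}. Your additional commentary explaining why the midpoint $\frac{1+\eta}{2}$ is the natural saturation point is a nice touch not present in the paper.
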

\begin{proof}
	This follows from \autoref{res:tensor-then-condense_instantiate} by choosing parameters carefully.  In particular, we will to choose $d,\delta,\gamma$ so that $(1-\delta)(1-\gamma)\eps d=\eta$ but that $(1-\gamma)\eps d$ is bounded away from one.

	In particular, choose $d=\ceilp{\frac{1+\eta}{2\eps}}$.  Note that as $\eps<1$ we have that $d\ge 2$.  Now choose $\gamma$ so that $(1-\gamma)\eps d = \frac{1+\eta}{2}$ so that $(1-\gamma)\ceilp{\frac{1+\eta}{2\eps}}= \frac{1+\eta}{2\eps}$ from which it follows that $\gamma\in\coint{0}{1}$. Now choose $\delta$ so that $(1-\delta)(1-\gamma)\eps d=\eta$, in particular that $(1-\delta)=\frac{2\eta}{1+\eta}$, so that $\delta=\frac{1-\eta}{1+\eta}\in(0,1)$.  Thus, plugging these values into \autoref{res:tensor-then-condense_instantiate_gamma0} yields the desired parameters.
\end{proof}

The above constructions all require large fields.  In \autoref{sec:small-fields} we show how to simulate these results in small fields by paying certain logarithmic penalties.

\section{Constructions over Small Fields}\label{sec:small-fields}

The main constructions of this paper rely on the folded Wronskian (\autoref{constr:folded-wronskian}) which requires a polynomially large field.  In this section, we discuss to what extent we can extend our techniques to smaller fields.  Guruswami and Kopparty~\cite{GuruswamiKopparty13} gave a way to convert subspace designs over large fields to subspace designs over small fields with comparable parameters.  However, their method was not able to preserve the \emph{strong-ness} of the subspace design and as we saw this strong-ness is essential for our construction of constant degree dimension expanders (\autoref{sec:dim-exp_constr}).  We give in this section an alternate method for simulating large fields that preserves strong-ness but is slightly worse in other parameters than the method of Guruswami and Kopparty~\cite{GuruswamiKopparty13}.  Our method is based on the conversion of Reed-Solomon codes to BCH codes and can be seen as a basic form of ``code concatenation'' from coding theory (although more sophisticated versions of that idea do not seem to work in our setting, see \autoref{rmk:code-concate}).  As a consequence, we obtain over \emph{any} field constructions of strong lossless rank condensers with ``inverse logarithmic output rate'' and logarithmic-degree $(\Omega(\nicefrac{1}{\lg n}),1+\Theta(1))$-dimension expanders. 

We begin by showing how to transform a matrix over an extension field $\K$ of $\F$ to a matrix over $\F$, while preserving rank.  This operation will increase the rows of this matrix (which is undesirable, but tolerable) and is akin to converting Reed-Solomon codes to BCH codes.

\begin{construction}\label{const:BCH}
	Let $\F$ be a subfield of $\K$, where $\dim_\F\K=k$ so that $\varphi:\K\cong\F^k$ is an $\F$-vector space isomorphism.  Define $\varphi^n:\K^n\cong\F^{kn}$ by applying $\varphi$ coordinate-wise, so that $(\vaa_1,\ldots,\vaa_n)\mapsto (\varphi(\vaa_1),\ldots,\varphi(\vaa_n))$.  For a matrix $M\in\K^{n\times m}$, define $\varphi^n(M) \in \F^{kn\times m}$ as the result of applying $\varphi^n$ to each column of $M$.
\end{construction}

The key point about this map $\varphi^n$ is that it is $\F$-linear so that it composes nicely with matrix multiplication.

\begin{lemma}\label{res:BCH-compose}
	Assume the setup of \autoref{const:BCH}.  Let $E\in\K^{t\times n}$ and $M\in\F^{n\times r}$.  Then $\varphi^t(EM)=\varphi^t(E)M$.
\end{lemma}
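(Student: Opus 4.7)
The plan is to verify the identity column-by-column, exploiting the fact that $\varphi$ (and hence $\varphi^t$) is $\F$-linear by construction. For each $j \in [r]$, the $j$-th column of $EM$ is the sum $\sum_{i=1}^n M_{ij} E_{\bullet,i}$, where $E_{\bullet,i} \in \K^t$ and, crucially, $M_{ij} \in \F$. Thus this sum is an \emph{$\F$-linear} combination of vectors in $\K^t$, even though $E$ has entries in the extension field $\K$.

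The map $\varphi^t : \K^t \to \F^{kt}$ is obtained by applying the $\F$-vector space isomorphism $\varphi : \K \cong \F^k$ coordinate-wise, so it is itself an $\F$-vector space isomorphism. In particular it commutes with $\F$-linear combinations. Applying it to the $j$-th column of $EM$ yields
\[
    \varphi^t\!\left(\sum_{i=1}^n M_{ij} E_{\bullet,i}\right) \;=\; \sum_{i=1}^n M_{ij} \,\varphi^t(E_{\bullet,i}),
\]
which, by the definition of $\varphi^t(E)$ as the matrix whose $i$-th column is $\varphi^t(E_{\bullet,i})$, is exactly the $j$-th column of the ordinary matrix product $\varphi^t(E) \cdot M$. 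Comparing column-by-column gives the claim.

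There is no genuine obstacle: the whole content of the lemma is that ``apply an $\F$-linear map to each column'' commutes with right-multiplication by an $\F$-matrix. The only point to verify carefully is that $\varphi^t$ really is $\F$-linear (not merely a set-theoretic bijection), and that the scalars $M_{ij}$ really lie in $\F$ (so that $\F$-linearity, rather than the stronger $\K$-linearity which $\varphi^t$ does \emph{not} possess, is what we need). Both are immediate from the hypotheses $M \in \F^{n \times r}$ and $\dim_\F \K = k$.
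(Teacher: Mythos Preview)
Your proof is correct and essentially identical to the paper's own proof: both verify the identity column-by-column, writing the $j$-th column of $EM$ as an $\F$-linear combination of the columns of $E$ and then invoking the $\F$-linearity of $\varphi^t$. The only difference is cosmetic (choice of index names).
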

\begin{proof}
	Let $E$ have columns $\vv_1,\ldots,\vv_n\in\K^t$.  Then the $i$-th column is $(EM)_{\bullet,i}=\sum_{j=1}^n M_{j,i}\vv_j$.  Thus by $\F$-linearity of $\varphi^t$, $\varphi^t((EM)_{\bullet,i})=\sum_{j=1}^n M_{j,i}\varphi^t(\vv_j)=(\varphi^t(E) M)_{\bullet,i}$.  Thus, each column of $\varphi^t(EM)$ and $\varphi^t(E)M$ agree, so they are equal.
\end{proof}

We now show that this map $\varphi^n$ preserves rank of matrices as we switch fields.

\begin{lemma}\label{res:BCH-simulation}
	Assume the setup of \autoref{const:BCH}. Let $M\in\K^{n\times m}$ be a matrix.  Then $\varphi^n(M)$ has $\rank_\F \varphi^n(M)\ge \rank_\K M$.
\end{lemma}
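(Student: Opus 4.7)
The plan is to argue at the level of columns: pick a maximal $\K$-linearly-independent set of columns of $M$, push them through $\varphi^n$, and verify that the images remain linearly independent now over $\F$. The reason this should work is that $\varphi^n$ is $\F$-linear (even if it is not $\K$-linear) and is an $\F$-vector space isomorphism onto its image by construction, so it preserves $\F$-linear relations in both directions.

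Concretely, let $r = \rank_\K M$ and pick columns $\vv_1, \dotsc, \vv_r \in \K^n$ of $M$ that are $\K$-linearly independent. I would then show that $\varphi^n(\vv_1), \dotsc, \varphi^n(\vv_r) \in \F^{kn}$ are $\F$-linearly independent. Suppose $c_1, \dotsc, c_r \in \F$ satisfy
\[
\sum_{i=1}^r c_i \varphi^n(\vv_i) = 0.
\]
By $\F$-linearity of $\varphi^n$ (coordinate-wise application of the $\F$-linear map $\varphi$), this is equivalent to $\varphi^n\bigl(\sum_i c_i \vv_i\bigr) = 0$. Since $\varphi^n$ is an $\F$-vector space isomorphism (in particular, injective), we conclude $\sum_i c_i \vv_i = 0$ in $\K^n$. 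But the $c_i$ lie in $\F \subseteq \K$ and the $\vv_i$ are $\K$-linearly independent, so $c_1 = \dotsb = c_r = 0$.

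Thus the $r$ columns $\varphi^n(\vv_i)$ of $\varphi^n(M)$ are $\F$-linearly independent, which gives $\rank_\F \varphi^n(M) \ge r = \rank_\K M$. There is no real obstacle here; the only subtlety worth flagging in the write-up is that the scalars $c_i$ must be taken from $\F$ (not $\K$), which is exactly what $\F$-rank asks for and exactly the regime in which $\varphi^n$ is linear. Note that equality need not hold in general (for instance, a single nonzero scalar in $\K$ has $\K$-rank $1$ but may yield an $\F$-rank $1$ vector of length $k$), and indeed the statement only claims the inequality.
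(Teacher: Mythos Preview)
Your proof is correct and is essentially the same argument as the paper's. The paper phrases it via kernels---showing $\ker_\F \varphi^n(M)\subseteq\ker_\K M$ using the composition lemma $\varphi^n(M\vv)=\varphi^n(M)\vv$ for $\vv\in\F^m$---while you work directly with a maximal $\K$-independent set of columns; but both routes reduce to the same core observation that an $\F$-linear dependence among columns of $\varphi^n(M)$ pulls back (via $\F$-linearity and injectivity of $\varphi^n$) to a $\K$-linear dependence among the corresponding columns of $M$.
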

\begin{proof}
	Consider the kernel of the matrix $M$ over $\K$, $\ker_\K M=\{\vv\st \vv\in\K^m , M\vv=\vec{0}\}$, so that this represents the $\K$-dependencies among the columns of $M$.  Likewise, consider the  $\F$-dependencies of $\varphi^n(M)$, $\ker_\F \varphi^n(M)=\{\vv\st \vv\in\F^m , \varphi^n(M)\vv=\vec{0}\}$.  By \autoref{res:BCH-compose} we have  $\varphi^n(M\vv)=\varphi^n(M)\vv$ for $\vv\in\F^m$. As $\varphi^n$ is an isomorphism, we conclude that for $\vv\in\F^m$, $M\vv=\vec{0}$ iff $\varphi^n(M)\vv=\vec{0}$.  Thus, it follows that $\ker_\F \varphi^n(M)\subseteq \ker_\K M$.  In particular, we see that the $\F$-dependencies between the columns of $M$ are a subset of the $\K$-dependencies of the columns of $M$, from which the claim follows.
\end{proof}

Thus, we arrive at the following.

\begin{corollarywp}\label{res:BCH-compose-rank}
	Assume the setup of \autoref{const:BCH}.  Let $E\in\K^{t\times n}$ and $M\in\F^{n\times r}$.  Then $\rank_\F \varphi^t(E)M\ge\rank_\K EM$.
\end{corollarywp}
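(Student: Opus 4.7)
The plan is to simply chain together the two preceding lemmas. First I would invoke \autoref{res:BCH-compose} on the product $EM$: since $M$ has entries in $\F$, the $\F$-linearity of $\varphi^t$ gives the identity $\varphi^t(EM) = \varphi^t(E)M$ as matrices over $\F$. This rewrites the left-hand side of the inequality we want to prove in terms of the matrix $\varphi^t$ applied to the product $EM \in \K^{t\times r}$.

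Next I would apply \autoref{res:BCH-simulation} to the matrix $EM$, which lives in $\K^{t\times r}$. That lemma directly yields $\rank_\F \varphi^t(EM) \ge \rank_\K EM$. Combining these two observations gives
\[
    \rank_\F \varphi^t(E)M = \rank_\F \varphi^t(EM) \ge \rank_\K EM,
\]
which is precisely the stated inequality.

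There is no real obstacle; the only thing to be careful about is that \autoref{res:BCH-compose} requires $M$ to have entries in $\F$ (not merely in $\K$), which is exactly the hypothesis of the corollary, so the rewriting step is legitimate. The point of the corollary is essentially to package the ``commute $\varphi^t$ past $\F$-linear multiplication, then drop to $\F$-rank'' argument into a single clean statement that can be cited when transferring constructions (such as the folded Wronskian) from $\K$ down to its subfield $\F$.
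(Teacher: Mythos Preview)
Your proposal is correct and matches the paper's approach: the corollary is stated without proof in the paper precisely because it follows immediately by composing \autoref{res:BCH-compose} and \autoref{res:BCH-simulation}, exactly as you describe.
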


Note that in the above the results are in general neither tight nor improvable.  To see that the rank can increase, consider $\alpha_1,\ldots,\alpha_k\in\K$ which are an $\F$-basis for $\K$. Then the matrix $M=[\alpha_1,\ldots,\alpha_k]\in\K^{1\times k}$ has $\K$-rank equal to 1, while $\varphi^1(M)$ has $\F$-rank equal to $k$.  To see that the rank may not increase, consider the $n\times n$ identity matrix $\Id_n\in\K^n$. The resulting matrix $\varphi^n(\Id_n)\in\F^{kn\times n}$ still has rank $n$ as its rank cannot exceed the number of columns.

In general, one can ``get more'' $\F$-rank from a matrix over $\K$ by expanding the number of rows \emph{and columns} of this matrix.  That is, for a $\K$-vector space $V$ we use that $\dim_\F V=k\dim_\K V$ instead of just that $\dim_\F V\ge\dim_\K V$. Indeed, Guruswami-Kopparty~\cite{GuruswamiKopparty13} used this to simulate large fields in their construction of subspace designs.  The downside of this approach is that it does not act as the composition of linear maps (so we do not get \autoref{res:BCH-compose}), and thus does not preserve strong-ness of the design.  

We now apply the above results to convert strong lossless rank condensers over large fields to condensers over small fields.

\begin{proposition}\label{res:BCH-simulate-cond}
	Assume the setup of \autoref{const:BCH} with $n\ge r\ge 1$. Let $\cE\subseteq\K^{t\times n}$ be a weak/strong $(r,L)$-lossless rank condenser.  Then $\varphi^n(\cE)\eqdef\{\varphi^n(E)\st E\in\cE\}\subseteq\F^{kt\times n}$ is a weak/strong $(r,L)$-lossless rank condenser.
\end{proposition}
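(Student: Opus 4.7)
The plan is to reduce the claim directly to the rank-preservation inequality of \autoref{res:BCH-compose-rank}, which packages \autoref{res:BCH-compose} and \autoref{res:BCH-simulation}. Nothing new is really needed; the proof is just bookkeeping around the inequality $\rank_\F \varphi^t(E)M \ge \rank_\K EM$ for $E \in \K^{t\times n}$ and $M \in \F^{n\times r}$.

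First I would fix an arbitrary test matrix $M \in \F^{n\times r}$ with $\rank_\F M = r$, and note that the $\F$-linear independence of its columns forces $\rank_\K M = r$ as well (equivalently, a nonvanishing $r\times r$ minor over $\F$ remains nonvanishing over the extension $\K$). Hence $M$ is a legitimate input to the $\K$-condenser $\cE$, and we can simultaneously invoke the $\K$-condenser property of $\cE$ and test the candidate $\F$-condenser property of $\varphi^t(\cE)$ on the same $M$.

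Next, for each $E \in \cE$, \autoref{res:BCH-compose-rank} gives $\rank_\F \varphi^t(E) M \ge \rank_\K EM$, equivalently
\[
	r - \rank_\F \varphi^t(E) M \;\le\; r - \rank_\K EM.
\]
In the strong case, summing this pointwise inequality over $E \in \cE$ and applying the hypothesis that $\cE$ is a strong $(r,L)$-lossless rank condenser over $\K$ yields
\[
	\sum_{E\in\cE}\bigl(r - \rank_\F \varphi^t(E) M\bigr) \;\le\; \sum_{E\in\cE}\bigl(r - \rank_\K EM\bigr) \;\le\; L,
\]
which is exactly the strong $(r,L)$-condenser property for $\varphi^t(\cE)$. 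In the weak case, the same pointwise inequality gives the set inclusion $\{E \in \cE : \rank_\F \varphi^t(E) M < r\} \subseteq \{E \in \cE : \rank_\K EM < r\}$, and the latter has size at most $L$ by assumption, so $\varphi^t(\cE)$ inherits the weak bound.

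There is no real obstacle: the only mild care needed is to check that restricting the class of test matrices from $\K^{n\times r}$ to $\F^{n\times r}$ does not invalidate the $\K$-condenser hypothesis, which is handled by the rank-lift observation in the first step. Everything else is an immediate consequence of \autoref{res:BCH-compose-rank}, which is itself a one-line composition of \autoref{res:BCH-compose} with \autoref{res:BCH-simulation}.
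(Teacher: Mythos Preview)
Your proposal is correct and follows essentially the same approach as the paper: fix a rank-$r$ test matrix $M$ over $\F$, note that its $\K$-rank is still $r$, invoke \autoref{res:BCH-compose-rank} to get the pointwise inequality on rank deficiencies, and then sum (strong case) or take set inclusion (weak case). The only cosmetic difference is that the paper treats the weak case before the strong case, whereas you do the reverse.
(Incidentally, both you and the paper's proof correctly use $\varphi^t$ rather than the $\varphi^n$ appearing in the statement, which is a typo: the matrices $E$ have columns in $\K^t$, so it is $\varphi^t$ that lands in $\F^{kt\times n}$.)
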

\begin{proof}
	Consider a matrix $M\in\F^{n\times r}$ with $\rank_\F M=r$.  Then by \autoref{res:BCH-compose-rank}, we have that for any $E\in\cE$ that $\rank_\F\varphi^t(E)M\ge \rank_\K EM$.  Noting that $\rank_\F M=\rank_\K M$ as $M$ is an $\F$-matrix, we have the inclusion of sets $\{E\st \rank_\K EM<\rank_\K M\}\supseteq\{E\st \rank_\F \varphi^t(E)M<\rank_\F M\}$, so since the former set has size $\le L$ then so does the latter. As this holds for any $M$, it follows that $\varphi^t(\cE)$ has the desired weak condensing.  
	
	Similarly, we see that $\rank_\F M-\rank_\F\varphi^t(E) M \le \rank_\K M-\rank_\K EM$ and since summing over $E\in\cE$ of the right-hand side yields $\le L$ (assuming now $\cE$ is strong), then so does summing over the left-hand side.  Thus, $\varphi^t(\cE)$ has the desired strong condensing as well.
\end{proof}

We now apply the above simulation of a large field to translate \autoref{res:lossless-seeded_large-field} to small fields.

\begin{corollary}\label{res:lossless-seeded_small-field}
	Let $\F_q$ be a finite field and let $n,t\ge r\ge 1$ such that $q^k>n$. Given an explicit presentation of $\F_{q^k}\cong\F_q^k$ and a generator $\omega$ of $\F_{q^k}$, there is an explicit $\cE\subseteq\F_q^{kt\times n}$ with $|\cE|=\floor{\frac{q^k-1}{t}}$, such that for all $1\le r\le t$, $\cE$ is a strong $(r,\frac{r(n-r)}{t-r+1})$-lossless rank condenser (and thus strong $(r,\frac{r(n-r)}{t-r+1})$-subspace design).
\end{corollary}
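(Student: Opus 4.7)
The proof is essentially a two-step composition using results already established in this section. First I would invoke \autoref{res:lossless-seeded_large-field} with the field taken to be the extension $\F_{q^k}$. Since by hypothesis $q^k > n$, the hypothesis of that corollary is satisfied, and using the given generator $\omega$ of $\F_{q^k}$ produces an explicit collection $\cE' \subseteq \F_{q^k}^{t\times n}$ of size $\floor{\frac{q^k-1}{t}}$ which is a strong $(r, \frac{r(n-r)}{t-r+1})$-lossless rank condenser simultaneously for all $1 \le r \le t$.

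Next I would apply \autoref{res:BCH-simulate-cond} with the isomorphism $\varphi : \F_{q^k} \cong \F_q^k$ (whose explicit presentation is given in the hypothesis). This transports $\cE'$ to the collection $\varphi^t(\cE') \subseteq \F_q^{kt\times n}$. Because \autoref{res:BCH-simulate-cond} preserves \emph{strong} lossless condenser parameters exactly (with the same list bound $L$, and in particular the same bound for each value of $r$), the resulting family $\cE \eqdef \varphi^t(\cE')$ is a strong $(r, \frac{r(n-r)}{t-r+1})$-lossless rank condenser for every $1 \le r \le t$. The size is unaffected since $\varphi^t$ is applied pointwise to the collection, so $|\cE| = |\cE'| = \floor{\frac{q^k-1}{t}}$.

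For explicitness, given an index $i \in [\floor{(q^k-1)/t}]$ one first computes the $i$-th matrix $E_i \in \F_{q^k}^{t\times n}$ of $\cE'$ via the folded Wronskian (using repeated squaring of $\omega$ in $\F_{q^k}$, per \autoref{res:lossless-seeded_large-field}), then applies $\varphi$ coordinatewise to obtain $\varphi^t(E_i) \in \F_q^{kt\times n}$. Both steps run in $\poly(n,t,\log q,k)$ $\F_q$-operations using the explicit presentation of $\F_{q^k}$. The parenthetical assertion about subspace designs is immediate from \autoref{res:subspace-design-equal-lossless}. There is no substantive obstacle here; the only point requiring care is noticing that it is precisely the rank-preserving property \autoref{res:BCH-compose-rank} of the ``$\F$-linear'' simulation map $\varphi^t$ that allows the \emph{strong} bound to descend to the small field, which is why this simulation (rather than the Guruswami--Kopparty one) must be used.
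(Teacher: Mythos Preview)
Your proposal is correct and follows essentially the same approach as the paper's own proof: invoke \autoref{res:lossless-seeded_large-field} over the extension field $\F_{q^k}$, then push the resulting collection down to $\F_q$ via \autoref{res:BCH-simulate-cond}, noting that size, explicitness, and the strong condenser parameters are all preserved. Your write-up is in fact more detailed than the paper's, and your closing remark about why this particular simulation (rather than the Guruswami--Kopparty one) is needed to retain strong-ness is exactly the point the paper makes in the surrounding discussion.
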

\begin{proof}
	\autoref{res:lossless-seeded_large-field} implies that we have such a $\cE'\in\F_{q^k}^{t\times n}$ with $|\cE'|=\floor{\frac{q^k-1}{t}}$ that has the desired condensing properties (and thus design properties by our equivalence (\autoref{res:subspace-design-equal-lossless})).  We now apply our large field simulation (\autoref{res:BCH-simulate-cond}) to obtain $\cE$.  Clearly $\cE$ has the desired size and condensing properties.  That $\cE$ can be indexed in $\poly(n,t,k\log q)$ is also clear given the explicit presentation $\F_{q^k}\cong\F_q^k$.
\end{proof}

In comparison, Guruswami and Kopparty~\cite{GuruswamiKopparty13} obtain a similar construction of weak designs where the list bound has \emph{decreased} by a factor of $k$. However, for our applications the strong-ness of the above is more important.

Similarly, we can obtain explicit lossy rank condensers where the output size is logarithmically larger than it was previously.

\begin{corollary}\label{res:lossy-cond_small-field}
	Let $\F_q$ be a finite field.  Let $n,t\ge r\ge 1$ and $\eps>0$. Define $N\eqdef \ceilp{\frac{n}{\eps(t-r+1)}}$ and let $M\ge N$. Then there is an explicit $(\le r,\eps)$-lossy rank condenser $\cE'\subseteq\F^{kt\times n}$ of size $|\cE'|=M$, where $k\eqdef \ceil{\log_q(tn^2+1)}=\Theta(\log_q tn)$.
\end{corollary}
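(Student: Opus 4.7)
The plan is to reduce to the large-field construction of \autoref{res:gk13-lossy-condenser_explicit} via the BCH-style simulation developed earlier in this section. Set $\K \eqdef \F_{q^k}$ with $k \eqdef \ceil{\log_q(tn^2+1)}$, so that $|\K| = q^k \ge tn^2 + 1 > tn^2$, matching the field-size hypothesis of \autoref{res:gk13-lossy-condenser_explicit}. That corollary, applied over $\K$, supplies an explicit $(\le r, \eps)$-lossy rank condenser $\cE \subseteq \K^{t \times n}$ with $|\cE| = M$. I then define $\cE' \eqdef \{\varphi^t(E) \st E \in \cE\} \subseteq \F_q^{kt \times n}$, where $\varphi : \K \cong \F_q^k$ is the $\F_q$-linear isomorphism from \autoref{const:BCH}.

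To verify the lossy condensing over $\F_q$, fix any $1 \le s \le r$ and any $M \in \F_q^{n \times s}$ with $\rank_{\F_q} M = s$. Because $M$ has entries in the subfield $\F_q$, its $\F_q$-rank and its $\K$-rank coincide, so $\rank_\K M = s$ as well. By the $(\le r, \eps)$-lossy condensing of $\cE$ over $\K$, there exists $E \in \cE$ with $\rank_\K EM \ge (1-\eps)s$. Since $M$ is an $\F_q$-matrix we may invoke \autoref{res:BCH-compose-rank} to conclude
\[
	\rank_{\F_q} \varphi^t(E) M \ge \rank_\K EM \ge (1-\eps)s
	\;,
\]
which is precisely what is needed. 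The size bound $|\cE'| = M$ is immediate (pad with zero matrices if the map $\varphi^t$ ever identifies distinct elements of $\cE$, though in fact it is injective), and explicitness of $\cE'$ reduces to explicitness of $\cE$ together with an explicit presentation $\F_{q^k} \cong \F_q^k$; the latter (and a suitable generator of large multiplicative order inside it) can be found in $\poly(n,t,k,\log q)$ time by the same standard ingredients used to justify explicitness in \autoref{res:gk13-lossy-condenser_explicit}.

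The only subtle point --- and the only place the reduction could fail --- is that \autoref{res:BCH-compose-rank} gives $\rank_{\F_q} \varphi^t(E) M \ge \rank_\K EM$ only when $M$ has entries in the base field $\F_q$. This is harmless here because the definition of an $(\le r, \eps)$-lossy rank condenser over $\F_q$ quantifies precisely over matrices with entries in $\F_q$, so the hypothesis of \autoref{res:BCH-compose-rank} is exactly what the definition provides. This same feature is what the small-field simulation of Guruswami--Kopparty~\cite{GuruswamiKopparty13} loses when it expands both rows and columns; by expanding only rows I preserve composition (and hence strong-ness in the lossless case, and the averaging that underlies the lossy case) at the modest cost of the logarithmic row-blowup factor $k = \Theta(\log_q tn)$.
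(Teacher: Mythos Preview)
Your proof is correct and follows essentially the same approach as the paper: pass to the extension $\F_{q^k}$, invoke \autoref{res:gk13-lossy-condenser_explicit} there, and then descend via the BCH-style map $\varphi^t$. The paper's proof is terser, simply writing ``following the logic of \autoref{res:BCH-simulate-cond}''; you have spelled out that logic for the lossy case (since \autoref{res:BCH-simulate-cond} as stated covers only lossless condensers), which is a reasonable expansion rather than a different argument.
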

\begin{proof}
	First, observe that $tn^2+1\le q^k\le q\cdot (tn^2+1)\le (tn^2+1)^2$.  Now, recall that we can construct an explicit presentation of $\F_{q^k}$ extending $\F_q$ in time $\poly(q^k)\le\poly(n,t)$ (see for example Forbes~\cite[Lemma 3.2.5]{Forbes14}).  Now by \autoref{res:gk13-lossy-condenser_explicit} we have a $(\le r,\eps)$-lossy rank condenser $\cE\subseteq\F_{q^k}^{t\times n}$ of size $|\cE|=M$. We then convert this to be over $\F_q$ following the logic of \autoref{res:BCH-simulate-cond}.
\end{proof}

Likewise we obtain logarithmic-degree dimension expanders, but we can only expand rank of inverse logarithmic rate.

\begin{corollary}\label{res:tensor-then-condense_instantiate_small-field}
	Let $\F_q$ be a finite field. Let $n,d\ge 1$ and define $k\eqdef \ceil{\log_q(d^2n^3+1)}$. Let $\eps\in(0,1)$ and $\delta\in\ocint{0}{1}$, subject to $1-\eps dk<1$. Then there is an explicit $(\eps,(1-\delta)d)$-dimension expander of degree  $d\cdot \ceilp{\frac{dk}{\delta (1-\eps dk)}}$.
\end{corollary}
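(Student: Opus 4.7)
The plan is to follow the same recipe as in the large-field case (Corollary \ref{res:tensor-then-condense_instantiate_gamma0}): apply Proposition \ref{res:tensor-then-condense} with $\gamma=0$, using the small-field lossy rank condenser from Corollary \ref{res:lossy-cond_small-field} in place of the large-field condenser from Corollary \ref{res:gk13-lossy-condenser_explicit}. The only new wrinkle is that over $\F_q$ the condenser has output dimension $k' t'$ rather than $t'$, so we must budget this expansion against our target output $\F_q^n$, and the induced dependence between $k'$ and $t'$ forces us to commit to a global choice of $k$ upfront.

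First I would fix $k \eqdef \ceil{\log_q(d^2 n^3 + 1)}$ (as in the statement) and set the inner parameter $t \eqdef \floor{n/k}$. I then invoke Corollary \ref{res:lossy-cond_small-field} with ``$n$-of-corollary'' $= nd$, ``$t$-of-corollary'' $= t$, ``$r$-of-corollary'' $= \ceil{\eps n}\, d$, and ``$\eps$-of-corollary'' $= \delta$. The corollary's internal $k'\eqdef \ceil{\log_q(t(nd)^2+1)}$ satisfies $k' \le k$ because $t \le n$ implies $t(nd)^2+1 \le d^2 n^3 + 1$. Under the hypothesis $\eps d k < 1$ (which I believe is the intended reading of the statement's ``$1-\eps dk<1$''), we have $t = \floor{n/k} \ge \ceil{\eps n} d$, so the preconditions of the corollary are met. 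This yields an explicit $(\le \ceil{\eps n}\, d, \delta)$-lossy rank condenser $\cE \subseteq \F_q^{k't \times nd}$ of size at most $\ceilp{\frac{nd}{\delta(t - \ceil{\eps n}d + 1)}} \le \ceilp{\frac{dk}{\delta(1-\eps dk)}}$, where the second inequality follows by substituting $t \ge n/k - 1$ and rearranging.

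Next, since $k' t \le k t \le n$, I embed $\F_q^{k't}$ into $\F_q^n$ by padding with zeros; this produces an explicit family $\cE' \subseteq \F_q^{n \times nd}$ of the same size, which is still a $(\le \ceil{\eps n}\,d, \delta)$-lossy rank condenser because a linear injection preserves rank. Then I plug $\cE'$ into Proposition \ref{res:tensor-then-condense} with $r = \ceil{\eps n}$, $\gamma = 0$, and the given $d, \delta$. The proposition outputs an explicit $(\nicefrac{r}{n}, (1-\delta)d)$-dimension expander of degree $d \cdot |\cE'|$; since $r/n \ge \eps$, this is in particular an $(\eps, (1-\delta)d)$-dimension expander, and its degree is bounded by $d \cdot \ceilp{\frac{dk}{\delta(1-\eps dk)}}$, as required.

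The main (only) obstacle is the circular dependence between $k'$ and $t$ in the statement of Corollary \ref{res:lossy-cond_small-field}: a natural attempt to optimize $t$ after seeing $k'$ would be self-referential. The resolution is to pessimistically set $k \ge \log_q(d^2 n^3)$, which is a universal upper bound for $k'$ given any $t \le n$; the chosen $t = \floor{n/k}$ is then small enough to guarantee $k' \le k$ and hence $k' t \le n$, and large enough (under $\eps dk < 1$) to exceed the rank bound $\ceil{\eps n}\,d$, at which point the arithmetic for the size estimate is the same as in the large-field case with $d$ uniformly replaced by $dk$. Everything else is bookkeeping.
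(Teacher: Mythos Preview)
Your approach is essentially identical to the paper's: instantiate \autoref{res:tensor-then-condense} with $\gamma=0$, using the small-field condenser of \autoref{res:lossy-cond_small-field} with inner parameter $t=\floor{n/k}$, then pad the $k't\le kt\le n$ output rows up to $n$. Your explicit discussion of the ``circular'' $k'$-vs-$t$ dependence and the reason for committing to $k=\ceil{\log_q(d^2n^3+1)}$ upfront is a nice clarification the paper leaves implicit.

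There is one small but genuine slip: you take $r=\ceil{\eps n}$, whereas the paper (and the analogous large-field \autoref{res:tensor-then-condense_instantiate}) uses $r=\floor{\eps n}$. With your choice, neither of your two claimed inequalities goes through in general. First, $\eps dk<1$ gives $n/k>\eps nd$, but not $\floor{n/k}\ge\ceil{\eps n}d$ (you may be short by up to $d$), so the precondition $t\ge r d$ of the condenser is not verified. Second, your size estimate reduces to $\ceil{\eps n}\le \eps n$, which fails unless $\eps n\in\Z$. Both issues disappear with $r=\floor{\eps n}$: then $\floor{\eps n}d$ is an integer strictly below $n/k$, hence $\le\floor{n/k}$; and the final inequality becomes $\floor{\eps n}\le \eps n$, which is true. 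The justification that the resulting expander is still an $(\eps,(1-\delta)d)$-expander then comes from the fact that dimensions are integers, so $\dim V\le \eps n$ implies $\dim V\le\floor{\eps n}$, rather than from $r/n\ge\eps$.
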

\begin{proof}
	As in \autoref{res:tensor-then-condense_instantiate}, we will instantiate \autoref{res:tensor-then-condense} with an explicit lossy condenser.  However, here we take the parameter $\gamma$ of \autoref{res:tensor-then-condense_instantiate} to have $\gamma=0$, as $\gamma>0$ was only needed to expand in the high-rate regime (which this proof will not achieve).  

	Thus, for $r=\floor{\eps n}$ we use \autoref{res:lossy-cond_small-field} to obtain an explicit $(\le rd,\delta)$-condenser in $\F_q^{n\times nd}$ with resulting seed length of $\ceilp{\frac{dk}{\delta(1-\eps dk)}}$ as
	\begin{align*}
		\ceilp{\frac{nd}{\delta(\floor{\nicefrac{n}{k}}-rd+1)}}
		\le\ceilp{\frac{nd}{\delta(\nicefrac{n}{k}-rd)}}
		=\ceilp{\frac{ndk}{\delta(n-rdk)}}
		\le\ceilp{\frac{dk}{\delta(1-\eps dk)}}
		\;,
	\end{align*}
	noting that we are restricted in needing that $\floor{\nicefrac{n}{k}}\cdot k\le n$ (and we pad matrices from $k\floor{\nicefrac{n}{k}}$ rows to $n$ rows). Plugging this into \autoref{res:tensor-then-condense} thus yields the desired dimension expander.
\end{proof}

Put more informally, this yields for every integer $d$, a $\left(\Theta\left(\frac{1}{d\log_q dn}\right),\Theta(d)\right)$-dimension expanders in $\F_q^n$ of degree $\Theta(d^2\log_q dn)$.

We now briefly remark on the difficulty of using more sophisticated code concatenation ideas from coding theory to obtain better results.

\begin{remark}\label{rmk:code-concate}
	The above constructions mimic the conversion of Reed-Solomon codes of block-length $n$ (over the large alphabet of $\F_{q^k}$, with $k=\ceil{\log_q n}$) to BCH codes of block-length $n$ (over the small alphabet of $\F_q$).  This conversion preserves the distance of the code, but multiplies the number of parity checks of the code by factor of $\ceil{\log_q n}$ for codes of block-length $n$.  As such, the relative distance of the resulting BCH code cannot be better than $\nicefrac{1}{\ceil{\log_q n}}$ without the code becoming trivial.

	Coding theory has the method of \emph{code concatenation} for reducing the alphabet size without incurring the above logarithmic loss.  In particular, one can reduce Reed-Solomon codes to a code over $\F_2$ while only incurring a constant factor loss in the distance and rate.  Unfortunately, it is unclear how to implement code concatenation in the context of this paper.  
	
	Specifically, in code concatenation for the Hamming metric, one first uses the isomorphism that $\F_q^{nk}=\F_{q^k}^n$, one then uses the outer-code $E_1:\F_{q^k}^n\to \F_{q^k}^m$, to which one then applies an inner-code $E_2:\F_{q^k}=\F_q^{k}\to\F_q^{k'}$ component-wise. In the Hamming metric, the isomorphism in the first step of the composition will preserve (or increase) the relative Hamming weight, so that a vector $\vv\in\F_q^{nk}$ with weight $\frac{d}{nk}$ will have weight at least $\frac{\nicefrac{d}{k}}{n}=\frac{d}{nk}$ when considered as a vector in $\F_{q^k}^n$. Thus, we have not lost in the distance by appealing to this isomorphism, which intuitively follows from the fact that the Hamming weight of a vector is defined coordinate-wise.
	
	However, this isomorphism does not seem to play well with rank (as the rank of a matrix is not defined column-wise), as discussed in the comments after \autoref{res:BCH-compose-rank}. That is, the rank can drop by a factor of $k$ when treating a $\F_q$-vector space $V\subseteq \F_q^{nk}$ as a $\F_{q^k}$-vector space in $\F_{q^k}^n$.  One can recover this factor of $k$ as done in the large-field simulation of Guruswami-Kopparty~\cite{GuruswamiKopparty13} in the context of subspace designs but this construction seems to lose the strong-ness of the construction which is essential for our work.
\end{remark}

\section{Constructions of Two-Source Rank Condensers}\label{sec:two-src_constr}

In this section, we relate two-source rank condensers to the pseudorandom objects we have already discussed.  First, we show that two-source rank condensers with good parameters (even for the special case when one of the sources has full rank) yield constructions of dimension expanders with good parameters.  We then show that \emph{seeded} (single-source) rank condensers can be used to construct two-source rank condensers with good parameters.  In particular, for $(r,r,0)$-condensers we obtain an output length of $\Theta(nr)$, which is essentially optimal. However, we note that this is essentially the construction of Forbes and Shpilka~\cite{ForbesShpilka12} for constructing \emph{rank-metric codes}.  We show in \autoref{sec:rank-metric} that (bilinear) lossless two-source rank condensers are \emph{equivalent} to (linear) rank-metric codes, and as a consequence derive \emph{optimal} such condensers from rank-metric code constructions.  For $(r,r,1-(1-\eps)^3)$-lossy two-source condensers, we show how to obtain output size $\Theta(\nicefrac{n}{\eps^2 r})$ for \emph{constant} $r$ by using our ideas as an ``outer condenser'' and finding via brute-force an ``inner condenser''.

We begin by showing how two-source rank condensers imply dimension expanders.

\begin{proposition}\label{res:two-src_dim-exp}
	Let $\F_q$ be a finite field. Let $n\ge r\ge 1$, $m\ge 1$, and $\eps>0$. Let $E_1,\ldots,E_t\in\F^{n\times m}$ be such that $f:\F^n\times\F^m\to\F^t$ with $f(\vv,\vw)\eqdef (\vv^\T E_i\vw)_{i\in[t]}$ is a bilinear $(\le r,m,\eps)$-two-source rank condenser.

	Then, for $i\in[m]$, define $A_i\in\F^{t\times n}$ to the be matrix of the linear transformation $\vv\mapsto f(\vv,\ve_i)$, where $\ve_i\in\F^m$ is the $i$-th standard basis vector.  Then $\cA\eqdef \{A_i\}_{i=1}^m$ has the following property: for all $V\subseteq\F^n$ of dimension $\le r$,
	\[
		\dim\sum_i A_i(V)\ge (1-\eps)m \dim V
		\;.
	\]

	In particular, consider $\delta\eqdef\nicefrac{r}{n}$, $\alpha\eqdef (1-\eps)m$ with $\alpha\delta<1$, where $m,\delta,\alpha$ and $\eps$ are constant.  For $n\ge\Omega_{m,\alpha,\delta,\eps}(1)$ whenever
	\[
			t\le	\frac{n}{\eps m}+\frac{m}{\eps}+(1-\eps)rm+o_q(1)+\O(1)
			\;,
	\]
	we have that the collection $\cA$ is a degree-$m$ $(\delta,\alpha)$-dimension expander in $\F^n$ whenever
	\[
		m>\alpha+\frac{1}{1-\alpha\delta}
		\;.
	\]
\end{proposition}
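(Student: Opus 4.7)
The plan is to prove the stated expansion inequality by a single direct application of the bilinear two-source condenser property with the second source taken to be all of $\F^m$. Specifically, I would note that $\F^m$ is spanned by $\ve_1,\ldots,\ve_m$ and has rank $m$, so by bilinearity of $f$,
\[
  f(V \times \F^m) \;=\; \spn\{f(\vv,\vw) : \vv \in V,\ \vw \in \F^m\} \;=\; \spn\bigl\{ f(\vv,\ve_i) : \vv\in V,\ i \in [m]\bigr\} \;=\; \sum_{i=1}^m A_i(V).
\]
Since $\dim V \le r$ and $\dim \F^m = m$, the $(\le r, m, \eps)$-condenser guarantee yields $\dim \sum_i A_i(V) = \dim f(V\times\F^m) \ge (1-\eps)\, m\, \dim V$, which is the first claim.

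For the ``in particular'' conclusion, the $A_i$ as defined live in $\F^{t \times n}$, but a dimension expander needs matrices in $\F^{n\times n}$. My plan is to pad each $A_i$ with $n-t$ zero rows, which turns it into an $n\times n$ matrix and preserves $\dim \sum_i A_i(V)$, provided $t \le n$. Hence I would reduce the remaining task to showing that the hypothesized upper bound on $t$, together with the inequality $m > \alpha + 1/(1-\alpha\delta)$, forces $t \le n$ once $n$ is at least some constant depending on $m,\alpha,\delta,\eps$.

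The arithmetic for that reduction is the following. Substituting $\alpha = (1-\eps)m$ and $\delta = r/n$ into the assumed bound and dividing by $n$ gives
\[
  \frac{t}{n} \;\le\; \frac{1}{\eps m} \;+\; \alpha\delta \;+\; \frac{m}{\eps n} \;+\; \frac{o_q(1) + \O(1)}{n},
\]
where the last two terms are $o(1)$ for fixed parameters as $n\to\infty$. So $t \le n$ holds as soon as $\frac{1}{\eps m} + \alpha\delta < 1$, and that inequality rearranges to $\eps m (1-\alpha\delta) > 1$, i.e.\ $m > \alpha + \frac{1}{1-\alpha\delta}$, which is precisely the stated hypothesis. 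The strictness of this inequality is exactly what is needed to absorb the $o(1)$ slack for all $n$ larger than a suitable constant, after which the padded $\cA$ is a degree-$m$ $(\delta,\alpha)$-dimension expander in $\F^n$.

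The ``main obstacle'' is really a bookkeeping point rather than a technical one: the expansion inequality itself is immediate from bilinearity (choose $B=\F^m$), and the parameter translation reduces to the one line of algebra above. The only place requiring mild care is lining up the strict inequality $m > \alpha + 1/(1-\alpha\delta)$ with the $o_q(1) + \O(1)$ slack in the output-length bound, which is why the conclusion is stated for $n \ge \Omega_{m,\alpha,\delta,\eps}(1)$ rather than for all $n$.
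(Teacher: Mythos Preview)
Your proposal is correct and follows essentially the same approach as the paper's proof: apply the bilinear condenser with the second source equal to all of $\F^m$ (the paper phrases this via $\Id_m$) to get the expansion inequality, then pad the $A_i$ to $n\times n$ once $t\le n$, and reduce $t\le n$ to the strict inequality $m>\alpha+\frac{1}{1-\alpha\delta}$ via the same one-line algebra using $\eps m = m-\alpha$.
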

\begin{proof}
	First note that for each $i\in[m]$, the map $\vv\mapsto (\vv^\T E_j\ve_i)_{j\in[t]}$ is indeed a linear map in $\vv$ from $\F^n\to\F^t$, so that $A_i$ is well defined. Now consider a subspace $V\subseteq\F^n$ of dimension $\le r$.  By bilinearity of $f$, we have that $\dim f(V,\F^m)=\dim f(V,\Id_m)$, where $\Id_m$ is the $m\times m$ identity matrix so that $\cspn \Id_m=\F^m$. The rank condenser property guarantees that $\dim f(V,\Id_m)\ge (1-\eps)m\dim V$.  In particular, as $\{f(V,\ve_i)\}_{i\in[m]}$ spans $\sum_i A_i(V)$, it follows that $\dim\sum_i A_i(V)\ge (1-\eps)m \dim V$ as desired.

	To see the second part of the claim, note that for $\cA$ to be a dimension expander we only need $t\le n$ as we can then pad the matrices in $\cA$ to be $n\times n$ matrices. Thus it suffices that
	\[
			\frac{n}{\eps m}+\frac{m}{\eps}+(1-\eps)rm+o_q(1)+\O(1)\le n
			\;,
	\]
	for which it suffices that
	\[
			\frac{1}{\eps m}+o_n(1)+(1-\eps)\delta m+o_n(1)\le 1
			\;,
	\]
	which is equivalent to
	\[
		\frac{1}{\eps m}\le 1-\alpha\delta-o_n(1)
		\;.
	\]
	Using that $\eps m=m-\alpha$, we see this is equivalent to
	\[
		m\ge \alpha+\frac{1}{1-\alpha\delta-o_n(1)}
		\;,
	\]
	which will be satisfied for large $n$ as long as $m,\eps,\alpha,\delta$ are constants and $m>\alpha+\frac{1}{1-\alpha\delta}$.
\end{proof}

Thus, the above shows that obtaining two-source rank condensers meeting the probabilistic method bound of \autoref{res:two-src_prob-method<} will yield dimension expanders essentially meeting the probabilistic method bound for dimension expanders (\autoref{res:dim-exp_prob-method}).

In the definition of a two-source condenser, we say that it is bilinear if each coordinate of the output function is a bilinear form.  It will sometimes be more convenient to consider all of the coordinates together.  In this case, a bilinear condenser acts as a matrix $E$ times the tensor product $\vv\otimes\vw$ as we now show.

\begin{lemma}\label{res:bilinear-cond_alt}
	Let $\F$ be a field and let $f:\F^n\times\F^m\to\F^t$.  Let $E\in\F^{t\times nm}$ and for $i\in[t]$ define $E_i\eqdef E_{i,\bullet}\in\F^{n\times m}$, the $i$-th row of $E$ interpreted via the isomorphism $\F^{n\times m}=\F^{nm}$. Then $f$ has $f(\vv,\vw)=(\vv^\T E_i\vw)_{i=1}^t$ for $E_i\in\F^{n\times m}$ iff $f(\vv,\vw)=E\cdot (\vv\otimes\vw)$, where $\vv\otimes\vw\in\F^{nm}$ is the tensor product of $\vv$ and $\vw$.
\end{lemma}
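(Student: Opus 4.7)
The plan is to verify the claimed identity coordinate-by-coordinate, using the standard indexing of the tensor product. The statement is really a tautology once one unwinds the definitions, so I will emphasize the setup rather than grind through algebra.

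First I would fix notation for $\vv \otimes \vw \in \F^{nm}$. Under the isomorphism $\F^{nm} \cong \F^{n \times m}$ used to identify $E_{i,\bullet}$ with the matrix $E_i$, we index coordinates of $\F^{nm}$ by pairs $(j,k) \in [n] \times [m]$, and the tensor product satisfies $(\vv \otimes \vw)_{(j,k)} = v_j w_k$. Under this same isomorphism, the $(j,k)$ entry of the matrix $E_i$ is exactly the $(j,k)$ coordinate of the row vector $E_{i,\bullet}$.

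Next I would compute the bilinear form in coordinates:
\[
    \vv^\T E_i \vw = \sum_{j \in [n]} \sum_{k \in [m]} v_j (E_i)_{j,k} w_k = \sum_{(j,k)} (E_i)_{j,k} \cdot (\vv \otimes \vw)_{(j,k)} = E_{i,\bullet} \cdot (\vv \otimes \vw),
\]
where the final expression is the usual dot product of the $i$-th row of $E$ with the vector $\vv \otimes \vw \in \F^{nm}$. But this dot product is precisely the $i$-th coordinate of the matrix-vector product $E \cdot (\vv \otimes \vw)$. Thus for each $i \in [t]$, the $i$-th coordinate of $(\vv^\T E_i \vw)_{i=1}^t$ agrees with the $i$-th coordinate of $E \cdot (\vv \otimes \vw)$.

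Finally, both directions of the iff follow at once. Given $E$, defining $E_i$ as the $i$-th row reshaped gives $f(\vv,\vw) = (\vv^\T E_i \vw)_{i=1}^t$ by the computation above. Conversely, given matrices $E_i \in \F^{n \times m}$, flattening them to rows of a matrix $E \in \F^{t \times nm}$ yields $f(\vv,\vw) = E \cdot (\vv \otimes \vw)$ by the same calculation. There is no real obstacle here: the only thing to be careful about is fixing a consistent convention for the isomorphism $\F^{n \times m} \cong \F^{nm}$ and for the indexing of $\vv \otimes \vw$, so that ``$i$-th row reshaped'' and ``tensor product in coordinates'' are compatible.
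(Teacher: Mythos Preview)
Your proposal is correct and follows essentially the same approach as the paper: both expand the $i$-th coordinate in terms of the double sum $\sum_{j,k} (E_i)_{j,k}\, v_j w_k$ and identify it simultaneously with $\vv^\T E_i \vw$ and with $(E\cdot(\vv\otimes\vw))_i$. The only cosmetic difference is that the paper starts from $(E\cdot(\vv\otimes\vw))_i$ and unwinds to $\vv^\T E_i \vw$, while you go in the opposite direction.
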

\begin{proof}
	\begin{align*}
		(E\cdot (\vv\otimes\vw))_i
		&=\sum_{j\in[nm]} E_{i,j} \cdot(\vv\otimes\vw)_{j}
		\intertext{using the isomorphism between $\F^{nm}$ and $\F^{n\times m}$ so that we now index by $[n]\times[m]$,}
		&=\sum_{j\in[n],k\in[m]} (E_i)_{j,k} \cdot (\vv\otimes\vw)_{j,k}\\
		\intertext{factoring the tensor product,}
		&=\sum_{j\in[n],k\in[m]} (E_i)_{j,k} \cdot v_j\cdot w_k\\
		&=\sum_{j\in[n],k\in[m]} v_j (E_i)_{j,k} w_k\\
		&=\vv^\T E_i \vw
		\;.
		\qedhere
	\end{align*}
\end{proof}

As such, for bilinear rank condensers we can simply consider the matrix $E$ to be the condenser and not discuss the function $f$.

Similar to \autoref{res:subspace design_les-vs-eqs} we can obtain that lossless two-source condensers automatically work for \emph{all} smaller rank bounds. 

\begin{lemmawp}\label{res:two-src_lossless_le}
	Let $\F$ be a field and let $n\ge r\ge 1$ and $m\ge s\ge 1$. Let $f:\F^n\times\F^m\to\F^t$.  Then $f$ is a lossless $(r,s,0)$-two-source rank condenser iff $f$ is a lossless $(\le r,\le s,0)$-condenser.
\end{lemmawp}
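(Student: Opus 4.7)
The reverse direction is immediate from the definition of $(\le r,\le s,0)$-condenser by specializing to $r'=r$ and $s'=s$, so the substance lies in the forward direction. For the forward direction, assuming $f$ is a $(r,s,0)$-condenser, I want to show that for any $1\le r'\le r$, $1\le s'\le s$ and sets $A\subseteq\F^n$, $B\subseteq\F^m$ with $\rank A=r'$ and $\rank B=s'$, we have $\rank f(A\times B)\ge r's'$. The high-level plan is to embed $A\times B$ into a larger product of size exactly $rs$ on which the hypothesis forces the output rank to saturate, and then read off a linearly independent subcollection of size $r's'$ sitting inside $f(A\times B)$.

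Concretely, I would proceed in five short steps. First, extract $A_0\subseteq A$ with $|A_0|=r'$ that is a basis of $\spn A$ (possible since any spanning set contains a basis), and analogously $B_0\subseteq B$ with $|B_0|=s'$ a basis of $\spn B$. Second, extend these to linearly independent supersets $A_0\subseteq A_0'\subseteq\F^n$ with $|A_0'|=r$ and $B_0\subseteq B_0'\subseteq\F^m$ with $|B_0'|=s$, which is possible since $n\ge r$ and $m\ge s$. Third, apply the $(r,s,0)$-condenser hypothesis to $A_0', B_0'$ (which have rank $r$ and $s$) to conclude $\rank f(A_0'\times B_0')\ge rs$. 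Fourth, since $|A_0'\times B_0'|=rs$, the set $f(A_0'\times B_0')$ has at most $rs$ elements and hence rank at most $rs$; combined with the previous step, the rank equals exactly $rs$, which forces the $rs$ values $\{f(\vv,\vw):\vv\in A_0',\vw\in B_0'\}$ to be pairwise distinct and linearly independent. Fifth, the sub-collection indexed by $A_0\times B_0$ is a subset of this linearly independent family, hence has rank $r's'$; since $A_0\subseteq A$ and $B_0\subseteq B$, these vectors lie in $f(A\times B)$, yielding $\rank f(A\times B)\ge r's'$ as desired.

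The main point worth highlighting is the saturation observation in step four: the hypothesis $\rank\ge rs$ together with the trivial cardinality bound $|A_0'\times B_0'|=rs$ pins the rank to exactly $rs$ and forces genuine distinctness and linear independence of all $rs$ outputs, with no appeal to any structural property of $f$ such as bilinearity. This saturation is exactly the feature that breaks in the lossy regime, which is why an analogous clean implication is not expected for $\eps>0$; compare \autoref{res:dim-le-r_dim-eq-r}, which already exhibits a sharp contrast between condensing rank $r$ and rank $\le r$ in the single-source lossy setting.
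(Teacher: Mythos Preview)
The paper states this lemma without proof (it is declared as a \texttt{lemmawp}, the ``without proof'' style), treating it as a routine padding observation in the same spirit as \autoref{res:subspace design_les-vs-eqs} and \autoref{res:lossless_le-v-eq}. Your argument is correct and is exactly the expected one: extend the given sources to full rank $r$ and $s$, use the lossless hypothesis together with the cardinality bound $|A_0'\times B_0'|=rs$ to force all $rs$ outputs to be distinct and linearly independent, and then restrict to the $r's'$ outputs indexed by the original bases. Your remark that no bilinearity of $f$ is needed, and that this saturation step is precisely what fails for $\eps>0$, is also on point.
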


However, just as in \autoref{res:dim-le-r_dim-eq-r} this is provably false for lossy condensers.  However, while for single-source condensers we can expect to (and do) obtain ``rank $\le r$'' results essentially for free from ``natural'' constructions obtaining results for ``rank $=r$'', this is provably not the case for two-source condensers.  In particular, we now show that condensing all small enough sources can induce a linear output lower bound, showing that the linear dependence in the output given by \autoref{res:two-src_prob-method<} for $(\le r,s,\eps)$-condensers is needed (in contrast to $(\le r,s,\eps)$-condensers which can do better).

\begin{proposition}\label{res:two-source_output-lb}
	Let $\F$ be a field and let $n\ge 1$ and $m\ge s\ge 1$. Let $f:\F^n\times\F^m\to\F^t$ where $f$ is a bilinear $(1,s,\eps)$-two-source rank condenser for $\eps<1$.  Then $t\ge m-\eps s$.
\end{proposition}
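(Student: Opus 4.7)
The plan is to reduce the $(1,s,\eps)$-condenser property to a rank statement about a single linear map in $\vw$, and then bound the dimension of its kernel by exhibiting a worst-case $s$-dimensional subspace.

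First, I would fix any nonzero $\vv \in \F^n$ (which exists since $n \ge 1$) and consider the linear map $M_\vv : \F^m \to \F^t$ defined by $M_\vv \vw \eqdef f(\vv, \vw)$; this is well-defined because $f$ is bilinear (explicitly, the $i$-th coordinate is $\vv^\T E_i \vw$, which is linear in $\vw$). By bilinearity, for any subspace $W \subseteq \F^m$ the linear span of $f(\{\vv\} \times W)$ equals $M_\vv(W)$. Applying the $(1,s,\eps)$-condenser property to the rank-$1$ set $\{\vv\}$ and any $s$-dimensional subspace $W \subseteq \F^m$ then yields $\dim M_\vv(W) \ge (1-\eps)s$, which by rank--nullity is equivalent to
\[
  \dim(W \cap \ker M_\vv) \;\le\; \eps s.
\]

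Next I would let $k \eqdef \dim \ker M_\vv$ and argue $k \le \eps s$ by choosing $W$ adversarially. If $k \ge s$, one can pick any $s$-dimensional $W \subseteq \ker M_\vv$, giving $\dim(W \cap \ker M_\vv) = s$, which strictly exceeds $\eps s$ since $\eps < 1$, contradicting the displayed inequality. Hence $k < s$, and using $s \le m$ I would extend $\ker M_\vv$ to an $s$-dimensional subspace $W \subseteq \F^m$; this gives $\dim(W \cap \ker M_\vv) = k$, and so the displayed inequality forces $k \le \eps s$.

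Combining these, $\rank M_\vv = m - k \ge m - \eps s$. Since $M_\vv$ is a $t \times m$ matrix, its rank is at most $t$, so $t \ge m - \eps s$ as required. No step here is delicate; the main point to highlight is the two-case split on $k$ versus $s$, since the $k \ge s$ case is exactly where the hypothesis $\eps < 1$ enters the argument, showing that this hypothesis is essential to the conclusion.
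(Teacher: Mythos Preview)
Your proof is correct and follows essentially the same approach as the paper's: fix a nonzero vector in the first source, reduce to a single linear map $\F^m\to\F^t$, and bound its kernel dimension by choosing an adversarial $s$-dimensional subspace. The only cosmetic difference is that the paper picks $\vv=\ve_1$ and leaves the case split on $k$ versus $s$ implicit, whereas you make it explicit (and thereby highlight exactly where the hypothesis $\eps<1$ is used).
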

\begin{proof}
	The condenser is defined by a matrix $E\in\F^{t\times nm}$ such that for any $A\in\F^{n\times 1}$ of rank $1$ and $B\in\F^{m\times s}$ of rank $s$ we have that $\rank E(A\otimes B)\ge (1-\eps)1\cdot s$.  Thus, take $A=\{\ve_1\}$ where $\ve_1\in\F^n$ is the first standard basis vector.  Thus $E(A\otimes B)=E_1 B$, for some $E_1\in\F^{t\times m}$ as once we fix $\vv$, the map $\vw\mapsto E(\vv\otimes \vw)$ is linear. 
	
	Thus, we need that $E_1\in\F^{t\times m}$ has that for any $B\in\F^{m\times s}$ that $\rank E_1 B\ge (1-\eps)\rank B$.  If $\ker E_1$ has dimension $> \eps s$ then we can find a subspace $V$ with $\dim V=s$ and $\dim (V\cap \ker E_1)>\eps s$, so $\dim V-\dim E_1 V>\eps s$ and thus $E$ is not such a condenser.  Thus, it must be that $\dim \ker E_1\le \eps s$.  However, as $\dim \ker E_1\ge m-t$ it follows that $t\ge m-\eps s$.
\end{proof}

\subsection{Constructing Optimal Lossless Two-Source Condensers}

We now turn to constructing two-source rank condensers.  In this subsection, we consider the lossless case ($\eps=0$).  Thus, we are given $A\in\F^{n\times r}$ of rank $r$ and $B\in\F^{m\times s}$ and wish to obtain rank $rs$ from them.  One can trivially do this via the tensor product, so that $A\otimes B\in\F^{nm\times rs}$ has rank $rs$. However, we would like a condenser with a smaller output size.  To achieve this, we observe that we can apply seeded rank condensers $\cE:\F^n\to\F^r$ and $\cE':\F^m\to\F^s$ and after this dimension reduction apply the tensor product.  This will certainly yield rank $rs$ for \emph{some} $E\in\cE$ and $E'\in\cE'$, so we must enumerate over all such choices.  While this would naively yield an output size of $|\cE||\cE'|rs$, we can use that our lossless condensers (\autoref{res:lossless-extractor}) only have a finite number of bad values so that a union bound yields an output size of $(|\cE|+|\cE'|)rs$, as stated in the following proposition.

\begin{proposition}\label{res:two-src_condense-tensor}
	Let $\F$ be a field. Let $n\ge r\ge 1$ and $m\ge s\ge 1$. Let $\omega\in\F$ be an element of multiplicative order $\ge n,m$. Identify $\F^n$ and $\F^m$ with the spaces of low-degree univariate polynomials, so that $\F^n=\F[x]^{<n}$ and $\F^m=\F[y]^{<m}$ so that $\F^n\otimes\F^m=\F^{nm}=\F[x,y]^{<n,<m}$ is the space of bivariate functions with the respective individual degree bounds.  Let $S\subseteq\F\setminus\{0\}$ be a set where $|S|=r(n-r)+s(m-s)+1$.  Then
	$E:\F[x,y]^{<n,<m}\to\F^{rs\cdot|S|}$ defined by
	\[
		h(x,y)\mapsto ( h(\omega^i \alpha,\omega^j \alpha) )_{i\in\zr{r},j\in\zr{s},\alpha\in S}
	\]
	is a bilinear $(r,s,0)$-two-source rank condenser with output size $\le rs(rn+sm)$.
\end{proposition}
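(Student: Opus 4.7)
By \autoref{res:bilinear-cond_alt}, it suffices to view $E$ as a linear map on the tensor space $\F^n \otimes \F^m = \F[x,y]^{<n,<m}$; bilinearity of the underlying $f$ is automatic, since evaluation of $f(x)g(y)$ at a point is the product $f(\omega^i\alpha)g(\omega^j\alpha)$, a bilinear functional in $(f,g)$. The output size bound $rs\cdot|S| \le rs(rn+sm)$ is immediate from the definition of $S$, since $r(n-r)+s(m-s)+1 \le rn+sm$. So the entire content is showing that $E$ preserves rank on any $A\otimes B$ with $A\in\F^{n\times r}$ of rank $r$ and $B\in\F^{m\times s}$ of rank $s$.

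The first main step is to decompose the output of $E$ by the value of $\alpha\in S$. For each fixed $\alpha$, the block of $E$ indexed by $\{(i,j,\alpha) : i\in\zr{r},\, j\in\zr{s}\}$ sends a bivariate polynomial $h(x,y)$ to $(h(\omega^i\alpha,\omega^j\alpha))_{i,j}$. Restricting to a pure tensor $f(x)g(y)\in\cspn A\otimes\cspn B$, this block outputs
\[
\bigl(f(\omega^i\alpha)\,g(\omega^j\alpha)\bigr)_{i\in\zr{r},\,j\in\zr{s}}
=\bigl(\Wr_{r,\omega}(\alpha)\,f\bigr)\otimes\bigl(\Wr_{s,\omega}(\alpha)\,g\bigr),
\]
viewed as a vector in $\F^r\otimes\F^s=\F^{rs}$. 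Thus the $\alpha$-block, as a linear map from $\cspn A\otimes\cspn B$ to $\F^{rs}$, is exactly the tensor product of the two linear maps $\Wr_{r,\omega}(\alpha)\big|_{\cspn A}$ and $\Wr_{s,\omega}(\alpha)\big|_{\cspn B}$, whose rank is the product of the individual ranks.

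The second step is to exhibit a ``good'' $\alpha\in S$. By \autoref{res:folded-wronskian-nonzero} (equivalently \autoref{res:lossless-extractor}), $\det\bigl(\Wr_{r,\omega}(x)A\bigr)$ is a nonzero univariate polynomial which, after stripping powers of $x$, has degree at most $r(n-r)$; since $0\notin S$, the number of $\alpha\in\F\setminus\{0\}$ for which $\Wr_{r,\omega}(\alpha)A$ fails to have rank $r$ is at most $r(n-r)$. By the analogous statement for $B$, the number of $\alpha$ for which $\Wr_{s,\omega}(\alpha)B$ fails to have rank $s$ is at most $s(m-s)$. The hypothesis that $\omega$ has multiplicative order $\ge\max\{n,m\}$ ensures these bounds apply and that all elements of $S$ are distinct powers available for evaluation. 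Since $|S|=r(n-r)+s(m-s)+1$, by a union bound some $\alpha^\star\in S$ has both $\Wr_{r,\omega}(\alpha^\star)A$ of rank $r$ and $\Wr_{s,\omega}(\alpha^\star)B$ of rank $s$.

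For this $\alpha^\star$, the $\alpha^\star$-block alone is an isomorphism from the $rs$-dimensional space $\cspn A\otimes\cspn B$ onto $\F^{rs}$, so its restriction to this subspace has rank $rs$. Since the rank of $E$ restricted to $\cspn A\otimes\cspn B$ is at least the rank of any single block and at most the dimension $rs$ of the domain, equality $rs$ follows, completing the proof. The only non-routine ingredient is the multiplicative structure of the folded Wronskian in step two; everything else is bookkeeping about tensor products and a union bound, so I do not expect a serious obstacle.
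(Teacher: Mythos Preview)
Your proposal is correct and follows essentially the same approach as the paper: decompose $E$ into blocks indexed by $\alpha\in S$, identify each block with the tensor $\Wr_r(\alpha)\otimes\Wr_s(\alpha)$, and use \autoref{res:folded-wronskian-nonzero} plus a union bound over $S$ to locate an $\alpha^\star$ for which both Wronskian evaluations are full rank. One minor quibble: $S$ is an arbitrary subset of $\F\setminus\{0\}$ of the prescribed size, not necessarily powers of $\omega$, so the clause ``all elements of $S$ are distinct powers available for evaluation'' is extraneous---but it is harmless and does not affect the argument.
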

\begin{proof}
	By construction and \autoref{res:bilinear-cond_alt} we see that $E$ is bilinear and has the desired output size (as $r,s\ge 1$), so that it suffices to show the condensing property.

	Assume the setup of \autoref{constr:folded-wronskian}. In this construction, we see that $\Wr_r(\alpha)$ is a linear map $\Wr_r(\alpha):\F[x]^{<n}\to\F^r$ given by $f(\alpha)\mapsto (f(\alpha),f(\omega\alpha),\ldots,f(\omega^{r-1}\alpha))$, and $\Wr_s(\alpha):\F[y]^{<m}\to\F^s$ defined similarly (where we abuse the notation of \autoref{constr:folded-wronskian} so that we allow $\Wr_s(\alpha)$ to act on $\F[y]^{<m}$).  Thus, we see that $(\Wr_r(\alpha)\otimes \Wr_s(\alpha)):\F[x,y]^{<n,<m}\to\F^{rs}$ is a linear map sending $h(x,y)\mapsto ( h(\omega^i \alpha,\omega^j \alpha) )_{i\in\zr{r},j\in\zr{s}}$.

	Consider $A\in(\F[x]^{<n})^r$ of rank $r$ and $B\in(\F[y]^{<m})^s$ of rank $s$. It suffices to show that $\rank E(A\otimes B)\ge rs$ and as $E$ is simply the collection of the maps $(\Wr_r(\alpha)\otimes \Wr_s(\alpha))$ for $\alpha\in S$, it suffices to show that $\rank \bigl( (\Wr_r(\alpha)\otimes \Wr_s(\alpha)) \cdot (A\otimes B) \bigr) =\rank ((\Wr_r(\alpha)A)\otimes (\Wr_s(\alpha) B))$ has rank $\ge rs$ for some $\alpha\in S$.
	
	It follows from \autoref{res:folded-wronskian-nonzero} that there are at most $r(n-r)$ values of $\alpha$ where $\rank \Wr_r(\alpha)A<\rank A$.  Similarly, there are at most $s(m-s)$ values of $\alpha$ where $\rank \Wr_s(\alpha)B<\rank B$.  Thus, as $|S|> r(n-r)+s(m-s)$ it follows there is some $\alpha_0\in S$ where $\rank \Wr_r(\alpha_0)A=\rank A$ and $\rank \Wr_s(\alpha_0)B=\rank B$.  Thus, it follows that $\rank ((\Wr_r(\alpha_0)A)\otimes (\Wr_s(\alpha_0)B))=rs$ as desired.
\end{proof}

Note that in the balanced case of $n=m$ and $r=s$, this yields an output size of $\approx 2nr^3$ which while better than the trivial $n^2$ is still far from the probabilistic method bound of $2nr$ (\autoref{res:two-src_prob-method}).  However, we observe that the above map has a fair bit of \emph{redundancy} which we can prune, and we can prune this redundancy because of the following lemma.

\begin{lemma}\label{res:two-src_prune}
	Let $\F$ be a field. Let $n,m\ge 1$, $1\le r\le n$ and $1\le s\le m$.  Let $E\in\F^{t\times nm}$ be a bilinear $(r,s,\eps)$-two-source rank condenser and suppose that $E'\in\F^{t'\times nm}$ has $\rspn E\subseteq\rspn E'$.  Then $E'$ is a bilinear $(r,s,\eps)$-two-source rank condenser.
\end{lemma}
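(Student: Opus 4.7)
The plan is to use the row-span hypothesis to factor $E$ through $E'$ on the left, and then note that left-multiplication can only preserve or decrease rank. This turns the statement into a one-line rank comparison once we have the right formulation of the condensing property.

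First I would unpack what $E'$ being a bilinear condenser means in matrix form. By \autoref{res:bilinear-cond_alt}, any matrix $E' \in \F^{t'\times nm}$ already defines a bilinear map $\vv,\vw \mapsto E'(\vv\otimes\vw)$, so there is nothing to prove about bilinearity. For the condensing property, given $A \in \F^{n\times r}$ of rank $r$ and $B \in \F^{m\times s}$ of rank $s$, the quantity $\rank f(A\times B)$ (as in \autoref{defn:two-source}) equals $\rank E'(A\otimes B)$, where $A\otimes B \in \F^{nm\times rs}$ is the Kronecker product, since the bilinear outputs span $E'\cdot(\cspn A \otimes \cspn B) = E'\cdot\cspn(A\otimes B)$.

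Next I would use the hypothesis $\rspn E \subseteq \rspn E'$ to obtain a matrix $M \in \F^{t\times t'}$ with $E = M E'$ (every row of $E$ is an $\F$-linear combination of rows of $E'$). Then for any $A,B$ as above,
\[
    \rank E(A\otimes B) = \rank M E'(A\otimes B) \le \rank E'(A\otimes B),
\]
since left-multiplication by $M$ cannot increase rank. Applying the hypothesis that $E$ is a $(r,s,\eps)$-condenser, the left-hand side is at least $(1-\eps)rs$, hence so is the right-hand side, which is precisely the condensing property required of $E'$.

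There is essentially no obstacle: the whole argument reduces to the elementary fact that $\rspn E \subseteq \rspn E'$ is equivalent to the existence of a factorization $E = ME'$, combined with monotonicity of rank under left-multiplication. The only mild subtlety is translating between the ``set of bilinear evaluations'' definition in \autoref{defn:two-source} and the matrix-times-tensor formulation of \autoref{res:bilinear-cond_alt}, but that translation is already provided by that lemma.
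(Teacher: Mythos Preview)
Your proposal is correct and follows essentially the same approach as the paper: factor $E = ME'$ using the row-span containment, then use that left-multiplication cannot increase rank to transfer the condensing bound from $E$ to $E'$. The paper's proof is the same one-line rank comparison, just with the factoring matrix called $P$ instead of $M$.
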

\begin{proof}
	That $\rspn E\subseteq\rspn E'$ means that there is a $P\in\F^{t\times t'}$ so that $E=PE'$.  Now consider some $A\in\F^{n\times r}$ of rank $r$ and $B\in\F^{m\times s}$ of rank $s$. That $E$ is such a condenser means that $\rank E(A\otimes B)\ge (1-\eps)rs$.  But then $\rank E(A\otimes B)=\rank PE'(A\otimes B)\le \rank E'(A\otimes B)$.  Thus it follows that $\rank E'(A\otimes B)\ge (1-\eps)rs$, so that $E'$ is also such a condenser.
\end{proof}

We now use the above lemma to prune the redundancies of \autoref{res:two-src_condense-tensor} to obtain a near-optimal lossless two-source condenser.

\begin{corollary}\label{res:two-src_condense-tensor_pruned}
	Let $\F$ be a field. Let $n\ge r\ge 1$ and $m\ge s\ge 1$. Let $\omega\in\F$ be an element of multiplicative order $\ge n,m$. Identify $\F^n$ and $\F^m$ with the spaces of low-degree univariate polynomials, so that $\F^n=\F[x]^{<n}$ and $\F^m=\F[y]^{<m}$ so that $\F^n\otimes\F^m=\F^{nm}=\F[x,y]^{<n,<m}$ is the space of bivariate functions with the respective individual degree bounds.  Let $T\subseteq\F\setminus\{0\}$ be a set where $|T|=n+m-1$.  Then
	$E':\F[x,y]^{<n,<m}\to\F^{(r+s-1)\cdot|T|}$ defined by
	\[
		h'(x,y)\mapsto ( h'(\beta,\omega^k \beta) )_{-r<k<s,\beta\in T}
	\]
	is a bilinear $(r,s,0)$-two-source rank condenser with output size $(r+s-1)(n+m-1)$.
\end{corollary}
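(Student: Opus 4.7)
The plan is to apply \autoref{res:two-src_prune} with the condenser $E$ from \autoref{res:two-src_condense-tensor}, by showing the row-span inclusion $\rspn E\subseteq\rspn E'$.  The geometric observation driving this is that both $E$ and $E'$ evaluate polynomials only along the $r+s-1$ ``lines'' through the origin of slopes $\omega^k$ for $k\in\{-(r-1),\ldots,s-1\}$: the point $(\omega^i\alpha,\omega^j\alpha)$ used by $E$ lies on the line of slope $\omega^{j-i}$, while each point $(\beta,\omega^k\beta)$ used by $E'$ lies on the line of slope $\omega^k$.  Since $i$ ranges over $\zr{r}$ and $j$ over $\zr{s}$, the slope index $k=j-i$ arising from $E$'s points ranges exactly over $\{-(r-1),\ldots,s-1\}$, matching the slopes used by $E'$.

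The main step will be a straightforward Lagrange interpolation on each line.  For any $h\in\F[x,y]^{<n,<m}$ and any integer $k$, the restriction $h_k(x)\eqdef h(x,\omega^k x)$ has degree at most $(n-1)+(m-1)=n+m-2$, so $h_k\in\F[x]^{<n+m-1}$.  Since $|T|=n+m-1$ and the elements of $T$ are distinct, $h_k$ is determined by its values on $T$; in particular, for any $a\in\F$ the value $h_k(a)$ is a fixed $\F$-linear combination of $\{h_k(\beta):\beta\in T\}$ with coefficients depending only on $a$, $T$, and $k$.  Specializing to $a=\omega^i\alpha$ and $k=j-i$, this says that the evaluation of $h$ at $(\omega^i\alpha,\omega^j\alpha)$ is an $\F$-linear combination of its evaluations at $\{(\beta,\omega^k\beta):\beta\in T\}$; i.e., each row of $E$ lies in $\rspn E'$.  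Combining $\rspn E\subseteq\rspn E'$ with \autoref{res:two-src_condense-tensor} (giving that $E$ is a bilinear $(r,s,0)$-two-source rank condenser) and \autoref{res:two-src_prune}, the same condensing property transfers to $E'$.  Bilinearity of $E'$ and its output size $(r+s-1)(n+m-1)$ are then immediate from the construction.

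The main (mild) obstacle to address is that \autoref{res:two-src_condense-tensor} requires $\F$ to contain a set $S\subseteq\F\setminus\{0\}$ of size $r(n-r)+s(m-s)+1$, which can exceed the $n+m-1$ nonzero elements guaranteed here via $T$.  If $|\F|$ is too small, I would pass to a sufficiently large extension $\K\supseteq\F$ (containing such an $S$) and run the above argument over $\K$.  Since $E'$ has $\F$-entries, for any $A,B$ with entries in $\F$ the quantity $\rank E'(A\otimes B)$ is the same when computed over $\F$ or over $\K$, so the $(r,s,0)$-condensing property descends from $\K$ back to $\F$.
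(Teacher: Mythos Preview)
Your proof is correct and follows essentially the same route as the paper: observe that $(\omega^i\alpha,\omega^j\alpha)$ lies on the line $y=\omega^{j-i}x$, restrict $h$ to each of the $r+s-1$ lines to get a univariate polynomial of degree $\le n+m-2$, use Lagrange interpolation over $T$ to express each row of $E$ as a combination of rows of $E'$, and invoke \autoref{res:two-src_prune}. Your additional paragraph handling the case $|\F\setminus\{0\}| < r(n-r)+s(m-s)+1$ by passing to an extension $\K\supseteq\F$ is a point the paper glosses over; your fix is correct since $E'$ has entries in $\F$ and rank is preserved under field extension.
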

\begin{proof}
	Note that the map of \autoref{res:two-src_condense-tensor} consists of evaluating $h(x,y)\in \F[x,y]^{<n,<m}$ at the points  $h(\omega^i \alpha,\omega^j \alpha)$ for $i\in\zr{r}$, $j\in\zr{s}$, and $\alpha$ in some set.  Note that each such evaluation is the evaluation of the corresponding \emph{univariate} polynomial $h(x,\omega^{j-i}x)$ at the point $\beta=\omega^i\alpha$.  As the polynomial $h(x,\omega^{j-i}x)$ is of degree $\le (n-1)+(m-1)$ and there are $(r-1)+(s-1)+1$ values of $k=i-j$, we see that by polynomial interpolation that the evaluations of the map \autoref{res:two-src_condense-tensor} are contained in the linear combinations of the evaluations
	\[
		h'(x,y)\mapsto (h'(\beta,\omega^k\beta))_{-r<k<s,\beta\in T}
		\;,
	\]
	since $|T|\ge(n-1)+(m-1)+1=n+m-1$.  Appealing to \autoref{res:two-src_prune} then yields the claim.
\end{proof}

Thus, in the balanced case of $n=m$ and $r=s$ we obtain an output of size $\le 4nr$, which matches the probabilistic method up to a factor of 2.  While this seems to close the story on lossless two-source rank condensers, we note here that this construction is \emph{exactly} the construction of Forbes and Shpilka~\cite{ForbesShpilka12} for an object called a \emph{rank metric code}.  While a priori it might seem that rank-metric codes are weaker objects (so that the above analysis would add something new), we show in \autoref{sec:rank-metric} the rank metric codes are \emph{equivalent} to lossless two-source rank condensers.  

\begin{propositionwp*}[\autoref{res:lossless-two-src_to_rank-metric}, \autoref{res:rank-metric_to_lossless-two-src}]
	Let $\F$ be a field. Let $n,m\ge 1$ and $n,m\ge r\ge0$. A matrix $E\in\F^{t\times nm}$ is a bilinear $(r,r,0)$-two-source rank condenser iff $\cC\eqdef \ker E\subseteq\F^{n\times m}$ is a distance $\ge r+1$ rank-metric code.
\end{propositionwp*}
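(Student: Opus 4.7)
The plan is to work through the natural isomorphism $\F^{nm} \cong \F^{n\times m}$ that sends a pure tensor $\vv \otimes \vw$ to the rank-one outer product $\vv \vw^\T$, and then translate both the condenser condition and the code-distance condition into statements about the kernel $\cC = \ker E$ viewed as a subspace of $n \times m$ matrices.

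First I would recall/establish the key identification: under the isomorphism $\F^n \otimes \F^m \cong \F^{n \times m}$, a tensor product of subspaces $U \otimes V$ (with $U \subseteq \F^n$, $V \subseteq \F^m$ of dimensions $r$) corresponds precisely to the space of matrices $M \in \F^{n \times m}$ with $\cspn M \subseteq U$ and $\rspn M \subseteq V$. (Both sides have dimension $r^2$, and any $\vv \vw^\T$ with $\vv \in U, \vw \in V$ lies in both, so they agree.) In particular, a nonzero matrix $M \in \F^{n \times m}$ satisfies $\rank M \le r$ if and only if $M \in U \otimes V$ for some $r$-dimensional $U \subseteq \F^n$ and $V \subseteq \F^m$ (take $U = \cspn M$ and $V = \rspn M$, padded if necessary to reach dimension $r$).

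Next I would unpack both sides of the claimed equivalence. For the condenser side: given $A \in \F^{n \times r}$ of rank $r$ and $B \in \F^{m \times r}$ of rank $r$, the matrix $A \otimes B \in \F^{nm \times r^2}$ has rank exactly $r^2$ and its column span is $\cspn(A) \otimes \cspn(B)$. Hence $\rank E(A \otimes B) \ge (1-0)\cdot r \cdot r = r^2$ forces equality, which is equivalent to $E$ being injective on $U \otimes V$ for $U = \cspn A$, $V = \cspn B$, i.e.\ $\ker E \cap (U \otimes V) = 0$. Quantifying over all $A, B$ of rank $r$, the $(r,r,0)$-condenser property is exactly: for every pair of $r$-dimensional subspaces $U \subseteq \F^n$, $V \subseteq \F^m$, $\cC \cap (U \otimes V) = 0$. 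For the rank-metric side: $\cC$ being a linear code of distance $\ge r+1$ means every nonzero element of $\cC$ has rank $\ge r+1$, i.e.\ $\cC$ contains no nonzero matrix of rank $\le r$.

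Finally, combining the correspondence from the first step with the two reformulations, the two conditions are literally the same statement: ``$\cC$ contains no nonzero matrix of rank $\le r$'' is identical, via the matrix-versus-subspace characterization of low rank, to ``$\cC$ meets every $U \otimes V$ (with $\dim U = \dim V = r$) only at zero.'' Both directions of the iff follow immediately. I do not anticipate a genuinely hard step; the only subtlety is getting the correspondence $U \otimes V \leftrightarrow \{M : \cspn M \subseteq U, \rspn M \subseteq V\}$ correct and noting that a nonzero $M$ of rank $\le r$ can always be enclosed in such a product (extending $\cspn M$ and $\rspn M$ to dimension exactly $r$ using that $n, m \ge r$).
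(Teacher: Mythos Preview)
Your proof is correct. The paper's argument reaches the same conclusion but through a more computational route: rather than identifying $U\otimes V$ with $\{M:\cspn M\subseteq U,\ \rspn M\subseteq V\}$ up front, the paper works coordinate-wise with the bilinear forms $E_k$, uses the identity $\la M,N\ra=\tr(M^\T N)$ (\autoref{res:matrix-inner-product}), and pushes traces around cyclically. For the ``condenser $\Rightarrow$ code'' direction it factors a low-rank $M=AB^\T$ in the kernel and shows that $\Id_s$ lies in the nullspace of the output $f(A,B)$; for the converse it takes a dependency $C$ among the outputs and exhibits $ACB^\T$ as a nonzero low-rank matrix annihilated by every $E_k$. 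Your subspace identification collapses both of these trace manipulations into the single observation $\ker E\cap(U\otimes V)=0$, which is cleaner. On the other hand, the paper's explicit version makes visible a slight strengthening you do not state: a distance $r+1$ code actually yields a $(\le r,\le m,0)$- and $(\le n,\le r,0)$-condenser (\autoref{res:rank-metric_to_lossless-two-src}), since the witness $ACB^\T$ has rank at most $\min\{r,s\}$. Your argument recovers this too once you note that any $U\otimes V$ with $\dim U\le r$ (or $\dim V\le r$) already consists of matrices of rank $\le r$.
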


Thus, we can then leverage constructions of rank-metric codes to slightly improve upon the above condenser.  In particular, Gabidulin~\cite{Gabidulin85a} rank-metric codes work over any fixed finite field, and the codes of Roth~\cite{Roth91} and Forbes-Shpilka~\cite{ForbesShpilka12} improve upon the above condensers by removing further redundancy.  Perhaps more importantly, these codes are \emph{optimal} for their respective regimes. In particular, we can translate the known bounds for rank-metric codes into the language of condensers.

\begin{propositionwp*}[\autoref{res:rank-metric_singleton}, \autoref{res:rank-metric_limit_roth}]
	Let $\F$ be a field and $m\ge n\ge r\ge 1$.  Let $f:\F^n\times\F^m\to\F^t$ be a bilinear $(r,r,0)$-two source rank condenser.  Then $t\ge rm$.  Further, if $\F$ is algebraically closed and $n=m$ then $t\ge r(2n-r)$.
\end{propositionwp*}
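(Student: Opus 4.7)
The plan is to reduce both bounds to upper bounds on the dimension of a linear rank-metric code, exploiting the equivalence stated in \autoref{res:lossless-two-src_to_rank-metric} and \autoref{res:rank-metric_to_lossless-two-src}. First, using \autoref{res:bilinear-cond_alt} I would write $f(\vv, \vw) = E(\vv \otimes \vw)$ for some $E \in \F^{t \times nm}$, and identify $\cC \eqdef \ker E \subseteq \F^{n \times m}$ as a rank-metric code of minimum distance at least $r+1$. Since $t \ge \rank E = nm - \dim \cC$, both claims follow from the corresponding upper bounds on $\dim \cC$: the Singleton-type bound $\dim \cC \le m(n-r)$ in general, and the stronger bound $\dim \cC \le (n-r)^2$ in the algebraically closed case with $n=m$.

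For the first (Singleton-type) bound I would use a simple projection argument. Let $\pi \colon \F^{n \times m} \to \F^{(n-r) \times m}$ be the coordinate projection that retains any fixed set of $n-r$ of the $n$ rows. For any $M \in \ker \pi$, the matrix $M$ is supported on the complementary $r$ rows, so $\rank M \le r$; the minimum distance hypothesis then forces $M = 0$. Hence $\pi|_{\cC}$ is injective, so $\dim \cC \le (n-r) m$, which gives $t \ge rm$.

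For the second bound, I would invoke the classical determinantal-variety dimension: $R_{\le r} \eqdef \{M \in \F^{n \times n} : \rank M \le r\}$ is an irreducible affine variety of dimension $r(2n-r)$, hence of codimension $(n-r)^2$ in $\F^{n \times n}$. Since every non-zero element of $\cC$ has rank at least $r+1$, we have $\cC \cap R_{\le r} = \{0\}$. Viewing $\cC$ as a linear subvariety of $\F^{n \times n} \cong \F^{n^2}$ and applying the affine intersection dimension theorem over the algebraically closed field $\F$, every non-empty irreducible component of $\cC \cap R_{\le r}$ has dimension at least $\dim \cC + r(2n-r) - n^2$. Since the only such component is the origin, this forces $\dim \cC \le n^2 - r(2n-r) = (n-r)^2$, and so $t \ge r(2n-r)$.

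The main subtlety is that the second part absolutely requires algebraic closure: over a finite field, the Gabidulin codes achieve $\dim \cC = m(n-r)$, which for $n=m$ is $n(n-r)$, much larger than $(n-r)^2$, so the affine dimension theorem cannot hold here without the closure hypothesis. Both steps are otherwise standard and the key work is essentially in setting up the equivalence with rank-metric codes and citing the dimension of the determinantal variety correctly.
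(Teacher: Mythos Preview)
Your proposal is correct and follows exactly the route the paper takes: pass through the equivalence with rank-metric codes (\autoref{res:lossless-two-src_to_rank-metric}) to reduce both bounds to the inequalities $\dim\cC\le m(n-r)$ and $\dim\cC\le(n-r)^2$. The paper simply \emph{cites} these as \autoref{res:rank-metric_singleton} (Gabidulin, Delsarte, Roth) and \autoref{res:rank-metric_limit_roth} (Roth) without proof, whereas you supply the standard projection argument for the Singleton bound and the determinantal-variety intersection argument for the algebraically closed case; both of your proofs are valid and match how these results are typically established in the literature.
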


We can then match these bounds with constructions of rank-metric codes, appealing to the above equivalence with condensers.

\begin{propositionwp}[\autoref{res:rank-metric_code_gabidulin}, \autoref{res:rank-metric_code_roth}]\label{res:two-src_constr}
	Let $\F$ be a field and $m\ge n\ge r\ge 1$. 
	
	If $\F=\F_q$ is a finite field, and $\F_{q^m}$ is given explicitly as an extension of $\F_q$ then there is an explicit $f:\F^n\times\F^m\to\F^t$ which is a bilinear $(r,r,0)$-two source rank condenser with $t=rm$.

	If $|\F|\ge n$ then there is another such explicit $f$ with $t=r(n+m-r)$.
\end{propositionwp}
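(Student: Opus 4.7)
The plan is to translate the claim into a statement about rank-metric codes via the equivalence \autoref{res:rank-metric_to_lossless-two-src}: a matrix $E\in\F^{t\times nm}$ is a bilinear $(r,r,0)$-two-source rank condenser if and only if $\cC\eqdef\ker E\subseteq\F^{n\times m}$ is a rank-metric code of distance $\ge r+1$. By \autoref{res:two-src_prune} we may assume $E$ has full row rank (dropping redundant rows preserves $\cC$ and the condensing property), so $t=nm-\dim_\F\cC$. Thus minimizing the output length $t$ is equivalent to maximizing the $\F$-dimension of a rank-metric code $\cC\subseteq\F^{n\times m}$ of distance $\ge r+1$. To construct an explicit condenser $f$ of output length $t$, it therefore suffices to exhibit an explicit such code and let the rows of $E$ be a basis of the orthogonal complement of $\cC$ (equivalently, an explicit set of parity checks defining $\cC$); the corresponding bilinear form is recovered from $E$ via \autoref{res:bilinear-cond_alt}.

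For the first bullet, I would invoke the classical Gabidulin construction: pick $n$ elements $\alpha_1,\ldots,\alpha_n\in\F_{q^m}$ that are $\F_q$-linearly independent (possible since $n\le m$, and explicit given the explicit presentation of $\F_{q^m}/\F_q$), and take $\cC\subseteq\F_{q^m}^n$ to be the evaluation code of linearized polynomials of $q$-degree less than $n-r$. This code is $\F_{q^m}$-linear of $\F_{q^m}$-dimension $n-r$, and by the MRD property its minimum rank distance, measured after identifying $\F_{q^m}^n\cong\F_q^{n\times m}$ through an $\F_q$-basis of $\F_{q^m}$, equals $r+1$. Its $\F_q$-dimension is $m(n-r)$, so the equivalence gives $t=nm-m(n-r)=rm$.

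For the second bullet, I would invoke the Roth rank-metric code construction (and its refinement by Forbes--Shpilka) that works directly over any field $\F$ with $|\F|\ge n$: choosing $n$ distinct elements of $\F$ as evaluation points yields an explicit $\F$-linear code $\cC\subseteq\F^{n\times m}$ of rank distance $\ge r+1$ and $\F$-dimension $nm-r(n+m-r)$. Plugging this into the equivalence gives an explicit condenser with $t=r(n+m-r)$.

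The only real work beyond citing these constructions is bookkeeping: checking that the rank-metric codes are explicit in the sense we need (a basis, or equivalently a parity-check matrix, computable in $\poly(n,m,\log|\F|)$ or $\poly(n,m,\log q^m)$) and that forming $E$ from a parity-check basis preserves explicitness. Both are immediate from the well-known parity-check descriptions of Gabidulin and Roth codes. There is no substantive obstacle; the content of the proposition is essentially the combination of the equivalence already established with the known optimal rank-metric code constructions.
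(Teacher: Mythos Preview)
Your proposal is correct and is exactly the argument the paper intends: the proposition is stated without proof, citing \autoref{res:rank-metric_code_gabidulin} and \autoref{res:rank-metric_code_roth}, and the implicit proof is precisely to take the parity checks of those codes and invoke the equivalence \autoref{res:rank-metric_to_lossless-two-src} (with \autoref{res:lossless-two-src_to_rank-metric}), yielding $t=nm-\dim_\F\cC$ equal to $rm$ and $r(n+m-r)$ respectively.
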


We note that the last part of the above result can be proven by a tighter analysis of the construction used in \autoref{res:two-src_condense-tensor_pruned}, but the cleanest exposition goes through rank-metric codes.

\subsection{Constructing Lossy Condensers}\label{sec:two-src_lossy}

We now turn to constructions of lossy two-source condensers, where our results are still far from optimal. Because of this distance to optimality, we restrict our attention to balanced sources, so that we seek $(r,r,\eps)$-condensers $f:\F^n\times\F^n\to\F^t$. We begin by discussing how the strategy in the lossless case (condense each source with a seeded condenser, tensor the results, then finally prune the output) does not seem to give any better results.  Despite this, we show that this idea can serve well as a ``outer condenser'', so that if we had a good ``inner condenser'' for rank $r$ subspaces of an $r^3$-dimensional space we would obtain near-optimal results.

Recall the strategy of the previous section.  That is, we obtained lossless two-source condensers by using a seeded condenser based on the folded Wronskian (\autoref{constr:folded-wronskian}), where we map $f(x)\mapsto (f(\alpha),f(\omega\alpha),\ldots,f(\omega^{t-1}\alpha))$ for various $\alpha$.  While using this approach yielded suboptimal results when applied naively (\autoref{res:two-src_condense-tensor}) we saw how to prune the output to yield an essentially optimal result (\autoref{res:two-src_condense-tensor_pruned}).  The key idea to this pruning was that the induced map 
\[
	f(x,y)\mapsto (f(\omega^i\alpha,\omega^j\alpha))_{i,j\in\zr{t},\alpha\in S}
\]
evaluates $f$ through evaluations of \emph{univariate} polynomials $f(x,\omega^k x)$, and we can then restrict the number of evaluations to each univariate polynomial to be at most the degree bound of that polynomial.  This helps as $k\cdot |S|$ was much larger than the degree of $f(x,\omega^k x)$.  

One can thus implement the above strategy in the lossy case, where we now use the analysis of the folded Wronskian as a lossy rank condenser (\autoref{res:gk13-lossy-condenser}).  Thus, before any attempt at pruning we see that the output size is $t^2\cdot \frac{n}{\eps(t-r)}$, which is minimized at $\approx nr$ for $t=2r$. While this improves upon the non-pruned lossless condenser of \autoref{res:two-src_condense-tensor} it does not even yield a smaller output size than the \emph{lossless} constructions we give above.

Thus, to obtain better results we might try to prune this construction.  However, we see that each univariate $f(x,\omega^k x)$ is evaluated only at $\frac{n}{\eps(t-r)}$ points, which is \emph{below} the degree bound of $\deg f(x,\omega^k x)=2n$.  Thus, there seems to be no pruning available to improve the above strategy.

Despite this, we observe that the above strategy has now reduced the problem to a smaller dimensional space, for which one could apply different methods.  We now show that taking $t=r^3$ in this reduction can lead to near-optimal results.

\begin{proposition}\label{res:two-src_lossy-outer}
	Let $\F$ be a field and $n\ge r\ge 1$. Assume the setup of \autoref{constr:folded-wronskian}.  Suppose that $E\in\F^{t'\times t^2}$ is a bilinear $(\ceil{(1-\eps)r},\ceil{(1-\eps)r},\eps)$-two source rank condenser.  
	Let $S\subseteq\{(\omega^t)^j\st j\ge 0\}$ with $|S|=\ceilp{\frac{2n}{\eps(t-r+1)}}$.  Then $E'\eqdef(E\cdot (\Wr_{t}(\alpha)\otimes \Wr_{t}(\alpha)))_{\alpha\in S}\in \F^{t'\cdot |S|\times n^2}$ is $(r,r,1-(1-\eps)^3)$-condenser.

	In particular, taking $t=r^3$ and taking $E$ to be a condenser meeting the bound of \autoref{res:two-src_prob-method} so that $t'\le \frac{2r^3}{\eps r}+(1-\eps)r^2+\O(1)$, we obtain that $E'$ is a $(r,r,1-(1-\eps)^3)$-condenser in $\F^{t''\times n^2}$ for $t''\le \O(\nicefrac{n}{\eps^2 r})$.
\end{proposition}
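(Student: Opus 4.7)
My plan is a condense-then-tensor argument with a lossy outer stage: given $A, B \in \F^{n \times r}$ of rank $r$, I will first apply the folded Wronskian $\Wr_{t}(\alpha)$ to push both $A$ and $B$ down to $\F^t$ with at most an $\eps$-fraction rank loss, choosing a single $\alpha \in S$ that works for $A$ and $B$ \emph{simultaneously}, and then invoke the given bilinear inner condenser $E$ on the pair of reduced matrices.

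For the simultaneous-seed step, \autoref{res:folded-wronskian-GK13} (applicable since the exponent $\ell = t$ satisfies $\ell \ge t - r + 1$) yields that $\{\Wr_{t}(\alpha)\}_{\alpha \in S}$ is a strong $(r, L)$-lossless rank condenser with $L \le \frac{r(n-r)}{t-r+1}$. The averaging argument underlying \autoref{res:strong-lossless-to-lossy} then shows that, for any single rank-$r$ source, the number of $\alpha \in S$ for which the output rank drops strictly below $\ceil{(1-\eps)r}$ is at most $\frac{L}{\floor{\eps r}+1} < \frac{n}{\eps(t-r+1)}$. Applying this once to $A$ and once to $B$, a union bound shows that strictly fewer than $\frac{2n}{\eps(t-r+1)} \le |S|$ seeds are ``bad'' for either source, so some $\alpha^\ast \in S$ is simultaneously good for both.

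For the condensing step, I will use the mixed-product identity
\[
	(\Wr_{t}(\alpha^\ast) \otimes \Wr_{t}(\alpha^\ast))(A \otimes B) = (\Wr_{t}(\alpha^\ast) A) \otimes (\Wr_{t}(\alpha^\ast) B),
\]
and abbreviate $A' \eqdef \Wr_{t}(\alpha^\ast) A$ and $B' \eqdef \Wr_{t}(\alpha^\ast) B$. Selecting $\ceil{(1-\eps)r}$ linearly independent columns from each of $A'$ and $B'$ produces submatrices $\tilde A, \tilde B$ of full column rank, and $\cspn(\tilde A \otimes \tilde B) \subseteq \cspn(A' \otimes B')$. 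The hypothesis on $E$ then yields
\[
	\rank E(A' \otimes B') \ge \rank E(\tilde A \otimes \tilde B) \ge (1-\eps)(\ceil{(1-\eps)r})^2 \ge (1-\eps)^3 r^2 .
\]
Since the rows of $E'(A \otimes B)$ indexed by the block $\alpha^\ast$ are precisely $E(A' \otimes B')$, this lower bound passes to $E'(A \otimes B)$, establishing the $(r, r, 1-(1-\eps)^3)$-condensing claim.

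For the ``in particular'' clause, with $t = r^3$ the outer seed count is $|S| = \O(n/(\eps r^3))$, and specializing \autoref{res:two-src_prob-method} to the balanced regime with $n=m=r^3$ and sources of rank $\ceil{(1-\eps)r}$ bounds the inner output length by $t' = \O(r^2/\eps)$ (the $\frac{2 r^3}{\eps \ceil{(1-\eps)r}}$ term dominating), so the product $t'' = t' \cdot |S| = \O(n/(\eps^2 r))$. I expect the only real subtlety to lie in the simultaneous-seed step: it is crucial that the folded Wronskian meets the \emph{strong} (not merely weak) lossless guarantee, so that the averaging gives a per-source bad-seed count of order $L/(\eps r) \approx n/(\eps(t-r+1))$, which leaves a factor-of-two slack in $|S|$ to absorb the union bound over the two sources.
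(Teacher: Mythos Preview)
Your proposal is correct and follows essentially the same approach as the paper's own proof: find a common seed $\alpha^\ast$ via the strong lossless guarantee plus averaging (union-bounded over the two sources), then apply the inner condenser $E$ on the tensor of the two reduced matrices. Your version is in fact slightly more careful than the paper's, as you explicitly pass to rank-$\ceil{(1-\eps)r}$ submatrices $\tilde A,\tilde B$ before invoking the $(\ceil{(1-\eps)r},\ceil{(1-\eps)r},\eps)$-hypothesis on $E$, whereas the paper applies $E$ directly to $\Wr_t(\alpha_0)A\otimes \Wr_t(\alpha_0)B$ and leaves that restriction implicit.
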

\begin{proof}
	Given $A,B\in\F^{n\times r}$ of rank $r$, we see that by \autoref{res:gk13-lossy-condenser} there are $<2\frac{n}{\eps(t-r+1)}$ values of $\alpha\in S$ such that $\rank \Wr_{t}(\alpha) A<(1-\eps)\rank A$ or $\rank \Wr_{t}(\alpha) B<(1-\eps)\rank B$.  Thus, as $|S|\ge \frac{2n}{\eps(t-r+1)}$ there is some $\alpha_0$ where $\rank \Wr_{t}(\alpha_0) A,\rank \Wr_{t}(\alpha_0) B\ge (1-\eps)r$.  Thus, it follows that as $E$ is a $(\ceil{(1-\eps)r},\ceil{(1-\eps)r},\eps)$-condenser that $E((\Wr_{t}(\alpha_0) A)\otimes (\Wr_{t}(\alpha_0) B))$ has rank $\ge (1-\eps)\cdot (1-\eps)^2r^2$.  Thus, it follows that $(E\cdot (\Wr_{t}(\alpha)\otimes \Wr_{t}(\alpha)))_{\alpha\in S}$ applied to $A\otimes B$ has rank $\ge (1-\eps)^3r^2$, showing that this is indeed the desired condenser.

	To obtain the bound on $t'$, note that
	\begin{align*}
		t''
		&=|S|\cdot t'\\
		&\le \ceilp{\frac{2n}{\eps(r^3-r+1)}}\cdot \left(\frac{2r^3}{\eps r}+(1-\eps)r^2+\O(1)\right)\\
		&=\O\left(\frac{n}{\eps^2 r}\right)
		\;.
		\qedhere
	\end{align*}
\end{proof}

In particular, one could find such a condenser $E$ via brute force when $r=\O(1)$.

\section{Open Questions}

This work leaves several directions for future work.

\begin{enumerate}
	\item Can one obtain $(r,\eps)$-lossy seeded rank \emph{extractors}, where the output is $\approx (1-\eps)r$? Our methods require the output to be $\ge r$.

	\item Can one develop of theory of ``code concatenation'' to improve our results in \autoref{sec:small-fields} for small fields?

	\item Can one obtain lossy two-source rank condensers with output size $o(nr)$ for $r=\omega(1)$?

	\item Can one obtain \emph{lossless} dimension expanders, where the degree/expansion relationship matches the probabilistic method?

	\item What is the complexity of computing dimension expansion? That is, given matrices $A_1,\ldots,A_d\in\F^{n\times n}$, compute the largest $\alpha$ so that $\cA\eqdef\{A_i\}_{i=1}^d$ is a $(\nicefrac{1}{2},\alpha)$-dimension expander.
\end{enumerate}

\section*{Acknowledgments}

We would like to thank 
Swastik Kopparty,
Prasad Raghavendra,
Amir Shpilka,
Amir Yehudayoff,
and
Avi Wigderson
for helpful comments.

\bibliographystyle{mfalphaurl}
{
	\bibliography{two-src-rk-cond}
}

\appendix

\section{Toward Iterative Constructions of Subspace Evasive Sets}\label{sec:subspace-evasive}

We describe here a motivation for studying two-source rank condensers via a potential application to constructing another object called a \emph{subspace evasive set}, as defined by Guruswami~\cite{Guruswami11}.

\begin{definition}[Guruswami~\cite{Guruswami11}]
	A set $S\subseteq\F^n$ is a \textbf{$(r,L)$-subspace evasive set} if for every $(\le r)$-dimensional subspace $V\subseteq\F^n$, $|V\cap S|\le L$. Equivalently, a set $S$ is $(r,L)$-subspace evasive if for every subset $T\subseteq S$ with $|T|=L+1$, $\rank T\ge r+1$.
\end{definition}

This notion was introduced as a method to prune the lists in list-decodable codes when using a linear algebraic approach that pins down candidate messages to a low-dimensional subspace~\cite{Vadhan12,Guruswami11,GuruswamiWang13}. To obtain a high rate code after this pruning we desire a \emph{large} subspace evasive set.  Such large sets are guaranteed by the probabilistic method, as given in the following lemma.

\begin{lemmawp}[Guruswami~\cite{Guruswami11}]
	Let $\F_q$ be a finite field. Let $\eps>0$ and $n\ge r\ge 1$. There exists a $(r,\nicefrac{2r}{\eps})$-subspace evasive set in $\F_q^n$ of size $\floor{q^{(1-\eps)n}}$ when $r\le\nicefrac{\eps n}{2}$.
\end{lemmawp}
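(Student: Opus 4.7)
The plan is a direct probabilistic-method argument with a union bound over all $r$-dimensional subspaces. I would sample $N \eqdef \floor{q^{(1-\eps)n}}$ points independently and uniformly from $\F_q^n$ to form a (multi)set $S$, and argue that with positive probability every $r$-dimensional subspace $V$ satisfies $|V\cap S|\le L$ for $L \eqdef \floor{\nicefrac{2r}{\eps}}$.

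First, fix an $r$-dimensional subspace $V$. Since $|V|=q^r$, each sample lies in $V$ independently with probability $q^{r-n}$, so the expected number of hits is $\mu \eqdef N q^{r-n} \le q^{r-\eps n} \le q^{-\eps n/2}$, where the last inequality uses the hypothesis $r \le \eps n/2$. The standard binomial tail estimate then gives
\[
	\Pr[\,|V\cap S|\ge L+1\,] \le \binom{N}{L+1}(q^{r-n})^{L+1} \le \left(\frac{e\mu}{L+1}\right)^{L+1}.
\]
Next I would union bound over the at most $\binom{n}{r}_q \le q^{r(n-r)}\le q^{rn}$ subspaces of dimension $r$. Taking $\log_q$ of the aggregate failure probability yields at most
\[
	rn + (L+1)\bigl(r - \eps n + \O(1)\bigr),
\]
and the choice $L+1 \ge \nicefrac{2r}{\eps}$ ensures $(L+1)\eps n \ge 2rn$, which makes the expression strictly negative for $n$ sufficiently large. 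Hence with positive probability no $r$-dimensional subspace contains more than $L$ samples.

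Finally, to promote this multiset to an actual set of size $N$, I would oversample by a constant factor: the expected number of collisions among the samples is $\O(N^2/q^n) = \O(N\cdot q^{-\eps n}) = o(N)$, so by Markov's inequality the number of distinct samples exceeds $N$ with high probability. Since the subspace-evasive property is preserved under taking subsets, we may trim to exactly $N$ distinct points. The main ``obstacle'' is purely arithmetic bookkeeping in the union bound (ensuring the $rn$ from the subspace count is absorbed by the $(L+1)\eps n$ from the tail), along with the minor technical point that duplicates among the samples do not spoil the size bound; neither involves a genuinely novel idea.
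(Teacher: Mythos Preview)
The paper does not supply its own proof of this lemma; it is stated as a \texttt{lemmawp} (a lemma-without-proof environment in the paper's style) and attributed to Guruswami~\cite{Guruswami11}. Your probabilistic argument is the standard one and is essentially what appears in the cited reference: sample uniformly, bound the probability that a fixed $r$-dimensional subspace captures more than $L$ points via a binomial tail, and union-bound over the $\le q^{r(n-r)}$ many such subspaces.

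Two small remarks on the bookkeeping. First, your phrase ``for $n$ sufficiently large'' is not actually needed: using the sharper subspace count $q^{r(n-r)}$ (rather than $q^{rn}$) together with $r\le \eps n/2$ already makes the exponent strictly negative for all valid parameters, since the slack $-r^2$ from $r(n-r)$ and the $-\log_q(L+1)$ term absorb the $\log_q e$. Second, the oversampling-and-trim step for distinctness is fine, but a cleaner alternative (and the one usually presented) is to sample the $N$ points \emph{without} replacement from $\F_q^n$; this only decreases the probability that $L+1$ of them land in a fixed subspace, so the same tail bound applies and no deduplication is needed.
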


The work of Ben-Aroya and Shinkar~\cite{BenAroyaShinkar12} showed that for $r\ge\left(\nicefrac{1}{\eps}\right)^{\Omega(1)}$ the above size-bound of ``$\nicefrac{2r}{\eps}$'' is asymptotically optimal up to constants.

Note that the most basic construction of a subspace evasive set is based on taking the set of moment curve vectors $S\eqdef\{(1,\alpha,\ldots,\alpha^{n-1})\st \alpha\in\F_q\}$.  This set $S$ is highly evasive (as it is $(n,n-1)$-subspace-evasive), but has size $q$, which is quite far from the above $\floor{q^{(1-\eps)n}}$.  Thus far, the best explicit constructions are due to Dvir-Lovett~\cite{DvirLovett12} and Ben-Aroya and Shinkar~\cite{BenAroyaShinkar12}.

\begin{theoremwp*}[Dvir and Lovett~\cite{DvirLovett12}]
	Let $\F_q$ be a finite field. Let $\eps>0$ and $n\ge r\ge 1$. Then there exists an explicit $(r,\left(\nicefrac{r}{\eps}\right)^r)$-subspace evasive set in $\F_q^n$ of size $>q^{(1-\eps)n}$.
\end{theoremwp*}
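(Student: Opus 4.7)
The plan is to follow the Dvir--Lovett strategy of realizing $S$ as the common zero set of $r$ carefully chosen low-degree polynomials. Set $d \eqdef \ceil{r/\eps}$, and let $f_1,\ldots,f_r \in \F_q[x_1,\ldots,x_n]$ be polynomials of total degree at most $d$ (the specific form of the $f_i$ will be pinned down during the construction). Define
\[
	S \eqdef \{\vx \in \F_q^n : f_1(\vx) = \cdots = f_r(\vx) = 0\}.
\]
Two things must be verified: the evasiveness upper bound $|V \cap S|\le (r/\eps)^r$ for every subspace $V$ of dimension at most $r$, and the size lower bound $|S| > q^{(1-\eps)n}$. Explicitness of the $f_i$ is the remaining issue and the main technical burden.

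For evasiveness, I would fix an $r$-dimensional subspace $V \subseteq \F_q^n$ and parametrize it as $\vy \mapsto M\vy$ for some $M \in \F_q^{n\times r}$ of rank $r$. The restrictions $\tilde f_i(\vy) \eqdef f_i(M\vy)$ are polynomials in $r$ variables of degree at most $d$, and provided the system $\{\tilde f_1,\ldots,\tilde f_r\}$ is zero-dimensional over $\overline{\F_q}$, Bezout's theorem yields $|V\cap S| \le d^r = (r/\eps)^r$. Ensuring zero-dimensionality on \emph{every} $V$ is the crux: following Dvir--Lovett, one route is to take the $f_i$ with strictly increasing degrees $d_1 < d_2 < \cdots < d_r \le d$ and arrange the top-degree form of each $f_i$ so that under any rank-$r$ substitution its leading form remains non-degenerate and of strictly higher degree than the restrictions of the earlier $f_j$. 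The distinct leading degrees then preclude non-trivial algebraic dependencies among the $\tilde f_i$, forcing a zero-dimensional system.

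For the size lower bound I would count zeros directly. Each $f_i$ is a degree-at-most-$d$ polynomial in $n$ variables, so its zero set has density at least $1 - d/q$ in $\F_q^n$ by a Schwartz--Zippel-type argument, or more robustly by a Lang--Weil estimate on each irreducible component. Arranging the $f_i$ so that their zero sets cut each other ``as independently as possible'' --- concretely, so that the common zero set of $f_1,\ldots,f_{i+1}$ shrinks that of $f_1,\ldots,f_i$ by an average factor of at most $d/q$ --- chaining yields $|S| \ge q^n \prod_{i=1}^r (1 - d_i/q) \ge q^{(1-\eps)n}$ for the chosen $d$. In parameter regimes where $q$ is too small for this naive count, one switches to Lang--Weil or Deligne-type estimates on the variety $S$ itself, paying only lower-order corrections.

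The main obstacle, and the heart of Dvir--Lovett's contribution, is producing an \emph{explicit} family $f_1,\ldots,f_r$ that simultaneously enforces transversality on every $r$-dimensional subspace (so the Bezout bound applies uniformly) and the joint density lower bound (so that $|S| > q^{(1-\eps)n}$). Their resolution is an inductive construction: given $f_1,\ldots,f_i$, one selects $f_{i+1}$ from a small, explicitly enumerable pool of degree-$d_{i+1}$ polynomials such that $f_{i+1}$ both cuts the common zero set of $f_1,\ldots,f_i$ by roughly the expected $1/q$-fraction and restricts non-degenerately to every $r$-dimensional subspace. The existence of such a pool member at each stage follows from a counting argument that exploits the rapid growth $d_{i+1} \gg d_i$, and explicitness reduces to an exhaustive check of the pool. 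The Bezout bound and the vanishing-density count are each comparatively routine once the right polynomial family is in hand; the inductive transversality argument is the delicate piece.
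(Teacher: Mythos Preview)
The paper does not give a proof of this statement: it is stated as a cited result from Dvir--Lovett using the \texttt{theoremwp*} environment, which in this paper's conventions is reserved precisely for theorems quoted without proof (note the automatic $\qedsymbol$ and the absence of any proof environment following it). So there is no ``paper's own proof'' to compare against.

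As a sketch of the actual Dvir--Lovett argument, your outline captures the right skeleton --- $r$ polynomials of degree $d\approx r/\eps$, B\'ezout for the intersection bound, and distinct leading degrees to force zero-dimensionality of the restricted system --- but two components are off. First, the size argument: Schwartz--Zippel bounds the zero set of a degree-$d$ polynomial \emph{above} by a $d/q$ fraction, not below by $1-d/q$; you need a \emph{lower} bound on the number of common zeros, and chaining ``independent cuts by a $1/q$ factor'' yields roughly $q^{n-r}$, not $q^{(1-\eps)n}$. Dvir--Lovett instead partition the $n$ coordinates into blocks of size $m\approx r/\eps$ and impose the $r$ constraints separately on each block, so the zero set is a product and its size is exactly $\prod_{\text{blocks}} q^{m-r} = q^{n(1-r/m)} \ge q^{(1-\eps)n}$. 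Second, the construction is not an inductive search over a pool: the $f_i$ are written down in closed form (diagonal-type polynomials $f_i = \sum_j c_{ij} x_j^{d_i}$ with a coefficient matrix all of whose $r\times r$ minors are nonsingular), and it is this algebraic structure --- not a brute-force check --- that guarantees the restricted system remains zero-dimensional on every $r$-dimensional subspace.
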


\begin{theoremwp*}[Ben-Aroya and Shinkar~\cite{BenAroyaShinkar12}]
	Let $\F_q$ be a finite field. Let $0<\eps<\nicefrac{1}{2}$ and $n\ge r\ge 1$ where $r,\nicefrac{1}{\eps}\le\O(1)$. Then there exists an explicit $(r,\left(\nicefrac{2}{\eps}\right)^r)$-subspace evasive set in $\F_q^n$ of size $\ge q^{(1-\eps)n}$.
\end{theoremwp*}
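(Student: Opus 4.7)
The plan is to reduce the construction over $\F_q^n$ to a constant-size base case handled by brute-force enumeration, and then amplify via an outer code-like structure. Since both $r$ and $1/\eps$ are constant, we can afford gadgets of any constant size $n_0 = n_0(r,\eps)$ so long as the final construction remains efficient in $n$ and $q$.

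First, I would fix a block length $n_0 = \Theta(r/\eps)$ and construct a base gadget $T\subseteq\F_q^{n_0}$ with $|T|\ge q^{(1-\eps/2)n_0}$ which is $(r,(\nicefrac{2}{\eps})^r)$-subspace evasive in $\F_q^{n_0}$. By the probabilistic method such a $T$ exists at density roughly $q^{-\eps n_0/2}$, and one natural parametrization is $T = h^{-1}(0)$ for a hash $h:\F_q^{n_0}\to\F_q^{\eps n_0/2}$ drawn from a small pseudorandom family (e.g., low-degree polynomial maps or a sufficiently independent family). Since $n_0$, $r$, and $1/\eps$ are all constants, the number of $r$-dimensional subspaces of $\F_q^{n_0}$ to test against is at most $q^{O(n_0 r)} = \poly(q)$, and we can find a good $h$ (and hence $T$) by exhaustive search over the family in $\poly(q)$ operations.

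Next, partition the coordinates $[n]$ into $m=n/n_0$ blocks of size $n_0$, and define $S\subseteq\F_q^n$ as the set of $\vx$ whose restriction to each block lies in $T$ \emph{and} which additionally lie in a carefully chosen outer structure $U\subseteq\F_q^n$ of rate roughly $1-\eps/2$. Taking $U=\F_q^n$ (i.e.\ the naive product $S=T^m$) achieves size $q^{(1-\eps/2)n}$ but is disastrous for evasiveness: an $r$-dimensional subspace $V$ aligned blockwise can hit $S$ in up to $(\nicefrac{2}{\eps})^{rm}$ points. Instead, $U$ should be an MDS (Reed--Solomon) code over $\F_{q^{n_0}}$ of rate $1-\eps/2$, so that distinct codewords differ on a constant fraction of blocks, forcing blockwise-local hits of $V$ with $T$ to aggregate into genuine elements of $V$ in only very few ways. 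Under the isomorphism $\F_q^n\cong(\F_{q^{n_0}})^m$, a straightforward expectation count shows $|S|\gtrsim |U|\cdot(|T|/q^{n_0})^m \ge q^{(1-\eps)n}$.

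The main obstacle is the upper bound on $|V\cap S|$: combining the local evasiveness of $T$ (at most $(\nicefrac{2}{\eps})^r$ hits per block) with the global MDS distance of $U$ to reach precisely $(\nicefrac{2}{\eps})^r$ and not some worse power. The key observation is that any element of $V\cap S$ is determined by its values on an ``information set'' of at most $\eps m/2$ blocks of the outer code; combined with $\dim V = r$, this should allow one to charge each point of $V\cap S$ to a single block where $V$ projects to a subspace of dimension exactly $r$ and hits $T$ in at most $(\nicefrac{2}{\eps})^r$ points, with the MDS property ensuring no overcounting. Making this case analysis on block-support patterns tight — especially handling subspaces $V$ that mix many blocks, so that the projection to any single block has dimension much less than $r$ — is the delicate step and is where all the quantitative slack must be absorbed to match the probabilistic bound.
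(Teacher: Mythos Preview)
The paper does not prove this statement at all: it is quoted as a prior result of Ben-Aroya and Shinkar~\cite{BenAroyaShinkar12} (note the \texttt{theoremwp*} environment, which in this paper is the ``stated without proof'' style). So there is no proof in the paper to compare against; the actual construction and argument live in~\cite{BenAroyaShinkar12}.

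As for your proposal on its own merits, there is a genuine gap. The step you yourself flag as ``delicate'' --- showing that $|V\cap S|\le (\nicefrac{2}{\eps})^r$ for the product-with-MDS-outer-code construction --- is not just delicate, it does not follow from what you have written. Your ``information set'' argument says that a point of $V\cap S$ is determined by its values on $\eps m/2$ blocks, but this gives at most $\bigl((\nicefrac{2}{\eps})^r\bigr)^{\eps m/2}$ candidates, not $(\nicefrac{2}{\eps})^r$; the MDS distance controls how many blocks you need to look at, not how many choices you have per block. To get a bound independent of $m$ you would need to exploit that $V$ has dimension only $r$ globally, so that once $r$ coordinates of a point in $V$ are pinned down the rest follow --- but then the per-block evasiveness of $T$ and the outer MDS code are doing nothing, and you are back to needing evasiveness in a single block of size $n_0=\O(1)$ to do all the work, which is fine for $n=n_0$ but does not extend. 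In short, the block-product-plus-outer-code template does not by itself yield an intersection bound of the form $c(r,\eps)$ independent of the number of blocks; some additional idea is required, and your sketch does not supply one.
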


Note that while the latter result only works for constant $r$ and $\eps$, this is still an interesting parameter regime for list-decoding applications.

Given that we are far from optimal explicit constructions for subspace evasive sets, we consider in this work an iterative approach to constructing subspace evasive sets along the lines of the zig-zag product of Reingold, Vadhan and Wigderson~\cite{ReingoldVadhanWigderson02} for constructing expanders.  That is, suppose that $S\subseteq\F_q^n$ is $(r,L)$-subspace evasive.  It follows that $S\otimes S=\{\vv\otimes\vw\st\vv,\vw\in S\}$ is in a certain sense subspace-evasive.  That is, for all $T,R\subseteq S$ with $|T|,|R|=L+1$, the product set $T\otimes R\subseteq S\otimes S$ has dimension $\ge (r+1)^2$.  Thus, large enough ``product sets'' among $S\otimes S$ have high rank. While having this property for all product sets is far from having it for all sets of similar size, and thus this is far from a iterative approach for boosting evasiveness, there is also the more basic problem that the \emph{rate} of this construction is poor.  
That is, ideally we have that the rate $\frac{\log_q|S|}{n}$ is at least $\Omega(1)$.  However, $\frac{\log_q|S\otimes S|}{n^2}=\frac{2}{n}\frac{\log_q|S|}{n}$ and thus this tensoring operation decreases the rate dramatically.

While the ``product versus non-product set'' problem seems fundamental to the above approach, we try to ameliorate the rate issue by ``derandomizing'' the tensor product using a two-source rank condenser.
That is, if $S$ is $(r,L)$-subspace evasive and $f:\F^n\times \F^n\to\F^t$ is a $(r+1,r+1,\nicefrac{1}{2})$-condenser, then we can define $S\otimes_f S\eqdef\{ f(\vv,\vw)\st \vv,\vw\in S\}$ and observe that for any subset $R,T\subseteq S$ with $|R|,|T|\ge L+1$ we have that $R\otimes_f T\subseteq S\otimes_f S$ has rank $\ge \nicefrac{(r+1)^2}{2}$.  While this evades slightly smaller subspaces than using the standard tensor product, this operation still qualitatively squares the evasiveness as long as $r\ge \Omega(1)$ initially.  Further, the rate has not dropped substantially if good enough condensers are used.  That is, the probabilistic method (\autoref{res:two-src_prob-method}) shows that we can take $t=\Theta(\nicefrac{n}{r}+r^2)$.  Thus, as long as $r\le\O(\sqrt[3]{n})$, we get that $t=\Theta(\nicefrac{n}{r})$ and that $\frac{\log_q|S\otimes_f S|}{t}=\Theta(2r\frac{\log_q|S|}{n})$.  Thus, as long as $r\ge\Omega(1)$ the rate has actually \emph{increased} under this derandomized tensor product.  Thus, from the perspective of iteratively constructing subspace evasive sets such condensers allow one to focus on the ``product versus non-product set'' problem.

Unfortunately in this work we do not construct such condensers with $t \approx n/r$, nor are we even able to obtain $t=\O(n)$ for any $r\ge\omega(1)$.  

\section{Lossless Two-source Condensers versus Rank-Metric Codes}\label{sec:rank-metric}

In this section we define the notion of a \emph{rank metric code} and show that it is equivalent to our notion of a two-source condenser \autoref{defn:two-source}.  We then review constructions and limitations of rank-metric codes.  Finally we derive the corresponding results in the language of two-source condensers.

\subsection{Equivalence of Condensers and Rank-Metric Codes}

We begin with the definition of rank metric codes.

\begin{definition}\label{defn:rank-metric}
	Let $\F$ be a field and $n,m\ge r\ge 1$. A set $\cC\subseteq\F^{n\times m}$ is a \demph{rank metric code with distance $r$} if for all $A,B\in\cC$ with $A\ne B$, $\rank(A-B)\ge r$.  The code is \demph{linear} if $\cC$ is a linear space. The \demph{parity checks} of a linear code $\cC$ consist of a basis for the dual space $\cC^\perp$.  A linear rank metric code $\cC$ is \demph{explicit} if one can construct a basis of $\cC$ in $\poly(n,m)$ operations in $\F$.
\end{definition}

When the linear code $\cC$ is simply the zero subspace we define the distance to be $\min\{n,m\}+1$. We now observe that via standard coding theory arguments linear codes have their distance equal to the minimum weight of a non-zero element in the code (where ``weight'' here means ``rank'').

\begin{lemmawp}
	Let $\F$ be a field and $n,m\ge 1$. Let $\cC\subseteq\F^{n\times m}$ be a linear rank-metric code.  Then the distance of $\cC$ is $\min_{0\ne M\in\cC}\rank M$.
\end{lemmawp}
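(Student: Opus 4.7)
The plan is to prove the two directions of a standard equality between the pairwise-difference minimum (distance) and the single-element minimum (weight), exploiting linearity of $\cC$.

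First I would unfold the definition. The distance of $\cC$ is $\min_{A,B\in\cC,\,A\ne B}\rank(A-B)$, and I want to show this equals $\min_{0\ne M\in\cC}\rank M$. Linearity of $\cC$ gives both directions essentially for free: for any $A\ne B$ in $\cC$, the difference $M\eqdef A-B$ lies in $\cC$ and is nonzero, so $\rank(A-B)=\rank M$ is in the set of weights being minimized on the right-hand side. Conversely, for any nonzero $M\in\cC$, the pair $(A,B)=(M,0)$ satisfies $A,B\in\cC$ (using $0\in\cC$) with $A\ne B$, and $\rank(A-B)=\rank M$, so $\rank M$ appears among the pairwise differences.

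Hence each of the two minima bounds the other from above, so they are equal. The only non-routine point is to make sure the zero matrix is actually in $\cC$, which is immediate from $\cC$ being a linear subspace, and to recall our convention (stated just before the lemma) that when $\cC=\{0\}$ the minimum-weight expression is vacuous and the distance is declared to be $\min\{n,m\}+1$; in this degenerate case there is nothing to prove since the right-hand side is also conventionally $\min\{n,m\}+1$ (an empty minimum taken over the ambient weight range). There is no real obstacle here; the lemma is a direct translation of the classical Hamming-metric fact that minimum distance equals minimum weight for linear codes, with ``weight'' reinterpreted as ``rank''.
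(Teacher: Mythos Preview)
Your proposal is correct. The paper states this lemma without proof (it is marked as a ``lemmawp'', i.e., a lemma with no accompanying proof, since it is the standard coding-theory fact that minimum distance equals minimum weight for linear codes, transplanted verbatim to the rank metric), and your argument is exactly the standard one.
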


We now give a lemma relating the inner product of matrices to the trace of their product, which will be helpful in the below results.

\begin{lemma}\label{res:matrix-inner-product}
	Let $\F$ be a field and $n, m\ge1$.  Let $M,N\in\F^{n\times m}$.  Then the inner product of $M$ and $N$ as vectors in $\F^{nm}$ can be expressed as $\la M,N\ra=\tr(M^\T N)=\tr(N^\T M)$.
\end{lemma}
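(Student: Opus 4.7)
The plan is a direct computation: unfold both the matrix trace and the flattened inner product into the same double sum over entries. Specifically, I will first expand $\la M,N\ra$ by definition of the inner product on $\F^{nm}$ (viewing a matrix as the concatenation of its rows, say), obtaining $\sum_{i\in[n],\,j\in[m]} M_{i,j} N_{i,j}$. Next, I will expand the matrix product $M^\T N \in \F^{m\times m}$ entrywise: $(M^\T N)_{j,k} = \sum_{i\in[n]} (M^\T)_{j,i} N_{i,k} = \sum_{i\in[n]} M_{i,j} N_{i,k}$. Taking the trace picks out the diagonal $j=k$ terms, yielding $\tr(M^\T N) = \sum_{j\in[m]}\sum_{i\in[n]} M_{i,j} N_{i,j}$, which matches $\la M,N\ra$ on the nose.

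For the second equality I will do the symmetric computation: $N^\T M \in \F^{m\times m}$ with $(N^\T M)_{j,k} = \sum_{i\in[n]} N_{i,j} M_{i,k}$, so $\tr(N^\T M) = \sum_{j\in[m]}\sum_{i\in[n]} N_{i,j} M_{i,j}$, which is the same double sum (the field entries commute). Alternatively, one can appeal to the standard identity $\tr(AB) = \tr(BA)$ applied with $A = M^\T$ and $B = N$.

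There is no real obstacle here; the only thing to be careful about is fixing a convention for the inner product on $\F^{nm}$ (over a general field, ``inner product'' means the symmetric bilinear form $\la \vu,\vv\ra = \sum_k u_k v_k$, with no conjugation) and being explicit that the identification $\F^{n\times m}\cong \F^{nm}$ is the one that sums over all $nm$ entry-pairs, so that the two double sums match literally.
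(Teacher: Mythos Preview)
Your proposal is correct and takes essentially the same approach as the paper: a direct entrywise expansion showing $\la M,N\ra=\sum_{i,j}M_{i,j}N_{i,j}=\tr(M^\T N)$. The only cosmetic difference is that the paper handles the second equality by invoking the symmetry $\la M,N\ra=\la N,M\ra$ rather than redoing the computation or citing $\tr(AB)=\tr(BA)$.
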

\begin{proof}
	It suffices to prove the first part, as $\la M,N\ra=\la N,M\ra$.
	\begin{equation*}
		\la M,N\ra
		=\sum_{i\in[n],j\in[m]} M_{i,j} N_{i,j}
		=\sum_{i\in[n],j\in[m]} (M^\T)_{j,i} N_{i,j}
		=\sum_{j\in[m]} ((M^\T)\cdot N)_{j,j}
		=\tr(M^\T N)
		\;.
		\qedhere
	\end{equation*}
\end{proof}

We now show that a matrix $E\in\F^{t\times nm}$ defining a bilinear lossless two-source condenser has a kernel which is a good rank-metric code. In particular, if $E$ had a low-rank matrix $M\in\F^{n\times m}$ in its kernel, than $E$ would not losslessly preserve the row-span and column-span of $M$.

\begin{proposition}\label{res:lossless-two-src_to_rank-metric}
	Let $\F$ be a field and let $n,m\ge 1$ and $n,m\ge r\ge 0$. Let $f:\F^n\times\F^m\to\F^t$ be a bilinear lossless $(r,r,0)$-two-source rank condenser defined by $f(\vv,\vw)=(\vv^\T E_k\vw)_{k\in[t]}$ with $E_k\in\F^{n\times m}$.  Then $\{E_k\}_k$ are the parity checks of a linear rank-metric code with distance $\ge r+1$.
\end{proposition}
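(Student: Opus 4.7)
The plan is to set $\cC \eqdef \{M \in \F^{n \times m} : \la M, E_k\ra = 0 \text{ for all } k \in [t]\}$; by definition $\cC^\perp = \spn\{E_k\}_k$, so the $E_k$ are (a spanning family of) parity checks for $\cC$. What remains is to show that the minimum rank-weight of $\cC$ is at least $r+1$, which I would do by contrapositive: any $M \in \F^{n \times m}$ with $1 \le \rank M \le r$ cannot lie in $\cC$.

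To do so, I would first translate to the matrix language of \autoref{res:bilinear-cond_alt}: identifying $\F^{nm}$ with $\F^{n \times m}$, let $E \in \F^{t \times nm}$ be the matrix whose $k$-th row is $E_k$ (flattened). By \autoref{res:matrix-inner-product}, the condition $\la M, E_k\ra = 0$ for all $k$ is precisely $EM = 0$, where we view $M$ as an element of $\F^{nm}$. Note also the standard identification $\va \vb^T \leftrightarrow \va \otimes \vb$ between rank-one matrices in $\F^{n \times m}$ and elementary tensors in $\F^{nm}$, which is the one compatible with \autoref{res:bilinear-cond_alt}.

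Now suppose $M \in \cC$ satisfies $s \eqdef \rank M$ with $1 \le s \le r$. I would write $M = \sum_{i=1}^s \va_i \vb_i^T$ where $\va_1,\ldots,\va_s \in \F^n$ are linearly independent and $\vb_1,\ldots,\vb_s \in \F^m$ are linearly independent. Since $s \le r \le n, m$, extend to linearly independent tuples $\va_1,\ldots,\va_r$ and $\vb_1,\ldots,\vb_r$, and let $A \in \F^{n \times r}$ and $B \in \F^{m \times r}$ be the matrices with these columns; both are of rank $r$. Viewing $M$ as a vector in $\F^{nm}$, we have $M = \sum_{i=1}^s \va_i \otimes \vb_i$, which lies in $\cspn(A \otimes B)$, the $r^2$-dimensional subspace spanned by $\{\va_i \otimes \vb_j\}_{i,j \in [r]}$.

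Finally, I would invoke the condenser hypothesis. Applied with the column-sets of $A$ and $B$ (both of rank $r$), \autoref{defn:two-source} yields $\rank E(A \otimes B) \ge r^2$; since $A \otimes B \in \F^{nm \times r^2}$ this rank equals exactly $r^2$, so $E$ is injective on the $r^2$-dimensional subspace $\cspn(A \otimes B)$. Because $M \in \cspn(A \otimes B)$ and $EM = 0$, we conclude $M = 0$, contradicting $s \ge 1$. Thus every nonzero element of $\cC$ has rank at least $r+1$, completing the proof. The argument is essentially bookkeeping; the only mild subtlety is the rank-$r$ extension of the rank-$s$ factorization of $M$, which is what allows us to place $M$ inside a column-space on which the condenser guarantee applies, and this is the step I expect would need the most care to state cleanly.
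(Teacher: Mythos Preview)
Your proof is correct and follows the same contrapositive strategy as the paper: a nonzero $M$ of rank $s\le r$ in the code yields a rank deficiency in $f(A,B)$ for suitably chosen $A,B$. The one genuine difference is in how you choose $A,B$. The paper takes $A\in\F^{n\times s}$, $B\in\F^{m\times s}$ with $M=AB^\T$ and exhibits $\Id_s$ as a kernel vector of the $t\times s^2$ matrix $f(A,B)$ via a trace computation; this shows $f$ fails to be an $(s,s,0)$-condenser and then implicitly invokes \autoref{res:two-src_lossless_le} to contradict the $(r,r,0)$ hypothesis. You instead pad the rank-$s$ factorization of $M$ to rank-$r$ matrices $A,B$, place $M$ inside $\cspn(A\otimes B)$, and use the $(r,r,0)$ property directly to conclude $E$ is injective there. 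Your route sidesteps the appeal to \autoref{res:two-src_lossless_le} and the trace identity, at the cost of the (easy) extension step; the paper's route is slightly more concrete in naming the kernel vector. Both are equally valid and of comparable length.
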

\begin{proof}
	We show the contrapositive for $r>0$, as when $r=0$ the claim is trivial as every rank-metric code has distance $\ge 1$.  That $\{E_k\}_k$ are not such parity checks means that there is a non-zero matrix $M\in\F^{n\times m}$ with rank $s \le r$ such that $(\la E_k,M\ra)_{k\in[t]}=\vec{0}$. Let $M=AB^\T$ with $A\in\F^{n\times s}$ and $B\in\F^{m\times s}$. 
	
	As $(f(\vv,\vw))_k=\vv^\T E_k\vw$, it follows that we can express the $k$-th output of $f(A,B)$ as $(f(A,B))_k=A^\T E_k B\in\F^{s\times s}$.  Now consider the inner product of these vectors with the $s\times s$ identity matrix $\Id_s\in\F^{s\times s}$, and appealing to \autoref{res:matrix-inner-product} we have that $\la \Id_s,A^\T E_k B\ra=\tr(\Id_s A^\T E_k B)=\tr(BA^\T E_k)=\tr(M^\T E_k)=\la E_k,M\ra$, using the fact that the trace is invariant under cyclic permutations.
	
	As $(\la E_k,M\ra)_{k\in[t]}=\vec{0}$ it follows that $\la \Id_s,A^\T E_k B\ra=0$ for all $k$. That is, the $t\times s^2$ matrix which is the output $f(A,B)$ has a non-trivial nullspace as it contains the matrix $\Id_s$.  Thus, $\rank f(A,B)<s^2=\rank A\cdot \rank B$, so that $f$ is not a lossless rank condenser.
\end{proof}

Somewhat surprisingly, we also show the converse to the above, showing that the parity checks of any rank-metric code yield a bilinear lossless two-source rank condenser. This follows from the observation that if a matrix $E\in\F^{t\times nm}$ fails to condense the pair of subspaces $A\in\F^{n\times r}$ and $B\in\F^{m\times s}$ then $E(A\otimes B)$ must have rank $<rs$ so that it has a linear dependence $E(A\otimes B)C=0$ where $C\in\F^{rs\times 1}$.  However, using that $\F^{rs\times 1}=\F^{r\times s}$, we can see that $(A\otimes B)C$ can be interpreted as the matrix $ACB^\T \in \F^{n \times m}$, which is of rank $\le r,s$.  Thus, $E$ has a low-rank matrix in its kernel, and so the kernel does not define a good rank-metric code.

\begin{proposition}\label{res:rank-metric_to_lossless-two-src}
	Let $\F$ be a field and let $n,m\ge 1$ and $n,m\ge d\ge 0$. Let $E_1,\ldots,E_t\in \F^{n\times m}$ be the parity check matrix of a linear rank-metric code with distance $d+1$.  Define $f:\F^n\times \F^m\to\F^t$ by $(f(\vv,\vw))_k\eqdef \vv^\T E_k \vw$.  Then $f$ is a bilinear lossless $(d,m,0)$-two-source rank condenser, as well as a bilinear lossless $(n,d,0)$-two-source rank condenser.
\end{proposition}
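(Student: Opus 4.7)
I would argue the contrapositive, mirroring (and essentially reversing) the argument of Proposition~\ref{res:lossless-two-src_to_rank-metric}. Suppose $f$ fails to be a bilinear lossless $(d,m,0)$-condenser. Then there exist $A \in \F^{n \times r}$ with $\rank A = r \le d$ and $B \in \F^{m \times s}$ with $\rank B = s \le m$ such that $\rank f(A,B) < rs$. First, using \autoref{res:bilinear-cond_alt}, I would collect the parity checks $E_1,\ldots,E_t$ into a single matrix $E \in \F^{t \times nm}$ (whose $k$-th row is $E_k$ vectorized) so that $f(A,B) = E \cdot (A \otimes B)$.

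Since $\rank E(A \otimes B) < rs$, there is a nonzero vector $C \in \F^{rs}$ with $E(A \otimes B)C = \vec{0}$. The key step is to identify $C$ with a matrix $\tilde C \in \F^{r \times s}$ via the isomorphism $\F^{rs} \cong \F^{r \times s}$ and verify the identity
\[
	(A \otimes B) C \;=\; \mathrm{vec}(A\, \tilde C\, B^\T)
	\;,
\]
which follows by expanding $(A \otimes B) C = \sum_{i,j} \tilde C_{i,j}\,(\vec a_i \otimes \vec b_j)$ (where $\vec a_i$ is the $i$-th column of $A$, and likewise for $\vec b_j$) and noting that $\vec a_i \otimes \vec b_j$ is the vectorization of the rank-one matrix $\vec a_i \vec b_j^\T$. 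Consequently $\la E_k, A\tilde C B^\T\ra = 0$ for every $k$, so the matrix $M \eqdef A\tilde C B^\T \in \F^{n \times m}$ lies in the code $\cC = (\spn\{E_k\})^\perp$.

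The final step is to observe that $M$ is a nonzero low-rank element of $\cC$, contradicting its distance. Since $A$ has full column rank $r$ and $B^\T$ has full row rank $s$, the map $\tilde C \mapsto A\tilde C B^\T$ is injective, so $M \ne 0$ because $\tilde C \ne 0$. Moreover $\rank M = \rank(A\tilde C B^\T) = \rank \tilde C \le \min(r,s) \le d$. This contradicts the hypothesis that $\cC$ has distance $\ge d+1$. The second assertion, that $f$ is also a $(n,d,0)$-condenser, is proved by the symmetric argument in which one instead takes $r \le n$ and $s \le d$; the same inequality $\rank M \le \min(r,s) \le d$ then yields the contradiction.

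The argument is essentially an unwinding of linear algebra, and the only slightly subtle point is the vectorization identity $(A \otimes B) C = \mathrm{vec}(A\tilde C B^\T)$; everything else is bookkeeping. There is no real obstacle here since this is the direct converse of the previous proposition, merely requiring one to observe that a low-rank codeword can always be written in the factored form $A\tilde C B^\T$ with $A,B$ of the right column ranks.
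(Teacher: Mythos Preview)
Your proposal is correct and follows essentially the same approach as the paper: both argue the contrapositive by producing a nonzero low-rank matrix $A\tilde C B^\T$ in the code from a dependency $C$ among the columns of $f(A,B)$. The only cosmetic difference is that you package the key identity as the vectorization formula $(A\otimes B)C=\mathrm{vec}(A\tilde C B^\T)$, whereas the paper computes the same thing via the trace identity $\la C, A^\T E_k B\ra = \tr(BC^\T A^\T E_k) = \la E_k, ACB^\T\ra$ using \autoref{res:matrix-inner-product}; the nonvanishing of $A\tilde C B^\T$ and the rank bound are then argued identically.
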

\begin{proof}
	We consider the contrapositive.  Suppose that $A\in\F^{n\times r}$ with $\rank A=r$ and $B\in\F^{m\times s}$ with $\rank B=s$ is not losslessly condensed, so that $\rank f(A,B) <rs$. The output of $f(A,B)$ is a set of $rs$ vectors, each in $\F^t$, which we can consider as a $t\times rs$ matrix.  Viewing this matrix row-wise we can express the $k$-th output of $f(A,B)$ as $(f(A,B))_k=A^\T E_k B$.  Thus, that $\rank f(A,B) <rs$ means that there is some non-zero matrix $C\in\F^{r\times s}$ such that $\la C,(f(A,B))_k\ra=0$ for all $k$.  Thus, (appealing to \autoref{res:matrix-inner-product})
	\[
		0
		=\la C,(f(A,B))_k\ra
		=\tr(C^\T A^\T E_k B)
		=\tr(B C^\T A^\T E_k)
		=\la E_k, ACB^\T\ra
		\;,
	\]
	for all $k\in[t]$.  
	
	Note that we can decompose $C$ into the basis $\{\ve_i \ve_j^\T\}_{i\in[r],j\in[s]}$ for $\F^{r\times s}$ where $\ve_i$ is the $i$-th standard basis vector.  Thus,
	\[
		 ACB^\T
		 =A \left(\sum_{i\in [r],j\in[s]} C_{i,j}\ve_i\ve_j^\T\right) B^\T
		 =\sum_{i\in [r],j\in[s]} C_{i,j}\cdot A (\ve_i\ve_j^\T) B^\T
		 =\sum_{i\in [r],j\in[s]} C_{i,j}\cdot A_{\bullet,i} (B_{\bullet,j})^\T
		 \;,
	\]
	where $A_{\bullet,i}$ is the $i$-th column of $A$ and likewise for $B$. Now note that the outer-products $\{A_{\bullet,i} (B_{\bullet,j})^\T\}_{i\in[r],j\in[s]}\subseteq\F^{n\times m}$ are linearly independent as this is simply saying that the tensor product multiplies rank.  Thus, as $C\ne 0$, it follows that $ACB^\T$ is a non-trivial linear combination of the $\{A_{\bullet,i} (B_{\bullet,j})^\T\}_{i,j}$ and thus $ACB^\T\ne 0$.
	
	However, now note that $ACB^\T\in\F^{n\times m}$ has $\rank ACB^\T\le \min\{\rank A,\rank B\}$ so that $\rank ACB^\T\le \min\{r,s\}$.  Thus, if $\min\{r,s\}\le d$ then this shows that the $\{E_k\}_k$ are not the parity checks of a distance $d+1$ rank metric as $ACB^\T$ is a non-zero matrix with rank $<d+1$ where $(\la E_k, ACB^\T\ra)_{k\in[t]}=\vec{0}$. Thus, the claim follows by taking first $r\le d$ and $s\le m$, then taking $r\le n$ and $s\le d$.
\end{proof}

It thus follows that focusing on balanced sources in lossless two-source condensers is sufficient, using that lossless condensers also condense smaller sources (\autoref{res:two-src_lossless_le}).

\begin{corollarywp}\label{res:two-src_rs-r}
	Let $\F$ be a field and let $n,m\ge r\ge 1$. Let $f:\F^n\times\F^m\to\F^t$ be a bilinear lossless $(r,r,0)$-two-source rank condenser. Then $f$ is also a $(\le r,\le m,0)$- and $(\le n,\le r,0)$-condenser.
\end{corollarywp}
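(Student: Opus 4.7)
The plan is to chain together the three preceding results, using the equivalence between bilinear lossless two-source rank condensers and rank-metric codes as the bridge. Specifically, I will move from the balanced ``rank $r$ vs rank $r$'' condenser hypothesis to a rank-metric code, back to unbalanced ``rank $r$ vs full rank'' condensers (in both directions), and finally to the ``$\le$'' versions using the downward-closure property of lossless condensers.

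First I would invoke \autoref{res:lossless-two-src_to_rank-metric}. The hypothesis gives $f(\vv,\vw)=(\vv^\T E_k\vw)_{k\in[t]}$ for some $E_k\in\F^{n\times m}$, and since $f$ is a bilinear lossless $(r,r,0)$-condenser, that proposition tells us that $\{E_k\}_{k\in[t]}$ are the parity checks of a linear rank-metric code in $\F^{n\times m}$ with distance $\ge r+1$. Crucially, this object (a rank-metric code) makes no reference to the ``balanced'' parameter $(r,r)$ and is instead characterized by a single distance parameter, which is why the asymmetry of the ``$\le$'' hypotheses can be broken here.

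Next I would apply \autoref{res:rank-metric_to_lossless-two-src} in the other direction. Setting $d = r$ in that proposition (which is allowed since $n, m \ge r$), the same matrices $E_k$, being parity checks of a distance $r+1$ rank-metric code, define a bilinear lossless $(r, m, 0)$-condenser and, separately, a bilinear lossless $(n, r, 0)$-condenser. Since these are defined by the same $E_k$ as $f$, this shows $f$ itself is both an $(r, m, 0)$- and an $(n, r, 0)$-condenser.

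Finally I would apply \autoref{res:two-src_lossless_le} to each of these. That lemma states that a bilinear lossless two-source rank condenser is automatically a condenser for all smaller ranks in both coordinates. Thus $f$ being an $(r, m, 0)$-condenser upgrades to a $(\le r, \le m, 0)$-condenser, and $f$ being an $(n, r, 0)$-condenser upgrades to a $(\le n, \le r, 0)$-condenser, as desired. There is no real obstacle here beyond being careful that the hypotheses $n, m \ge r \ge 1$ are enough to invoke \autoref{res:rank-metric_to_lossless-two-src} with $d=r$ (which requires $n, m \ge d$), which they are by assumption.
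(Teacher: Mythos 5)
Your proof is correct and follows exactly the route the paper intends: pass from the balanced $(r,r,0)$-condenser to a distance-$(r+1)$ rank-metric code via \autoref{res:lossless-two-src_to_rank-metric}, apply \autoref{res:rank-metric_to_lossless-two-src} with $d=r$ to recover the unbalanced $(r,m,0)$- and $(n,r,0)$-condensers, and finish with the downward closure in \autoref{res:two-src_lossless_le}. The paper leaves this as a corollary without proof precisely because the preceding propositions make this chain immediate, and your write-up supplies the same chain.
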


\subsection{Constructions of Rank-Metric Codes}

We now review for completeness known constructions of rank-metric codes, which by the above equivalence yields constructions of bilinear lossless rank condensers.  In this section, we will assume that $m\ge n$, so that we work with short and fat matrices. We begin with a basic limitation of rank-metric codes which is a direct generalization of the Singleton bound for codes in the Hamming metric.

\begin{propositionwp}[Gabidulin~\cite{Gabidulin85a}, Delsarte~\cite{Delsarte78}, and Roth~\cite{Roth91}]\label{res:rank-metric_singleton}
	Let $\F$ be a field and $m\ge n\ge 1$ and $n\ge r\ge 0$.  Let $\cC\subseteq\F^{n\times m}$ be a rank-metric code with distance $\ge r+1$.  If $\cC$ is a linear code then $\dim \cC\le m(n-r)$.  If $\F$ is finite (and $\cC$ possibly non-linear) then $\log_{|\F|}|\cC|\le m(n-r)$.
\end{propositionwp}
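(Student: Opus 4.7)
The plan is to adapt the standard Singleton-bound puncturing argument to the rank metric. Since every non-zero $M \in \cC$ has $\rank M \ge r+1$, in particular $M$ cannot be supported on fewer than $r+1$ rows. This suggests projecting onto a suitable set of $n-r$ rows and showing this projection is injective on $\cC$.

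Concretely, I would let $\pi : \F^{n \times m} \to \F^{(n-r) \times m}$ be the linear map that keeps the first $n-r$ rows and discards the last $r$. Suppose $A, B \in \cC$ satisfy $\pi(A) = \pi(B)$. Then $M \eqdef A - B$ has its first $n-r$ rows equal to zero, so $M$ is supported on at most $r$ rows, which forces $\rank M \le r$. By the distance hypothesis, either $M = 0$ or $\rank M \ge r+1$; hence $M = 0$, i.e.\ $A = B$. Thus $\pi|_\cC$ is injective.

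For the linear case, $\pi$ is a linear map, so injectivity on $\cC$ gives $\dim \cC \le \dim \pi(\cC) \le \dim \F^{(n-r) \times m} = m(n-r)$. For the case of finite $\F$ with $\cC$ possibly non-linear, injectivity of $\pi|_\cC$ gives $|\cC| \le |\F^{(n-r) \times m}| = |\F|^{m(n-r)}$, and taking $\log_{|\F|}$ yields the claimed bound. The edge case $r = 0$ is trivial (the bound becomes $mn$, the ambient dimension), and the argument above requires no changes for $r \ge 1$.

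There is essentially no obstacle here: the proof is a direct transcription of the classical Singleton bound, with the observation that a matrix supported on $r$ rows has rank at most $r$ playing the role of the usual "a vector supported on $r$ coordinates has Hamming weight at most $r$". The only care needed is ensuring the projection is chosen to discard exactly $r$ rows so that the injectivity argument combines cleanly with the rank-versus-support inequality.
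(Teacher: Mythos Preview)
Your proof is correct and is essentially the classical Singleton-bound argument for rank-metric codes. Note, however, that the paper states this proposition \emph{without} proof (it is declared as a \texttt{propositionwp}, citing Gabidulin, Delsarte, and Roth), so there is no in-paper proof to compare against; your argument is precisely the standard one found in those references.
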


We now give a well-known construction of rank-metric codes, now called \emph{Gabidulin codes}, which are a $q$-linearized version of Reed-Solomon codes.  Correspondingly, these codes meet the above analogue of the Singleton bound.

\begin{proposition}[Gabidulin~\cite{Gabidulin85a}, Delsarte~\cite{Delsarte78}, and Roth~\cite{Roth91}]\label{res:rank-metric_code_gabidulin}
	Let $\F_q$ be a finite field and $m\ge n\ge 1$ and $n\ge r\ge 0$. Given a presentation of $\F_{q^m}$ as a extension field $\F_q$, there is an explicit linear rank-metric code $\cC\subseteq\F^{n\times m}$ with distance $r+1$ with $\dim \cC=m(n-r)$.
\end{proposition}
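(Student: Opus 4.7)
The plan is to construct the classical Gabidulin code using $q$-linearized polynomials, and verify the required dimension and distance bounds. Set $k \eqdef n - r$. Since $m \ge n$, we can fix $\F_q$-linearly independent elements $\alpha_1, \ldots, \alpha_n \in \F_{q^m}$ (for instance, pick $n$ vectors from any $\F_q$-basis of $\F_{q^m}$, all computable explicitly given the presentation of the extension). Let $\cL_k \eqdef \bigl\{ f(x) = \sum_{i=0}^{k-1} a_i x^{q^i} : a_i \in \F_{q^m} \bigr\}$ be the space of $q$-linearized polynomials of $q$-degree $< k$. Define the encoding map $\mathrm{Enc} : \cL_k \to \F_q^{n \times m}$ sending $f$ to the matrix whose $j$-th row is the $\F_q^m \cong \F_q^m$ representation of $f(\alpha_j)$. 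Then $\cC \eqdef \mathrm{Enc}(\cL_k)$ is the proposed code, and it is clearly $\F_q$-linear and explicit in the required sense.

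For the dimension, the key observation is that $\mathrm{Enc}$ is $\F_q$-linear (even $\F_{q^m}$-linear) and injective: if $\mathrm{Enc}(f) = 0$, then $f$ vanishes on the $\F_q$-span $V \eqdef \spn_{\F_q}\{\alpha_1, \ldots, \alpha_n\}$, which has $\F_q$-dimension $n \ge k$; but a $q$-linearized polynomial of $q$-degree $< k$ is an $\F_q$-linear map on $\F_{q^m}$, and its set of roots is an $\F_q$-subspace of cardinality at most $\deg f \le q^{k-1}$, hence of $\F_q$-dimension at most $k - 1 < n$. This forces $f = 0$. Injectivity then gives $\dim_{\F_q} \cC = \dim_{\F_q} \cL_k = k \cdot m = m(n-r)$.

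For the distance, take any nonzero $f \in \cL_k$ and consider the row-rank of $\mathrm{Enc}(f)$ over $\F_q$, which equals $\dim_{\F_q} f(V)$. By the rank-nullity theorem applied to the $\F_q$-linear map $f : V \to \F_{q^m}$, we have $\dim_{\F_q} f(V) = n - \dim_{\F_q}(V \cap \ker f) \ge n - \dim_{\F_q} \ker f \ge n - (k-1) = r + 1$, using again the bound $\dim_{\F_q} \ker f \le k - 1$ from the root-count argument above. Hence every nonzero codeword has rank at least $r + 1$, matching the Singleton bound of \autoref{res:rank-metric_singleton}.

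The only step requiring real content is the $\F_q$-linearity and root bound for $q$-linearized polynomials, which is standard: $(x+y)^{q^i} = x^{q^i} + y^{q^i}$ in characteristic $p$ (with $q$ a power of $p$) gives additivity, and $c^{q^i} = c$ for $c \in \F_q$ gives $\F_q$-homogeneity, while the $\F_q$-kernel bound follows from the degree-$q^{k-1}$ bound on the number of roots in $\F_{q^m}$. No other step poses real difficulty; the construction and its analysis are entirely parallel to Reed-Solomon codes, with $q$-linearized polynomials replacing ordinary polynomials and the rank metric replacing the Hamming metric.
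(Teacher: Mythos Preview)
Your proof is correct and follows essentially the same Gabidulin-code approach as the paper, using $q$-linearized polynomials and the root-count bound $\dim_{\F_q}\ker f \le k-1$ to control both injectivity and minimum rank. The only difference is presentational: you handle the rectangular case $n<m$ directly by evaluating at $n$ linearly independent points in $\F_{q^m}$, whereas the paper first treats the square case $n=m$ and then passes to the subcode with the last $m-n$ rows zero; your route is slightly more streamlined.
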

\begin{proof}
	\uline{$n=m$:}  First, we call a polynomial $f(x)\in\F_{q^m}[x]$ to be \emph{$q$-linearized} if it can written as $f(x)=\sum_{i=0}^k \beta_i x^{q^i}$. Now note that the evaluation map $f:\F_{q^m}\to\F_{q^m}$ is $\F_q$-linear by the Frobenius endomorphism, thus we can consider $f$ as a linear map $M_f:\F_q^m\to\F_q^m$.  Further, we can treat $M_f$ as a matrix $M_f\in\F_q^{m\times m}$ by choosing some $\F_q$-basis $\alpha_1,\ldots,\alpha_m$ for $\F_{q^m}$ so that $\sum_i \gamma_i \alpha_i \to \sum_i \gamma_i f(\alpha_i)$ for $\gamma_i\in\F_q$.
	
	Now note that the roots of a $q$-linearized polynomial, by the above linearity, form a $\F_q$-linear space within $\F_{q^m}$.  In particular, if $f$ is $q$-linearized and has $\deg f\le q^k$ then it has $\le q^k$ roots, so the map $M_f$ has a kernel of $\F_q$-dimension $\le k$, so that $\rank M_f\ge m-k$.

	Finally, define $\cC_k\eqdef\{M_f\st f\in \F_{q^m}[x], f=\sum_{i=0}^k \beta_i x^{q^i}\}\subseteq\F_q^{m\times m}$ for $k<m$.  Noting that the degree $\le q^k$ $q$-linearized polynomials form a $\F_q$-vector-space, it follows that $\cC_k$ is a linear space of dimension $\le (k+1)m$.  Further, $\cC_k$ is a linear space of dimension $\ge (k+1)m$ as the map from polynomials to matrices is injective.  That is, a degree $\le q^k$ $q$-linearized polynomial that yields a zero matrix must be zero as the zero matrix has a kernel of size $q^n$ but a non-zero such polynomial has at most $q^k$ roots. Thus, $\cC_k$ is a rank metric code of distance $m-k$ and dimension $(k+1)m$. Taking $k=m-(r+1)<m$ yields the claim (for $r=m$ we get $k=-1$ so that $\cC_{-1}$ is just the zero polynomial).

	\uline{$n<m$:}  Consider the above code $\cC_{m-(r+1)}\subseteq\F^{m\times m}$ with distance $r+1$.  Now consider the subcode $\cC'$ of the above code where we insist that the last $m-n$ rows are all zero.  It follows that $\cC'$ is still a linear rank-metric code with distance $r+1$, and we can now embed it into $\F^{n\times m}$.  The dimension is at least $\dim\cC'\ge \dim \cC_{m-(r+1)}-m(m-n)=m(m-r)-m(m-n)=m(n-r)$. By the above Singleton-type bound (\autoref{res:rank-metric_singleton}) it follows that this lower bound is met with equality.

	\uline{explicitness:} This is clear from construction as the presentation of $\F_{q^m}$ yields a $\F_q$-basis for $\F_{q^m}$ and a way to compute in $\F_{q^m}$.
\end{proof}

We remark that these codes are also efficiently decodable, as shown by Gabidulin~\cite{Gabidulin85a}, Delsarte~\cite{Delsarte78}, and Roth~\cite{Roth91}. 

While the above codes are thus optimal, they are only defined for \emph{finite} fields, and in particular have no analogue over the complex numbers. In fact, there is provably no analogue as in algebraically closed fields there is the following upper bound on the dimension of rank metric codes that is lower than the Singleton-type bound.

\begin{propositionwp}[Roth~\cite{Roth91}]\label{res:rank-metric_limit_roth}
	Let $\F$ be an algebraically closed field and $n\ge 1$ and $n\ge r\ge 0$.  Let $\cC\subseteq\F^{n\times n}$ be a rank-metric code with distance $\ge r+1$.  If $\cC$ is a linear code then $\dim \cC\le (n-r)^2$.  
\end{propositionwp}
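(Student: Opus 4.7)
The plan is to view $\cC$ as a linear subvariety of the affine space $\A^{n^2}=\F^{n\times n}$ and exploit an intersection-theoretic dimension count with the determinantal variety $V_r \eqdef \{M\in\F^{n\times n}: \rank M\le r\}$. The key input from algebraic geometry, valid because $\F$ is algebraically closed, is the \emph{affine dimension theorem}: if $X,Y$ are closed subvarieties of $\A^N$ with $X\cap Y\ne\emptyset$, then every irreducible component of $X\cap Y$ has dimension at least $\dim X+\dim Y-N$.

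First, I would establish that the determinantal variety $V_r$ is irreducible of dimension $n^2-(n-r)^2=2nr-r^2$. The cleanest way is to exhibit $V_r$ as the image of the surjective morphism $\mu:\F^{n\times r}\times\F^{r\times n}\to V_r$ defined by $\mu(A,B)=AB$. Since the source is irreducible of dimension $2nr$, so is $V_r$. The fiber of $\mu$ over a matrix $M\in V_r$ of rank exactly $r$ is a $\GL_r(\F)$-torsor (if $M=AB=A'B'$ with $A,A'$ of full column rank and $B,B'$ of full row rank, then $A'=AQ$, $B'=Q^{-1}B$ for a unique $Q\in\GL_r(\F)$), so the generic fiber has dimension $r^2$, giving $\dim V_r=2nr-r^2=n^2-(n-r)^2$.

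Next, I translate the rank-metric hypothesis. Since $\cC$ has distance $\ge r+1$, every nonzero $M\in\cC$ has $\rank M\ge r+1$, so $\cC\cap V_r=\{0\}$. Being a linear subspace, $\cC$ is itself a closed irreducible subvariety of $\A^{n^2}$ with $\dim\cC$ (as a variety) equal to its $\F$-vector-space dimension. Since $0\in\cC\cap V_r$, the intersection is non-empty, so the affine dimension theorem gives
\[
    0=\dim(\cC\cap V_r)\ge \dim\cC+\dim V_r-n^2=\dim\cC-(n-r)^2.
\]
Rearranging yields $\dim\cC\le(n-r)^2$, as desired. Equivalently, if we had $\dim\cC>(n-r)^2$, the component of $\cC\cap V_r$ through the origin would be positive-dimensional, producing a nonzero element of $\cC$ of rank $\le r$ and contradicting the distance bound.

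There is no real obstacle, just two classical ingredients to cite carefully: the irreducibility and dimension of the generic determinantal variety (standard, e.g. via the parametrization $\mu$), and the affine dimension theorem. The latter is where algebraic closure is indispensable, since over non-closed fields one cannot rule out that the intersection $\cC\cap V_r$ is set-theoretically $\{0\}$ while the scheme-theoretic dimension bound would force a positive-dimensional component; only over $\Fc$ does positive dimension guarantee additional $\F$-points of $V_r\cap\cC$.
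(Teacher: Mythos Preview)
The paper does not give its own proof of this proposition: it is stated as a \texttt{propositionwp} (the paper's ``without proof'' environment) and simply attributed to Roth~\cite{Roth91}. So there is nothing in the paper to compare against directly.

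Your argument is correct and is the standard algebraic-geometric proof of this bound. The two ingredients---that the determinantal variety $V_r\subseteq\A^{n^2}$ is irreducible of dimension $n^2-(n-r)^2$, and the affine dimension theorem over an algebraically closed field---are exactly what is needed, and your application of them is clean. The parametrization $\mu:(A,B)\mapsto AB$ with generic fiber a $\GL_r$-torsor is the right way to compute $\dim V_r$, and the observation that $\cC\cap V_r=\{0\}$ forces $\dim\cC+\dim V_r\le n^2$ is precisely the content of the bound. Your closing remark about why algebraic closure is essential is also on point: over a non-closed field the scheme-theoretic intersection could still be zero-dimensional in $\F$-points even when the dimension count predicts more, which is why Gabidulin codes over $\F_q$ can beat this bound.
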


Thus, while Gabidulin codes are no longer applicable in this regime, a different construction was given by Roth~\cite{Roth91} that works over any large field and meets this bound.

\begin{proposition}[Roth~\cite{Roth91}]\label{res:rank-metric_code_roth}
	Let $\F$ be a field and $m\ge n\ge 1$, $n\ge r\ge 0$ and $|\F|\ge n$. Then there is an explicit linear rank-metric code $\cR\subseteq\F^{n\times m}$ with distance $r+1$ and $\dim \cR=(n-r)(m-r)$.
\end{proposition}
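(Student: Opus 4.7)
My plan is to exhibit $\cR$ explicitly in the spirit of Roth's MRD-code construction from \cite{Roth91}, recast for this paper, and verify its three defining properties: explicitness, dimension $(n-r)(m-r)$, and minimum rank distance at least $r+1$. Since $|\F|\ge n$, fix $n$ distinct elements $\alpha_1,\ldots,\alpha_n\in\F$ to serve as evaluation points. To each tuple of coefficients $(c_{a,b})_{0\le a<n-r,\,0\le b<m-r}$ I would associate a matrix $M=\Phi(c)\in\F^{n\times m}$ whose $(i,j)$ entry is a prescribed expression that is linear in $(c_{a,b})$ and polynomial in $\alpha_i$ and in the column index $j$. Crucially, the construction uses only the $n$ elements $\alpha_1,\ldots,\alpha_n$ as field-evaluation points (we cannot afford $m$ of them since $|\F|$ may equal $n$), and the dependence on $j$ is arranged as a ``twist'' that cross-couples the two bidegree budgets $n-r-1$ and $m-r-1$. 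Then $\cR\eqdef\im\Phi$. Explicitness is immediate: $\Phi$ is computable in $\poly(n,m)$ field operations, and $\{\Phi(\ve_{a,b})\}_{a,b}$ is a basis.

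The dimension claim reduces to injectivity of $\Phi$: if $\Phi(c)=0$, then a Vandermonde-type argument using only the distinctness of $\alpha_1,\ldots,\alpha_n$ forces all $c_{a,b}=0$, so $\dim\cR=(n-r)(m-r)$. This step is essentially routine once the correct normalization of $\Phi$ is in place.

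The main obstacle and heart of the proof is the minimum rank distance. I would argue by contradiction: suppose $0\neq M\in\cR$ has rank $s\le r$, so $M=AB^\T$ with $A\in\F^{n\times s}$, $B\in\F^{m\times s}$ of full column rank. Substituting this factorization into $M=\Phi(c)$ and unfolding the defining polynomial expression yields a system of polynomial identities linking the entries of $A$, $B$, and $c$. Exploiting $s\le r$ together with the bidegree bounds $\deg_x<n-r$ and $\deg_y<m-r$, one extracts an auxiliary univariate polynomial of degree $<n$ that must vanish at all $n$ distinct points $\alpha_1,\ldots,\alpha_n$; it must therefore be identically zero, which unwinds to $c=0$ and hence $M=0$, contradicting $M\ne 0$.

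The delicate part is the bidegree bookkeeping: the combined degree of the extracted auxiliary polynomial must stay strictly below $n$, and this is exactly what Roth's twist is engineered to guarantee. A naive Reed-Solomon-style construction would instead require $m$ distinct evaluation nodes and hence $|\F|\ge m$, exceeding our hypothesis whenever $m>n$; the whole point of Roth's construction is to reuse the $n$ points $\alpha_i$ along the long axis of the matrix while still forbidding low-rank codewords. Alternatively, one may phrase the entire argument through the equivalence of \autoref{res:rank-metric_to_lossless-two-src}, constructing a bilinear lossless $(r,r,0)$-two-source rank condenser of output length $r(n+m-r)$ directly; this is the form in which the result is applied in \autoref{res:two-src_constr}.
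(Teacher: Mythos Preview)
Your proposal has a genuine gap: you never actually specify the map $\Phi$. You list properties it should have (linear in the $c_{a,b}$, polynomial in $\alpha_i$ and in $j$, uses only $n$ evaluation points, carries a ``twist''), but no formula is ever written down. The distance argument is likewise only a sketch: you assert that from a factorization $M=AB^\T$ one can ``extract an auxiliary univariate polynomial of degree $<n$'' vanishing at all $\alpha_i$, but you give no indication of how this extraction is performed or why the degree stays below $n$. At the crucial moment you defer to ``Roth's twist is engineered to guarantee'' precisely the inequality that constitutes the proof. As written, this is a plan to look up Roth's construction, not a proof of it.

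The paper's argument is quite different from what you outline and does not use bivariate polynomial evaluation at all. Index $M\in\F^{\zr{n}\times\zr{m}}$ from zero and call the \emph{$k$-diagonal} the vector $M^{(k)}=(M_{i,j})_{i+j=k}$, of length $\ell_k\le n$. Since $|\F|\ge n$, for each $\ell\le n$ there is an explicit $[\ell,\max\{0,\ell-r\},r{+}1]_\F$ MDS code $\cC_\ell$ (Vandermonde parity checks). The code $\cR$ is defined by requiring each diagonal $M^{(k)}$ to lie in $\cC_{\ell_k}$. The dimension is then $\sum_k \max\{0,\ell_k-r\}$, which a short telescoping computation shows equals $(n-r)(m-r)$. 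For the distance, suppose $M\ne 0$ and let $k$ be minimal with $M^{(k)}\ne 0$. The rows of $M$ that contribute a nonzero entry to $M^{(k)}$ are in echelon form (all earlier diagonals vanish), hence linearly independent; thus $\rank M$ is at least the Hamming weight of $M^{(k)}$, which is $\ge r+1$ by the MDS property. This is the entire argument, and the only place $|\F|\ge n$ enters is the existence of the MDS codes of length up to $n$.
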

\begin{proof}
	For a matrix $M\in\F^{\zr{n}\times\zr{m}}$ define the \emph{$k$-diagonal $M^{(k)}$ of $M$} to be those entries $(M_{i,j})_{i+j=k}$ where we do the addition in the integers, recalling that this notation means we index $M$ from 0.  Thus, for $0\le k<n$ there are $k+1$ entries in the $k$-diagonal of $M$, for $n\le k<m$ the diagonal $M^{(k)}$ has $n$ entries, and for $m\le k<n+m-1$ the diagonal $M^{(k)}$ has $(n+m-1)-k$ entries.

	Now recall that as $|\F|\ge n$ there is an explicit\footnote{That is, taking $d=r+1$, if $\alpha_1,\ldots,\alpha_\ell\in\F$ are distinct then $H\in\F^{\zr{d}\times \ell}$ with $H_{i,j}\eqdef \alpha_j^i$ is a Vandermonde matrix with every $k\times k$ minor of full-rank.  Thus, $\ker H\subseteq\F^\ell$ has no vectors of sparsity $\le d-1$ so that $\rspn(H)^\perp$ defines the desired code.}
	linear code $\cC_\ell\eqdef[\ell,k,r+1]_\F$ with $k=\ell-(r+1)+1$ for any $r\le \ell\le n$.  For $1\le\ell\le r$ define $\cC_\ell$ be the single codeword $\cC_\ell=\{0\}$ so that $\cC_\ell$ has distance $\ge r+1$ (trivially).

	Now take $\cR\subseteq\F^{n\times m}$ to be
	\[
		\cR\eqdef \{M \in \F^{n \times m} \st \forall 0\le k<n+m-1, M^{(k)}\in \cC_{|M^{(k)}|}\}
		\;.
	\]
	Thus, it follows that for any matrix $M\in\cR$ that all non-zero diagonals $M^{(k)}$ have at least $r+1$ non-zero entries.  

	Note that $\dim \cC_\ell=\max\{0,\ell-r\}$ for all $1\le\ell\le n$.  Thus, let us now count the dimension of $\cR$ as measured from the full-dimension.
	\begin{align*}
		nm&-\dim \cR\\
		&=\sum_{0\le k<n+m-1} |M^{(k)}|-\dim\cC_{|M^{(k)}|}\\
		&=\sum_{0\le k<n+m-1} |M^{(k)}|-\max\{0,|M^{(k)}|-r\}\\
		&=\sum_{0\le k<n+m-1} \min\{|M^{(k)}|,r\}\\
		&=\sum_{0\le k <r} \min\{|M^{(k)}|,r\}
		+\sum_{r\le k<(n+m-1)-r} \min\{|M^{(k)}|,r\}
		+\sum_{(n+m-1)-r\le k<n+m-1} \min\{|M^{(k)}|,r\}\\
		&=\sum_{0\le k< r} (k+1)
		+\sum_{r\le k<(n+m-1)-r} r
		+\sum_{(n+m-1)-r\le k<n+m-1} (n+m-1)-k\\
		&=\binom{r+1}{2}+r(n+m-1-2r)+\binom{r+1}{2}\\
		&=r(n+m-r)
		\;.
	\end{align*}
	Thus, $\dim \cR=nm-r(n+m-r)=(n-r)(m-r)$ as desired.

	Now consider a matrix $M$ with rank $\le r$.  Consider the first $k$ where the diagonal $M^{(k)}$ is non-zero.  Note that the rows with non-zero entries in this diagonal $M^{(k)}$ must be linearly independent as they form a triangular system.  Thus, $M^{(k)}$ is $r$-sparse, from which it follows that $M\notin\cR$.  Thus $\cR$ is a distance $\ge r+1$ rank-metric code. That $\cR$ is explicit is clear from construction.
\end{proof}

While Roth~\cite{Roth91} did not provide an algorithm to decode the above codes, such an algorithm was provided Forbes and Shpilka~\cite{ForbesShpilka12}.  Perhaps more relevant for this paper is that Forbes and Shpilka~\cite{ForbesShpilka12} realized that when the maximum distance separable (MDS) codes $\cC_\ell$ used in the above proof are instantiated as dual Reed-Solomon codes, one can show that the above codes actually correspond to evaluation-based codes of bivariate polynomials.  

\begin{propositionwp}[Forbes and Shpilka~\cite{ForbesShpilka12}]
	Let $\F$ be a field and $m\ge n\ge 1$ and $n\ge r\ge 0$. Let $\omega\in\F$ be an element with multiplicative order $\ge n$.  Identifying matrices $\F^{n\times m}$ with low-degree bivariate polynomials $\F[x,y]^{<n,<m}$, define
	\[
		\cC\eqdef\{ f(x,y)\st f\in\F[x,y]^{<n,<m}, \forall i\in\zr{r}, f(x,\omega^i x)=0\}
		\;.
	\]
	Then $\cC$ is an explicit linear rank-metric code with distance $r+1$ and dimension $(n-r)(m-r)$.
\end{propositionwp}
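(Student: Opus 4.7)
The plan is to identify polynomials $f \in \F[x,y]^{<n,<m}$ with matrices $M \in \F^{\zr{n} \times \zr{m}}$ via $f(x,y) = \sum_{a,b} M_{a,b} x^a y^b$, and then translate the constraints $f(x, \omega^i x) = 0$ into constraints on the diagonals $M^{(k)}$ of $M$, reducing the analysis to essentially the diagonal-based argument already carried out in the proof of \autoref{res:rank-metric_code_roth}. Expanding gives
\[
	f(x, \lambda x) = \sum_{k=0}^{n+m-2} x^k P_k(\lambda), \qquad P_k(\lambda) \eqdef \sum_{a+b=k} M_{a,b} \lambda^b,
\]
so $f(x, \omega^i x) = 0$ is equivalent to $P_k(\omega^i) = 0$ for every $k$. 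The coefficient vector of $P_k$ is exactly the $k$-diagonal $M^{(k)}$, with monomial support on an interval of length $\ell_k \eqdef |M^{(k)}| \le n$. Since $\omega$ has multiplicative order $\ge n$, the powers $\omega^b$ indexing this support are distinct, so the linear system imposed by $P_k(\omega^i) = 0$ for $i \in \zr{r}$ on the $\ell_k$ coefficients of $M^{(k)}$ is a Vandermonde-type system of rank exactly $\min(r, \ell_k)$.

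For the dimension claim, this means the total number of independent linear constraints on $M$ is $\sum_{k=0}^{n+m-2} \min(r, \ell_k) = r(n+m-r)$, where the last equality is the very computation carried out in the proof of \autoref{res:rank-metric_code_roth}. Hence
\[
	\dim \cC = nm - r(n+m-r) = (n-r)(m-r).
\]
Linearity is immediate from the description of $\cC$ as the kernel of the evaluation-at-$(x,\omega^i x)$ maps, and explicitness follows because a basis is obtained by solving this linear system.

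For the distance claim, I would take any nonzero $M \in \cC$ and let $k^*$ be the least $k$ with $M^{(k)} \ne 0$. If $s$ denotes the number of nonzero entries on $M^{(k^*)}$, then the restriction of the Vandermonde-type system for $P_{k^*}$ to those $s$ columns is itself Vandermonde on $s$ distinct nodes and hence has rank $\min(r, s)$; since $P_{k^*}$ is nonzero on these coordinates, we must have $s \ge r+1$. Moreover, the $s$ rows of $M$ containing these nonzero diagonal entries are linearly independent, because minimality of $k^*$ forces the leftmost nonzero entry of each such row to lie on diagonal $k^*$, and these leftmost entries occur at strictly decreasing column indices as the row index increases. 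Thus $\rank M \ge s \ge r+1$, giving the distance bound.

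The main technical point powering both halves is the sparse-polynomial vanishing lemma: a polynomial whose monomial support is contained in an interval of distinct powers $\omega^{b_{\min}}, \ldots, \omega^{b_{\max}}$ cannot vanish at $r$ distinct powers $\omega^0, \ldots, \omega^{r-1}$ unless at most $r$ of its coefficients are nonzero. This is exactly the statement that a Vandermonde matrix on distinct nodes has full column rank, which requires $\omega$ to have order exceeding the support length, matching the hypothesis. A minor bookkeeping issue is the identity $\sum_k \min(r, \ell_k) = r(n+m-r)$; I intend to cite \autoref{res:rank-metric_code_roth}'s proof for this rather than redo the three-range case analysis.
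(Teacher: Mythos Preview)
Your proposal is correct. The paper does not actually prove this proposition (it is stated as a \texttt{propositionwp}, i.e.\ without proof), but immediately afterward remarks that ``one can prove the above by showing that they are an instantiation of Roth's~\cite{Roth91} code'' --- and this is precisely what you do: you unwind the substitution $y=\omega^i x$ into per-diagonal constraints $P_k(\omega^i)=0$, recognize each as a Vandermonde system on the $k$-diagonal $M^{(k)}$, and thereby identify $\cC$ with Roth's construction (\autoref{res:rank-metric_code_roth}) where the MDS codes $\cC_\ell$ are taken to be dual Reed-Solomon codes at the nodes $1,\omega,\ldots,\omega^{r-1}$. Both the dimension count and the triangular rank argument are then inherited directly from that proof, exactly as you indicate.

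The paper also mentions a second route, via the two-source condenser construction of \autoref{res:two-src_condense-tensor_pruned} together with the equivalence of \autoref{res:lossless-two-src_to_rank-metric}; your approach is the more self-contained of the two. One minor wording issue: your final paragraph's ``sparse-polynomial vanishing lemma'' is phrased backwards (you want that a nonzero $P_k$ vanishing at $r$ points forces \emph{more} than $r$ nonzero coefficients, not at most $r$), but the actual argument you give in the distance paragraph is the correct contrapositive, so this is cosmetic.
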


While one can prove the above by showing that they are an instantiation of Roth's~\cite{Roth91} code, Forbes and Shpilka~\cite{ForbesShpilka12} also gave a proof that we presented as the construction of a lossless two-source condenser in \autoref{res:two-src_condense-tensor_pruned} (and actually that construction can be further pruned to match the above parameters).

\section{Existential Arguments}\label{sec:prob-method}

In this section, we give arguments via the probabilistic method to show that the relevant pseudorandom objects exist.  These arguments are completely standard, but we include them to demonstrate the quantitative gaps between our explicit constructions and the existential constructions.

\subsection{Probabilistic Tools}

We begin by establishing some lemmas that detail probabilistic estimates for when a matrix is low-rank.  Some lemmas are standard (for which proofs can be found for example in the thesis of Forbes~\cite{Forbes14}) and some of which we lack a good reference and thus include for completeness. The first lemma shows how random linear maps affect full rank matrices.

\begin{lemmawp}\label{res:rand-times-full-rank}
	Let $\F$ be a finite field and $M\in\F^{n\times r}$ be of rank $r$. Let $\rE$ be a random variable uniformly distributed over matrices in $\F^{m\times n}$.  Then $\rE\cdot M$ is uniformly distributed over $\F^{m\times r}$.
\end{lemmawp}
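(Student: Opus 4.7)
The plan is to reduce uniformity of $\rE \cdot M$ to the fact that multiplication by an invertible matrix is a bijection. Since $M \in \F^{n \times r}$ has full column rank $r$, its columns can be extended to a basis of $\F^n$, giving an invertible matrix $\widetilde{M} \in \F^{n \times n}$ whose first $r$ columns are exactly those of $M$.

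From here the argument is essentially a one-liner. The map $\rE \mapsto \rE \widetilde{M}$ is an $\F$-linear bijection on $\F^{m \times n}$ (its inverse is right-multiplication by $\widetilde{M}^{-1}$), so if $\rE$ is uniform on $\F^{m \times n}$ then $\rE \widetilde{M}$ is also uniform on $\F^{m \times n}$. Since $\rE \cdot M$ is obtained by projecting $\rE \widetilde{M}$ onto its first $r$ columns, and the uniform distribution on $\F^{m \times n} \cong \F^{m \times r} \times \F^{m \times (n-r)}$ projects to the uniform distribution on $\F^{m \times r}$, the conclusion follows.

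There is no real obstacle here; the only mild decision is whether to argue row-by-row (noting that the rows of $\rE$ are independent and uniform, and that for a uniform row $\re \in \F^{1 \times n}$, the vector $\re M$ is uniform on $\F^{1 \times r}$ because $M^{\T}$ is surjective with all fibers of size $|\F|^{n-r}$) or to present the global bijection argument above. I would go with the global argument since it avoids splitting into rows and directly exhibits a measure-preserving bijection.
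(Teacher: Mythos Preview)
Your argument is correct. The paper does not actually prove this lemma: it is stated in a ``without proof'' environment and referred to as standard, with a pointer to Forbes~\cite{Forbes14} for details. Your extension-to-an-invertible-matrix argument is exactly the kind of one-line justification one would expect for such a fact, so there is nothing to compare against.
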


We now give an simple estimate for the number of low-dimensional subspaces (which is notably smaller than the number of low-rank matrices).

\begin{lemmawp}\label{res:q-binom-ub-crude-sym}
	Let $\F_q$ be a finite field and for $n\ge r$, the number of dimension $r$ subspaces of $\F_q^n$ is $\le \min\{q^{rn},q^{(n-r)n}\}$. Further, this inequality is strict for $r\in (0,n)$.
\end{lemmawp}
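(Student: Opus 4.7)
The plan is to use the standard count of the Gaussian binomial coefficient $\binom{n}{r}_q$, which equals the number of $r$-dimensional subspaces of $\F_q^n$, and bound it by counting ordered bases.

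First, I would count the number of ordered $r$-tuples of linearly independent vectors in $\F_q^n$: the first vector is any of $q^n - 1$ nonzero vectors, the second is any of $q^n - q$ vectors outside the span of the first, and so on, yielding $\prod_{i=0}^{r-1}(q^n - q^i)$ such tuples. Since each factor satisfies $q^n - q^i < q^n$ (with the inequality being strict as long as $r \ge 1$), this product is strictly less than $q^{rn}$ whenever $r \ge 1$. Next, each $r$-dimensional subspace $V \subseteq \F_q^n$ has exactly $\prod_{i=0}^{r-1}(q^r - q^i) \ge 1$ ordered bases (by the same counting, now inside $V \cong \F_q^r$). Dividing, the number of $r$-dimensional subspaces is
\[
\binom{n}{r}_q = \prod_{i=0}^{r-1} \frac{q^n - q^i}{q^r - q^i} \le \prod_{i=0}^{r-1}(q^n - q^i) < q^{rn}
\]
whenever $1 \le r \le n$, which gives the first bound with strict inequality for $r \in (0,n)$.

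For the second bound $q^{(n-r)n}$, I would invoke the duality $\binom{n}{r}_q = \binom{n}{n-r}_q$ coming from the bijection $V \mapsto V^\perp$ between $r$-dimensional and $(n-r)$-dimensional subspaces of $\F_q^n$. Applying the first bound with $r$ replaced by $n-r$ yields $\binom{n}{n-r}_q \le q^{(n-r)n}$, strictly when $0 < n - r$, i.e., $r < n$. Combining both bounds gives the desired $\min\{q^{rn}, q^{(n-r)n}\}$ upper bound with strict inequality in the open range.

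There is no real obstacle here; the argument is elementary counting. The only minor thing to double-check is that both endpoints $r = 0$ and $r = n$ correctly correspond to the non-strict case (there is exactly one such subspace, namely $\{0\}$ or $\F_q^n$, and the bound $q^0 = 1$ is tight).
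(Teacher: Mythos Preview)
Your proof is correct and complete. The paper itself states this lemma without proof (it is declared as a \texttt{lemmawp}, the paper's environment for lemmas stated without proof), treating it as a standard fact, so there is nothing to compare against; your ordered-basis count together with the duality $\binom{n}{r}_q = \binom{n}{n-r}_q$ is exactly the standard elementary argument one would expect here.
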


The above estimate is good enough for many purposes, but we will need a more refined version which we now develop. We first give an estimate for the probability that a random square matrix is full rank.

\begin{lemmawp}[{see for example Forbes~\cite[Lemma D.2.4]{Forbes14}}]\label{res:count-rankk-matrices-asymp}
	Let $\F_q$ be a finite field and $n\ge r\ge 1$. The probability a random matrix in $\F_q^{r\times r}$ has rank $r$ is $>\left(1-\frac{1}{q}\right)^{\frac{q}{q-1}}$.
\end{lemmawp}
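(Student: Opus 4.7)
The plan is to compute the probability exactly and then compare to the target bound via the Taylor expansion of $\ln(1-x)$. Building a uniformly random matrix in $\F_q^{r \times r}$ column by column, the $(i+1)$-st column is linearly independent of the first $i$ columns with probability $1 - q^{i-r}$, so the probability that all $r$ columns are independent is $P_r \eqdef \prod_{j=1}^r (1 - q^{-j})$. Taking logarithms, the desired inequality $P_r > (1-1/q)^{q/(q-1)}$ becomes
\[
-\sum_{j=1}^r \ln(1-q^{-j}) < -\frac{q}{q-1}\ln(1-1/q).
\]

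Next, I would expand both sides using $-\ln(1-x) = \sum_{k \ge 1} x^k/k$ and swap the order of summation (legitimate since all terms are positive). The left side becomes $\sum_{k \ge 1}\frac{1}{k}\sum_{j=1}^r q^{-jk}$, and the right side becomes $\sum_{k \ge 1}\frac{1}{k(q-1)q^{k-1}}$. It therefore suffices to prove the term-by-term inequality $\sum_{j=1}^r q^{-jk} \le \frac{1}{(q-1)q^{k-1}}$ for every $k \ge 1$, with strict inequality for at least one $k$. Bounding the left side by the full geometric series gives $\sum_{j=1}^r q^{-jk} \le \frac{1}{q^k-1}$, and the remaining algebraic check $\frac{1}{q^k-1} \le \frac{1}{(q-1)q^{k-1}}$ reduces to $q^{k-1}(q-1) \le q^k-1$, i.e., $q^{k-1} \ge 1$, which holds for all $k \ge 1$.

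For strictness, I would observe that for $k=1$ the truncation $\sum_{j=1}^r q^{-j} < \sum_{j=1}^\infty q^{-j} = \frac{1}{q-1}$ is already strict because $r$ is finite, so the strict inequality survives when summing over $k$. There is no real technical obstacle here; the only conceptual point is recognizing that the exponent $q/(q-1)$ in the stated bound is chosen precisely so that the $k=1$ contributions agree in the infinite limit, which is what makes the term-by-term comparison clean and tight.
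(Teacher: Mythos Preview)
Your proof is correct. The paper does not actually supply a proof of this lemma; it states it as a cited fact (the \texttt{lemmawp} environment is the paper's ``no-proof'' style) and refers to Forbes's thesis for the argument, so there is no in-paper proof to compare against. Your approach---computing $P_r=\prod_{j=1}^r(1-q^{-j})$ exactly, expanding $-\ln(1-x)$ as a power series, and comparing term-by-term in $k$ via the geometric-series bound $\sum_{j=1}^r q^{-jk}\le 1/(q^k-1)\le 1/((q-1)q^{k-1})$---is clean and complete, and the strictness argument at $k=1$ is valid.
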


We now refine this to more amenable form using that $1+x\le \e^x$.

\begin{corollarywp}
	Let $\F_q$ be a finite field and $n\ge r\ge 1$. The probability a random matrix in $\F_q^{r\times r}$ has rank $r$ is $>\e^{-\frac{q}{(q-1)^2}}$.
\end{corollarywp}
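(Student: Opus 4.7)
The plan is to derive this corollary from the immediately preceding lemma by applying the stated inequality $1+x\le\e^x$ (in strict form for $x\ne 0$) to the base of the exponential bound. First I would invoke the previous lemma to reduce the problem to showing the inequality
\[
	\left(1-\frac{1}{q}\right)^{\frac{q}{q-1}} > \e^{-\frac{q}{(q-1)^2}}
	\;,
\]
for every $q\ge 2$. Since both sides are positive, this is equivalent (by taking reciprocals) to
\[
	\left(\frac{q}{q-1}\right)^{\frac{q}{q-1}} < \e^{\frac{q}{(q-1)^2}}
	\;.
\]

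Next I would rewrite the base as $\frac{q}{q-1} = 1 + \frac{1}{q-1}$ and apply the inequality $1+x<\e^x$ with $x=\frac{1}{q-1}>0$ (the inequality is strict for $x\neq 0$). This yields
\[
	1+\frac{1}{q-1} < \e^{\frac{1}{q-1}}
	\;.
\]
Raising both sides to the positive power $\frac{q}{q-1}$ (which preserves the strict inequality, as both sides are positive) gives
\[
	\left(1+\frac{1}{q-1}\right)^{\frac{q}{q-1}}
	<
	\e^{\frac{1}{q-1}\cdot\frac{q}{q-1}}
	=\e^{\frac{q}{(q-1)^2}}
	\;,
\]
which is exactly the desired reciprocal inequality.

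The argument is entirely routine; there is no real obstacle since the exponent $\frac{q}{q-1}$ appearing in the previous lemma is precisely calibrated so that after one application of $1+x<\e^x$ we land on the target exponent $\frac{q}{(q-1)^2}$. The only thing to be careful about is using the strict form of $1+x<\e^x$ (valid for $x\ne 0$) so as to preserve the strict inequality stated in the corollary.
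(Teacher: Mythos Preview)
Your proposal is correct and matches the paper's intended approach: the corollary is stated without proof, with only the hint ``using that $1+x\le\e^x$,'' and your argument is precisely the natural way to carry that out, applying the strict inequality $1+\tfrac{1}{q-1}<\e^{1/(q-1)}$ and raising to the power $\tfrac{q}{q-1}$.
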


Turning this probabilistic estimate into a counting result gives the following.

\begin{corollarywp}\label{res:counting-full-rank-matrices}
	Let $\F_q$ be a finite field and $r\ge 1$. The number of matrices $\F_q^{r\times r}$ with rank $r$ is $>\e^{-\frac{q}{(q-1)^2}} q^{r^2}$.
\end{corollarywp}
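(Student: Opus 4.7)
The plan is essentially a one-line translation from probability to counting. The preceding corollary gives a lower bound on the probability that a uniformly random matrix in $\F_q^{r\times r}$ is full rank, and since the uniform distribution places mass exactly $1/q^{r^2}$ on each element of $\F_q^{r\times r}$, the number of full-rank matrices is precisely $q^{r^2}$ times that probability.

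Concretely, I would first observe that $|\F_q^{r\times r}| = q^{r^2}$. Letting $\rM$ denote a uniformly random element of $\F_q^{r\times r}$, the number of full-rank matrices in $\F_q^{r\times r}$ equals $q^{r^2} \cdot \Pr[\rank \rM = r]$. Applying the preceding corollary gives $\Pr[\rank \rM = r] > \e^{-\frac{q}{(q-1)^2}}$, and multiplying through yields the claimed bound $> \e^{-\frac{q}{(q-1)^2}} q^{r^2}$.

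There is no real obstacle here, since the corollary is a direct restatement of the previous probability bound. The only thing worth double-checking is that the strict inequality is preserved under multiplication by the positive constant $q^{r^2}$, which it obviously is. Thus the proof is a two-sentence deduction and no additional ideas are needed.
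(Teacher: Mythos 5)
Your proof is correct and is exactly the intended deduction: the paper states this as a corollary without proof, precisely because it is the immediate translation from the probability bound to a count via $|\F_q^{r\times r}| = q^{r^2}$.
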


We now use this matrix counting result to count subspaces.

\begin{lemma}\label{res:q-binom-ub-sym}
	Let $\F_q$ be a finite field and $n\ge r\ge 1$. The number of $r$-dimensional subspaces of $\F_q^n$ is $<\e^{\frac{q}{(q-1)^2}}\cdot q^{r(n-r)}$.
\end{lemma}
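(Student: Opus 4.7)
The plan is to count $r$-dimensional subspaces of $\F_q^n$ by a standard orbit-counting (double-counting) argument. I will identify each $r$-dimensional subspace $V \subseteq \F_q^n$ with its set of ordered bases, where an ordered basis corresponds to a matrix $M \in \F_q^{n\times r}$ with $\rank M = r$ and $\cspn M = V$. The map $M \mapsto \cspn M$ from full-rank $n\times r$ matrices to $r$-dimensional subspaces is surjective, and two matrices $M, M'$ have the same column span iff $M' = MP$ for some $P \in \GL_r(\F_q)$. Hence each fiber has size exactly $|\GL_r(\F_q)|$, which equals the number of rank-$r$ matrices in $\F_q^{r\times r}$.

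Therefore the number of $r$-dimensional subspaces of $\F_q^n$ is
\[
	\frac{|\{M \in \F_q^{n\times r} : \rank M = r\}|}{|\{P \in \F_q^{r\times r} : \rank P = r\}|}
	\;.
\]
I will upper bound the numerator trivially by $q^{rn}$ (the total number of $n\times r$ matrices over $\F_q$) and lower bound the denominator using \autoref{res:counting-full-rank-matrices}, which gives strict inequality $|\GL_r(\F_q)| > \e^{-\frac{q}{(q-1)^2}} q^{r^2}$. Combining these two bounds yields
\[
	\#\{r\text{-dim subspaces of }\F_q^n\}
	< \frac{q^{rn}}{\e^{-\frac{q}{(q-1)^2}} q^{r^2}}
	= \e^{\frac{q}{(q-1)^2}} \cdot q^{r(n-r)}
	\;,
\]
as desired, with the strictness coming for free from the strict inequality in \autoref{res:counting-full-rank-matrices}.

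There is no real obstacle here — this is just the standard formula for the Gaussian binomial $\binom{n}{r}_q$ combined with the already-established asymptotic estimate on $|\GL_r(\F_q)|$. The only thing to be careful about is that the numerator bound $q^{rn}$ is not strict (one could tighten it, but that is unnecessary since the denominator already supplies a strict inequality), and that the orbit-size computation uses only that $\GL_r(\F_q)$ acts freely on the set of full-rank $n\times r$ matrices by right multiplication, which is immediate because a full-rank $M$ has a left inverse.
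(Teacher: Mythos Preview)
Your proof is correct and is essentially identical to the paper's own argument: the paper also considers the surjection $M\mapsto\cspn(M)$ from rank-$r$ matrices in $\F_q^{n\times r}$ to $r$-dimensional subspaces, bounds each fiber below by $\e^{-\frac{q}{(q-1)^2}}q^{r^2}$ via \autoref{res:counting-full-rank-matrices}, and then double-counts using the trivial upper bound $q^{rn}$ on the number of such matrices.
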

\begin{proof}
	Consider the map from rank $r$ matrices in $\F_q^{n\times r}$ to $r$-dimensional subspaces of $\F_q^n$ defined by $M\mapsto\cspn(M)$.  This map is well-defined and surjective. Consider then the pre-image of a rank $r$ subspace $V\subseteq\F_q^n$. Each of these matrices is related by an invertible change of basis given by an invertible $r\times r$ matrix.  Thus, the pre-images of this map are of size $>\e^{-\frac{q}{(q-1)^2}} q^{r^2}$ by \autoref{res:counting-full-rank-matrices}. The claim follows by applying double-counting using that there are $q^{rn}$ matrices in $\F_q^{n\times r}$.
\end{proof}

We remark that the number of subspaces is always at least $q^{r(n-r)}$, so the above bound is asymptotically correct for large $q$. We now bound the probability a random matrix is low-rank.

\begin{lemma}\label{res:prob-matrix-low-rank}
	Let $\F_q$ be a finite field and $n,m\ge r\ge 1$.  Let $\rM$ be a random variable uniformly distributed over matrices in $\F_q^{n\times m}$.  Then,
	\[
		\Pr[\rank \rM\le r]<\min\{q^{-(nm-r(n+m))},q^{-(n-m)(m-r)},\e^{\frac{q}{(q-1)^2}}q^{-(n-r)(m-r)}\}
		\;.
	\]
\end{lemma}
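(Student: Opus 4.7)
The plan is to apply a union bound in two symmetric ways (over column spans and over row spans of $\rM$) and to combine each with two different estimates on the number of low-dimensional subspaces of a given vector space, yielding all three bounds.

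The key observation is that $\rank \rM \le r$ if and only if every column of $\rM$ lies in some common $r$-dimensional subspace $V \subseteq \F_q^n$ (one can always pad the column span up to dimension $r$). Since the columns of $\rM$ are independent and uniform in $\F_q^n$, for each fixed $V$ the probability that all columns lie in $V$ is exactly $q^{-m(n-r)}$. Writing $N_{n,r}$ for the number of $r$-dimensional subspaces of $\F_q^n$, a union bound gives
\[
\Pr[\rank \rM \le r] \le N_{n,r} \cdot q^{-m(n-r)}.
\]
Plugging in the crude estimate $N_{n,r} \le q^{rn}$ from \autoref{res:q-binom-ub-crude-sym} yields the first bound $q^{-(nm - r(n+m))}$, while plugging in the tighter estimate $N_{n,r} < \e^{q/(q-1)^2}\, q^{r(n-r)}$ from \autoref{res:q-binom-ub-sym} yields the third bound $\e^{q/(q-1)^2}\, q^{-(n-r)(m-r)}$.

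For the second bound, I would apply the same union-bound argument to $\rM^\T$, which is uniformly distributed over $\F_q^{m \times n}$ and has the same rank as $\rM$. This gives
\[
\Pr[\rank \rM \le r] = \Pr[\rank \rM^\T \le r] \le N_{m,r} \cdot q^{-n(m-r)},
\]
and using the other side of the crude estimate, $N_{m,r} \le q^{(m-r)m}$, produces $q^{-(n-m)(m-r)}$. Taking the minimum of the three inequalities proves the claim.

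There is no real obstacle—the argument is a routine union bound, and the only care needed is in the strictness of the inequalities. In each application above, strictness is inherited from the strict inequality in the counting lemma when $r \in (0, n)$ or $r \in (0, m)$ respectively; at the extremal values where strictness of the counting bound fails, the corresponding exponent $-(n-r)(m-r)$, $-(n-m)(m-r)$, or $-(nm-r(n+m))$ becomes non-positive so that the probability bound exceeds $1$ and holds strictly for trivial reasons.
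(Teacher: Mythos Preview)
Your proposal is correct and follows essentially the same approach as the paper: a union bound over $r$-dimensional subspaces containing the (column or row) span of $\rM$, combined with the crude and refined subspace counts. The only cosmetic difference is that the paper runs a single union bound over subspaces $V\subseteq\F_q^m$ (the row-span side) and extracts all three bounds from it by using both halves of \autoref{res:q-binom-ub-crude-sym} (namely $q^{rm}$ and $q^{(m-r)m}$) together with \autoref{res:q-binom-ub-sym}, whereas you split into a column-span union bound for the first and third estimates and a separate row-span (transpose) union bound for the second.
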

\begin{proof}
	A matrix $\rM$ is rank $\le r$ iff there exists some $V\subseteq\F_q^m$ with $\dim V=r$ such that $\cspn \rM\subseteq V$.  Equivalently, there exists some $V^{\perp}\subseteq\F_q^m$ with $\dim V^\perp=m-r$ such that $(\cspn \rM)^\perp \supseteq V^\perp$.  Taking $A\in\F^{m\times(m-r)}$ to be some basis for $V^\perp$, so that $\rank A=m-r$, this is equivalent to saying that $\rM A=0$.  Thus, we have that
	\begin{align*}
		\Pr[\rank\rM\le r]
		&=\Pr[\exists V\st \cspn A=V^\perp, \rM A=0]\\
		&\le \sum_V \Pr[\rM A=0]
		\intertext{By \autoref{res:rand-times-full-rank} the random variable $\rM A$ is uniformly distributed over $\F_q^{n\times (m-r)}$,}
		&= \sum_V q^{-n(m-r)}
		\intertext{Counting such subspaces $V$ by \autoref{res:q-binom-ub-crude-sym}, using that $0<r<m$,}
		&<q^{m\cdot \min\{r,m-r\}} q^{-n(m-r)}\\
		&\le\min\{q^{-(nm-r(n+m))},q^{-(n-m)(m-r)}\}
		\;.
	\end{align*}
	Alternatively, if we count subspaces with the more refined estimate of \autoref{res:q-binom-ub-sym}, we get that
	\begin{align*}
		\Pr[\rank\rM\le r]
		&< \e^{\frac{q}{(q-1)^2}}q^{r(m-r)} \cdot q^{-n(m-r)}\\
		&=\e^{\frac{q}{(q-1)^2}} q^{-(m-r)(n-r)}
		\;.
		\qedhere
	\end{align*}
\end{proof}

Finally, as the quantity will come up more than once, we give the following approximation. It follows since for $q\ge 3$ we have that $\ln q\ge 1$ and for $q=4$ that $\frac{2q}{(q-1)^2}=\nicefrac{8}{9}<1$.

\begin{lemmawp}\label{res:approx-exp-q}
	For $q\ge 4$,
	\[
		\frac{2q}{(q-1)^2\ln q}\le 1
		\;.
		\qedhere
	\]
\end{lemmawp}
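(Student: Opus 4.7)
The plan is to separate the inequality into the two factors $\frac{2q}{(q-1)^2}$ and $\frac{1}{\ln q}$, and bound each. First I would observe that $\ln q$ is monotonically increasing in $q$, so for $q \ge 3 > \e$ we have $\ln q \ge \ln \e = 1$, and hence $\frac{1}{\ln q} \le 1$ throughout the range $q \ge 4$. In particular, it suffices to show the stronger bound $\frac{2q}{(q-1)^2} \le 1$ for $q \ge 4$, which then implies the claim.

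Next I would verify that $f(q) \eqdef \frac{2q}{(q-1)^2}$ is monotonically decreasing for $q > 1$. A direct differentiation gives
\[
    f'(q) = \frac{2(q-1)^2 - 2q \cdot 2(q-1)}{(q-1)^4} = \frac{-2q - 2}{(q-1)^3}\;,
\]
whose numerator is negative and denominator positive for $q > 1$, so $f' < 0$ there. Therefore $f$ attains its maximum on $[4,\infty)$ at the left endpoint, where $f(4) = \frac{8}{(3)^2} = \nicefrac{8}{9} < 1$.

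Combining the two bounds gives $\frac{2q}{(q-1)^2 \ln q} \le \frac{2q}{(q-1)^2} \le \nicefrac{8}{9} < 1$ for all $q \ge 4$, as desired. There is no real obstacle here; the only mild point to double-check is that the monotonicity of $f$ starts early enough (it does, holding on all of $q > 1$), so that evaluating at $q = 4$ suffices to handle the entire range.
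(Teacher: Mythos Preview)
Your proof is correct and follows essentially the same approach as the paper: the paper's one-line justification also notes that $\ln q\ge 1$ for $q\ge 3$ and that $\frac{2q}{(q-1)^2}=\nicefrac{8}{9}<1$ at $q=4$, leaving the monotonicity of $\frac{2q}{(q-1)^2}$ implicit where you make it explicit via the derivative.
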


\subsection{Dimension Expanders}

We now turn to non-constructive existence of good dimension expanders. We first study when random matrices expand subspaces of a given dimension $r$ to a larger dimension $t$ (dimension expanders must expand all subspaces of dimension \emph{at most} $r$).

\begin{proposition}\label{res:dim-exp_prob-method=}
	Let $\F_q$ be a finite field. Let $n\ge t\ge r\ge 1$. Let $\rA_1,\ldots,\rA_d$ be random variables uniformly distributed over matrices in $\F_q^{n\times n}$.  Then with probability $>1-\nicefrac{1}{q^r}$, for any subspace $V\subseteq\F_q^n$ of dimension $r$ we have that
	\[
		\dim\sum_{i\in [d]} \rA_i(V) \ge t
		\;,
	\]
	assuming that
	\[
		d\ge\frac{t-1}{r}+\frac{n-r+1}{n-t+1}+\frac{2q}{(q-1)^2\ln q}\\
		\;.
	\]
	In particular, if $q\ge 4$ then it suffices for
	\[
		d\ge\frac{t-1}{r}+\frac{n-r+1}{n-t+1}+1
		\;.
	\]
\end{proposition}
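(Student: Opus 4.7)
The plan is a standard union-bound argument in the spirit of the probabilistic method. Fix a subspace $V \subseteq \F_q^n$ of dimension $r$ and choose any matrix $M \in \F_q^{n \times r}$ with $\cspn M = V$ (so $\rank M = r$). By \autoref{res:rand-times-full-rank}, each product $\rA_i M$ is uniformly distributed over $\F_q^{n \times r}$, and by independence of the $\rA_i$ the concatenated matrix $\rN_V \eqdef [\rA_1 M \mid \rA_2 M \mid \cdots \mid \rA_d M] \in \F_q^{n \times rd}$ is uniformly distributed over $\F_q^{n \times rd}$. The point is that $\dim \sum_i \rA_i(V) = \rank \rN_V$, so the event of interest is exactly $\rank \rN_V \le t-1$.

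Next, apply the third bound of \autoref{res:prob-matrix-low-rank} with parameters $(n, rd, t-1)$ in place of $(n, m, r)$ to get
\[
\Pr[\rank \rN_V \le t-1] < \e^{\frac{q}{(q-1)^2}}\, q^{-(n-t+1)(rd-t+1)}.
\]
Now I union bound over all dimension-$r$ subspaces $V \subseteq \F_q^n$, whose count is $< \e^{\frac{q}{(q-1)^2}} q^{r(n-r)}$ by \autoref{res:q-binom-ub-sym}. The total failure probability is at most
\[
\e^{\frac{2q}{(q-1)^2}}\, q^{r(n-r) - (n-t+1)(rd-t+1)},
\]
and I want this to be $< q^{-r}$.

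Taking $\log_q$, the requirement becomes
\[
(n-t+1)(rd-t+1) \ge r(n-r+1) + \frac{2q}{(q-1)^2 \ln q}.
\]
Dividing through by $r(n-t+1)$ (both positive since $r \ge 1$ and $t \le n$) rearranges to
\[
d \ge \frac{t-1}{r} + \frac{n-r+1}{n-t+1} + \frac{2q}{r(n-t+1)(q-1)^2 \ln q},
\]
and since $r, n-t+1 \ge 1$, it suffices to replace the last term by $\frac{2q}{(q-1)^2 \ln q}$, giving the stated bound. The final ``$q \ge 4$'' clause is immediate from \autoref{res:approx-exp-q}.

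There is no real obstacle here: every nontrivial estimate has already been packaged as a lemma in the preceding section, and the argument is just the standard ``count hypothetical failure configurations, bound the probability of each, union bound.'' The only minor care needed is in keeping the two $\e^{q/(q-1)^2}$ factors (one from subspace counting, one from the low-rank matrix estimate) tracked so that they can be absorbed into the term $\frac{2q}{(q-1)^2 \ln q}$ after taking logarithms.
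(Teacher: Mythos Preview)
Your proof is correct and follows essentially the same approach as the paper's own proof: form the block matrix $\rN_V$, observe it is uniform via \autoref{res:rand-times-full-rank}, bound its low-rank probability via the refined form of \autoref{res:prob-matrix-low-rank}, union bound over subspaces counted by \autoref{res:q-binom-ub-sym}, and simplify using \autoref{res:approx-exp-q}. The algebra and the handling of the two $\e^{q/(q-1)^2}$ factors match the paper line by line.
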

\begin{proof}
	Fix some subspace $V\subseteq\F_q^n$ of dimension $r$, and let $M\in\F_q^{n\times r}$ be a rank $r$ matrix with $\cspn M=V$.  Then we see that $\dim\sum_i \rA_i(V) \ge t$ iff the $\F_q^{n\times rd}$ block matrix
	\[
		\rN(V)
		\eqdef
		\begin{bmatrix}
			\rA_1 M | \cdots |\rA_d M
		\end{bmatrix}
	\]
	has rank $\ge t$.  As $M$ has full rank and the $\rA_i$ are uniformly random, it follows from \autoref{res:rand-times-full-rank} that $\rN(V)$ is uniformly random matrix over $\F^{n\times rd}$.  Thus, by the refined version of \autoref{res:prob-matrix-low-rank},
	\begin{align*}
		\Pr[\rank \rN(V) < t]
		&=\Pr[\rank \rN(V)\le t-1]\\
		&<\e^{\frac{q}{(q-1)^2}}q^{-(n-t+1)(rd-t+1)}
	\end{align*}
	Thus, applying a union bound,
	\begin{align*}
		\Pr[\exists V \st \rank \rN(V) < t]
		&\le \sum_V \Pr[\rank \rN(V)\le t-1]\\
		&<\sum_V\e^{\frac{q}{(q-1)^2}}q^{-(n-t+1)(rd-t+1)}
		\intertext{counting subspaces by \autoref{res:q-binom-ub-sym},}
		&\le \e^{\frac{q}{(q-1)^2}}q^{r(n-r)}\cdot \e^{\frac{q}{(q-1)^2}}q^{-(n-t+1)(rd-t+1)}
	\end{align*}
	This quantity is $\le q^{-r}$ iff
	\[
		(n-t+1)(rd-t+1)\ge r(n-r+1)+\frac{2q}{(q-1)^2\ln q}
		\;.
	\]
	Dividing by $r(n-t+1)$ on both sides, this is equivalent to
	\begin{align*}
		d&\ge \frac{t-1}{r}+\frac{n-r+1}{n-t+1}+\frac{2q}{r(n-t+1)(q-1)^2\ln q}
		\intertext{using that $r\ge 1$ and $n-t+1\ge 1$, it thus suffices for}
		d&\ge \frac{t-1}{r}+\frac{n-r+1}{n-t+1}+\frac{2q}{(q-1)^2\ln q}
		\;,
	\end{align*}
	as desired. Appealing to \autoref{res:approx-exp-q} yields the claim for $q\ge 4$.
\end{proof}

We now apply a union bound to the above to existentially obtain dimension expanders.

\begin{proposition}\label{res:dim-exp_prob-method}
	Let $\F_q$ be a finite field, $n\ge 1$, $\eps>0$ and $\alpha\in \R$ with $1\le\alpha<\nicefrac{1}{\eps}$. Then there exist a collection matrices $\cA=\{A_1,\ldots,A_d\}\subseteq\F^{n\times n}$ which is a $(\eps,\alpha)$-dimension expander of degree $d$ whenever
	\[
		d\ge \alpha+\frac{1}{1-\alpha\eps}+\frac{2q}{(q-1)^2\ln q}
		\;.
	\]
	In particular, if $q\ge 4$ then it suffices for
	\[
		d\ge \alpha+\frac{1}{1-\alpha\eps}+1
		\;.
		\qedhere
	\]
\end{proposition}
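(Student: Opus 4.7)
The plan is to bootstrap the per-dimension statement of \autoref{res:dim-exp_prob-method=} into the ``for all dimensions $\le \eps n$'' guarantee required for a dimension expander, via a union bound over $r = \dim V$. For each $r \in \{1,\ldots,\floor{\eps n}\}$, I will set the target output dimension to $t_r \eqdef \ceil{\alpha r}$, so that if random matrices $\rA_1,\ldots,\rA_d$ expand every dimension-$r$ subspace to dimension $\ge t_r$, they in particular achieve the expansion factor $\alpha$ on that subspace. Note that $r \le t_r \le n$: the first inequality uses $\alpha \ge 1$, and the second uses $\alpha r \le \alpha\eps n < n$ (from the hypothesis $\alpha\eps < 1$), so \autoref{res:dim-exp_prob-method=} is applicable for each $r$ in this range.

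Next I would verify that the hypothesis $d \ge \alpha + \frac{1}{1-\alpha\eps} + \frac{2q}{(q-1)^2\ln q}$ implies the per-dimension degree condition of \autoref{res:dim-exp_prob-method=} for every $r \le \eps n$ simultaneously. The first term $(t_r-1)/r$ is at most $\alpha$, since $\ceil{\alpha r} \le \alpha r + 1$ gives $t_r - 1 \le \alpha r$. For the second term, using $t_r \le \alpha r + 1$ and $r \le \eps n$,
\[
  \frac{n-r+1}{n-t_r+1} \le \frac{n}{n-\alpha r} \le \frac{n}{n(1-\alpha\eps)} = \frac{1}{1-\alpha\eps}.
\]
Thus the per-dimension hypothesis of \autoref{res:dim-exp_prob-method=} is met uniformly for every $r \in \{1,\ldots,\floor{\eps n}\}$, and each individual application yields failure probability strictly less than $q^{-r}$.

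A union bound over $r$ then controls the total failure probability:
\[
  \sum_{r=1}^{\floor{\eps n}} q^{-r} < \sum_{r=1}^{\infty} q^{-r} = \frac{1}{q-1} \le 1,
\]
where the first inequality is strict because $\floor{\eps n}$ is finite. Hence with positive probability the random collection $\rA_1,\ldots,\rA_d$ expands every subspace of dimension at most $\eps n$ by a factor of at least $\alpha$, so a deterministic choice exists. The ``$q \ge 4$'' simplification then follows by invoking \autoref{res:approx-exp-q} to replace the last term by $1$.

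The argument is essentially a routine union bound; the only mildly delicate point is the $q=2$ case, where $\sum_{r\ge 1} q^{-r} = 1$ and one must explicitly use that truncating the sum makes it strictly less than $1$. Beyond that, all the probabilistic content has been packaged into \autoref{res:dim-exp_prob-method=}, so there is no substantive obstacle — just careful bookkeeping to check that the choice $t_r = \ceil{\alpha r}$ fits into the degree bound uniformly across $r \le \eps n$.
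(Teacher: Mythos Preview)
Your proposal is correct and follows essentially the same approach as the paper's proof: set $t_r=\ceil{\alpha r}$, verify that the hypothesis on $d$ dominates the per-dimension bound of \autoref{res:dim-exp_prob-method=} uniformly in $r\le\eps n$ via $\ceil{\alpha r}-1\le\alpha r$ and $\frac{n-r+1}{n-\ceil{\alpha r}+1}\le\frac{n}{n-\alpha r}\le\frac{1}{1-\alpha\eps}$, and then union-bound over $r$ using $\sum_{r\ge1}q^{-r}\le1$. Your treatment is in fact slightly more careful than the paper's in two places: you explicitly verify $r\le t_r\le n$ so that \autoref{res:dim-exp_prob-method=} applies, and you note the truncation of the geometric series to handle $q=2$ (the paper simply writes $\sum_{r\ge1}q^{-r}<1$, which is only literally true for $q\ge3$, though the strict per-$r$ bound rescues the $q=2$ case there as well).
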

\begin{proof}
	For $\cA$ to be $(\eps,\alpha)$ expander means that for subspaces $V\subseteq\F^n$ of dimension $r\le \eps n$ we need that $\dim\sum_i A_i(V)\ge \alpha\dim V$.  That is, $\dim\sum_i A_i(V)\ge \ceil{\alpha r}$.  For any fixed $r\le \eps n$, \autoref{res:dim-exp_prob-method=} shows that a random collection $\cA$ will have this property for all such $V$ with probability $>1-\nicefrac{1}{q^r}$ as long as
	\begin{align*}
		d&\ge\frac{\ceil{\alpha r}-1}{r}+\frac{n-(r-1)}{n-(\ceil{\alpha r}-1)}+\frac{2q}{(q-1)^2\ln q}
		\intertext{As $\ceil{\alpha r}-1\le \alpha r$ it follows that it is sufficient for}
		d&\ge\alpha+\frac{n}{n-\alpha r}+\frac{2q}{(q-1)^2\ln q}
		\intertext{and thus sufficient for}
		d&\ge\alpha+\frac{1}{1-\alpha \eps}+\frac{2q}{(q-1)^2\ln q}
		\;,
	\end{align*}
	where this bound is independent of $r$. 

	Now, taking the union bound over all $1\le r\le \eps n$ we see that the failure probability is at $<\sum_{r=1}^\infty \nicefrac{1}{q^r}<1$, so it follows that such a collection $\cA$ exists that expands by a factor of $\alpha$ each subspace of dimension of at most $\le \eps n$.  For $q\ge 4$ we can appeal to the latter part of \autoref{res:dim-exp_prob-method=}.
\end{proof}

\subsection{Lossy Rank Condensers}

We now give an argument that good lossy rank condensers exist.

\begin{proposition}\label{res:lossy-seeded_prob-method=}
	Let $\F_q$ be a finite field. Let $n\ge r\ge 1$, $\eps\ge 0$ and $t>(1-\eps)r$. Let $\rE_1,\ldots,\rE_k$ be random variables uniformly distributed over matrices in $\F_q^{t\times n}$.  Then with probability $>1-\nicefrac{1}{q^r}$, $\cE\eqdef\{\rE_1,\ldots,\rE_k\}$ is a $(r,\eps)$-lossy rank condenser whenever
	\[
		k\ge\frac{rn+\frac{q}{(q-1)^2\ln q}}{(t-(1-\eps)r)(\floor{\eps r}+1)-\frac{q}{(q-1)^2\ln q}}
		\;,
	\]
	whenever the denominator is positive. In particular, if $q\ge 4$ then it suffices for
	\[
		k\ge\frac{rn+1}{(t-(1-\eps)r)(\floor{\eps r}+1)-1}
		\;,
	\]
	whenever the denominator is positive.
\end{proposition}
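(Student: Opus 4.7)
The plan is to mimic the proof of \autoref{res:dim-exp_prob-method=}: fix a subspace $V$ of dimension $r$, bound the probability that the entire collection $\cE$ fails to condense $V$ sufficiently, and then union bound over all such $V$ using the subspace-counting estimate \autoref{res:q-binom-ub-sym}. First I would fix $V \subseteq \F_q^n$ of dimension $r$ together with a rank-$r$ matrix $M \in \F_q^{n \times r}$ whose column span is $V$. By \autoref{res:rand-times-full-rank}, each product $\rE_i M$ is uniformly distributed over $\F_q^{t \times r}$, and the $k$ products are mutually independent since the $\rE_i$ are.

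The event that $\cE$ fails to $(r,\eps)$-condense $V$ is the event that $\rank \rE_i M < (1-\eps)r$ for every $i \in [k]$. Since rank is integer-valued, this is equivalent to $\rank \rE_i M \le \ceil{(1-\eps)r} - 1 = r - \floor{\eps r} - 1$. Applying \autoref{res:prob-matrix-low-rank} to $\rE_i M \in \F_q^{t \times r}$ with threshold $s = r - \floor{\eps r} - 1$ gives
\[
\Pr[\rank \rE_i M \le s] < \e^{q/(q-1)^2}\, q^{-(t-s)(r-s)} \le \e^{q/(q-1)^2}\, q^{-(t-(1-\eps)r)(\floor{\eps r}+1)},
\]
where I used $r - s = \floor{\eps r} + 1$ and $t - s > t - (1-\eps)r$ (since $\ceil{(1-\eps)r} \le (1-\eps)r + 1$, using $t > (1-\eps)r$ to ensure the exponent is positive). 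By independence, the probability that $\cE$ fails on this fixed $V$ is at most the $k$-th power of this quantity.

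Next I would union-bound over all dimension-$r$ subspaces $V \subseteq \F_q^n$, whose count is $< \e^{q/(q-1)^2}\, q^{r(n-r)}$ by \autoref{res:q-binom-ub-sym}, giving the total failure bound
\[
\e^{(k+1)q/(q-1)^2}\, q^{r(n-r) - k(t-(1-\eps)r)(\floor{\eps r}+1)}.
\]
Requiring this to be less than $q^{-r}$, taking $\log_q$, and using $r(n-r) + r \le rn$ yields the sufficient condition
\[
k\bigl[(t-(1-\eps)r)(\floor{\eps r}+1) - \tfrac{q}{(q-1)^2 \ln q}\bigr] \ge rn + \tfrac{q}{(q-1)^2 \ln q},
\]
which is exactly the stated bound on $k$ (the positivity of the denominator is precisely the assumption that the left bracket is positive). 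The specialization to $q \ge 4$ then follows from \autoref{res:approx-exp-q}. The only delicate step is the bookkeeping around $\ceil{(1-\eps)r}$ versus $(1-\eps)r$ and the corresponding replacement of $t-s$ by $t-(1-\eps)r$ in the exponent; everything else is a direct union-bound calculation analogous to the dimension-expander argument.
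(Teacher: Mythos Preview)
Your proposal is correct and follows essentially the same approach as the paper: bound the per-subspace failure probability via \autoref{res:rand-times-full-rank}, independence, and the refined estimate of \autoref{res:prob-matrix-low-rank}, then union-bound using \autoref{res:q-binom-ub-sym} and simplify via $r(n-r+1)\le rn$ and $\floor{\eps r}+1\ge \eps r$. The only cosmetic difference is that you apply the relaxation $t-s\ge t-(1-\eps)r$ before the union bound, whereas the paper carries the exact value $t-(r-\floor{\eps r}-1)$ through and relaxes at the very end.
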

\begin{proof}
	We bound the probability that $\cE$ is not such a condenser. Fix some matrix $M\in\F_q^{n\times r}$ of rank $r$.  We then bound the probability
	\begin{align*}
		\Pr_{\rE_i}[\forall i\in[k], \rank \rE_i M<(1-\eps)\rank M]
		&=\Pr_{\rE_i}[\forall i\in[k], \rank \rE_i M\le r-\floor{\eps r}-1]
		\intertext{using independence of the $\rE_i$,}
		&=\left(\Pr_{\rE_1}[\rank \rE_1 M\le r-\floor{\eps r}-1]\right)^k
		\intertext{as $\rE_1 M$ is uniformly distributed over $\F_q^{t\times r}$ (by \autoref{res:rand-times-full-rank}), and invoking the finer form of \autoref{res:prob-matrix-low-rank},}
		&<\left(\e^{\frac{q}{(q-1)^2}}q^{-(t-(r-\floor{\eps r}-1))(r-(r-\floor{\eps r}-1))}\right)^k
		\;.
	\end{align*}
	Union bounding over all such $M$ (and using \autoref{res:q-binom-ub-sym} to count such $M$),
	\begin{align*}
		\Pr[\cE \text{ not a $(r,\eps)$-condenser}]
		&=\Pr[\exists M\st \forall i, \rank \rE_i M<(1-\eps)\rank M]\\
		&<\e^{\frac{q}{(q-1)^2}}q^{r(n-r)}\cdot \left(\e^{\frac{q}{(q-1)^2}}q^{-(t-(r-\floor{\eps r}-1))(r-(r-\floor{\eps r}-1))}\right)^k
		\;.
	\end{align*}
	This quantity is at most $q^{-r}$ iff 
	\begin{align*}
		k&\ge\frac{r(n-r+1)+\frac{q}{(q-1)^2\ln q}}{(t-(r-\floor{\eps r}-1))(\floor{\eps r}+1)-\frac{q}{(q-1)^2\ln q}}
		\intertext{so that it is sufficient (as $r\ge 1$, and $\floor{\eps r}+1\ge \eps r$) that}
		k&\ge\frac{rn+\frac{q}{(q-1)^2\ln q}}{(t-(1-\eps)r)(\floor{\eps r}+1)-\frac{q}{(q-1)^2\ln q}}
		\;,
	\end{align*}
	whenever this denominator is positive. Appealing to \autoref{res:approx-exp-q} yields the claim for $q\ge 4$.
\end{proof}

We note that the finer form of \autoref{res:prob-matrix-low-rank} was not strictly needed in the above to obtain non-trivial results, but this finer form allows us to see that the explicit construction of \autoref{res:gk13-lossy-condenser} is slightly suboptimal as it requires $t\ge r$ (and this sub-optimality manifests in the constructions of dimension expanders as discussed after \autoref{res:tensor-then-condense_instantiate_gamma0}).

We now observe that by the above we can find a collection $\cE$ that is a $(\le r,\eps)$-condenser.  As any $(r,0)$-lossy condenser is also a $(s,0)$-condenser for all $s\le r$ (\autoref{res:lossless_le-v-eq}), this is only non-trivial for $\eps>0$.  Note that this is indeed non-trivial, as there are $(3n,\nicefrac{2}{2})$-lossy condensers on $\F^{4n}$ that are not $(s,\nicefrac{2}{3})$-condensers for any $s\le n$ (see \autoref{res:dim-le-r_dim-eq-r}).

\begin{proposition}\label{res:lossy-seeded_prob-method}
	Let $\F_q$ be a finite field. Let $n\ge r\ge 1$, $\eps>0$ and $t>(1-\eps)r$.   Then there is a collection $\cE$ of matrices $\cE\subseteq\F_q^{t\times n}$ that is a $(\le r,\eps)$-lossy rank condenser whenever
	\[
		k\ge\frac{n+\frac{q}{(q-1)^2\ln q}}{\eps (t-(1-\eps)r)-\frac{q}{(q-1)^2\ln q}}
		\;,
	\]
	whenever this denominator is positive. In particular, if $q\ge 4$ then it suffices for
	\[
		k\ge\frac{n+1}{\eps (t-(1-\eps)r)-1}
		\;,
	\]
	whenever this denominator is positive.
\end{proposition}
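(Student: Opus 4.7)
The proof plan is to extend \autoref{res:lossy-seeded_prob-method=} by a union bound over all ranks $s \in [1,r]$. I would first draw $k$ matrices $\rE_1,\ldots,\rE_k$ independently and uniformly at random from $\F_q^{t\times n}$ and set $\cE = \{\rE_1,\ldots,\rE_k\}$. For each fixed $s \in [1,r]$, running the argument in the proof of \autoref{res:lossy-seeded_prob-method=} verbatim with $r$ replaced by $s$ (union-bounding over the $s$-dimensional subspaces, counted via \autoref{res:q-binom-ub-sym}, and invoking the refined estimate of \autoref{res:prob-matrix-low-rank} applied to $\rE_i M$) shows that the probability $\cE$ fails to be a $(s,\eps)$-lossy rank condenser is strictly less than $q^{-s}$, provided
\[
k \;\ge\; \frac{s(n-s+1) + \frac{q}{(q-1)^2\ln q}}{(t-(s-\floor{\eps s}-1))(\floor{\eps s}+1) - \frac{q}{(q-1)^2\ln q}}.
\]

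Next I would reduce this $s$-dependent bound to the single $s$-independent bound in the statement. Using $s(n-s+1) \le sn$, $\floor{\eps s}+1 \ge \eps s$, and (since $s \le r$ and $\eps \le 1$) $t - (s - \floor{\eps s} - 1) \ge t - (1-\eps)s \ge t - (1-\eps)r$, the denominator above is at least $\eps s\,(t-(1-\eps)r) - c$, where I abbreviate $c := \tfrac{q}{(q-1)^2\ln q}$. Hence it suffices to take
\[
k \;\ge\; \frac{sn + c}{\eps s\,(t-(1-\eps)r) - c}.
\]
To identify which $s$ is tightest, divide both numerator and denominator by $s$: the expression $\frac{n + c/s}{\eps(t-(1-\eps)r) - c/s}$ has a numerator that decreases in $s$ and a denominator that increases in $s$, so (wherever the denominator stays positive) it is monotonically non-increasing in $s \ge 1$. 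Its maximum over $s \in [1,r]$ is therefore attained at $s = 1$, which gives exactly the bound in the proposition.

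Finally, I would combine the per-rank events by a union bound: the probability that $\cE$ fails to be a $(s,\eps)$-condenser for some $s \in [1,r]$ is strictly less than $\sum_{s=1}^r q^{-s} < \frac{1}{q-1} \le 1$, so some realization of $\cE$ is simultaneously a $(s,\eps)$-condenser for every $s \in [1,r]$, i.e.\ a $(\le r,\eps)$-lossy rank condenser. The cleaner form for $q \ge 4$ then follows from \autoref{res:approx-exp-q} exactly as in \autoref{res:lossy-seeded_prob-method=}. The only step that is not purely mechanical is the monotonicity calculation identifying $s=1$ as the worst case; everything else is either inherited directly from \autoref{res:lossy-seeded_prob-method=} or a routine union bound.
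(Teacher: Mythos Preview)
Your proposal is correct and follows essentially the same approach as the paper: both take random matrices, invoke \autoref{res:lossy-seeded_prob-method=} for each $s\in[1,r]$, simplify the resulting $s$-dependent bound via $\floor{\eps s}+1\ge\eps s$ and $s\le r$ to reach $\frac{n+c/s}{\eps(t-(1-\eps)r)-c/s}$, observe this is maximized at $s=1$, and finish with a union bound over $s$. Your explicit monotonicity argument is just a slightly more verbose phrasing of the paper's one-line use of ``$s\ge 1$''.
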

\begin{proof}
	Take $\cE\eqdef\{\rE_1,\ldots,\rE_k\}$ where $\rE_i$ are uniformly and independently distributed over $\F_q^{t \times n}$. By \autoref{res:lossy-seeded_prob-method=} we see that $\cE$ is a $(s,\eps)$-condenser with probability $>1-q^s$ as long as
	\begin{align*}
		k&\ge\frac{sn+\frac{q}{(q-1)^2\ln q}}{(t-(1-\eps)s)(\floor{\eps s}+1)-\frac{q}{(q-1)^2\ln q}}
		\intertext{as $\floor{\eps s}+1\ge \eps s$ and $s\le r$, it is then sufficient for}
		k&\ge\frac{sn+\frac{q}{(q-1)^2\ln q}}{(t-(1-\eps)r)\eps s-\frac{q}{(q-1)^2\ln q}}
		\intertext{whenever this denominator is positive. Simplifying further, we see that}
		k&\ge\frac{n+\frac{q}{s(q-1)^2\ln q}}{\eps (t-(1-\eps)r)-\frac{q}{s(q-1)^2\ln q}}
		\intertext{and thus as $s\ge 1$, that}
		k&\ge\frac{n+\frac{q}{(q-1)^2\ln q}}{\eps (t-(1-\eps)r)-\frac{q}{(q-1)^2\ln q}}
		\;,
	\end{align*}
	is sufficient whenever this denominator is positive, where this last bound is independent of $s$.  Thus, with this value of $k$ we see that $\cE$ is a $(s,\eps)$-condenser for all $1\le s\le r$ except with probability $<\sum_{s=1}^r q^{-s}\le \sum_{s=1}^\infty q^{-s}<1$.  Thus, such a condenser $\cE$ exists, where for $q\ge 4$ we appeal to the latter half of \autoref{res:lossy-seeded_prob-method=}.
\end{proof}

\subsection{Two-Source Rank Condensers}

We now give an argument that good two-source rank condensers exist.

\begin{proposition}\label{res:two-src_prob-method}
	Let $\F_q$ be a finite field. Let $n\ge r\ge 1$, $m\ge s\ge 1$ and $\eps\ge0$.  Let $\rE$ be a random variable uniformly distributed over matrices in $\F_q^{t\times nm}$.  Then $f:\F^n\times\F^m\to\F^t$ defined by $f(\vv,\vw)=E\cdot(\vv\otimes\vw)$ is a bilinear $(r,s,\eps)$-two-source rank condenser with probability $>1-\nicefrac{1}{q^r}$, assuming that
	\[
			t
			\ge
			\frac{n}{\eps s}+\frac{m}{\eps r}+(1-\eps) rs+\frac{2q}{(q-1)^2\ln q}
			\;.
	\]
	for $\eps>0$. If $\eps=0$, then $f$ is such a $(r,s,0)$-condensers with probability $>1-\frac{1}{q^r}$, assuming that
	\[
		t\ge rn+sm+rs+\frac{2q}{(q-1)^2\ln q}-1
		\;.
	\]
\end{proposition}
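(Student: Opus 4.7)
The plan is a direct probabilistic-method and union-bound argument, parallel in structure to the proofs of \autoref{res:dim-exp_prob-method=} and \autoref{res:lossy-seeded_prob-method=}. First I would fix a candidate pair of subspaces $V\subseteq\F_q^n$ with $\dim V=r$ and $W\subseteq\F_q^m$ with $\dim W=s$, choose full-rank bases $A\in\F_q^{n\times r}$ with $\cspn A=V$ and $B\in\F_q^{m\times s}$ with $\cspn B=W$, and observe that by basic properties of the tensor product the matrix $A\otimes B\in\F_q^{nm\times rs}$ has rank exactly $rs$. By \autoref{res:bilinear-cond_alt}, the quantity we need to control is $\rank \rE(A\otimes B)$, and because $A\otimes B$ has full column rank, \autoref{res:rand-times-full-rank} tells us that $\rE(A\otimes B)$ is uniformly distributed over $\F_q^{t\times rs}$.

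Next I would bound the single-pair failure probability. For $\eps>0$, the condenser fails on $(V,W)$ iff $\rank \rE(A\otimes B)\le \lceil(1-\eps)rs\rceil-1$, and I would apply the finest (third) estimate of \autoref{res:prob-matrix-low-rank} to the uniformly random matrix in $\F_q^{t\times rs}$ at this threshold. Using the pair of inequalities $rs-(\lceil(1-\eps)rs\rceil-1)\ge\lfloor\eps rs\rfloor+1\ge\eps rs$ and $t-(\lceil(1-\eps)rs\rceil-1)\ge t-(1-\eps)rs$, this gives an upper bound of $\e^{q/(q-1)^2}q^{-\eps rs(t-(1-\eps)rs)}$. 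For $\eps=0$ the threshold is $rs-1$ and the analogous estimate simplifies to $\e^{q/(q-1)^2}q^{-(t-rs+1)}$.

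Then I would union-bound over all pairs $(V,W)$, counting subspaces via \autoref{res:q-binom-ub-crude-sym} (or the sharper \autoref{res:q-binom-ub-sym}) to bound the number of $r$-dimensional $V\subseteq\F_q^n$ by $q^{rn}$ and the number of $s$-dimensional $W\subseteq\F_q^m$ by $q^{sm}$. Requiring the resulting total failure probability to be at most $q^{-r}$, taking logs, and rearranging to isolate $t$ yields, in the $\eps>0$ case,
\[
\eps rs\,(t-(1-\eps)rs)\ge rn+sm+r+\tfrac{q}{(q-1)^2\ln q},
\]
which upon dividing through by $\eps rs$ gives $t\ge \tfrac{n}{\eps s}+\tfrac{m}{\eps r}+(1-\eps)rs$ plus a lower-order additive term, and I would absorb that term along with the $\e^{q/(q-1)^2}$ factors into $\tfrac{2q}{(q-1)^2\ln q}$ using \autoref{res:approx-exp-q}. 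The $\eps=0$ case is handled analogously, giving the stated $t\ge rn+sm+rs+\tfrac{2q}{(q-1)^2\ln q}-1$.

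The argument has no real obstacle; the only care needed is bookkeeping the constants so that the various additive pieces (the $+r$ from the target failure rate, the $(1)$ from $\lfloor\eps rs\rfloor+1\ge\eps rs$, and the three $\e^{q/(q-1)^2}$ factors from matrix-rank and subspace counts) combine cleanly into the single additive constant $\tfrac{2q}{(q-1)^2\ln q}$ in the stated bound.
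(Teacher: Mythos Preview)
Your approach is essentially the same as the paper's: fix $A,B$, note $\rE(A\otimes B)$ is uniform in $\F_q^{t\times rs}$ by \autoref{res:rand-times-full-rank}, bound the low-rank probability via the third form of \autoref{res:prob-matrix-low-rank}, union-bound over subspaces, and solve for $t$.

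There is one bookkeeping point where your outlined order of simplifications will not recover the stated constant. You pass immediately from $\lfloor\eps rs\rfloor+1$ to $\eps rs$ in the per-pair bound, and then plan to absorb the leftover terms (the $+r$ from the target $q^{-r}$ and the $\e^{q/(q-1)^2}$ factors) into $\tfrac{2q}{(q-1)^2\ln q}$ after dividing by $\eps rs$. But after that division those leftovers become $\tfrac{1}{\eps s}$ and $\tfrac{q}{\eps rs\,(q-1)^2\ln q}$, which are not bounded by any fixed constant as $\eps\to 0$. The paper avoids this in two ways: it uses the sharper count \autoref{res:q-binom-ub-sym} (not \autoref{res:q-binom-ub-crude-sym}) for the $V$-side, so the exponent is $r(n-r)$ and the $+r$ target is absorbed via $r(n-r)+r\le rn$ \emph{before} any division; and it keeps $\lfloor\eps rs\rfloor+1$ intact through the union bound, first using $\lfloor\eps rs\rfloor+1\ge 1$ to bound the constant term by $\tfrac{2q}{(q-1)^2\ln q}$, and only then using $\lfloor\eps rs\rfloor+1\ge\eps rs$ on the main $rn+sm$ term. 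The same use of \autoref{res:q-binom-ub-sym} on one side is also what makes the $\eps=0$ constant come out as stated. With those two adjustments your argument goes through verbatim.
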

\begin{proof}
	Note that $f$ is indeed a bilinear function as seen from the equivalence of the two definitions (\autoref{res:bilinear-cond_alt}). That $f$ fails to be a $(r,s,\eps)$-condenser means that there are full-rank matrices $A\in\F_q^{n\times r}$ and $B\in\F_q^{m\times s}$ such that $f(A,B)=\rE\cdot (A\otimes B)$ has $\rank f(A,B)<(1-\eps)rs$, that is $\rank f(A,B)\le rs-(\floor{\eps rs}+1)$.  Thus, appealing to the above lemmas (and that $A\otimes B$ is rank $rs$),
	\begin{align*}
		\Pr[f \text{ is not a $(r,s,\eps)$-condenser}]
		&=\Pr[\exists A,B\st \rank f(A,B)\le rs-(\floor{\eps rs}+1)]\\
		&\le \sum_{A,B} \Pr[\rank f(A,B)\le rs-(\floor{\eps rs}+1)]\\
		&=\sum_{A,B} \Pr[\rank (\rE\cdot(A\otimes B)) \le rs-(\floor{\eps rs}+1)]
		\intertext{appealing to \autoref{res:rand-times-full-rank} to see that $\rE\cdot (A\otimes B)$ is a random $t\times rs$ matrix, and then applying the third form of \autoref{res:prob-matrix-low-rank}, as well as \autoref{res:q-binom-ub-crude-sym},}
		&<\sum_{A,B} \e^{\frac{q}{(q-1)^2}} q^{-(t-(rs-(\floor{\eps rs}+1)))(rs-(rs-(\floor{\eps rs}+1)))}\\
		&< \e^{\frac{q}{(q-1)^2}} q^{r(n-r)} \cdot q^{sm} \cdot \e^{\frac{q}{(q-1)^2}} q^{-(t-(rs-(\floor{\eps rs}+1)))(\floor{\eps rs}+1)}\\ 
	\end{align*}
	The above quantity is $\le q^{-r}$ iff
	\begin{align*}
		t&\ge\frac{r(n-r+1)+sm}{\floor{\eps rs}+1}+(rs-\floor{\eps rs}+1)+\frac{2q}{(\floor{\eps rs}+1)(q-1)^2\ln q}
		\intertext{using that $r,s\ge 1$, it suffices for}
		t&\ge\frac{rn+sm}{\floor{\eps rs}+1}+(rs-\floor{\eps rs}-1)+\frac{2q}{(q-1)^2\ln q}
		\intertext{which yields the result for $\eps=0$. For $\eps>0$, we use that  $\eps rs\le \floor{\eps rs}+1$, so see that it suffices for}
		t&\ge\frac{rn+sm}{\eps rs}+(1-\eps)rs+\frac{2q}{(q-1)^2\ln q}
		\;.
		\qedhere
	\end{align*}
\end{proof}

We now use the above to derive a lossy two-source condenser that works even when the first source is \emph{small}. Note that as before this is only interesting when $\eps>0$, as when $\eps=0$ condensers automatically work for smaller ranks (\autoref{res:two-src_lossless_le}).

\begin{proposition}\label{res:two-src_prob-method<}
	Let $\F_q$ be a finite field. Let $n\ge r\ge 1$, $m\ge s\ge 1$ and $\eps>0$. Then there exists a $f:\F^n\times\F^m\to\F^t$ which is a bilinear $(\le r,s,\eps)$-two-source rank condenser, assuming that
	\[
		t
		\ge
		\frac{n}{\eps s}+\frac{m}{\eps }+(1-\eps) rs+\frac{2q}{(q-1)^2\ln q}
		\;.
	\]
\end{proposition}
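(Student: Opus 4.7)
The plan is to mirror Proposition \ref{res:lossy-seeded_prob-method}, which deduced the $(\le r,\eps)$-version of the seeded lossy condenser from the $(r,\eps)$-version (Proposition \ref{res:lossy-seeded_prob-method=}) via a union bound over all smaller ranks $r'\in\{1,\ldots,r\}$. Here I will do the same using the already-proven $(r,s,\eps)$-bound for balanced parameters, Proposition \ref{res:two-src_prob-method}, as the base step.

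First I would let $\rE$ be uniformly distributed over matrices in $\F_q^{t\times nm}$ and set $f(\vv,\vw)\eqdef\rE\cdot(\vv\otimes\vw)$, which is bilinear by \autoref{res:bilinear-cond_alt}. For each fixed $1\le r'\le r$, Proposition \ref{res:two-src_prob-method} (applied with source rank $r'$) says that $f$ fails to be a $(r',s,\eps)$-two-source rank condenser with probability strictly less than $q^{-r'}$, provided
\[
    t\ge \frac{n}{\eps s}+\frac{m}{\eps r'}+(1-\eps)r's+\frac{2q}{(q-1)^2\ln q}
    \;.
\]

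The key observation is that the right-hand side, as a function of $r'$, is bounded above by the hypothesis of the proposition uniformly in $r'\in\{1,\ldots,r\}$: the term $\nicefrac{m}{\eps r'}$ is maximized at $r'=1$ giving $\nicefrac{m}{\eps}$, while $(1-\eps)r's$ is maximized at $r'=r$ giving $(1-\eps)rs$, and $\nicefrac{n}{\eps s}$ is independent of $r'$. Hence the assumed bound
\[
    t\ge \frac{n}{\eps s}+\frac{m}{\eps}+(1-\eps)rs+\frac{2q}{(q-1)^2\ln q}
\]
implies the $r'$-specific hypothesis for every $r'\in\{1,\ldots,r\}$ simultaneously, and so $f$ is a $(r',s,\eps)$-condenser with probability $>1-q^{-r'}$ for each such $r'$.

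Finally I would union bound the failure events over $r'=1,\ldots,r$, noting
\[
    \sum_{r'=1}^{r} q^{-r'} < \sum_{r'=1}^\infty q^{-r'} = \frac{1}{q-1}\le 1
\]
for $q\ge 2$, so with strictly positive probability the random $f$ is simultaneously a $(r',s,\eps)$-condenser for all $1\le r'\le r$, i.e.\ a $(\le r,s,\eps)$-condenser, proving existence. There is no real obstacle here: the entire content is the observation that the worst cases in $r'$ of the two non-constant terms occur at the two different endpoints and can therefore be bounded by substituting $r'=1$ and $r'=r$ separately, which is exactly what produces the asymmetric bound $\frac{n}{\eps s}+\frac{m}{\eps}+(1-\eps)rs$ in the statement.
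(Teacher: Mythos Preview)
Your proposal is correct and follows essentially the same approach as the paper: choose $f$ from a uniformly random $\rE$, apply \autoref{res:two-src_prob-method} for each $r'\in\{1,\ldots,r\}$, observe that the per-$r'$ hypothesis is dominated by the stated bound (since $\nicefrac{m}{\eps r'}\le\nicefrac{m}{\eps}$ and $(1-\eps)r's\le(1-\eps)rs$), and then union bound. If anything, your handling of the final union bound is slightly more careful than the paper's, since you use the strict inequality $\sum_{r'=1}^{r} q^{-r'}<\sum_{r'=1}^{\infty} q^{-r'}=\nicefrac{1}{q-1}$ to get strictness even when $q=2$.
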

\begin{proof}
	Let $\rE$ be a random variable uniformly distributed over matrices in $\F_q^{t\times nm}$ and define $f$ by $f(\vv,\vw)=E\cdot(\vv\otimes\vw)$.  For $r'$ with  $1\le r'\le r$, \autoref{res:two-src_prob-method} yields that $f$ is a bilinear $(r',s,\eps)$-two-source condenser with probability $>1-\nicefrac{1}{q^{r'}}$, assuming that
	\begin{align*}
		t&\ge \frac{n}{\eps s}+\frac{m}{\eps r'}+(1-\eps) r's+\frac{2q}{(q-1)^2\ln q}
		\intertext{in particular, as $1\le r'\le r$, if}
		t&\ge \frac{n}{\eps s}+\frac{m}{\eps}+(1-\eps) rs+\frac{2q}{(q-1)^2\ln q}
		\;,
	\end{align*}
	where this last bound is independent of $r'$. The probability of failure, union bounding over all $1\le r'\le r$, is at most $\sum_{r'=1}^\infty q^{-r'}<1$.  Thus the desired $f$ exists.
\end{proof}

\end{document}